\documentclass[12pt]{amsart}
\usepackage[a4paper,margin=2.5cm]{geometry}
\usepackage[T1]{fontenc}
\usepackage[english]{babel}
\usepackage{float}
\usepackage{microtype}
\usepackage{tikz}
\usepackage{tikz-cd}
\usetikzlibrary{calc}
\usepackage{amsmath,amssymb,amsthm,amscd,mathtools}
\usepackage{mathrsfs}
\usepackage{array}
\usepackage{booktabs}
\usepackage{enumitem}
\usepackage{tikz}
\usetikzlibrary{arrows.meta}
\definecolor{defbulk}{RGB}{35,78,112}
\definecolor{defother}{RGB}{41,105,93}
\definecolor{defwall}{RGB}{142,90,34}
\tikzset{
  defbase/.style={font=\small, line cap=round, line join=round},
  defline/.style={draw=defbulk,line width=1.05pt,-{Stealth[length=2mm]}},
  defsecond/.style={draw=defother,line width=1.05pt,-{Stealth[length=2mm]}},
  defedge/.style={draw=defwall,line width=1.1pt},
  defend/.style={circle,fill=defwall,inner sep=0pt,minimum size=4pt},
  defnote/.style={font=\footnotesize,text=black!75},
  defpanel/.style={font=\small\bfseries,anchor=west}
}

\usepackage{xcolor}
\usepackage{longtable}
\tikzset{
  partial ellipse/.style args={#1:#2:#3}{
    insert path={+ (#1:#3) arc (#1:#2:#3)}
  }
}
\definecolor{citeblue}{RGB}{0,70,140}
\usepackage[
  colorlinks=true,
  linkcolor=citeblue,
  citecolor=citeblue,
  urlcolor=citeblue
]{hyperref}
\usepackage{etoolbox}

\usetikzlibrary{decorations.markings,arrows.meta}

\tikzset{->-/.style={decoration={
  markings,
  mark=at position #1 with {\arrow{>}}},postaction={decorate}}}
\tikzset{-<-/.style={decoration={
  markings,
  mark=at position #1 with {\arrow{<}}},postaction={decorate}}}

\makeatletter
\patchcmd{\@setaddresses}
  {\par\addvspace\bigskipamount\indent}
  {\par\addvspace\bigskipamount\noindent}
  {}{}
\makeatother
\makeatletter
\def\l@subsection{\@tocline{2}{0pt}{3pc}{6pc}{}}
\makeatother
\makeatletter
\renewcommand\section{\@startsection{section}{1}%
  \z@{.7\linespacing\@plus\linespacing}{.5\linespacing}%
  {\normalfont\bfseries\centering}}

\renewcommand\subsection{\@startsection{subsection}{2}%
  \z@{.5\linespacing\@plus.7\linespacing}{-.5em}%
  {\normalfont\bfseries}}
\makeatother

\numberwithin{equation}{section}

\newtheorem{theorem}{Theorem}[section]
\newtheorem{prop}[theorem]{Proposition}

\newtheorem{corollary}[theorem]{Corollary}
\newtheorem{definition}[theorem]{Definition}
\newtheorem*{remark}{Remark}

\theoremstyle{definition}
\newtheorem*{theoremA}{Main Theorem A}
\newtheorem*{theoremB}{Main Theorem B}
\newtheorem*{theoremC}{Main Theorem C}
\newtheorem*{theoremD}{Main Theorem D}
\newtheorem*{theoremE}{Main Theorem E}
\newtheorem*{propositionA}{Main Proposition}

\DeclareMathOperator{\Aut}{Aut}
\DeclareMathOperator{\Hom}{Hom}

\DeclareMathOperator{\id}{id}

\DeclareMathOperator{\Cob}{Cob}
\DeclareMathOperator{\Sym}{Sym}

\newcommand{\VecCat}{\mathrm{Vec}}

\title[A Functorial Theory of Defects in Abelian Chern--Simons Theory]
{\fontsize{11pt}{13pt}\selectfont A Functorial Theory of Defects in\\Abelian Chern--Simons Theory}

\author{Daniel Galviz}
\address{\noindent Yau Mathematical Sciences Center and Department of Mathematics, Tsinghua University, Beijing, China.}
\date{}

\begin{document}

\begin{abstract}
Recent work has constructed Abelian Chern--Simons theories as categorical TQFTs, allowing us to naturally incorporate categorical defects and construct defect extensions of Abelian Chern--Simons TQFTs. We first identify the Turaev--Viro realizations of Abelian Chern--Simons theory in the center and doubled pointed modular cases, clarifying the distinction between single bulk realizations and canonical doubled ones. Alternatively, the Alterfold construction supplies the associated topological boundaries, domain walls, and condensation sectors, establishing an explicit Alterfold/Chern--Simons dictionary. We show that the finite quadratic module is the invariant controlling the bulk theory, its topological symmetries, orientation-reversal invariance, and defects. We further show that multicomponent Abelian BF theory arises as the extended TQFT of an off-diagonal Abelian Chern--Simons theory,  placing it naturally within the same  extended framework. Finally, we demonstrate that recently proposed Abelian Chern--Simons dualities do not define a genuine TQFT duality. These results provide a concrete model for defects in Abelian topological orders and suggest a route toward the non-Abelian case. 
\end{abstract}

\maketitle
{\scriptsize
\setlength{\parskip}{0pt}
\hypersetup{linkcolor=black}
\setcounter{tocdepth}{1}
\tableofcontents
}
\newpage

\section{Introduction}
Chern--Simons theory is one of the standard examples of $(2+1)$-dimensional topological quantum field theories \cite{Witten:1988,Freed:1991,FreedCS,Manoliu3}.  Beyond its role in low-dimensional topology, it also provides the effective topological field theory for many topological phases of matter, including fractional quantum Hall states and other topological orders \cite{wen1992,wenzee1992,Wen2016}.  Physically, boundaries, domain walls, condensation or defects in general are not extra structures, but fundamental features of the same theory \cite{LevinWen05,kitaev2012models,BJQ2013,kong2014anyon}.  

Recent work has constructed Abelian $U(1)^n$ Chern--Simons theories both by geometric quantization in real polarization and by a rigorous functional-integral method, producing unitary extended\footnote{Here \textit{extended TQFT} or just \textit{TQFT} refers to the Turaev--Walker bordism formalism for 2--3 TQFT \cite{Walker, Turaev1994}, in which boundary surfaces carry additional Lagrangian data and gluing is corrected by Maslov--Kashiwara indices.} $(2+1)$-dimensional TQFTs with canonical state spaces, bordism vectors, partition functions, and gluing laws \cite{Galviz2,Galviz2.5}.  The beautiful message of this approach is that, after quantization,  Abelian Chern-Simons theories are identified as categorical TQFTs; this allows us to use all the categorical machinery in our model. These theories were shown to be naturally isomorphic to the Reshetikhin--Turaev theories of the pointed modular categories $\mathcal C(G_K,q_K)$.\footnote{$G_K=\Lambda^*/K\Lambda$ is the finite abelian group obtained from the dual lattice $\Lambda^*$ modulo $K\Lambda$. The lattice $\Lambda$ is even and integral, while $K$ is its nondegenerate bilinear form. The function $q_K([x])=\exp(\pi i x^\top K^{-1}x)$ is the induced quadratic form on $G_K$, and $(G_K,q_K)$ determines the pointed modular category $\mathcal C_K$.
} Hence Abelian Chern--Simons theories are classified, up to symmetric monoidal natural isomorphism, by finite quadratic modules \cite{Galviz1, Galviz3,Galviz4}. 

The aim of the present work is to use this functorial construction to organize several aspects of Abelian Chern--Simons theories that have often been treated separately.  Closed $3$-manifold partition functions and surgery Gauss sums \cite{Witten:1988,Murakami,Mattes,Deloup1999,Deloup2001,BelovMoore,Deloup2005,Stirling}, BF theory \cite{Horowitz89,BlauThompson91,BBRT91}, Abelian Reshetikhin--Turaev invariants \cite{Reshetikhin:1991,Turaev1994,Galviz3}, Turaev--Viro realizations \cite{TV92,BW96,BK2010}, topological symmetries, orientation-reversal invariance \cite{Delmas2021}, and defect structures \cite{LevinWen05,Kapustin2009,kapustin2010surface, Kapustin2010,kitaev2012models, BJQ2013} all have natural Abelian formulations, but they belong to different levels of structure.  The TQFT framework provides the mechanism for bringing these formulations together: it compares not only closed-manifold numerical invariants but also state spaces, bordism maps, and defects within a single functorial structure. Hence we show that the finite quadratic module $(G_K,q_K)$ is the invariant classifying the entire theory:
$$
(G_K,q_K)
\;\longmapsto\;
\mathcal C(G_K,q_K)
\;\longmapsto\;
\left\{
\begin{array}{l}
\text{bulk extended TQFT},\\[2pt]
\text{topological order},\\[2pt]
\text{topological symmetries},\\[2pt]
\text{condensation sectors},\\[2pt]
\text{gapped boundaries},\\[2pt]
\text{topological domain walls},\\[2pt]
\text{defect lines and junctions},\\[2pt]
\text{Morita and Alterfold structures}.
\end{array}
\right.
$$
From the viewpoint of geometric gauge theory, however, defects in Abelian/non-Abelian Chern--Simons theory are difficult to construct directly.  A defect is not simply a hypersurface on which the gauge field is restricted: one must specify which boundary variations are allowed, which bulk Wilson lines may terminate on the defect, how boundary line operators fuse, how walls compose, and how these structures behave under cutting and gluing. Thus, a rigorous treatment of defects in Chern--Simons theories has remained an open problem. Existing works identify individual pieces of the defect structure of Abelian topological phases, while general categorical works provide abstract defect-TQFT machinery without a natural connection to physics. What has been missing is an explicit unified functorial description of Abelian Chern--Simons in which the bulk theory, general defects and their compositions are simultaneously expressed in terms of the finite quadratic module $G_K,q_K$, or equivalently by the pointed modular categories $\mathcal C(G_K,q_K)$.

Kapustin and Saulina gave a foundational analysis of this problem using an Action description \cite{kapustin2010surface,Kapustin2010}.  They identified the finite Abelian group of bulk line operators as the natural discrete invariant of the Abelian theory, related topological boundary conditions to Lagrangian subgroups, described boundary line operators and their associators, and studied surface operators or domain walls through the corresponding categorical structures.  However, this Action formulation does not provide a rigorous functorial construction of defects in the TQFT sense used here: it assumes the path-integral description rather than deriving a well-defined TQFT functor, in particular, the full defect $3$-category is not constructed.  In this sense, the present paper gives a natural functorial realization of the Kapustin--Saulina results, placing the Abelian defect structures in a TQFT setting based on \cite{Fuchs:2013,CRS2019,CMS2020}.

The key mechanism is the construction of Abelian Chern--Simons as a categorical TQFT. It allows us to transfer the study of categorical defects to the pointed modular category $\mathcal C(G_K,q_K)$.  As we will show, the finite quadratic module controls the bulk theory, its relabelling symmetries, and the supports of its boundary and wall structures.  Fully gapped boundaries and walls are described by Lagrangian connected \'etale algebras \cite{DMNO2013}. In the pointed braided realization, such an algebra is supported on a Lagrangian subgroup and its multiplication may be written using a compatible cochain. This cochain is a representative: different compatible choices give isomorphic algebra objects. Thus, up to algebra isomorphism, the Lagrangian subgroup itself classifies the corresponding Lagrangian algebra.

Alterfold theory supplies an alternative realization of the bulk and defects.  It realizes, inside a single TQFT, the Turaev--Viro theory of a spherical fusion category and the Reshetikhin--Turaev theory of its Drinfeld center \cite{Alterfold1, Alterfold2}.  Alterfold theory naturally produces the center sector, the doubled sector, modular invariants, condensations, boundaries, and walls.  When the finite quadratic module $(G_K,q_K)$ is hyperbolic, a single Abelian Chern--Simons theory appears as a center sector $Z^{CS}_{\mathbb T,K}$. In general, the $3$--Alterfold recovers the Abelian Chern--Simons  doubled theory $Z^{CS}_{\mathbb T,K}\otimes Z^{CS}_{\mathbb T,-K}.$

We also clarify the relationship between our framework and BF theories \cite{Horowitz89,BlauThompson91,BBRT91} and reciprocity relations \cite{Murakami,Mattes,Deloup1999,Deloup2005}.  In our functorial framework, multicomponent Abelian BF theory is realized as the TQFT associated with an off-diagonal Abelian Chern--Simons theory.  By contrast, reciprocity formulas for finite Gauss sums compare certain closed-surgery partition functions.  They do not, by themselves, produce equivalences of TQFTs.  In particular, we show that the apparent $K\leftrightarrow L$ dualities suggested by closed-manifold partition-function formulas do not generally survive as functorial dualities as claimed recently in \cite{Kim2024,Tagaris2025}.  Genuine equivalence of Abelian Chern--Simons theories is governed instead by the isomorphism class of the discriminant finite quadratic module.

The result is a unified TQFT treatment of Abelian Chern--Simons theories.  BF theory, reciprocity, Turaev--Viro and Alterfold realizations, symmetries, orientation-reversal invariance, and defects are all organized by the same invariant $(G_K,q_K)$.  This provides a precise framework for defects in Abelian Chern--Simons theories and isolates the categorical structures that should persist, in a more subtle form, in the non-Abelian case.

\subsection*{Main Results}

We summarize the main results. Throughout,
$(\Lambda,K)$ denotes an even, integral, nondegenerate lattice,
$$
G_K:=\Lambda^*/K\Lambda,
\qquad
q_K([x])=\exp(\pi i x^\top K^{-1}x),
$$
and $
\mathcal C_K:=\mathcal C(G_K,q_K)
$ denotes the associated pointed modular category.

\begin{theoremA}
\textbf{Functorial defect extension of Abelian Chern--Simons theory.}
\leavevmode\\

For every even, integral, nondegenerate lattice $(\Lambda,K)$,
the Abelian Chern--Simons TQFT admits a symmetric monoidal defect
extension
$$
\widetilde Z^{CS}_{\mathcal C_K}:
\widehat{\mathrm{Bord}}^{\mathrm{def}}_3(\mathcal D_K)
\longrightarrow
\mathrm{Vect}^{\mathrm{fd}}_{\mathbb C},
$$
whose restriction to undecorated bordisms is naturally isomorphic
to $Z^{CS}_{\mathbb T,K}$. Here $\mathcal D_K$ denotes the
categorical defect datum of $\mathcal C_K$, and the source consists
of decorated stratified surfaces and $3$-bordisms with the
Turaev--Walker anomaly data.
\end{theoremA}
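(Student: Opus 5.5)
The plan is to reduce the statement to the general defect Reshetikhin--Turaev construction of Carqueville--Runkel--Schaumann \cite{CRS2019} and the surface-defect framework of Fuchs--Schweigert--Valentino \cite{Fuchs:2013,CMS2020}, and then to transport the result along the natural isomorphism between $Z^{CS}_{\mathbb T,K}$ and the Reshetikhin--Turaev theory of $\mathcal C_K$ established in \cite{Galviz2,Galviz2.5,Galviz3}.

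\textbf{Step 1: the defect datum.} Since $\mathcal C_K=\mathcal C(G_K,q_K)$ is modular, in particular a semisimple ribbon fusion category with nonzero global dimension, the defect datum $\mathcal D_K$ can be built as in \cite{CRS2019,CMS2020}.
\begin{itemize}
\item The $3$-strata are labelled by $\mathcal C_K$ itself, or more generally by modular categories Witt-equivalent to it.
\item The $2$-strata are labelled by $\Delta$-separable symmetric Frobenius algebras in $\mathcal C_K$, or equivalently by module categories; among these, the fully gapped walls and boundaries correspond to Lagrangian étale algebras supported on Lagrangian subgroups of $G_K$.
\item The $1$-strata are labelled by (bi)modules.
\item The $0$-strata are labelled by morphisms satisfying the orbifold and defect compatibility conditions.
\end{itemize}
In the pointed case all associators are given by abelian $3$-cocycles determined by $q_K$, which are explicit. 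I would therefore state $\mathcal D_K$ directly in terms of subgroups, cochains and characters. The condition that the datum be closed under the required compositions then follows from the general theory.

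\textbf{Step 2: the defect TQFT.} I would define $\widetilde Z^{CS}_{\mathcal C_K}$ on a decorated stratified bordism by the standard procedure.
\begin{itemize}
\item Choose an admissible skeleton, that is, a triangulation of each $2$-stratum.
\item Replace each $2$-stratum by a ribbon graph in which the Frobenius algebra network is inserted along the dual graph.
\item Evaluate the result with the Reshetikhin--Turaev functor $Z^{RT}_{\mathcal C_K}$, including the Turaev--Walker Lagrangian data and the Maslov corrections.
\item On surfaces, take the image of the resulting idempotent projector as the state space.
\end{itemize}
Independence of the skeleton reduces to invariance under the local moves, namely the 2-3 and bubble moves on walls. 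These hold because the algebras are $\Delta$-separable symmetric Frobenius algebras. Symmetric monoidality and the gluing law are then inherited from $Z^{RT}_{\mathcal C_K}$ together with the idempotent completion.

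\textbf{Step 3: restriction and comparison.} On bordisms with no defects, the skeleton is empty and $\widetilde Z^{CS}_{\mathcal C_K}$ is literally $Z^{RT}_{\mathcal C_K}$. Composing with the natural isomorphism $Z^{RT}_{\mathcal C_K}\cong Z^{CS}_{\mathbb T,K}$ from \cite{Galviz2,Galviz3} then gives the claimed restriction. If needed, I would instead transport the whole extension along this isomorphism to obtain a functor whose undecorated part equals $Z^{CS}_{\mathbb T,K}$ on the nose.

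\textbf{Main obstacle.} I expect the hardest step to be the anomaly compatibility. The CS theory of \cite{Galviz2.5} uses Turaev--Walker Lagrangian data with Maslov--Kashiwara corrections, whereas \cite{CRS2019} is formulated for the anomaly-free or $p_1$-framed extension. So I must check that inserting defect networks does not change the central charge correction. Both are controlled by $c\equiv\mathrm{sig}(K)\bmod 8$ through the Gauss sum of $q_K$. For wall strata separating bulk regions labelled by different Witt-equivalent categories, I must also check that the Gauss sums agree, which holds for Witt-equivalent modular categories. My first attempt would be to show that the skeleton-dependent projectors commute with the Maslov cocycle, since they are built from closed ribbon graphs inside cylinders, where the Maslov index vanishes.
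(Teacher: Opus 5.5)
Your proposal is correct and follows essentially the same route as the paper: the paper takes the defect functor $\widetilde Z^{RT}_{\mathcal C_K}$ of \cite[Theorem~5.8]{CRS2019} as $\widetilde Z^{CS}_{\mathcal C_K}$ and defines the bulk comparison as $\eta^{CS}_K=\Phi_K\circ\eta^{RT}_K$, which is exactly your Step 3. The anomaly issue you flag needs no separate treatment, because the CRS functor is already defined on the hatted bordism category carrying Turaev's Lagrangian and Maslov data, the same extension on which $\Phi_K$ is defined. Also, the paper uses only the single bulk label $\mathcal C_K$, so your optional Witt-equivalent multi-bulk labels are not needed, and your Step 2 evaluation with the single functor $Z^{RT}_{\mathcal C_K}$ would not cover them anyway.
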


\begin{theoremB}\textbf{Classification of defect structures.}
\leavevmode\\

The defect structures of $\widetilde Z^{CS}_{\mathcal C_K},$ are classified by the corresponding categorical data of the pointed modular category $\mathcal C_K$.

\begin{enumerate}

\item Condensation sectors are classified by connected étale algebras in
$\mathcal C_K$. In the Abelian case, their algebra-isomorphism
classes are classified by isotropic subgroups
$$
H\subset G_K,
\qquad
q_K|_H=1 .
$$

\item Fully gapped topological boundary conditions, up to equivalence of
boundary data, are classified by Lagrangian connected étale algebras in
$\mathcal C_K$. Equivalently, in the Abelian case, they are classified
by Lagrangian subgroups
$$
L\subset G_K,
\qquad
q_K|_L=1,
\qquad
L=L^\perp .
$$

\item Fully gapped topological domain walls between the Abelian theories
associated with $(G_K,q_K)$ and $(G_{K'},q_{K'})$, up to equivalence of wall
data, are classified by Lagrangian connected étale algebras in
$$
\mathcal C_K\boxtimes\mathcal C_{K'}^{\mathrm{rev}} .
$$
Equivalently, in the Abelian case, they are classified by Lagrangian
subgroups
$$
M\subset G_K\oplus G_{K'}
$$
with respect to the quadratic form
$
q_K\oplus q_{K'}^{-1}.
$

\item Every
$
u\in \mathcal O(G_K,q_K)  
$ determines an invertible self-wall supported on the graph
$$
\Gamma_u
=
\{(x,u(x)):x\in G_K\}
\subset
G_K\oplus G_K ,
$$
and the fusion of these walls agrees with the group composition law in
$\mathcal O(G_K,q_K)  $.

\end{enumerate}
\end{theoremB}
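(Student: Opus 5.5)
The plan is to reduce each item to known classification results for module categories and étale algebras over modular categories, and then make them explicit in the pointed case $\mathcal C_K=\mathcal C(G_K,q_K)$. By Main Theorem A, the defects of $\widetilde Z^{CS}_{\mathcal C_K}$ are labelled by the defect datum $\mathcal D_K$ of Reshetikhin--Turaev defect theory \cite{Fuchs:2013,CRS2019,CMS2020}. In that framework:
\begin{itemize}[label={}]
\item surface defects between bulk regions labelled by $\mathcal C$ and $\mathcal C'$ are labelled by Witt-trivializations of $\mathcal C\boxtimes\mathcal C'^{\mathrm{rev}}$, that is, by Lagrangian algebras;
\item boundaries are the case $\mathcal C'=\mathrm{Vect}$;
\item condensation sectors are labelled by connected étale algebras, via the local-module category $\mathcal C_A^{\mathrm{loc}}$.
\end{itemize}
So the first step imports these identifications for $\mathcal C_K$, together with the equivalence between boundary or wall data and the corresponding algebra-isomorphism classes \cite{DMNO2013}. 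This gives the categorical half of items (1)--(3).

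The second step is the Abelian reduction. Let $A$ be a connected étale algebra in a pointed braided category. Every simple summand of $A$ is invertible, and connectedness forces $\mathrm{Hom}(\mathbf 1,A)=\mathbb C$. Then multiplicity considerations show that $A\cong\bigoplus_{h\in H}h$, each summand appearing once, for a subset $H$ that is a subgroup because $A$ is closed under multiplication. Commutativity means the braiding $c_{h,h}$ acts trivially on the multiplication $h\otimes h\to h^2$. Since $c_{h,h}=q_K(h)$ in the pointed model, this gives $q_K|_H=1$. Conversely, given isotropic $H$, the restriction of the associator 3-cocycle $\omega$ to $H$ is trivial in cohomology, because $(\omega,c)|_H$ is an abelian 3-cocycle with trivial quadratic form, and the Eilenberg--MacLane isomorphism $H^3_{ab}(H,\mathbb C^\times)\cong\mathrm{Quad}(H)$ applies. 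So a cochain $\epsilon$ with $d\epsilon=\omega|_H$ exists, and it defines the multiplication. Choosing $\epsilon$ compatibly with $c$ gives commutativity, and separability is automatic in characteristic zero.

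The third step, uniqueness, is the point I expect to be the main obstacle. Two compatible cochains differ by a 2-cocycle on $H$ that is symmetric, because of commutativity. Symmetric 2-cocycles with values in $\mathbb C^\times$ are coboundaries, since $\mathrm{Ext}(H,\mathbb C^\times)=0$ by divisibility. So the two algebras are isomorphic via rescaling the summands, and the isomorphism class depends only on $H$. This proves (1).

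For (2), I would use that $A$ is Lagrangian iff $\dim A^2=\dim\mathcal C_K$. For $A$ supported on $H$ this reads $|H|^2=|G_K|$. Nondegeneracy of the bilinear form $b_K$ gives $|H||H^\perp|=|G_K|$, and isotropy gives $H\subset H^\perp$. Together these make $|H|^2=|G_K|$ equivalent to $H=H^\perp$.

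Item (3) is the same argument applied to $\mathcal C_K\boxtimes\mathcal C_{K'}^{\mathrm{rev}}\simeq\mathcal C(G_K\oplus G_{K'},q_K\oplus q_{K'}^{-1})$. This uses that reversing the braiding inverts the quadratic form.

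For (4), take $u\in\mathcal O(G_K,q_K)$. Then $(q_K\oplus q_K^{-1})(x,u(x))=q_K(x)q_K(u(x))^{-1}=1$, so the graph $\Gamma_u$ is isotropic, and $|\Gamma_u|^2=|G_K|^2$ makes it Lagrangian. By (3) it defines a self-wall. For fusion, compose walls via the relative tensor product $A_{\Gamma_u}\otimes_{\mathcal C_K}A_{\Gamma_v}$, the algebra of the composite wall \cite{CRS2019}. On supports this is the relational composite $\Gamma_v\circ\Gamma_u=\Gamma_{v\circ u}$, which has multiplicity one since each relation is a bijective graph. So $W_{\Gamma_v}\circ W_{\Gamma_u}\cong W_{\Gamma_{v\circ u}}$. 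Invertibility follows because $\Gamma_{\mathrm{id}}$ is the diagonal, which gives the transparent wall, and $u^{-1}$ supplies the inverse.
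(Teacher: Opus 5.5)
Your treatment of (1)--(3) follows the paper. You import the bulk--boundary correspondence and folding. You classify connected \'etale algebras in $\operatorname{Vec}^{\omega_K,c_K}_{G_K}$ by trivializing the abelian $3$-cocycle on an isotropic $H$. You get uniqueness by splitting symmetric $2$-cocycles, and you convert $|H|^2=|G_K|$ into $H=H^\perp$ via $|H|\,|H^\perp|=|G_K|$.

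One step there is asserted rather than proved. Multiplicity one, and the subgroup property of the support, do not follow from connectedness and ``closure under multiplication'' alone. For $h$ of order $3$, the algebra $\mathbf 1\oplus X_h$ with $X_h\cdot X_h=0$ is commutative and connected, but its support is not a subgroup. The paper uses separability at this point, through the free-module functor:
\begin{itemize}
\item $X_x\otimes A$ is invertible, hence simple, in the fusion category $(\mathcal C_K)_A$, so $\Hom_{\mathcal C_K}(X_x,A)\cong\Hom_{(\mathcal C_K)_A}(X_x\otimes A,A)$ has dimension at most one;
\item a nonzero map $X_h\to A$ makes $X_h\otimes A\to A$ an isomorphism, which forces every product on $H$ to be nonzero.
\end{itemize}

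For (4) your route differs from the paper's. The paper lifts $u$ to the braided autoequivalence $F_u$ of Theorem~\ref{thm:Eq-O}. It then realizes $\mathfrak D_u$ as the transparent wall with its outgoing bulk identification twisted by $F_u$. The fusion law and invertibility come from $F_v\circ F_u\cong F_{v\circ u}$ and $F_u^{-1}$. This needs no algebra-level formula for wall fusion, and it also gives the converse that every invertible wall is a graph wall.

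You instead compose folded Lagrangian algebras and compare supports. That can work in the pointed case, but the load-bearing step is not supplied by your reference. \cite{CRS2019} fuses parallel surfaces in one bulk by tensoring Frobenius algebras in $\mathcal C_K$, not by composing algebras in $\mathcal C_K\boxtimes\mathcal C_K^{\mathrm{rev}}$. You would need three things:
\begin{itemize}
\item a composite of folded algebras compatible with wall fusion, for instance contracting the middle $\mathcal C_K^{\mathrm{rev}}\boxtimes\mathcal C_K$ factor, equivalently the relative product of the wall categories over $\mathcal C_K$;
\item the fact that its multiplicities are the product of the transmission matrices;
\item the fact that it is again commutative and separable.
\end{itemize}
Then multiplicity one makes it connected with support $\Gamma_{v\circ u}$. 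It is the uniqueness half of (1), not the support alone, that identifies it with $A_{\Gamma_{v\circ u}}$. The paper stresses that for noninvertible walls the support relation does not determine the composite.

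What your route buys is that it stays inside the algebra classification and also computes multiplicities of noninvertible composites, as in $W_{ab}W_{cd}=|L_b\cap L_c|\,W_{ad}$. The paper's functor argument is shorter and more robust in the invertible case.
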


\newpage
\begin{theoremC}\textbf{Alterfold realization of Abelian Chern--Simons defect theories.}
\leavevmode\\

Let $\mathcal C_K=\mathcal C(G_K,q_K)$. If $\mathcal C_K$ is braided equivalent to a Drinfeld center, equivalently if
$(G_K,q_K)$ is hyperbolic, then there exists a spherical fusion category
$\mathcal A$ such that
$
\mathcal Z(\mathcal A)\cong\mathcal C_K,
$
and the Alterfold construction associated with $\mathcal A$ realizes the
Abelian Chern--Simons theory together with its defect data. In particular,
the ordinary $B$-sector satisfies
$$
\mathbb V_{\mathcal A}|_{\mathrm{Cob}^{B}}
\cong
Z^{RT}_{\mathcal Z(\mathcal A)}
\cong
Z^{CS}_{\mathbb T,K},
$$
and the corresponding Alterfold defect sectors realize the condensation
algebras, boundary conditions, domain walls, and Morita data associated with
$\mathcal C_K$.

For a general pointed modular category $\mathcal C_K$, Alterfold theory
canonically realizes the doubled defect theory associated with
$
\mathcal Z(\mathcal C_K)
\cong
\mathcal C_K\boxtimes\mathcal C_K^{\mathrm{rev}},
$ corresponding to the orientation-reversal invariant theory
$$
Z^{CS}_{\mathbb T,K}\otimes Z^{CS}_{\mathbb T,-K}.
$$

More generally, isotropic subgroups
$
H\subset G_K
$ determine connected étale condensation algebras $A_H$ up to algebra
isomorphism. After choosing a pointed braided realization, a compatible
multiplication cochain gives an explicit representative of $A_H$. The
associated Alterfold construction produces the corresponding spherical
Morita context, modular invariant, $\alpha$-induction data, and topological
full center. The Lagrangian case gives the fully gapped boundary sector.
\end{theoremC}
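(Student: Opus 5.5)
The plan is to reduce each clause of Main Theorem C to known structural results about Drinfeld centers and the Alterfold TQFT of \cite{Alterfold1,Alterfold2}, and then transport along the identification $Z^{CS}_{\mathbb T,K}\cong Z^{RT}_{\mathcal C_K}$ from \cite{Galviz2,Galviz2.5}.

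\emph{Hyperbolic case.} First I will prove the equivalence ``$\mathcal C_K$ is a Drinfeld center $\iff$ $(G_K,q_K)$ is hyperbolic''. A pointed modular category has vanishing Witt class exactly when it admits a Lagrangian subgroup $L$ \cite{DMNO2013}. For the hyperbolic direction, write $G_K\cong L\oplus \widehat L$ with $q_K(\ell,\chi)=\chi(\ell)$. Then $\mathcal C_K\simeq\mathcal Z(\mathrm{Vec}_L)$ with trivial associator, so I take $\mathcal A=\mathrm{Vec}_L$. More generally, a Lagrangian $L$ that need not split gives $\mathcal C_K\simeq\mathcal Z(\mathrm{Vec}_{\widehat L}^{\omega})$ for a suitable $3$-cocycle $\omega$. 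Conversely, for a pointed center $\mathcal Z(\mathcal A)$ the forgetful functor yields a Lagrangian algebra, hence a Lagrangian subgroup. Because of this, I will state the result for ``metabolic'' forms, and hyperbolic is then the split case.

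\emph{The $B$-sector and the doubled case.} With $\mathcal A$ fixed, the main theorem of \cite{Alterfold2} gives $\mathbb V_{\mathcal A}|_{\mathrm{Cob}^B}\cong Z^{RT}_{\mathcal Z(\mathcal A)}$. Composing with the braided equivalence $\mathcal Z(\mathcal A)\simeq\mathcal C_K$ uses the fact that RT theories of braided equivalent modular categories are symmetric monoidally isomorphic. Together with the classification of Abelian Chern--Simons theories by $(G_K,q_K)$, this gives the displayed chain of isomorphisms. For general $K$, I take $\mathcal A=\mathcal C_K$ viewed as a spherical fusion category. Then $\mathcal Z(\mathcal C_K)\simeq\mathcal C_K\boxtimes\mathcal C_K^{\mathrm{rev}}$ by modularity (M\"uger). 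The reverse category $\mathcal C_K^{\mathrm{rev}}=\mathcal C(G_K,q_K^{-1})=\mathcal C(G_{-K},q_{-K})$, and the RT theory of a Deligne product is the tensor product of the RT theories. Together these give $Z^{CS}_{\mathbb T,K}\otimes Z^{CS}_{\mathbb T,-K}$. Orientation-reversal invariance follows because the doubled module $(G_K\oplus G_K,\,q_K\oplus q_K^{-1})$ is isomorphic to its own negative via the swap.

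\emph{Defect data.} Here I will use Main Theorem B. An isotropic $H$ gives the algebra $A_H=\bigoplus_{h\in H}h$, with multiplication $\epsilon|_{H\times H}$ for a $2$-cochain $\epsilon$ satisfying $d\epsilon=\omega|_H$ and $\epsilon(h,k)/\epsilon(k,h)=c(h,k)=1$. Such an $\epsilon$ exists because $q|_H=1$ forces the abelian cocycle to be trivial on $H$. Different choices differ by coboundaries, times symmetric $2$-cocycles on $H$ with values in $\mathbb C^\times$, which are coboundaries. Hence $A_H$ is well defined up to algebra isomorphism. Next, the category of $A_H$-modules in $\mathcal A$ (respectively of $\mathcal C_K$-modules) supplies the spherical Morita context. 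The Alterfold theory attaches to this Morita context its modular invariant $Z=\bigoplus [\alpha^+\text{-}\alpha^-]$, the $\alpha$-induction functors and the full center $Z(A_H)\in\mathcal C_K\boxtimes\mathcal C_K^{\mathrm{rev}}$ \cite{Alterfold1,Alterfold2}. I will compute these explicitly. The full center is supported on $\{(x,y): x-y\in H,\ x\in H^\perp\}$, which is Lagrangian in the doubled module. When $H=L$ is Lagrangian, $\mathcal C_{K,A_L}^{\mathrm{loc}}\simeq\mathrm{Vec}$, and this recovers the fully gapped boundary of part (2) of Main Theorem B.

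\emph{Main obstacle.} The step I expect to be hardest is matching the Alterfold defect sectors with the defects of $\widetilde Z^{CS}_{\mathcal C_K}$ from Main Theorem A as functors, not just at the level of labels. I would first show that both theories are built from the same defect datum $\mathcal D_K$, namely the module categories and bimodule data, and then apply uniqueness of the defect extension given the data \cite{CRS2019,CMS2020}. This requires checking that the anomaly and Maslov corrections agree. I would handle that by comparing on the generating bordisms (handle attachments) and using that both theories restrict to the same RT theory on the bulk.
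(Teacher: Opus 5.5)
Your treatment of the bulk statements follows the paper. The center criterion is Theorem~\ref{thm:center-criterion}. Note that the paper's ``hyperbolic'' already means metabolic, so your non-split argument via $\mathcal C_K\simeq\mathcal Z(\VecCat^{\beta}_{G_K/L})$ is the case actually needed, not a generalization. The single-bulk chain is Proposition~\ref{prop:Alterfold-cs-reduction} followed by Theorem~\ref{thm:cs-rt-equivalence}. The doubled case rests on M\"uger's $\mathcal Z(\mathcal C_K)\simeq\mathcal C_K\boxtimes\mathcal C_K^{\mathrm{rev}}$; the paper routes it through the lattice $K\oplus(-K)$ in Theorem~\ref{thm:double-case} rather than through multiplicativity of RT under $\boxtimes$, and either works.

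Two points in the defect part need correcting.
\begin{itemize}
\item The commutativity condition is $\epsilon(h,k)=c(h,k)\,\epsilon(k,h)$, not symmetry of $\epsilon$. Isotropy only gives $c(h,h)=q_K(h)=1$ and $c(h,k)c(k,h)=b_K(h,k)=1$ on $H$. The restriction $c|_{H\times H}$ need not be trivial: twisting a realization by an alternating bicharacter nontrivial on $H$ keeps $q_K$. Existence of the correct $\epsilon$ still follows from triviality of the restricted abelian class, as in Proposition~\ref{prop:defects-etale}.
\item ``$A_H$-modules in $\mathcal A$'' is a type error. $A_H$ lives in $\mathcal C_K\simeq\mathcal Z(\mathcal A)$, not in $\mathcal A$. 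Its forgetful image in $\mathcal A$ is not connected in general: for $\mathcal A=(\mathcal C_K)_{A_L}$ the image of $A_L$ is $\mathbf 1_{\mathcal A}^{\oplus|L|}$. Moreover, $\alpha$-induction and the modular invariant need the braiding of the category containing the algebra. The paper explicitly warns that algebras in a reference fusion category presenting boundaries are not the bulk algebras $A_L$. The Morita context is $\{\mathcal C_K,\mathcal D_H\}$ with $\mathcal D_H={}_{A_H}(\mathcal C_K)_{A_H}$ and Alterfold input $\mathcal C_K$, so it sits in the doubled realization. The single-bulk version of the boundary data arises instead by taking $\mathcal A_L=(\mathcal C_K)_{A_L}$ as input, with boundaries modeled by $\mathcal A_L$-module categories.
\end{itemize}

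The step you flag as the main obstacle is where your plan would actually fail, and it is not needed.
\begin{itemize}
\item Neither \cite{CRS2019} nor \cite{CMS2020} provides a uniqueness theorem for defect extensions with prescribed defect data.
\item The Alterfold cobordism category (colored separating surfaces) and $\widehat{\mathrm{Bord}}^{\mathrm{def}}_3(\mathcal D_K)$ are different source categories. ``Comparing on handle attachments'' therefore presupposes a functor between them that you have not built.
\item Stratified defect bordisms have no handle presentation with known relations that would reduce the comparison to finitely many checks.
\item Agreement with the same RT theory on undecorated bordisms does not by itself determine the defect sectors.
\end{itemize}

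The paper makes no functor-level claim. ``Realize'' means that the constructions of \cite{Alterfold1,Alterfold2}, applied to $A_H$, produce the Morita context, $\alpha$-inductions, modular invariant and full center. The only verification is the multiplicity formula \eqref{eq:alterfold-pointed-multiplicities}. There, a free-module intertwiner exists iff $x-y\in H$, and compatibility with the two braided attachments forces $x\in H^\perp$, matching the residual bulk of Proposition~\ref{prop:defects-etale}. Your full-center support computation is already this argument. Drop the functor-level matching, or state it separately as a conjecture.
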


\begin{theoremD}\textbf{Symmetries, orientation reversal, and orientation-reversal invariance.}
\leavevmode\\
For the Abelian Chern--Simons theory classified by $(G_K,q_K)$, the group of
topological relabelling symmetries is
$$
\operatorname{Sym}(Z^{CS}_{\mathbb T,K})
\cong
\mathcal O(G_K,q_K)  .
$$
The orientation-reversed theory is classified by $(G_K,q_K^{-1}),$ equivalently by the lattice $(\Lambda,-K)$. Hence $Z^{CS}_{\mathbb T,K}$ is orientation-reversal invariant if and only if $ (G_K,q_K)\cong (G_K,q_K^{-1}). $
\end{theoremD}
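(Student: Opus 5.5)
The proof splits into two parts: the identification of the symmetry group, and the orientation-reversal statement.

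\emph{Symmetry group.} I would use the classification recalled in the introduction. Abelian Chern--Simons theories are classified, up to symmetric monoidal natural isomorphism, by finite quadratic modules, because $Z^{CS}_{\mathbb T,K}\cong Z^{RT}_{\mathcal C_K}$ with $\mathcal C_K=\mathcal C(G_K,q_K)$. I take a topological relabelling symmetry to be a symmetric monoidal natural automorphism of $Z^{CS}_{\mathbb T,K}$ induced by a relabelling of bulk anyons, i.e.\ a braided autoequivalence of $\mathcal C_K$ taken up to monoidal natural isomorphism.

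For pointed braided categories, the Joyal--Street classification of braided categories by quadratic forms identifies the group of such classes with $\mathcal O(G_K,q_K)$. The argument has two halves.
\begin{enumerate}
\item \emph{Surjectivity.} Every isometry $u$ lifts to a braided autoequivalence, because $H^3_{ab}(G,\mathbb C^\times)\cong\mathrm{Quad}(G)$ and $u^*q_K=q_K$.
\item \emph{Injectivity.} An autoequivalence acting trivially on $G_K$ is isomorphic to the identity. This holds because the relevant automorphisms are classified by $H^2_{ab}(G_K,\mathbb C^\times)=\mathrm{Hom}(\Lambda^2 G_K,\mathbb C^\times)$-type data, and these are trivialised by monoidal natural isomorphisms. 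In the abelian-cohomology sense, $H^2_{ab}$ equals $\mathrm{Hom}(G,\mathbb C^\times)$ modulo the appropriate coboundaries, and the action is by natural isomorphisms.
\end{enumerate}

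I then transport this group through the natural isomorphism $Z^{RT}_{\mathcal C_K}\cong Z^{CS}_{\mathbb T,K}$, using functoriality of RT in braided equivalences. At the level of state spaces, $u$ acts on $\mathbb C[G_K]$-type spaces on the torus by $e_x\mapsto e_{u(x)}$. This action is faithful, which shows that distinct isometries give distinct symmetries. Equivalently, I can cite Theorem B(4): the walls $\Gamma_u$ realise these symmetries, and their fusion is composition in $\mathcal O(G_K,q_K)$.

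\emph{Orientation reversal.} Reversing orientation sends $Z^{RT}_{\mathcal C}$ to $Z^{RT}_{\mathcal C^{\mathrm{rev}}}$, the category with inverse braiding. This is the standard identification of $\overline{Z_{\mathcal C}}$ with $Z_{\overline{\mathcal C}}$, and it corresponds to the Chern--Simons action changing sign. Since $\mathcal C(G,q)^{\mathrm{rev}}\cong\mathcal C(G,q^{-1})$, the orientation-reversed theory is classified by $(G_K,q_K^{-1})$.

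To see that this matches the lattice $(\Lambda,-K)$, observe that $(-K)\Lambda=K\Lambda$, so $G_{-K}=G_K$. Moreover $q_{-K}([x])=\exp(-\pi i x^\top K^{-1}x)=q_K([x])^{-1}$.

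The invariance criterion then follows from the classification theorem: $Z^{CS}_{\mathbb T,K}\cong Z^{CS}_{\mathbb T,-K}$ if and only if the quadratic modules are isomorphic, i.e.\ $(G_K,q_K)\cong(G_K,q_K^{-1})$.

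\emph{Main obstacle.} I expect the hardest step to be the faithfulness/injectivity part of the symmetry classification. One must show that a braided autoequivalence acting trivially on simple objects induces the identity natural automorphism of the TQFT, and not merely a projective phase. I would handle this by checking the action on torus state spaces and noting that the remaining data form a coboundary absorbed by the monoidal structure. I would also take care that the Turaev--Walker anomaly data are preserved, since isometries preserve the Gauss sums and hence the central charge modulo 8.
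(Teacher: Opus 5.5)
Your architecture is the same as the paper's proof of Theorem~\ref{thm:Eq-O}. You map $\Psi\colon\operatorname{Aut}^{\mathrm{br}}(\mathcal C_K)\to\mathcal O(G_K,q_K)$ by the permutation of simples and get surjectivity by lifting isometries through $H^3_{ab}\cong\mathrm{Quad}$. You then use Proposition~\ref{prop:reverse-pointed}, $q_{-K}=q_K^{-1}$ and Theorem~\ref{thm:classification} for orientation reversal, and that half is fine.

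The gap is the injectivity step. You correctly call it the crux, but you justify it with a false claim. $\operatorname{Hom}(\Lambda^2G_K,\mathbb C^\times)$ is the ordinary cohomology $H^2(G_K;\mathbb C^\times)$, and its nontrivial classes are \emph{not} removed by monoidal natural isomorphisms. An isomorphism between two functors fixing every $X_a$ is a family of scalars $t(a)$, which changes the tensorator only by the coboundary $\delta t$. For example, with $G=\mathbb Z_2^2$, the identity functor with a non-coboundary tensorator is a monoidal autoequivalence that is not isomorphic to $\id$. Likewise, in the convention where $H^3_{ab}\cong\mathrm{Quad}$, $H^2_{ab}(G;\mathbb C^\times)$ is not ``$\operatorname{Hom}(G,\mathbb C^\times)$ modulo coboundaries''. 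It is symmetric $2$-cocycles modulo coboundaries, i.e.\ $\mathrm{Ext}(G,\mathbb C^\times)=0$.

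The missing idea is that the braiding does the work. Suppose $F$ is braided and fixes every $X_a$. Then its tensorator $\eta$ satisfies $\delta\eta=1$, and the compatibility condition $F(c_{X_a,X_b})J_{a,b}=J_{b,a}c_{X_a,X_b}$ reads $\eta(a,b)=\eta(b,a)$. So $\eta$ is a symmetric normalized $2$-cocycle. It therefore defines an abelian extension of $G_K$ by the divisible group $\mathbb C^\times$, which splits. Hence $\eta=\delta t$, and $t$ gives the isomorphism $F\cong\id$. This is exactly the uniqueness lemma in the paper's proof, where it is applied to the ratio of two tensorators lifting the same isometry.

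Your fallback does not close this gap. Faithfulness of $e_x\mapsto e_{u(x)}$ on torus states only shows that autoequivalences inducing \emph{different} isometries are non-isomorphic, which is already visible on simple objects. The kernel of $\Psi$ asks the converse: are two braided autoequivalences inducing the \emph{same} isometry isomorphic? The torus action $e_x\mapsto e_{u(x)}$ does not involve the tensorator, so it cannot answer this. Since the paper defines $\Sym(Z^{CS}_{\mathbb T,K})$ directly as $\operatorname{Aut}^{\mathrm{br}}(\mathcal C_K)$, and explicitly distinguishes it from natural automorphisms of the functor, you need no TQFT-level statement about projective phases. Finally, citing Theorem B(4) here would be circular, because its proof lifts isometries to ribbon equivalences exactly as in Section~\ref{sec:symmetries}.
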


\begin{theoremE}
\textbf{Reciprocity is not a functorial duality.}
\leavevmode\\

The quadratic reciprocity relations relating Gauss sums do not by themselves define equivalences of extended Abelian Chern--Simons theories. More precisely, the apparent horizontal $K\leftrightarrow L$ duality suggested by closed-scalar reciprocity formulas does not agree with the partition functions of the extended theories and does not determine a symmetric monoidal natural isomorphism of extended TQFTs. Genuine equivalence of Abelian Chern--Simons theories is instead governed by the isomorphism class of the finite quadratic module:
$$
Z^{CS}_{\mathbb T,K}\cong Z^{CS}_{\mathbb T,L}
\quad\Longleftrightarrow\quad
(G_K,q_K)\cong (G_L,q_L).
$$
\end{theoremE}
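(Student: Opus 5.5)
The statement immediately above is Main Theorem E. I will prove its two parts separately: first the positive classification $Z^{CS}_{\mathbb T,K}\cong Z^{CS}_{\mathbb T,L}\iff (G_K,q_K)\cong(G_L,q_L)$, then the negative claim that reciprocity does not furnish a functorial duality.

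\emph{The equivalence.} For ``$\Leftarrow$'', I invoke the identification (from the cited constructions) of $Z^{CS}_{\mathbb T,K}$ with the Reshetikhin--Turaev theory $Z^{RT}_{\mathcal C_K}$ by a symmetric monoidal natural isomorphism. An isomorphism of finite quadratic modules $\phi:(G_K,q_K)\to(G_L,q_L)$ induces a braided equivalence $\mathcal C_K\simeq\mathcal C_L$, since pointed braided categories are classified by quadratic forms (Joyal--Street). A braided equivalence of modular categories preserves the twists, dimensions, central charge modulo $8$ and Gauss sums, so it induces a symmetric monoidal natural isomorphism of RT theories on the Turaev--Walker bordism category. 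For ``$\Rightarrow$'', I extract $(G,q)$ from the extended TQFT:
\begin{itemize}
\item the state space of $S^1\times S^1$ has a canonical basis indexed by $G_K$, given by the solid-torus vectors with Wilson lines;
\item the fusion of these basis elements is computed from the pair of pants;
\item the action of the Dehn twist $T$ is diagonal with eigenvalues $q_K(x)$, up to the global framing anomaly $e^{-2\pi i c/24}$.
\end{itemize}
A natural isomorphism intertwines the mapping class group actions and the pants maps. It therefore gives a group isomorphism $G_K\cong G_L$ carrying $T$-eigenvalues to $T$-eigenvalues. The anomaly phase is matched as well, because the anomaly is part of the functor data; equivalently, the isomorphism preserves $q(0)=1$ and $\sum_x q(x)$. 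The conclusion is $q_K=q_L\circ\phi$. The most delicate point here is that the isomorphism must send simple lines to simple lines, not merely intertwine linear combinations. I will handle it by characterizing the Wilson-line basis intrinsically as the idempotents, up to scaling, of the fusion algebra on $Z(T^2)$ in the $S$-transformed basis (Verlinde). Any algebra isomorphism must then permute them.

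\emph{Failure of reciprocity as a duality.} The reciprocity formula expresses a Gauss sum over $\Lambda^*/K\Lambda$ in terms of a Gauss sum over a different group attached to $L$. This yields an identity $Z_K(M)=\lambda\cdot Z_L(M')$ only for particular surgery presentations, with a nontrivial normalization $\lambda$ involving $|\det K|^{1/2}$, $|\det L|^{1/2}$ and signature phases. I will give an explicit counterexample. Take $K=(2)$ and a rank-one $L$ related by reciprocity, for instance $L=(2k)$ paired with the lens-space surgery description. Then:
\begin{itemize}
\item $\dim Z(T^2)=|G_K|\neq |G_L|$, so no natural isomorphism, or even any monoidal isomorphism of state spaces, can exist;
\item the proposed correspondence also fails to match $Z(S^1\times S^2)=1$ against the reciprocity-transported value $\lambda\neq1$, so it is not even multiplicative on closed manifolds.
\end{itemize}
More generally, by the first part, any functorial duality $K\leftrightarrow L$ would force $(G_K,q_K)\cong(G_L,q_L)$. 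Reciprocity pairs generally have $|\det K|\neq|\det L|$, so they cannot define such a duality.

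\emph{Main obstacle.} The hardest step is making precise what ``the duality suggested in \cite{Kim2024,Tagaris2025}'' is as a candidate transformation, so that the counterexample truly refutes it. I will formalize it as an assignment on closed surgery presentations and show it is incompatible with gluing. I will compute both sides on a mapping torus or on $\Sigma\times S^1$, where the TQFT forces $Z(\Sigma\times S^1)=\dim Z(\Sigma)=|G|^{g}$, whereas the reciprocity-transported value produces $|G_L|^{g}$ times a phase.
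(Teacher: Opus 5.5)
Two parts of your proposal hold up. The first is your argument that no symmetric monoidal natural isomorphism exists: the torus spaces have dimensions $|G_K|\neq|G_L|$, and more generally any such isomorphism forces $(G_K,q_K)\cong(G_L,q_L)$. This is the paper's argument. The second is your recovery of $(G_K,q_K)$ from the $P\times S^1$ algebra on $Z(T^2)$, the $S$-transform and the Dehn twist. It is a sound and more self-contained substitute for the paper's citation of the classification theorem. Your idempotent characterization is genuinely needed there, since the undecorated bordism category carries no Wilson lines.

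\textbf{The gap} is in the claim that the horizontal relation does not agree with the canonical partition functions. Your plan to test it on $S^1\times S^2$, $\Sigma_g\times S^1$ or mapping tori is ill-posed:
\begin{itemize}
\item The horizontal arrow does not compare two theories on one manifold. It compares level $K$ on $M_L$ with level $-L$ on $M_K$.
\item It only makes sense when both $K$ and $L$ are nondegenerate, so that $-L$ is itself a Chern--Simons level and $M_L,M_K$ are rational homology spheres.
\item All of your test manifolds have $b_1>0$. Every surgery matrix for them is degenerate, so the role swap, and hence any ``reciprocity-transported value'', is undefined.
\item The identities $Z(S^1\times S^2)=1$ and $Z(\Sigma_g\times S^1)=\dim Z(\Sigma_g)$ hold in every TQFT. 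Comparing $|G_K|^g$ with $|G_L|^g$ only repeats the dimension argument against a same-manifold equivalence; it never touches the reciprocity identity.
\item Your nontrivial $\lambda$ is asserted rather than derived, and the example $L=(2k)$ is never evaluated.
\end{itemize}

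What is missing is the direct computation the paper performs. Since $m_{M_L}=m_{M_K}=-\tfrac12$, the surgery formula gives $Z^{CS}_{\mathbb T_K,K}(M_L)=|G_K|^{-1/2}|\det L|^{-n/2}\sum_x e^{\pi i x^\top(L^{-1}\otimes K)x}$ and $Z^{CS}_{\mathbb T_L,-L}(M_K)=|G_L|^{-1/2}|\det K|^{-m/2}\sum_y e^{-\pi i y^\top(K^{-1}\otimes L)y}$. Substituting Deloup's reciprocity yields
\[
Z^{CS}_{\mathbb T_K,K}(M_L)=e^{\frac{\pi i}{4}\sigma(K)\sigma(L)}\sqrt{\frac{|G_L|}{|G_K|}}\,Z^{CS}_{\mathbb T_L,-L}(M_K).
\]
So the canonical $|G|^{m_M}$ normalization produces a scalar that the ad hoc normalization of \cite{Kim2024,Tagaris2025} suppresses. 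The concrete case $K=(2)$, $L=(4)$ gives $e^{\pi i/4}$ and $1/\sqrt2$. This shows the horizontal equality fails even though reciprocity holds exactly.

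You should replace your gluing and $\Sigma\times S^1$ plan with this rational-homology-sphere computation. There is no need to formalize the ``duality'' as an assignment on surgery presentations and test it against gluing; outside the nondegenerate case that assignment is not defined at all.
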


\begin{propositionA}\textbf{Explicit gauge-theoretic defects.}

A further contribution of this work is the explicit realization of defect data in Abelian Chern--Simons theory. While boundaries and defects admit an abstract categorical description, we derive formulas that express their higher associativity and fusion structures directly in terms of bulk Wilson-line data, screening charges, and condensate endpoint operators.

For a Lagrangian boundary $L\subset G_K$, choose a normalized section
$$
s:G_K/L\longrightarrow G_K,
\qquad
\lambda(a,b)=s(a)+s(b)-s(a+b)\in L .
$$
Writing
$$
h=\lambda(a,b),\qquad
k=\lambda(b,c),\qquad
p=\lambda(a+b,c),\qquad
r=\lambda(a,b+c),
$$
the boundary associator is represented, with the tensor conventions used
below, by
$$
\begin{aligned}
\beta_{L,\mu}(a,b,c)
={}&
\frac{
\omega_K(s(a),s(b),s(c))\,
\omega_K(s(a+b),s(c),h)
}{
\omega_K(s(a),s(b+c),k)\,
\omega_K(s(a+b),h,s(c))
}
\\
&\times
\frac{
\omega_K(s(a+b+c),r,k)
}{
\omega_K(s(a+b+c),p,h)
}
\frac{
c_K(s(c),h)\,\mu(r,k)
}{
\mu(p,h)
}.
\end{aligned}
$$
The first two factors record bulk associator contributions from reassociating the chosen representatives and screening charges. The third combines the braiding phase $c_K(s(c),h)$ with the relative multiplication amplitudes $\mu(r,k)/\mu(p,h)$.  Changing the section, fusion bases, or algebra representative modifies $\beta_{L,\mu}$ only by a coboundary, leaving the class $[\beta_L]$ invariant.

The same viewpoint gives a gauge-theoretic interpretation of condensation multiplication. For a condensed sector $h\in L$, endpoint operators satisfy
$$
\varepsilon_{h_1}\varepsilon_{h_2}
=
\mu(h_1,h_2)\varepsilon_{h_1+h_2},
$$
with consistency conditions
$$
\mu(h_1,h_2)\mu(h_1+h_2,h_3)
=
\omega_K(h_1,h_2,h_3)
\mu(h_2,h_3)\mu(h_1,h_2+h_3),
$$
$$
\mu(h_1,h_2)
=
c_K(h_1,h_2)\mu(h_2,h_1).
$$
Thus the algebra cochain is interpreted as the fusion amplitude of Wilson lines terminating on the defect.

For domain walls, the corresponding charge-support composition is captured by
$$
M_{13}^{\mathrm{rel}}
=
\{(x,z)\in G_1\oplus G_3:
\exists\,y\in G_2,\,
(x,y)\in M_{12},\,
(y,z)\in M_{23}\},
$$
while the full defect composition is given categorically by the relative tensor product. This distinction makes explicit which information is retained by charge-level composition and which requires the full defect data.

\subsection*{Structure of the paper.} The paper is organized as follows. In Section~\ref{sec:Abelian-CS}, we recall the Abelian Chern--Simons TQFT formalism, its equivalence with Reshetikhin--Turaev theory and its classification theorem. In Section~\ref{sec:bf}, we study the BF embedding into our formalism and describe the resulting BF TQFT. In Section~\ref{sec:reciprocity}, we study reciprocity relations in the TQFT framework and explain why the so-called $K\leftrightarrow L$ Chern--Simons duality is not a genuine TQFT duality. In Section~\ref{sec:symmetries}, we study the topological symmetry group, orientation-reversed duality for Abelian Chern--Simons theories. In Section~\ref{sec:tv}, we identify the Turaev–Viro TQFT with Abelian Chern--Simons theory in both the center and the doubled cases. In Section \ref{sec:defects}, we construct the defect extension of Abelian Chern--Simons theory and classify its defects. In Section \ref{sec:Alterfold}, we use Alterfold theory to obtain a defect extension of Abelian Chern--Simons theories and explain how the same mechanism points toward the non-Abelian case. Finally, in Section \ref{sec:gauge-defect}, we give a gauge-theoretic interpretation of the categorical defect results and apply our unified framework to the Toric code and semion as an examples.
\end{propositionA}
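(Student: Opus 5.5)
The statement to prove is the Main Proposition. Its three parts are the endpoint-operator multiplication law, the boundary associator formula $\beta_{L,\mu}$ with invariance of its class, and the charge-level composition of walls. I will derive all three from the pointed braided realization of $\mathcal C_K$. Fix a normalized $3$-cocycle $\omega_K$ and braiding $c_K$ representing the abelian cohomology class determined by $q_K$, so that $c_K(x,x)=q_K(x)$. By Main Theorem B, the Lagrangian algebra is $A_L=\bigoplus_{h\in L}h$, with multiplication given by a cochain $\mu$. The endpoint operator $\varepsilon_h$ is the component of the multiplication $A_L\otimes A_L\to A_L$ restricted to the simple summand $h$.

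\emph{Step 1: the algebra axioms.} Associativity of $m$ in a category with associator $\omega_K$ reads
\[
\mu(h_1,h_2)\mu(h_1+h_2,h_3)=\omega_K(h_1,h_2,h_3)\mu(h_2,h_3)\mu(h_1,h_2+h_3).
\]
Commutativity, $m\circ c_{A,A}=m$, gives $\mu(h_1,h_2)=c_K(h_1,h_2)\mu(h_2,h_1)$. Connectedness follows from $0\in L$ appearing with multiplicity one. A solution $\mu$ exists because $(\omega_K,c_K)|_L$ is an abelian $3$-cocycle whose class is determined by $q_K|_L=1$, hence trivial by Eilenberg--MacLane. Two solutions differ by an abelian $2$-cocycle with trivial quadratic form. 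Such a cocycle is a symmetric coboundary on $L$, and that coboundary gives the algebra isomorphism. This is the sense in which $\varepsilon_{h_1}\varepsilon_{h_2}=\mu(h_1,h_2)\varepsilon_{h_1+h_2}$ is the fusion amplitude of Wilson lines terminating on the defect.

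\emph{Step 2: the boundary category.} The boundary line category is the category $\mathcal C_{K,A_L}$ of $A_L$-modules. Its simple objects are the free modules $s(a)\otimes A_L$ for $a\in G_K/L$. Their tensor product is $\otimes_{A_L}$, and it identifies
\[
(s(a)\otimes A)\otimes_A(s(b)\otimes A)\cong s(a)\otimes s(b)\otimes A\cong s(a+b)\otimes\lambda(a,b)\otimes A.
\]
The last isomorphism absorbs the screening charge $h=\lambda(a,b)$ into $A$ using the restriction of $m$ to $h\otimes A$. I will compute the associator by composing the two routes from $((a\,b)\,c)$ to $(a\,(b\,c))$. Each route reduces three representatives plus screening charges to $s(a+b+c)\otimes A$.
\begin{itemize}
\item Reassociating $s(a),s(b),s(c)$ contributes $\omega_K(s(a),s(b),s(c))$.
\item Moving $h$ past $s(c)$ so that it can be absorbed into $A$ contributes the braiding $c_K(s(c),h)$ together with the reassociations $\omega_K(s(a+b),s(c),h)$ and $\omega_K(s(a+b),h,s(c))^{-1}$.
\item Absorbing the pairs $(p,h)$ versus $(r,k)$ into $A$ contributes $\mu(r,k)/\mu(p,h)$ together with the reassociations in the second $\omega_K$ ratio.
\end{itemize}
The identity $p+h=r+k$, which follows directly from the definition of $\lambda$, ensures both routes land on the same summand.

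\emph{Step 3: independence of choices.} Changing $s$ to $s'=s+\ell$ with $\ell:G_K/L\to L$ rescales the basis isomorphisms, which changes $\beta$ by $\delta$ of a $2$-cochain. Rescaling the fusion bases changes $\beta$ by a coboundary in the same way. Replacing $\mu$ by $\mu\cdot\delta\phi$ is implemented by an algebra isomorphism, which again changes $\beta$ by a coboundary. Hence $[\beta_L]$ is well defined.

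\emph{Step 4: walls.} By Main Theorem B(3), a wall is $A_M\in\mathcal C_1\boxtimes\mathcal C_2^{\mathrm{rev}}$. The composite of two walls is the relative tensor product over $\mathcal C_2$. Its underlying object, after splitting off the condensate in the middle, has charges $(x,z)$ exactly when some $y\in G_2$ gives $(x,y)\in M_{12}$ and $(y,z)\in M_{23}$, which is the set $M_{13}^{\mathrm{rel}}$. The multiplicity $|M_{12}\cap(0\oplus G_2)\cap\dots|$ records the information lost at charge level.

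The main obstacle is Step 2. Matching exactly the six $\omega_K$ factors and the single braiding factor requires fixing conventions carefully, namely left versus right modules and where $h$ is braided past $s(c)$. I would handle this by first verifying the formula in the case $\omega_K\equiv1$, where only $c_K(s(c),h)\mu(r,k)/\mu(p,h)$ survives. I would then reinstate each associator by tracking the parenthesization string diagrammatically.
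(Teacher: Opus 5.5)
Your proposal follows the paper's route: the proof of Proposition~\ref{prop:boundary-associator} is the same Davydov--Simmons free-module computation you outline (tensorator $J_{x,y}$, normalized screening absorption $D_{z,\ell}$, and comparison of the two composites on the degree-zero components, which land in the same summand because $r+k=p+h$), and the endpoint law, coboundary invariance and wall relation are handled as in your Steps 1, 3 and 4. One bookkeeping slip in Step 2: your list of contributions yields only five of the six $\omega_K$ factors, omitting $\omega_K(s(a),s(b+c),k)^{-1}$, which arises in the first composite when the tensorator $J_{s(a),s(b+c)}$ acts on the degree-$k$ component of $F_{s(b+c)}$ left behind by the first absorption.
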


\textbf{Acknowledgements.} I would like to thank Nicolai Reshetikhin for his support and suggestions throughout this work, and Zhengwei Liu and Yilong Wang for helpful conversations and suggestions concerning Alterfold theory, as well as for their comments on the manuscript.

\section{Extended Abelian Chern--Simons Theory}
\label{sec:Abelian-CS}
Abelian Chern--Simons theory is a topological gauge theory whose classical solutions are flat connections \cite{Witten:1988,FreedCS,FreedCS2}. Its quantum state spaces arise by geometric quantization of the global holonomy degrees of freedom \cite{Manoliu2}. We use quantization in real polarization for $U(1)^n$ gauge fields to obtain a topological quantum field theory \cite{Galviz2}.

Let $(\Lambda,K)$ be an even, integral, nondegenerate lattice of rank $n$. The gauge group and its Lie algebra are
$$
\mathbb T:=\mathrm t/\Lambda,
\qquad
\mathrm t:=\Lambda\otimes\mathbb R.
$$
The form $K$ specifies the coupling between the Abelian gauge fields. Using the normalization $c(P)=[F_A/(2\pi)]$, the local classical Lagrangian and action on a trivial bundle are
$$
\mathscr L_K(A)=\frac{1}{4\pi}K(A\wedge dA),
\qquad
S_X(A)=\int_X\mathscr L_K(A).
$$

The quantum phase is $e^{iS_X(A)}$. On a closed manifold the action is defined modulo $2\pi$, including when the bundle is nontrivial. Extending the bundle and connection to an oriented four-manifold $W$ with $\partial W=X$ \cite{FreedCS2}, this phase can be written as

$$
e^{iS_X(A)}
=
\exp\!\left(
\frac{i}{4\pi}\int_W
K(F_{\widetilde A}\wedge F_{\widetilde A})
\right).
$$

Evenness and integrality of $K$ ensure independence of the extension, without requiring a spin structure. Since the action involves no metric, the classical theory is topological. Its first variation is
$$
\delta S_X(A)
=
\frac{1}{2\pi}\int_X K(\delta A\wedge F_A)
-\frac{1}{4\pi}\int_{\partial X}K(A\wedge\delta A).
$$
For variations supported in the interior, nondegeneracy of $K$ therefore gives $F_A=0.$ Thus classical solutions are flat connections. Locally they are pure gauge, but globally they can retain nontrivial holonomy. These global degrees of freedom are what we must quantize.

On the spacetime cylinder $I\times\Sigma$, the temporal component of the connection imposes spatial flatness. Writing $a:=A_\Sigma/(2\pi)$, gauge transformations identify closed forms modulo exact forms and integral periods, giving the reduced phase space
$$
\mathcal M_\Sigma(\mathbb T)
:=
H^1(\Sigma;\mathrm t)/H^1(\Sigma;\Lambda).
$$
 Thus the classical theory associates a finite-dimensional symplectic torus to each closed oriented surface. The boundary term in the variation of the action determines the symplectic pairing. In the integral normalization used throughout our framework, it is

$$
\omega_{\Sigma,K}([\alpha],[\beta])
=
\int_\Sigma K(\alpha\wedge\beta),
$$

where $\alpha,\beta$ represent variations of the normalized field $a$.  For a compact oriented three-manifold $X$ with $\partial X=\Sigma$, the continuous variations of extendable flat boundary fields form the Lagrangian subspace
$$
L_X:=\operatorname{Im}(r_X)\subset H^1(\Sigma;\mathrm t),
$$
where $r_X:H^1(X;\mathrm t)\to H^1(\Sigma;\mathrm t)$ is restriction. Under Poincar\'e duality, $\lambda_X:=\ker\!\bigl(H_1(\Sigma;\mathbb R)\to H_1(X;\mathbb R)\bigr)$ corresponds to

$$
L_X^{\mathbb R}
:=
\operatorname{Im}\!\bigl(
H^1(X;\mathbb R)\to H^1(\Sigma;\mathbb R)
\bigr),
\qquad
L_X=L_X^{\mathbb R}\otimes\mathrm t.
$$
The action also determines the phase information needed for quantization. On a manifold with boundary, its exponential naturally takes values in a complex line over the boundary field. This is why quantum wavefunctions are sections of a line bundle: their phases transform with the boundary gauge data. The canonical Abelian Chern--Simons line construction descends to a Hermitian prequantum line bundle $\mathcal L_{\Sigma,K}\rightarrow  \mathcal M_\Sigma(\mathbb T)$ with unitary connection $\nabla_{\Sigma,K}$ satisfying
$$
F_{\nabla_{\Sigma,K}}=-2\pi i\,\omega_{\Sigma,K}.
$$

This curvature expresses how the phase of a state changes when transported around a small loop in phase space: it measures the enclosed symplectic area.  A polarization selects a representation in one set of conjugate variables, analogous to choosing position or momentum wavefunctions. Here we choose a rational Lagrangian $L\subset H^1(\Sigma;\mathbb R)$, determining the translation-invariant real polarization $P_L:=L\otimes \mathrm t\subset H^1(\Sigma;\mathrm t).$ Rationality ensures that its leaves are compact tori. Physically, the transverse coordinates specify a commuting set of holonomies; motion along a leaf changes their conjugate variables.

The polarization condition requires the prequantum part of a state to be parallel in these leaf directions:
$$
\nabla_v s=0,\qquad v\in P_L.
$$
This condition must be consistent both locally and globally. Locally, it is integrable because the symplectic form vanishes along a Lagrangian leaf, so the restricted connection is flat. Globally, transporting a candidate state around a closed cycle of the leaf must return the same vector in the same fiber. Hence
$$
s(x)=\operatorname{Hol}_{\nabla}(\gamma)\,s(x)
\qquad
\text{for every loop }\gamma\subset\ell
\text{ based at }x.
$$

Nontrivial holonomy around any loop forces a parallel section to vanish on the connected leaf. Conversely, trivial holonomy allows any initial fiber value to extend uniquely to a global parallel section. This is the Bohr--Sommerfeld condition: the global phase-consistency requirement for a polarized quantum state \cite[Sec.~5.1]{Manoliu2}.

We can see explicitly why this selects finitely many states. Choose symplectic lattice coordinates adapted to $L$, writing a phase-space point as $(x_1,\ldots,x_g;y_1,\ldots,y_g)$, with the $x_j$ varying along a leaf and the $y_j$ fixed. With compatible cycle orientations, the holonomy around the leaf cycle
specified by $\lambda=(\lambda_1,\ldots,\lambda_g)\in\Lambda^g$ is
given by \cite{Galviz2}
$$
\operatorname{Hol}_{\nabla}(\gamma_\lambda)
=\exp\!\left(2\pi i\sum_{j=1}^g K(y_j,\lambda_j)\right).
$$
Reversing a cycle inverts this phase and leaves the quantization condition unchanged. Requiring it to equal one for every $\lambda$ gives
$$
K(y_j,\lambda_j)\in\mathbb Z\quad\text{for every }\lambda_j\in\Lambda,
\qquad
Ky_j\in\Lambda^*,
\qquad
y_j\in K^{-1}\Lambda^*, 
$$
with $\Lambda^*:=\operatorname{Hom}(\Lambda,\mathbb Z)$. Large gauge transformations already identify $y_j$ modulo $\Lambda$, so the surviving transverse labels are
$$
K^{-1}\Lambda^*/\Lambda\xrightarrow{\;K\;}\Lambda^*/K\Lambda=G_K,
\qquad
\mathcal{BS}(\Sigma_g,L)\simeq G_K^g.
$$
The final identification uses a choice of origin leaf. Intrinsically the allowed leaves form a torsor for $G_K^g$. Thus compactness of the holonomy variables, their integral symplectic coupling, and the polarization condition together produce a discrete quantum state space. Each handle supplies one label in $G_K$, rather than an independent label for both conjugate holonomies.

On an allowed leaf the initial fiber value determines the parallel section, giving one complex degree of freedom. The half-density factor supplies the measure needed to pair states and normalize their inner product. In the translation-invariant toral setting it has a canonical parallel trivialization, so it does not change the Bohr--Sommerfeld selection rule \cite[Propositions~2.13--2.16]{Galviz2}. One may regard these states as distributions supported on the allowed leaves; ordinary smooth sections on the whole phase space would not implement this real-polarization quantization. The finite-dimensionality therefore concerns the topological quantum state space on a closed surface, not a discretization of the original space of gauge fields.

\begin{prop}\cite[Proposition 2.14, Definition 2.15, Theorem 2.17]{Galviz2}
\label{prop:toral-hilbert-space}
Let $\Sigma_g$ be a connected closed oriented surface of genus $g$, and let
$L\subset H^1(\Sigma_g;\mathbb R)$ be a rational Lagrangian. Then $\mathcal{BS}(\Sigma_g,L)$ is a torsor for $G_K^g$, and
$$
\mathcal H_{\mathbb T,K}(\Sigma_g,L)
=
\bigoplus_{\ell\in \mathcal{BS}(\Sigma_g,L)}
\Gamma_{\mathrm{flat}}\!\bigl(\ell;\,\mathcal L_{\Sigma_g,K}\otimes |\det P_L^*|^{1/2}\bigr),
$$
$$
\dim \mathcal H_{\mathbb T,K}(\Sigma_g,L)=|G_K|^g=|\det K|^g.
$$
In particular, an origin leaf labels the one-dimensional summands by $G_K^g$; choosing a unit vector in each summand gives a basis with these labels. The phases of these basis vectors require a choice beyond the origin leaf.
\end{prop}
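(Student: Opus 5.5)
The plan is to split the argument into three parts: the torsor structure of $\mathcal{BS}(\Sigma_g,L)$, the description of each summand, and the dimension count.

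\emph{Step 1: set up coordinates.} Since $L$ is rational, $L\cap H^1(\Sigma_g;\mathbb Z)$ has full rank in $L$. Because the intersection form is unimodular, I expect that a primitive integral basis of $L$ extends to a symplectic integral basis $(a_1,\dots,a_g;b_1,\dots,b_g)$ of $H^1(\Sigma_g;\mathbb Z)$ with $L=\operatorname{span}_{\mathbb R}(a_j)$. This needs $L\cap H^1(\Sigma_g;\mathbb Z)$ to be a saturated isotropic sublattice, which holds automatically after saturating. Tensoring with $\Lambda$ gives
$$
\mathcal M_\Sigma(\mathbb T)\cong \mathbb T^g\times\mathbb T^g,
$$
with coordinates $(x,y)$. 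In these coordinates $\omega_{\Sigma,K}=\sum_j K(dx_j\wedge dy_j)$, and the leaves of $P_L$ are the tori $\{y=\mathrm{const}\}$.

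\emph{Step 2: compute the holonomy and identify the Bohr--Sommerfeld leaves.} The holonomy of $\nabla_{\Sigma,K}$ around the leaf cycle $\gamma_\lambda$ is the formula quoted above, $\exp(2\pi i\sum_j K(y_j,\lambda_j))$. This follows from the curvature identity $F_\nabla=-2\pi i\,\omega$ by Stokes, applied to the cylinder swept between the leaf at $y$ and the origin leaf. This fixes the holonomy up to its value at the origin leaf, and that value is trivial in the canonical Chern--Simons line construction. The leaf at $y$ is therefore Bohr--Sommerfeld exactly when $Ky_j\in\Lambda^*$ for all $j$. So $\mathcal{BS}$ is identified with $(K^{-1}\Lambda^*/\Lambda)^g\cong G_K^g$ via $K$.

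Intrinsically, I need the action of $G_K^g$ by translations in the transverse directions. The translation by $v\in(K^{-1}\Lambda^*)^g$ preserves the Bohr--Sommerfeld condition, since the holonomy changes by $\exp(2\pi i K(v,\lambda))=1$. The action is free and transitive modulo $\Lambda^g$. This description is independent of the basis choice up to $Sp$-changes preserving $L$, so $\mathcal{BS}$ is a torsor for $G_K^g$ without any choice of origin. \textbf{The main obstacle is this well-definedness:} the origin leaf depends on the trivialization of the line bundle, which is exactly why only a torsor structure is canonical.

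\emph{Step 3: describe each summand.} On a Bohr--Sommerfeld leaf $\ell$, the flat line bundle $\mathcal L|_\ell$ has trivial holonomy. The half-density bundle $|\det P_L^*|^{1/2}$ is canonically flat and trivial because it is translation-invariant. Hence the flat sections of the tensor product form a one-dimensional space, determined by the value at a single point.

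On a non-Bohr--Sommerfeld leaf, nontrivial holonomy forces every parallel section to vanish, as argued before the statement. Hence the quantization space is the stated direct sum. Its dimension is $|\mathcal{BS}|=|G_K|^g$, and $|G_K|=|\Lambda^*/K\Lambda|=|\det K|$ because $K:\Lambda\to\Lambda^*$ is injective with cokernel of order $|\det K|$.

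Finally, choosing an origin leaf labels the summands by $G_K^g$, and choosing unit vectors in the summands gives a basis with these labels. The phases of these vectors are not fixed by the origin leaf, which gives the last sentence of the statement.
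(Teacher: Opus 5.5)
Your proposal is correct and follows essentially the same route as the paper. Both use adapted symplectic lattice coordinates for $L$, the leafwise holonomy $\exp\bigl(2\pi i\sum_j K(y_j,\lambda_j)\bigr)$, the Bohr--Sommerfeld condition $Ky_j\in\Lambda^*$ giving the $(K^{-1}\Lambda^*/\Lambda)^g\cong G_K^g$-torsor, a one-dimensional space of flat sections on each allowed leaf with the canonically trivialized half-density, and $|\Lambda^*/K\Lambda|=|\det K|$; the extra details you supply (extending $L\cap H^1(\Sigma_g;\mathbb Z)$ to a symplectic basis, and deriving the holonomy by Stokes from $F_\nabla=-2\pi i\,\omega_{\Sigma,K}$) are exactly the points the paper takes from \cite{Galviz2}.
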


\begin{proof}
This is precisely the real-polarization quantization framework established in \cite{Galviz2}.  The Bohr--Sommerfeld leaves are parametrized by $(\Lambda^*/K\Lambda)^g\cong G_K^g$, and each leaf contributes a one-dimensional space of covariantly constant sections twisted by the canonical half-density. The dimension counts the independent quantum states of the global
holonomy degrees of freedom on $\Sigma_g$.
\end{proof}

\begin{remark}
Manoliu's $U(1)$ theory is the rank-one positive even-level case of this framework. Taking
$$
\Lambda=\mathbb Z,\qquad K=[k],\qquad k\in 2\mathbb Z_{>0},
\qquad \mathbb T=\mathbb R/\mathbb Z\cong U(1),
$$
one obtains
$$
G_K\cong\mathbb Z_k,\qquad
y_j\in\tfrac1k\mathbb Z/\mathbb Z,\qquad
\dim\mathcal H_{U(1),k}(\Sigma_g,L)=k^g.
$$
Thus Manoliu's theory is recovered for $K=[k]$, while general $K$ describes coupled $U(1)$ fields \cite[Sec.~5.1]{Manoliu2}\cite[Corollary~4.9]{Galviz2}.
\end{remark}

The choice of polarization specifies how we represent a state, so changing it should amount to changing quantum representation. The analogue of the position--momentum Fourier transform is supplied by the canonical
Blattner--Kostant--Sternberg (BKS) operators
$$
F_{L_2L_1}:\mathcal H_{\mathbb T,K}(\Sigma,L_1)\xrightarrow{\sim}\mathcal H_{\mathbb T,K}(\Sigma,L_2),
$$
whose composition law is
$$
F_{L_3L_2}\circ F_{L_2L_1}
=
e^{\frac{\pi i}{4}\mu_K(L_1,L_2,L_3)}\,F_{L_3L_1},
\qquad
\mu_K(L_1,L_2,L_3)=\sigma(K)\,\mu_\Sigma(L_1,L_2,L_3).
$$
Here $\sigma(K)$ is the signature of the coupling form. The phase in the composition law means that changing representations is initially projective: a sequence of such changes can return the state multiplied by a phase. This is the $K$-twisted Maslov anomaly, absorbed by the additional structures in the extended theory.

A three-manifold with boundary prepares a quantum state on its boundary. Its classical solutions specify which boundary holonomies can occur; their Chern--Simons phases and fluctuation measures specify their amplitudes. Let $X$ be a compact oriented $3$-manifold. Flat torus bundles have torsion characteristic class, since their real curvature class vanishes. Accordingly, the bulk contributions are organized by torsion sectors.
For each torsion class $p\in \operatorname{Tors}H^2(X;\Lambda),$ let $\mathcal M_{X,p}(\mathbb T)$ denote the corresponding moduli space of flat $\mathbb T$-fields.
Its boundary image
$$
\Lambda_{X,p}:=\operatorname{Im}\!\bigl(\mathcal M_{X,p}(\mathbb T)\to \mathcal M_{\partial X}(\mathbb T)\bigr)
$$
is a translated Lagrangian leaf, parallel to $L_X$. The translation records the topological sector, while the tangent directions record the continuous flat fields that extend through the bulk. The phase and measure entering its quantum contribution are, respectively:

\begin{enumerate}[label=\rm(\arabic*)]
\item a covariantly constant Chern--Simons section
$$
\sigma_{X,p}:\mathcal M_{X,p}(\mathbb T)\longrightarrow r_{X,p}^*\,\mathcal L_{\partial X,K},
$$
\item a canonical translation-invariant Reidemeister-torsion half-density
$$
\mu_{X,p}\in
\Gamma\!\bigl(\Lambda_{X,p},|\det(T^*\Lambda_{X,p})|^{1/2}\bigr).
$$
\end{enumerate}

The section carries the classical action phase. The torsion half-density supplies the measure on the remaining flat modes; in the Gaussian functional-integral description it reflects the contribution of fluctuations after gauge fixing. The normalization keeps track of zero modes and residual gauge symmetry through the exponent
$$
m_X:=
\frac14\Bigl(
\dim H^1(X;\mathbb R)+\dim H^1(X,\partial X;\mathbb R)
-\dim H^0(X;\mathbb R)-\dim H^0(X,\partial X;\mathbb R)
\Bigr),
$$
which reduces, for closed connected $M$, to $m_M=\frac12\bigl(b_1(M)-1\bigr)$.

\begin{prop}\cite[Proposition 3.12]{Galviz2}
\label{thm:boundary-vector}
Let $X$ be a compact oriented $3$-manifold with boundary. Then the canonical boundary vector is
$$
Z^{CS}_{\mathbb T,K}(X):=
|\det K|^{m_X}\,
\frac{1}{\#\operatorname{Tors}H^2(X;\Lambda)}
\sum_{p\in \operatorname{Tors}H^2(X;\Lambda)}
\sigma_{X,p}\otimes \mu_{X,p}
\ \in\
\mathcal H_{\mathbb T,K}(\partial X,L_X^{\mathbb R}).
$$
Thus a bordism prepares a superposition of contributions from its allowed bulk topological sectors. Each term carries a classical phase and a half-density on an allowed boundary leaf. Distinct bulk sectors need not give distinct boundary leaves, so the expression is a sum of amplitudes, not merely a count of sectors.
\end{prop}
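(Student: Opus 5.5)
The canonical boundary vector is constructed in three steps: identify the leaves, build the pieces of each summand, and check that the normalized sum is canonical.

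\emph{Leaves.} First I verify that each summand actually lies in $\mathcal H_{\mathbb T,K}(\partial X,L_X^{\mathbb R})$ as described in Proposition~\ref{prop:toral-hilbert-space}. For a torsion class $p\in\operatorname{Tors}H^2(X;\Lambda)$, the flat $\mathbb T$-fields in sector $p$ form a torsor over the identity component $H^1(X;\mathrm t)/H^1(X;\Lambda)$. The restriction map therefore sends $\mathcal M_{X,p}(\mathbb T)$ onto a translate of the image of $L_X=\operatorname{Im}(r_X)$ in $\mathcal M_{\partial X}(\mathbb T)$, so $\Lambda_{X,p}$ is a translated leaf of the polarization $P_{L_X^{\mathbb R}}$. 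That $L_X^{\mathbb R}$ is Lagrangian and rational is the usual "half lives, half dies" argument. Rationality holds because it is the image of $H^1(X;\mathbb Q)$, and isotropy follows from Stokes: $\int_{\partial X}K(\alpha\wedge\beta)=\int_X d\,K(\alpha\wedge\beta)=0$ for restrictions of closed forms.

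\emph{Bohr--Sommerfeld.} Next I show that $\Lambda_{X,p}$ is a Bohr--Sommerfeld leaf. This is the step where the Chern--Simons section $\sigma_{X,p}$ enters. The action functional $e^{iS_X}$ gives a section of $r^*\mathcal L_{\partial X,K}$ over the space of connections on $X$. The variation formula
$$
\delta S_X(A)=\frac{1}{2\pi}\int_X K(\delta A\wedge F_A)-\frac{1}{4\pi}\int_{\partial X}K(A\wedge\delta A)
$$
shows that on flat connections this section is covariantly constant with respect to $\nabla_{\Sigma,K}$. Gauge invariance, established via the four-manifold extension and the evenness of $K$, shows that the section descends to the moduli space $\mathcal M_{X,p}(\mathbb T)$. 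A nonvanishing flat section along the leaf then forces the holonomy around leaf cycles to be trivial, which is exactly the Bohr--Sommerfeld condition. The only delicate point is descent along the possibly disconnected fibers of $\mathcal M_{X,p}\to\Lambda_{X,p}$. There I would check that the Chern--Simons invariant is constant on fibers, using that different extensions differ by closed flat fields with vanishing boundary restriction, whose relative Chern--Simons term is an integer pairing.

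\emph{Half-density.} The density factor is the Reidemeister torsion of $X$ with coefficients in $\mathrm t$, relative to its boundary. It is a density on the determinant line of the long exact sequence of $(X,\partial X)$, and hence a translation-invariant half-density on $T\Lambda_{X,p}\cong L_X$. Multiplying by $|\det K|^{m_X}$ and averaging over the torsion sectors gives the stated vector.

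\emph{Canonicity.} The main obstacle is showing that the normalization $|\det K|^{m_X}/\#\operatorname{Tors}H^2(X;\Lambda)$ is canonical and compatible with gluing. The torsion depends on choices of bases, and the sector average must be consistent with the residual constant gauge symmetries counted by $m_X$. For these normalization and gluing checks I would rely on the cited computations in \cite{Galviz2}, Proposition~3.12 and its surrounding lemmas, rather than redo them. As a sanity check I would test the closed case, where one expects $m_M=\frac12(b_1-1)$, against the known $U(1)_k$ lens-space values.
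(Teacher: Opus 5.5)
Your proposal follows essentially the same route as the paper, whose proof simply invokes \cite{Galviz2}: each $\Lambda_{X,p}$ is a Bohr--Sommerfeld leaf, $\sigma_{X,p}$ is a parallel section of the prequantum line over it, $\mu_{X,p}$ is the canonical torsion half-density, and the factor $|\det K|^{m_X}$ together with the torsion average is forced by the cylinder and gluing laws. You give more detail on the first three ingredients than the paper does (rationality and isotropy of $L_X^{\mathbb R}$, covariant constancy from the boundary variation term, and descent across possibly disconnected fibers of $\mathcal M_{X,p}\to\Lambda_{X,p}$), and, like the paper, you defer the normalization and gluing checks to \cite{Galviz2}.
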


\begin{proof}
This is the canonical boundary-state formula of \cite{Galviz2}.  Each $\Lambda_{X,p}$ is a Bohr--Sommerfeld leaf, $\sigma_{X,p}$ is a parallel section of the prequantum line over that leaf, and $\mu_{X,p}$ is the corresponding canonical torsion half-density.  The normalization by $|\det K|^{m_X}$ and the torsion average is forced by compatibility with the cylinder and gluing laws.
\end{proof}

When $\partial X=\varnothing$, there are no boundary variables left to label a state. The state space is canonically $\mathbb C$, and the same construction yields a scalar amplitude, the partition function. The remaining integral sums the continuous flat modes within each torsion sector.

\begin{definition}\cite[Definition 4.6]{Galviz2}
\label{def:closed-partition}
For a closed connected oriented $3$-manifold $M$, the Abelian Chern--Simons partition
function is
$$
Z^{CS}_{\mathbb T,K}(M):=
|\det K|^{m_M}\,
\frac{1}{\#\operatorname{Tors}H^2(M;\Lambda)}
\sum_{p\in \operatorname{Tors}H^2(M;\Lambda)}
\int_{\mathcal M_{M,p}(\mathbb T)}
\sigma_{M,p}\,(T_M(\mathrm t))^{1/2},
$$
where $(T_M(\mathrm t))^{1/2}$ is the translation-invariant density induced by the square
root of Reidemeister--Ray--Singer torsion.
\end{definition}

The same amplitude can be evaluated using a surgery presentation. In this description the continuous flat-mode contribution is incorporated into the normalization, while the torsion sectors appear explicitly in a finite quadratic sum. If $M=M_L$ is presented by integral surgery on a framed link $\mathcal L\subset S^3$
with linking matrix $L$, choose $U\in GL_m(\mathbb Z)$ such that
$$
U^\top L U=
\begin{pmatrix}
L_{\mathrm{reg}}&0\\
0&0
\end{pmatrix},
$$
with $L_{\mathrm{reg}}$ nondegenerate of size $\rho\times \rho$, where $\rho=m-b_1(M),$ and $m_M=\frac12\bigl(b_1(M)-1\bigr).$
Then
$$
\operatorname{Tors}H^2(M;\Lambda)\cong
\mathbb Z^{\rho n}/(L_{\mathrm{reg}}\otimes I_n)\mathbb Z^{\rho n},
$$
and the partition function takes the following finite quadratic Gauss-sum form.

\begin{theorem}\cite[Theorem 4.8, Eq.~(4.14)]{Galviz3}
\label{thm:surgery-expression}
For $M=M_L$ as above,
$$
Z^{CS}_{\mathbb T,K}(M_L)
=
|G_K|^{m_M}\,
|\det(L_{\mathrm{reg}})|^{-n/2}
\sum_{[x]\in \mathbb Z^{\rho n}/(L_{\mathrm{reg}}\otimes I_n)\mathbb Z^{\rho n}}
\exp\!\Bigl(\pi i\,x^\top (L_{\mathrm{reg}}^{-1}\otimes K)x\Bigr).
$$
Equivalently, if
$$
q_{L,K}([x])\equiv \frac12\,x^\top(L_{\mathrm{reg}}^{-1}\otimes K)x \pmod 1,
$$
then
$$
Z^{CS}_{\mathbb T,K}(M_L)
=
|G_K|^{m_M}\,
|\det(L_{\mathrm{reg}})|^{-n/2}
\sum_{[x]\in \mathbb Z^{\rho n}/(L_{\mathrm{reg}}\otimes I_n)\mathbb Z^{\rho n}} e^{2\pi i\,q_{L,K}([x])}.
$$
\end{theorem}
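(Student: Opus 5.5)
We plan to derive the formula from Definition~\ref{def:closed-partition} by evaluating each torsion-sector integral on the surgery presentation. First we set up the homology. The surgery presentation gives $H_1(M;\mathbb Z)\cong\mathbb Z^m/L\mathbb Z^m$. Changing basis by $U$ splits this as $\mathbb Z^{b_1(M)}\oplus \mathbb Z^{\rho}/L_{\mathrm{reg}}\mathbb Z^{\rho}$. Tensoring with $\Lambda\cong\mathbb Z^n$ and using universal coefficients gives the stated identification of $\operatorname{Tors}H^2(M;\Lambda)$ with $\mathbb Z^{\rho n}/(L_{\mathrm{reg}}\otimes I_n)\mathbb Z^{\rho n}$. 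Each flat moduli space $\mathcal M_{M,p}(\mathbb T)$ is then a torus $\mathbb T^{b_1(M)}$ translated by a torsion holonomy $x$. Because the Chern--Simons functional is constant on connected components of flat connections, $\sigma_{M,p}$ is a constant phase on each component.

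Next we compute that phase. We represent $M_L$ as the boundary of the $4$-dimensional $2$-handlebody $W_L$ with intersection form $L$. In the sector $p\leftrightarrow[x]$ we extend the flat $\mathbb T$-bundle over $W_L$ using rational classes $\widetilde x=(L_{\mathrm{reg}}^{-1}\otimes I_n)x$ on the regular block. The formula $e^{iS}=\exp\bigl(\frac{i}{4\pi}\int_W K(F\wedge F)\bigr)$ then gives the linking-form value
\[
\exp\bigl(\pi i\,x^\top(L_{\mathrm{reg}}^{-1}\otimes K)x\bigr).
\]
Evenness of $K$ makes this well defined modulo $(L_{\mathrm{reg}}\otimes I_n)\mathbb Z^{\rho n}$. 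Independence of the representative and of the extension follows from evenness and integrality exactly as in the discussion of the action. The free directions contribute trivially, since they pair to zero against the torsion block. This is precisely why $U$ is used to block-diagonalize $L$.

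The remaining integral is $\int_{\mathcal M_{M,p}}(T_M(\mathrm t))^{1/2}$, which is independent of $p$ by translation invariance. We evaluate it with the Reidemeister torsion of $M_L$ and $\mathrm t$-coefficients, computed from the surgery cell structure. The torsion combines the torus volume with the order of torsion, roughly $|\det L_{\mathrm{reg}}|^{n}$ raised to an appropriate power, together with a $|\det K|$-dependence from the lattice normalization of $\mathrm t$. We then combine this with the prefactor $|\det K|^{m_M}/\#\operatorname{Tors}$, where $\#\operatorname{Tors}=|\det L_{\mathrm{reg}}|^{n}$, and check that the total normalization is $|G_K|^{m_M}|\det L_{\mathrm{reg}}|^{-n/2}$.

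We expect this torsion normalization to be the main obstacle: keeping track of the half-density conventions and the volume of $\mathbb T^{b_1}$ with respect to $K$. As a cross-check, we would verify the normalization on $S^3$, on lens spaces $L(p,1)$, and on $S^1\times S^2$ ($b_1=1$, $\rho=0$, giving $Z=1=\dim\mathcal H(S^2)$). Alternatively, the normalization can be fixed by comparing with the Reshetikhin--Turaev surgery formula for $\mathcal C(G_K,q_K)$ via Gauss-sum reciprocity. Once the normalization is settled, the second displayed form is the first one rewritten with $q_{L,K}$.
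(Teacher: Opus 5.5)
Your route is the one the paper indicates: its proof defers to \cite{Galviz3} and describes the result as the torsion-sector sum of Definition~\ref{def:closed-partition} rewritten as the Gauss sum of the quadratic refinement of the linking pairing presented by the surgery matrix. The sector and phase parts of your plan are sound. This covers the identification of $\operatorname{Tors}H^2(M;\Lambda)$, the evaluation of $\sigma_{M,p}$ through $W_L$ with the rational relative lift, and well-definedness: $x\mapsto x+(L_{\mathrm{reg}}\otimes I_n)y$ shifts the exponent by $2\pi i\,x^\top(I_\rho\otimes K)y+\pi i\,y^\top(L_{\mathrm{reg}}\otimes K)y$, and $L_{\mathrm{reg}}\otimes K$ is even.

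The gap is the normalization. This is the real content of the statement, and your plan leaves it open in a form that would not close. Three facts fix what remains: $\sigma_{M,p}$ is a unimodular constant, $\#\operatorname{Tors}H^2(M;\Lambda)=|\det L_{\mathrm{reg}}|^{n}$, and the prefactor $|\det K|^{m_M}$ of Definition~\ref{def:closed-partition} already equals $|G_K|^{m_M}$. Hence the statement is equivalent to
\[
\int_{\mathcal M_{M,p}(\mathbb T)}\bigl(T_M(\mathrm t)\bigr)^{1/2}=|\det L_{\mathrm{reg}}|^{n/2}=|\operatorname{Tors}H_1(M;\mathbb Z)|^{n/2}\qquad\text{for every }p,
\]
with no dependence on $K$. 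The $|\det K|$-dependence you anticipate from the lattice normalization of $\mathrm t$ is therefore exactly what must \emph{not} occur. The half-density has to be normalized against the integral lattice $H^1(M;\Lambda)$, and all $K$-dependence lives in $|\det K|^{m_M}$ and in the phases.

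The missing ingredient is the standard evaluation of Reidemeister torsion with trivial coefficients relative to integral bases. For a free $\mathbb Z$-complex $C$, the torsion of $C\otimes\mathbb R$ taken with integral bases of the chains and of $H_*(C;\mathbb Z)/\operatorname{Tors}$ is $\prod_i|\operatorname{Tors}H_i(C;\mathbb Z)|^{\pm(-1)^i}$. For a closed oriented $3$-manifold only $H_1$ has torsion, so this is $|\operatorname{Tors}H_1(M;\mathbb Z)|^{\pm1}=|\det L_{\mathrm{reg}}|^{\pm1}$ by your splitting $H_1(M)\cong\mathbb Z^{b_1}\oplus\mathbb Z^{\rho}/L_{\mathrm{reg}}\mathbb Z^{\rho}$. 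With $\mathrm t=\Lambda\otimes\mathbb R$ coefficients it becomes the $n$-th power. Each component $\mathcal M_{M,p}(\mathbb T)$ is a translate of $H^1(M;\mathrm t)/H^1(M;\Lambda)$ and has unit lattice volume. Taking the square root gives the displayed value, with the sign of the exponent fixed by the torsion convention of Definition~\ref{def:closed-partition}.

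Neither of your fallbacks substitutes for this step. The checks on $S^3$, $L(p,1)$ and $S^1\times S^2$ sample a factor that depends on $M$, so they cannot determine it in general. Fixing the constant by comparison with the Reshetikhin--Turaev surgery formula is circular. As the proof of Theorem~\ref{thm:cs-rt-equivalence} indicates, the closed-manifold part of the CS/RT equivalence in \cite{Galviz3} is obtained by comparing the RT surgery scalar with precisely this Gauss sum via reciprocity, so it presupposes the formula you are proving.
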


\begin{proof}
This is the partition surgery formula proved in \cite{Galviz3}.  It is obtained by rewriting the torsion-sector sum in Definition~\ref{def:closed-partition} as a finite quadratic Gauss sum determined by the quadratic refinement of the torsion linking pairing presented by the surgery matrix.
\end{proof}

The cylinder and gluing laws express the physical consistency of these amplitudes. A cylinder transports a state without changing it, while gluing contracts matching boundary states, equivalently summing over a complete set of intermediate quantum states. In the normalization used here these requirements read as follows.

\begin{prop}\cite[Proposition 4.1 and Theorem 4.5]{Galviz2}
\label{prop:cylinder-gluing}
For every closed oriented surface $\Sigma$,
$$Z^{CS}_{\mathbb T,K}(\Sigma\times I)
=\mathrm{Id}, $$
 and if $X_{\mathrm{cut}}$ is obtained from $X$ by cutting along a closed oriented surface $\Sigma$, then
$$
Z^{CS}_{\mathbb T,K}(X)
=
\operatorname{Tr}_\Sigma\!\bigl(Z^{CS}_{\mathbb T,K}(X_{\mathrm{cut}})\bigr).
$$
\end{prop}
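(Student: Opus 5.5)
The two claims are of different natures, so I will treat them separately. In both cases I will work directly from the boundary-vector formula of Proposition~\ref{thm:boundary-vector}, together with the BKS pairing between polarized state spaces. The aim is to reduce each statement to the Bohr--Sommerfeld decomposition of Proposition~\ref{prop:toral-hilbert-space}.

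\textbf{Cylinder axiom.} For $X=\Sigma\times I$ we have $\partial X=\overline\Sigma\sqcup\Sigma$. The restriction image $L_X$ is the diagonal Lagrangian in $H^1(\Sigma;\mathrm t)\oplus H^1(\Sigma;\mathrm t)$. There is no torsion: $\operatorname{Tors}H^2(\Sigma\times I;\Lambda)=\operatorname{Tors}H^2(\Sigma;\Lambda)=0$ for a closed surface. Hence the sum collapses to a single sector, and $\Lambda_{X,0}$ is the diagonal of $\mathcal M_\Sigma\times\mathcal M_\Sigma$.

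Next I compute the normalization. From the long exact sequences one gets $b_1(X)=2g$, $b_1(X,\partial X)=b_2(X)=2g$, $b_0(X)=1$ and $b_0(X,\partial X)=0$. Therefore $m_X=\tfrac14(4g-1)$, and the power of $|\det K|$ must be checked against the half-density conventions.

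Identifying $\mathcal H(\overline\Sigma)\otimes\mathcal H(\Sigma)$ with $\operatorname{End}\mathcal H(\Sigma,L)$ after BKS transport to the polarization $L\oplus L$, I plan to show the following:
\begin{itemize}
\item the Chern--Simons section along the diagonal is the canonical pairing of $\mathcal L_\Sigma^{-1}\otimes\mathcal L_\Sigma$, so it is trivial;
\item the torsion half-density on the diagonal leaf, pushed through the BKS transform, becomes a sum over Bohr--Sommerfeld leaves $\ell\in\mathcal{BS}(\Sigma,L)$ of the projectors onto each one-dimensional summand.
\end{itemize}
The factor $|\det K|^{m_X}$, together with the Gaussian BKS normalization, should produce exactly the factor needed to turn this into $\sum_\ell \mathrm{pr}_\ell=\mathrm{Id}$. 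This normalization computation is the main obstacle. It requires carefully tracking the Maslov phase of the BKS composite through the Turaev--Walker correction, and I expect the leftover anomaly phase to vanish because the diagonal is transverse to $L\oplus L$ in the standard position.

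\textbf{Gluing law.} Let $X_{\mathrm{cut}}$ have boundary $\partial X\sqcup\Sigma\sqcup\overline\Sigma$. The trace contracts the $\Sigma$ factors using the pairing on $\mathcal H(\Sigma)$, equivalently the integral over $\mathcal M_\Sigma$ of the product of the two half-densities.

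I will decompose the sectors using Mayer--Vietoris. Flat fields on $X$ correspond to pairs of flat fields on $X_{\mathrm{cut}}$ that agree on the two copies of $\Sigma$. Torsion classes $p\in\operatorname{Tors}H^2(X;\Lambda)$ map to pairs of sectors of $X_{\mathrm{cut}}$ whose boundary leaves meet on the diagonal. Along fibres of this correspondence, the product of the Chern--Simons sections gives $\sigma_{X,p}$; this is the standard gluing of the Chern--Simons line.

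For the measures, the gluing formula for Reidemeister torsion, from the Mayer--Vietoris short exact sequence of chain complexes, turns the pairing of $\mu_{X_{\mathrm{cut}},p'}$ into $\mu_{X,p}$. This comes with a correction given by the order of a finite cokernel, namely a torsion subgroup of $H^1(\Sigma;\Lambda)$ modulo its image.

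I then need to check that this cokernel, the change $\#\operatorname{Tors}$, and the change in $m$ cancel. Additivity of $m$ up to $\tfrac12\dim H^1(\Sigma)$ accounts for the $|\det K|$ powers. The counting identity $\#\operatorname{Tors}H^2(X_{\mathrm{cut}})\cdot|\mathrm{coker}| = \#\operatorname{Tors}H^2(X)\cdot|\text{fibre}|$ accounts for the averaging factors. I expect this bookkeeping of finite groups in Mayer--Vietoris to be the delicate step. Finally, any Maslov index arising from the three Lagrangians $L_{X_{\mathrm{cut}}}$, the diagonal and $L_X$ is absorbed by the extended Lagrangian data, as in the BKS composition law.
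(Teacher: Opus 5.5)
The paper does not reprove this statement: it imports both axioms from \cite{Galviz2}, where they are verified directly from the boundary-vector formula and the BKS pairing, with the Maslov anomaly absorbed into the $K$-twisted extended structure. Your plan follows the same direct route. However, the step you yourself single out as the crux, the $|\det K|$ and phase bookkeeping, is postponed, and its inputs are wrong, so as set up it cannot close.

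For $X=\Sigma_g\times I$ one has $H^1(X,\partial X;\mathbb R)\cong H_2(X;\mathbb R)\cong H_2(\Sigma_g;\mathbb R)\cong\mathbb R$. (In the sequence of the pair, $H^1(X)\to H^1(\partial X)$ is injective, and only the cokernel of $H^0(X)\to H^0(\partial X)$ survives.) So $b_1(X,\partial X)=1$, not $2g$, and $m_X=g/2$. Your value $(4g-1)/4$ is off by $(2g-1)/4\neq 0$. Already for $g=0$, where $\mathcal H(S^2)\cong\mathbb C$ and no BKS transform enters, it would give $|\det K|^{-1/4}$ instead of $1$.

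Moreover, the diagonal is \emph{not} transverse to $L\oplus L$: $\Delta\cap(L\oplus L)=\{(x,x):x\in L\}$ has dimension $gn$. The BKS pairing of the diagonal leaf with a leaf $\ell\times\ell'$ is therefore a clean-intersection integral over $\{(x,x):x\in\ell\}$, nonzero only when $\ell=\ell'$. The symplectic pairing of the complementary directions contributes a factor $|\det K|^{-g/2}$, which is exactly what $|\det K|^{m_X}=|\det K|^{g/2}$ cancels. Whatever removes the anomaly phase, it cannot be transversality.

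The same omission recurs in the gluing law. The defect $m_X-m_{X_{\mathrm{cut}}}$ is not a fixed multiple of $\dim H^1(\Sigma)$. Two solid tori both have $m=0$, yet gluing them along the same $\Sigma=T^2$ yields $m_{S^1\times S^2}=0$ or $m_{S^3}=-\tfrac12$, depending on the gluing map. The defect depends on how the Lagrangians induced on $\Sigma$ from the two sides intersect. It is compensated by the normalizations of the BKS pairings you must perform before $\operatorname{Tr}_\Sigma$ even makes sense. Since $L_{X_{\mathrm{cut}}}$ is in general not of split form $L'\oplus L_\Sigma\oplus L_\Sigma$, you must first transport $Z(X_{\mathrm{cut}})$ to a split polarization, then contract the $\Sigma$-factors, then compare the result with $\mathcal H(\partial X,L_X)$. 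Your bookkeeping ($m$, $\#\operatorname{Tors}$, the Mayer--Vietoris cokernel, the fibre) omits these intersection-dependent factors, so the asserted cancellation fails in general; test it on $S^3$ versus $S^1\times S^2$. Also, $H^1(\Sigma;\Lambda)$ is free. The finite correction in the torsion gluing formula is therefore the torsion of the cokernel of the Mayer--Vietoris restriction map, together with lattice-index terms, not a torsion subgroup of $H^1(\Sigma;\Lambda)$.
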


Together, these statements promote the quantization of each surface to a theory of propagation and composition: surfaces carry state spaces and bordisms carry compatible linear amplitudes.

\begin{theorem}\cite[Theorem 4.8]{Galviz2}
\label{thm:toral-cs}
For every even, integral, nondegenerate lattice $(\Lambda,K)$ there is a unitary extended $(2+1)$-dimensional TQFT
$$
Z^{CS}_{\mathbb T,K}:\Cob^{\mathrm{ext}}_{2+1}\longrightarrow \mathrm{Vect}_{\mathbb C}.
$$
If $\Sigma_g$ is a connected closed oriented surface of genus $g$, then
$$
\dim Z^{CS}_{\mathbb T,K}(\Sigma_g)=|G_K|^g=|\det K|^g.
$$
\end{theorem}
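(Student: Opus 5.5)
The plan is to assemble the extended TQFT from the ingredients already established in this section rather than construct anything new. On objects I will take a closed oriented surface $\Sigma$ equipped with its Turaev--Walker extended data, that is, a rational Lagrangian $L\subset H^1(\Sigma;\mathbb R)$. To it I assign the space $\mathcal H_{\mathbb T,K}(\Sigma,L)$ of Proposition~\ref{prop:toral-hilbert-space}. On a bordism $X$ I take the boundary vector of Proposition~\ref{thm:boundary-vector}, which lives in $\mathcal H_{\mathbb T,K}(\partial X,L_X^{\mathbb R})$. Whenever the chosen Lagrangian on $\partial X$ differs from $L_X^{\mathbb R}$, I transport the vector with the BKS operator $F_{L,L_X^{\mathbb R}}$.

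The Maslov anomaly $e^{\frac{\pi i}{4}\mu_K}$ in the BKS composition law is exactly the Kashiwara cocycle multiplied by $\sigma(K)$. I will absorb it as usual: morphisms of $\Cob^{\mathrm{ext}}_{2+1}$ carry an integer weight, and gluing is corrected by $\mu_\Sigma$. After this correction the BKS operators compose strictly, so the assignment on objects is well defined up to canonical isomorphism. I then define the bordism map by the standard duality $\mathcal H(\Sigma_{\mathrm{in}})^*\otimes\mathcal H(\Sigma_{\mathrm{out}})$, using the Hermitian pairing supplied by the half-density factor. Unitarity will follow because that pairing is positive definite, since each summand over a Bohr--Sommerfeld leaf is a one-dimensional Hilbert line.

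The functoriality axioms come next. Proposition~\ref{prop:cylinder-gluing} gives both the cylinder axiom and the gluing law. Together they give the identity axiom and composition of bordisms. They also give the trace form needed for self-gluings, provided the weight correction and the Maslov phase are matched. The symmetric monoidal structure comes from disjoint union. Flat moduli spaces, torsion sectors, Chern--Simons sections and Reidemeister torsion all factor over components, so both the state spaces and the boundary vectors are tensor products. Naturality under diffeomorphisms holds because every ingredient is defined canonically from $(X,K)$.

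The dimension formula follows directly from Proposition~\ref{prop:toral-hilbert-space}. Since $\mathcal{BS}(\Sigma_g,L)$ is a $G_K^g$-torsor, the dimension is $|G_K|^g$, and $|G_K|=|\Lambda^*/K\Lambda|=|\det K|$ by the Smith normal form.

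I expect the main obstacle to be matching the anomaly exactly. One has to check that the phase appearing in the gluing of boundary vectors whose polarizations differ is precisely $e^{\frac{\pi i}{4}\sigma(K)\mu_\Sigma}$, so that it cancels against the Turaev--Walker weight correction. I would first verify this on a Heegaard splitting, comparing with the closed formula of Theorem~\ref{thm:surgery-expression}, and then rely on the gluing theorem of \cite{Galviz2}.
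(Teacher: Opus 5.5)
Your proposal follows essentially the same route as the paper, whose proof simply imports the result from \cite{Galviz2}. The shared ingredients are the real-polarization state spaces of Proposition~\ref{prop:toral-hilbert-space}, boundary vectors built from the Chern--Simons section and the torsion half-density, the cylinder and gluing axioms of Proposition~\ref{prop:cylinder-gluing}, absorption of the projective BKS anomaly by the $\sigma(K)$-twisted Maslov correction, and the $G_K^g$-torsor count for the dimension.

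One minor point concerns unitarity. Positive definiteness of the leaf pairing gives only the positivity half. You also need compatibility with orientation reversal, $Z(-X)=Z(X)^{\dagger}$, which uses the conjugated Chern--Simons section, unitarity of the BKS operators and conjugation of the Maslov phases. That compatibility is likewise part of what is taken from \cite{Galviz2}.
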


\begin{proof}
This is the main theorem of \cite{Galviz2}.  The state spaces are obtained by geometric quantization in real polarization; the bordism vectors are built from the classical Abelian Chern--Simons section and torsion half-densities; the cylinder and gluing axioms are proved directly; and the projective BKS anomaly is absorbed by the $K$-twisted Maslov correction.  The dimension statement is the genus-$g$ formula above.
\end{proof}

\begin{remark}
There is also a complementary route to the same extended Abelian Chern--Simons theory via a rigorous functional-integral construction \cite{Galviz2.5}.  In that approach, one starts from the formal oscillatory quotient integral over connections modulo gauge and evaluates it exactly by zeta-regularized Gaussian methods.  After translating by a flat connection, the Abelian Chern--Simons action becomes quadratic; Hodge decomposition then separates the integral into harmonic, gauge, and nondegenerate coexact sectors.
 
The resulting determinant and $\eta$-invariant factors produce the universal normalization $|\det K|^{m_X}$ and the corresponding signature correction, while the remaining harmonic integral yields the same partition functions as in the geometric-quantization construction.  For manifolds with boundary, the relative functional integral recovers the same canonical boundary states, expressed by the Chern--Simons section together with the torsion half-density on the corresponding Bohr--Sommerfeld leaf.  Thus the functional-integral and real-polarization constructions define the same extended Abelian Chern--Simons TQFT.
\end{remark}

The finite data obtained from quantizing the continuous torus phase space also admit an algebraic description. The discriminant group labels the simple sectors of a pointed modular category, and its quadratic form specifies their twist phases. This provides the link to the Reshetikhin--Turaev construction. We denote by $Z^{RT}_{\mathcal C}$ the Reshetikhin--Turaev TQFT associated with a modular tensor category $\mathcal C$, following \cite{Reshetikhin:1991}. For the pointed modular category $\mathcal  C=\mathcal C(G_K,q_K)$ case, see also \cite{Galviz3}.

\begin{theorem}\cite[Theorem 4.15]{Galviz3} \label{thm:cs-rt-equivalence} For every even, integral, nondegenerate lattice $(\Lambda,K)$ there is a symmetric monoidal natural isomorphism
$$
\Phi_K:\; Z^{RT}_{\mathcal C(G_K,q_K)}\cong Z^{CS}_{\mathbb T,K}.
$$
\end{theorem}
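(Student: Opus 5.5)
The plan is to build $\Phi_K$ in three stages. First identify the state spaces surface by surface, then check that the bordism vectors agree on generators, and finally use the gluing law of Proposition~\ref{prop:cylinder-gluing} to propagate the agreement to all bordisms.

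\textbf{State spaces.} For a connected genus-$g$ surface with a chosen rational Lagrangian $L$, Proposition~\ref{prop:toral-hilbert-space} gives a basis of $\mathcal H_{\mathbb T,K}(\Sigma_g,L)$. This basis is indexed, after choosing an origin leaf, by $G_K^g$. On the Reshetikhin--Turaev side, $Z^{RT}_{\mathcal C(G_K,q_K)}(\Sigma_g)$ is computed from a handlebody with a ribbon graph whose edges are coloured by simple objects. Since $\mathcal C(G_K,q_K)$ is pointed with simple objects indexed by $G_K$, this space has a basis indexed by $G_K^g$ as well, once the Lagrangian is taken to be $L_X$ for the handlebody $X$. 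I would therefore define $\Phi_K$ on these basis vectors by matching labels, with the following normalisation of phases:
\begin{enumerate}[label=\rm(\roman*)]
\item the handlebody with a Wilson line of charge $a\in G_K^g$ along its core is sent to the Bohr--Sommerfeld leaf translated by $a$;
\item the phase of each basis vector is fixed by requiring that the Chern--Simons boundary vector of Proposition~\ref{thm:boundary-vector} for that handlebody map to the RT vector.
\end{enumerate}
For other Lagrangians, $\Phi_K$ is transported using the BKS operators $F_{L_2L_1}$ on one side and the RT Lagrangian change maps on the other. Monoidality on disjoint unions is immediate from the tensor-product form of both theories.

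\textbf{Naturality.} Every 3-bordism decomposes into handlebodies glued along mapping-cylinder bordisms, so it suffices to show two things:
\begin{enumerate}[label=\rm(\alph*)]
\item $\Phi_K$ intertwines the mapping class group actions;
\item $\Phi_K$ intertwines the pairing of two handlebodies, which is a Heegaard-type gluing.
\end{enumerate}
The mapping class group is generated by Dehn twists, so (a) reduces to checking the actions on $T$ and $S$ type generators. On the RT side, the $T$ generator acts diagonally by $\theta_a=q_K(a)$, and the $S$ generator acts by the $S$-matrix $|G_K|^{-1/2}b_K(a,b)^{-1}$. On the Chern--Simons side, a Dehn twist acts on leaves by a shear, and the Chern--Simons section acquires the phase $\exp(\pi i\,x^\top K^{-1}x)=q_K$. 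The $S$ move exchanges conjugate polarizations, and the BKS pairing of two transverse leaf families is a discrete Fourier transform on $G_K$ with kernel $b_K$ and the same normalisation $|G_K|^{-1/2}$.

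For the closed-manifold check I would compare surgery formulas directly. Theorem~\ref{thm:surgery-expression} gives a Gauss sum over $\mathbb Z^{\rho n}/(L_{\mathrm{reg}}\otimes I_n)$. The RT surgery invariant is a sum over $G_K^m$ of $\prod\theta$ and $\prod b$ weights, normalised by $\mathcal D^{-m-1}$ and a $p_\pm$ correction. These should agree by a reciprocity identity for Gauss sums, which exchanges the roles of $L$ and $K$.

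\textbf{Main obstacle.} I expect the hardest step to be matching the projective phases. The BKS composition anomaly $e^{\frac{\pi i}{4}\sigma(K)\mu}$, corrected by the Maslov term, must coincide with the RT anomaly $(p_+/p_-)^{\mu}$-type factor. This requires the Gauss-sum identity $p_+/|p_+| = e^{\pi i\sigma(K)/4}$, which is Milgram's formula for the discriminant form of $K$, together with an agreement of conventions for the Maslov--Kashiwara indices. Once this is in place, both functors are extended TQFTs that agree on generators and on the anomaly cocycle. Uniqueness of such an extension then yields the symmetric monoidal natural isomorphism.
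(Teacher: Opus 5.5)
The reduction in your naturality step fails. Handlebodies and mapping cylinders do not generate the bordism category. In any composite or disjoint union of them, each connected component is a closed manifold, a handlebody up to a mapping class, or a mapping cylinder. So a $2$-handle attachment $\Sigma_g\to\Sigma_{g-1}$, a $1$-handle joining two components, or a general compression body is never reached. Checks (a) and (b) therefore control only the mapping class group action and, via Heegaard splittings, closed amplitudes.

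Closed undecorated amplitudes cannot close this gap. In $Z^{RT}_{\mathcal C(G_K,q_K)}$, every relabelling in $\mathcal O(G_K,q_K)$, already $x\mapsto -x$, fixes the vector of every undecorated $3$-manifold, because the Kirby colour is invariant. So undecorated vectors lie in the fixed subspace of this symmetry. That subspace is proper on surfaces of positive genus once $G_K$ has an element of order $>2$, and pairings against undecorated states cannot detect a bordism operator on its complement.

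The missing step is the one the paper relies on: compare matrix coefficients of $Z^{CS}_{\mathbb T,K}(X)$ and $Z^{RT}_{\mathcal C(G_K,q_K)}(X)$ between the full charged handlebody bases, i.e.\ translated Bohr--Sommerfeld leaves versus Wilson-line handlebodies. On the Chern--Simons side these coefficients come from the gluing law of Proposition~\ref{prop:cylinder-gluing}. On the RT side they are closed amplitudes with Wilson lines. The reciprocity you need is therefore the version with linear charge terms, not only the undecorated Gauss-sum identity you quote. Equivalently, you can add $1$- and $2$-handle attachments to your generators and check them directly in an adapted polarization, tracking the factors $|\det K|^{m_X}$.

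Two further repairs are needed.
\begin{itemize}
\item Normalization (ii) fixes only the origin-leaf phase. Proposition~\ref{thm:boundary-vector} assigns vectors only to undecorated manifolds, since the Chern--Simons bordism category has no Wilson lines. The translated-leaf phases, which Proposition~\ref{prop:toral-hilbert-space} leaves open, must be fixed by the intertwining requirement itself. The real content is that one choice works for all generators simultaneously.
\item Your closing appeal to ``uniqueness of such an extension'' is not an available result without generators and relations for the extended bordism category. With a set that genuinely generates, it is also unnecessary, since intertwining the generators already is naturality.
\end{itemize}

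Two smaller points. For $g\ge2$, the per-handle $S$ and $T$ moves do not generate the mapping class group, so you also need the Lickorish twists along curves joining adjacent handles. Also, the RT gluing anomaly is a power of $p_+/\mathcal D$ with $\mathcal D=|G_K|^{1/2}$, not of $p_+/p_-=(p_+/\mathcal D)^2$, which would double the phase. With that reading, Milgram's formula is exactly the identity you need.
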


\begin{proof}
This is the main equivalence theorem of \cite{Galviz3}.  On closed manifolds, the comparison reduces to the reciprocity relation between the RT surgery scalar and the Abelian Chern--Simons partition function.  On bordisms with boundary, one compares matrix coefficients against canonical handlebody states, leading to an equivalence of extended functors.
\end{proof}

The final classification makes precise how much of the classical coupling survives in the extended quantum theory under consideration. Different lattice presentations can encode equivalent finite quadratic data and hence equivalent quantum theories.

\begin{theorem}\cite[Theorem 3.2 and Corollary 3.3]{Galviz4}
\label{thm:classification}
Let $(\Lambda,K)$ and $(\Lambda',L)$ be even, integral, nondegenerate lattices. Then the following are equivalent: \begin{enumerate}[label=\rm(\alph*)] \item $(G_K,q_K)\cong (G_L,q_L)$ as finite quadratic modules; \item $Z^{RT}_{\mathcal C(G_K,q_K)}\cong Z^{RT}_{\mathcal C(G_L,q_L)}$ as symmetric monoidal extended TQFTs; \item $Z^{CS}_{\mathbb T,K}\cong Z^{CS}_{\mathbb T,L}$ as symmetric monoidal extended TQFTs.
\end{enumerate}
\end{theorem}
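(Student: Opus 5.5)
The equivalence (a)$\Leftrightarrow$(c) will follow from (a)$\Leftrightarrow$(b) together with Theorem~\ref{thm:cs-rt-equivalence}, so the real work is the Reshetikhin--Turaev statement. Indeed, $\Phi_K$ and $\Phi_L$ are symmetric monoidal natural isomorphisms, so conjugating by them gives a bijection between isomorphisms $Z^{RT}_{\mathcal C(G_K,q_K)}\cong Z^{RT}_{\mathcal C(G_L,q_L)}$ and isomorphisms $Z^{CS}_{\mathbb T,K}\cong Z^{CS}_{\mathbb T,L}$. I will therefore prove (a)$\Rightarrow$(b) and (b)$\Rightarrow$(a).

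For (a)$\Rightarrow$(b), an isomorphism of finite quadratic modules $\phi:(G_K,q_K)\to(G_L,q_L)$ should induce a braided equivalence $\mathcal C(G_K,q_K)\simeq\mathcal C(G_L,q_L)$. This uses the Joyal--Street/Eilenberg--MacLane fact that pointed braided categories are classified up to braided equivalence by their finite quadratic modules, since $H^3_{ab}(G;\mathbb C^\times)\cong\mathrm{Quad}(G,\mathbb C^\times)$. A braided equivalence of modular categories preserves twists, because ribbon structures on pointed categories are determined by $q$. It also preserves the $S$- and $T$-matrices, the global dimension $|G|$ and the Gauss sum $\sum_x q(x)$, which fixes the anomaly and central charge. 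The Reshetikhin--Turaev construction is functorial in ribbon equivalences, so we get a symmetric monoidal isomorphism of the extended TQFTs. Concretely, on the genus-$g$ state space it is the map relabelling colors by $\phi$ on the standard basis indexed by $G_K^g$, and compatibility with surgery/gluing follows because every RT amplitude is a polynomial in $q$ and the braiding $b(x,y)=q(x+y)/q(x)q(y)$.

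For (b)$\Rightarrow$(a), I will recover $(G,q)$ from the TQFT. The torus state space $Z(T^2)$ carries the action of the mapping class group $SL_2(\mathbb Z)$, projectively up to the anomaly, and a natural isomorphism $\eta$ must intertwine these actions, including the anomaly-corrected Dehn twist $T$ and $S$. It must also preserve the canonical solid torus vectors. The task is to extract the fusion ring and twists intrinsically. I will use the gluing structure: the pair-of-pants bordism equips $Z(T^2)$ with the Verlinde algebra structure, which for a pointed category is the group algebra $\mathbb C[G]$. The simple objects are intrinsic as the idempotent-free basis of group-like elements, equivalently the $S$-transforms of the primitive idempotents. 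Thus $\eta$ yields a ring isomorphism $\mathbb C[G_K]\cong\mathbb C[G_L]$ sending group-likes to group-likes, hence a group isomorphism $\phi$. Intertwining $T$ then forces $q_L(\phi(x))=\theta\,q_K(x)$ for a common scalar $\theta$, which comes from the framing/anomaly normalization. Evaluating at $x=0$ gives $\theta=1$, so $\phi$ is an isometry.

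The main obstacle is this last step, namely showing that the basis of simple labels is intrinsic to the TQFT. An abstract linear isomorphism of $Z(T^2)$ need not respect the chosen basis. I will handle this by arguing that $\eta$ commutes with the maps assigned to the solid torus with a punctured core and to the pair-of-pants. These bordisms are all in the source category, so naturality forces multiplicativity, and the group-likes of $\mathbb C[G]$ are exactly the nonzero multiplicative characters of the dual basis, which gives a canonical characterisation. A secondary subtlety is the anomaly: an isomorphism of extended theories must preserve the Maslov/signature cocycle. This forces equality of the central charges $\sigma(K)\equiv\sigma(L)\pmod 8$, and that equality is consistent with the Milgram formula once $q$ is matched.
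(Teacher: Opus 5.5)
Your reduction of (b)$\Leftrightarrow$(c) to (a)$\Leftrightarrow$(b), by conjugating with $\Phi_K$ and $\Phi_L$, is the paper's argument. The paper does not reprove (a)$\Leftrightarrow$(b); it cites it from \cite{Galviz4}. So your direct reconstruction of $(G,q)$ is a different route. Your (a)$\Rightarrow$(b) via Joyal--Street and functoriality of RT is fine. The precise reason twists are preserved is that a braided equivalence preserves the self-braidings $c(a,a)=q(a)$, hence the chosen twist $\theta=q$; ribbon structures on a pointed category are not unique in general.

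The gap is in (b)$\Rightarrow$(a), at the step you flag. A unital algebra isomorphism $(\mathbb C[G_K],\cdot)\cong(\mathbb C[G_L],\cdot)$ need not send group elements to group elements. Both algebras are isomorphic to $\mathbb C^{|G|}$, so such isomorphisms exist as soon as $|G_K|=|G_L|$ (e.g.\ $\mathbb C[\mathbb Z_4]\cong\mathbb C[\mathbb Z_2\times\mathbb Z_2]$). Your proposed repair does not close this, for two reasons.
\begin{itemize}
\item The nonzero multiplicative functionals on $(Z(T^2),\cdot)$ are the characters. These correspond to $\widehat G$ and to the primitive idempotents, not to $G$. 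Singling out group-likes would need the coproduct $|x\rangle\mapsto|x\rangle\otimes|x\rangle$, which the pair of pants does not provide. Read backwards, it gives (up to normalization) the convolution coproduct $|x\rangle\mapsto\sum_{y+z=x}|y\rangle\otimes|z\rangle$.
\item A solid torus with a labelled core is not a morphism of the undecorated source $\Cob^{\mathrm{ext}}_{2+1}$. Its label lies in $G_K$ for one theory and in $G_L$ for the other, so naturality of $\eta$ says nothing about it. Only the empty solid torus, i.e.\ the unit, is available.
\end{itemize}

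The characterization you mention only in passing is what should carry the argument.
\begin{itemize}
\item The mapping cylinder of the $S$-diffeomorphism of $T^2$, with its Lagrangian correction, is a morphism of the source category. Hence $\eta_{T^2}$ intertwines the two $S$-operators exactly.
\item By the Verlinde formula, $S$ sends each primitive idempotent of the fusion product to a nonzero multiple of a solid-torus basis vector.
\item Any algebra isomorphism permutes primitive idempotents. Therefore $\eta_{T^2}(|x\rangle_K)=\lambda_x|\phi(x)\rangle_L$ for some bijection $\phi$ and scalars $\lambda_x\neq0$.
\item Multiplicativity under the pair-of-pants product then forces $\phi(x+y)=\phi(x)+\phi(y)$.
\item Your $T$-argument, normalized at $x=0$, yields $q_L\circ\phi=q_K$.
\end{itemize}
With this replacement the proof is complete. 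It also gives something the citation does not: an explicit recovery of $(G,q)$ from genus-one data, in the spirit of the torus-data remark citing \cite{Galviz7}.
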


\begin{proof}
This is the classification theorem proved in \cite{Galviz4}.  The equivalence of \rm(a) and \rm(b) is the pointed modular-category statement.  The implication \rm(b)$\Rightarrow$\rm(c) follows by realizing the natural isomorphism through Theorem~\ref{thm:cs-rt-equivalence}, while \rm(c)$\Rightarrow$\rm(b) follows by transport in the opposite direction.
\end{proof}

\section{BF theory and Abelian Chern--Simons theory}
\label{sec:bf}

We now rewrite the BF embedding in toral notation; for background on BF theory, see \cite{Horowitz89,BlauThompson91,BBRT91}. Let $N \in M_n(\mathbb Z)$ be an integral matrix with $\det N \neq 0$. Introduce the upper-triangular matrix
$$
C_{BF}(N):=
\begin{pmatrix}
0 & N\\
0 & 0
\end{pmatrix}
\in M_{2n}(\mathbb Z),
$$
and its symmetrization
$$
K_{BF}(N):=C_{BF}(N)+C_{BF}(N)^{\top}
=
\begin{pmatrix}
0 & N\\
N^{\top} & 0
\end{pmatrix}.
$$
The matrix $C_{BF}(N)$ records the unsymmetrized BF coupling, while the Abelian Chern--Simons theory is defined by the even symmetric lattice with Gram matrix $K_{BF}(N)$.

\begin{prop}
\label{prop:bf-embedding}
The matrix
$$
K_{BF}(N)=
\begin{pmatrix}
0 & N\\
N^{\top} & 0
\end{pmatrix}
$$
is even, integral, symmetric, and nondegenerate. Hence it defines a toral extended TQFT
$$
Z^{BF}_N:=Z^{CS}_{\mathbb T_{BF},K_{BF}}:
\Cob^{\mathrm{ext}}_{2+1}\longrightarrow \mathrm{Vect}_{\mathbb C},
\qquad
\mathbb T_{BF}:=\mathbb R^{2n}/\mathbb Z^{2n}.
$$

Moreover, on a closed oriented $3$-manifold $M$, if we write
$$
\mathcal A=(A,B)^{\top},
\qquad
A=(A_1,\dots,A_n)^{\top},
\qquad
B=(B_1,\dots,B_n)^{\top},
$$
then the Abelian Chern--Simons action with level $K_{BF}(N)$,
$$
S^{CS}_{K_{BF}(N)}[\mathcal A]
:=
\frac{1}{4\pi}\int_M \mathcal A^{\top}K_{BF}(N)\,d\mathcal A,
$$
reduces to
$$
S^{CS}_{K_{BF}(N)}[\mathcal A]
=
\frac{1}{4\pi}\int_M
\bigl(
A^{\top}N\,dB + B^{\top}N^{\top}\,dA
\bigr)
=
\frac{1}{2\pi}\int_M A^{\top}N\,dB,
$$
where the second equality holds on closed $M$ by integration by parts. Thus, up to the standard normalization convention (and, on manifolds with boundary, up to the usual boundary term), the off-diagonal Abelian Chern--Simons theory with level $K_{BF}(N)$ reproduces the multicomponent Abelian BF theory with coupling matrix $N$.
\end{prop}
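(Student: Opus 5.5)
The plan has two parts: the lattice-theoretic checks on $K_{BF}(N)$, which give the TQFT by Theorem~\ref{thm:toral-cs}, and then a direct expansion of the action.

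\emph{Lattice properties.} Symmetry is immediate, since $K_{BF}(N)=C_{BF}(N)+C_{BF}(N)^{\top}$. Integrality holds because $N\in M_n(\mathbb Z)$. For evenness, write $v=(a,b)\in\mathbb Z^{2n}$ and compute
$$
v^\top K_{BF}(N)v=a^\top N b+b^\top N^\top a=2\,a^\top N b\in 2\mathbb Z .
$$
Nondegeneracy follows from the block determinant formula: permuting the block columns gives
$$
\det K_{BF}(N)=(-1)^{n}\det N\det N^{\top}=(-1)^n(\det N)^2\neq 0 .
$$
So $(\mathbb Z^{2n},K_{BF}(N))$ is an even, integral, nondegenerate lattice. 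Theorem~\ref{thm:toral-cs} then gives the extended TQFT $Z^{BF}_N$ with torus $\mathbb T_{BF}=\mathbb R^{2n}/\mathbb Z^{2n}$.

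\emph{Reduction of the action.} Substitute $\mathcal A=(A,B)^\top$ blockwise. This gives
$$
\mathcal A^\top K_{BF}\,d\mathcal A=A^\top N\,dB+B^\top N^\top dA ,
$$
which is the first displayed equality. For the second, expand in components: $B^\top N^\top dA=\sum_{i,j}N_{ij}\,B_j\wedge dA_i$. By the graded Leibniz rule for $1$-forms,
$$
d(A_i\wedge B_j)=dA_i\wedge B_j-A_i\wedge dB_j ,
$$
and since $B_j\wedge dA_i=dA_i\wedge B_j$ (a $1$-form and a $2$-form commute), we get
$$
B_j\wedge dA_i=A_i\wedge dB_j+d(A_i\wedge B_j).
$$
Summing with the weights $N_{ij}$ shows that $B^\top N^\top dA=A^\top N\,dB+d(A^\top N B)$. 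On closed $M$, Stokes' theorem kills the exact term, so the two summands contribute equally and $\frac1{4\pi}\cdot 2=\frac1{2\pi}$. On manifolds with boundary, this exact term is exactly the boundary term mentioned in the statement.

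\emph{Main obstacle.} The step needing care is the interpretive one: the manipulation above is for connections on trivial bundles, i.e.\ global $1$-forms. For nontrivial torus bundles, the action is defined via the four-manifold extension
$$
\frac{1}{4\pi}\int_W K(F\wedge F),
$$
and the same block expansion gives
$$
\frac{1}{4\pi}\int_W K_{BF}(F\wedge F)=\frac{1}{2\pi}\int_W F_A^\top N F_B ,
$$
which is the standard definition of the BF action. Independence of the extension $W$ is then guaranteed by the evenness checked above.

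The identification "up to normalization" should be stated as a definition of the BF theory, not as an independent theorem. Its quantum content is entirely carried by Theorem~\ref{thm:toral-cs}. As a consistency check, $|\det K_{BF}(N)|=(\det N)^2$, which matches the expected dimension $|\det N|^{2g}$ of the state spaces of $\mathbb Z_N$-type BF theory on $\Sigma_g$.
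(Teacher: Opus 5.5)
Your proposal is correct and follows the paper's proof: it checks symmetry, integrality, evenness and $\det K_{BF}(N)=(-1)^n(\det N)^2\neq 0$, applies Theorem~\ref{thm:toral-cs}, and then reduces the action by the same block expansion followed by Stokes' theorem on closed $M$. The explicit graded-Leibniz identity $B^\top N^\top dA=A^\top N\,dB+d(A^\top N B)$ and the remark on the four-manifold extension for nontrivial bundles add detail the paper leaves implicit, but they do not change the route.
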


\begin{proof}
The matrix $K_{BF}(N)$ is visibly integral and symmetric. Its diagonal entries vanish, so it is even. Its determinant is
$$
\det K_{BF}(N)=(-1)^n\det(NN^{\top})\neq 0,
$$
so it is nondegenerate. Theorem~\ref{thm:toral-cs} therefore gives the extended functor $Z^{CS}_{\mathbb T_{BF},K_{BF}}.$ For the action, a direct computation gives
$$
\mathcal A^{\top}K_{BF}(N)\,d\mathcal A
=
A^{\top}N\,dB + B^{\top}N^{\top}\,dA.
$$
Since $M$ is closed,
$$
\int_M B^{\top}N^{\top}\,dA
=
\int_M A^{\top}N\,dB
$$
by Stokes' theorem. Hence
$$
S^{CS}_{K_{BF}(N)}[\mathcal A]
=
\frac{1}{2\pi}\int_M A^{\top}N\,dB.
$$
This is precisely the multicomponent Abelian BF action with coupling matrix $N$, up to
the overall normalization convention. Therefore BF theory is realized as the off-diagonal
Abelian Chern--Simons theory with level $K_{BF}(N)$.
\end{proof}

\begin{remark}
Once BF theory is presented by the even lattice $(\mathbb Z^{2n},K_{BF})$, all of the extended structures: boundary state spaces, BKS comparison operators, bordism vectors, and gluing laws are inherited from the Abelian Chern--Simons construction of \cite{Galviz2}. For background on three-dimensional BF theory and its observables, see also \cite{CCRFM95}.
\end{remark}

\begin{theorem}
\label{thm:bf-tqft-data}
Let $N\in M_n(\mathbb Z)$ with $\det N\neq 0$, and define
$$
K_{BF}(N):=
\begin{pmatrix}
0 & N\\
N^\top & 0
\end{pmatrix},
\qquad
\mathbb T_{BF}:=\mathbb R^{2n}/\mathbb Z^{2n},
\qquad
Z_N^{BF}:=Z^{CS}_{\mathbb T_{BF},K_{BF}}.
$$
Then $Z_N^{BF}$ is a unitary extended $(2+1)$-dimensional TQFT. If $G_N:=\mathbb Z^{2n}/K_{BF}\mathbb Z^{2n},$ then
$$
|G_N|=|\det K_{BF}|=|\det N|^2.
$$

Moreover:

\begin{enumerate}[label=\roman*)]
\item For every connected closed oriented surface $\Sigma_g$ of genus $g$, and every
rational Lagrangian $L\subset H^1(\Sigma_g;\mathbb R)$,
$$
\mathcal H_N^{BF}(\Sigma_g,L)
=
\bigoplus_{\ell\in \mathcal{BS}(\Sigma_g,L)}
\Gamma_{\mathrm{flat}}\!\bigl(\ell;\,\mathcal L_{\Sigma_g,K_{BF}}
\otimes |\det P_L^*|^{1/2}\bigr),
$$
where $\mathcal{BS}(\Sigma_g,L)$ is a torsor for $G_N^g$. In particular,
$$
\dim \mathcal H_N^{BF}(\Sigma_g,L)=|G_N|^g=|\det N|^{2g}.
$$

\item If $X$ is a compact oriented $3$-manifold with boundary, then
$$
Z_N^{BF}(X)
=
|\det N|^{2m_X}\,
\frac{1}{\#\operatorname{Tors}H^2(X;\mathbb Z^{2n})}
\sum_{p\in \operatorname{Tors}H^2(X;\mathbb Z^{2n})}
\sigma_{X,p}\otimes \mu_{X,p}
\;\in\;
\mathcal H_N^{BF}(\partial X,L_X^{\mathbb R}),
$$
with the usual $m_X$.

\item If $M$ is a closed connected oriented $3$-manifold, then
$$
Z_N^{BF}(M)
=
|\det N|^{2m_M}\,
\frac{1}{\#\operatorname{Tors}H^2(M;\mathbb Z^{2n})}
\sum_{p\in \operatorname{Tors}H^2(M;\mathbb Z^{2n})}
\int_{\mathcal M_{M,p}(\mathbb T_{BF})}
\sigma_{M,p}\,(T_M(\mathbb R^{2n}))^{1/2},
$$
where $m_M=\frac12\bigl(b_1(M)-1\bigr).$

\item If $M=M_L$ is presented by integral surgery on a framed link with linking matrix
$L$, and if $U\in GL_m(\mathbb Z)$ is chosen so that $U^\top L U=\left(\begin{smallmatrix} L_{\mathrm{reg}}&0\\ 0&0 \end{smallmatrix}\right)$ with $L_{\mathrm{reg}}$ nondegenerate of size $\rho\times \rho$, where $\rho=m-b_1(M), $ and $m_M=\frac12\bigl(b_1(M)-1\bigr),$ then

$$
\operatorname{Tors}H^2(M;\mathbb Z^{2n})
\cong
\mathbb Z^{2n\rho}/(L_{\mathrm{reg}}\otimes I_{2n})\mathbb Z^{2n\rho},
$$
such that

$$
Z_N^{BF}(M_L)
=
|G_N|^{m_M}\,
|\det(L_{\mathrm{reg}})|^{-n}
\sum_{[x]\in \mathbb Z^{2n\rho}/(L_{\mathrm{reg}}\otimes I_{2n})\mathbb Z^{2n\rho}}
\exp\!\Bigl(\pi i\,x^\top(L_{\mathrm{reg}}^{-1}\otimes K_{BF}(N))x\Bigr).
$$
Equivalently, if
$$
q^{BF}_{L,N}([x])\equiv
\frac12\,x^\top(L_{\mathrm{reg}}^{-1}\otimes K_{BF})x
\pmod 1,
$$
then
$$
Z_N^{BF}(M_L)
=
|G_N|^{m_M}\,
|\det(L_{\mathrm{reg}})|^{-n}
\sum_{[x]\in \mathbb Z^{2n\rho}/(L_{\mathrm{reg}}\otimes I_{2n})\mathbb Z^{2n\rho}}
e^{2\pi i\,q^{BF}_{L,N}([x])}.
$$
\end{enumerate}
\end{theorem}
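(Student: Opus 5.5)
The plan is to specialize the general results of Section~\ref{sec:Abelian-CS} to the lattice $(\mathbb Z^{2n},K_{BF}(N))$. Proposition~\ref{prop:bf-embedding} already shows that $K_{BF}(N)$ is even, integral, symmetric and nondegenerate. Theorem~\ref{thm:toral-cs} then gives a unitary extended TQFT $Z^{CS}_{\mathbb T_{BF},K_{BF}}$, which is $Z_N^{BF}$ by definition. The only genuinely new input is arithmetic: computing $|G_N|$ and bookkeeping the substitution $\Lambda=\mathbb Z^{2n}$, $\mathrm t=\mathbb R^{2n}$, rank $2n$ into each formula.

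\textbf{Order of $G_N$.} Here $\Lambda^*=\mathbb Z^{2n}$ under the standard identification. For any integral nondegenerate matrix, $|\mathbb Z^{2n}/K\mathbb Z^{2n}|=|\det K|$, for instance via Smith normal form. The determinant was computed in the proof of Proposition~\ref{prop:bf-embedding}:
$$
\det K_{BF}(N)=(-1)^n\det(N)\det(N^\top)=(-1)^n(\det N)^2 .
$$
Hence $|G_N|=|\det N|^2$. This is the only step with actual content, and it is routine. The one point to check is the sign of the block-swap permutation, and that sign disappears under absolute value.

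\textbf{The four items.} Each follows by substituting this lattice into an earlier result.
\begin{enumerate}[label=\roman*)]
\item Apply Proposition~\ref{prop:toral-hilbert-space} with $K=K_{BF}$. This gives $\dim=|G_N|^g=|\det N|^{2g}$.
\item Apply Proposition~\ref{thm:boundary-vector}. Here $|\det K|^{m_X}=(|\det N|^2)^{m_X}$.
\item Apply Definition~\ref{def:closed-partition} with $\mathrm t=\mathbb R^{2n}$ and the same normalization rewrite as in ii).
\item Apply Theorem~\ref{thm:surgery-expression} with rank $n$ replaced by $2n$. The torsion group becomes $\mathbb Z^{2n\rho}/(L_{\mathrm{reg}}\otimes I_{2n})\mathbb Z^{2n\rho}$, and the prefactor $|\det L_{\mathrm{reg}}|^{-(2n)/2}=|\det L_{\mathrm{reg}}|^{-n}$. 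The $q^{BF}_{L,N}$ form follows by rewriting the exponent.
\end{enumerate}

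I expect no real obstacle. The one thing I would verify is the rank bookkeeping in iv), namely that the exponent $-n/2$ from Theorem~\ref{thm:surgery-expression} refers to the rank of $\Lambda$ and therefore becomes $-n$ here. I would also confirm that the identification $\Lambda^*\cong\mathbb Z^{2n}$ is compatible with the definition $G_N=\mathbb Z^{2n}/K_{BF}\mathbb Z^{2n}$ used in the statement.
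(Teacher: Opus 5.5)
Your proposal is correct and takes essentially the same route as the paper: check that $K_{BF}(N)$ is even, integral and nondegenerate with $\det K_{BF}(N)=(-1)^n(\det N)^2$, invoke Theorem~\ref{thm:toral-cs}, and then specialize the general state-space, boundary-vector, closed-partition and surgery formulas to the rank-$2n$ lattice. The two points you flag are handled exactly as you expect: the prefactor becomes $|\det L_{\mathrm{reg}}|^{-(2n)/2}=|\det L_{\mathrm{reg}}|^{-n}$, and $\Lambda^*\cong\mathbb Z^{2n}$ identifies $G_N$ with $\mathbb Z^{2n}/K_{BF}\mathbb Z^{2n}$.
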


\begin{proof}
The matrix $K_{BF}(N)$ is integral and symmetric, and its diagonal entries vanish, so it
is even. Moreover,
$$
\det K_{BF}(N)=(-1)^n\det(NN^\top)\neq 0,
$$
since $\det N\neq 0$. Thus $(\mathbb Z^{2n},K_{BF})$ is an even, integral,
nondegenerate lattice, and therefore defines a unitary extended $(2+1)$-dimensional
Abelian Chern--Simons TQFT
$$
Z_N^{BF}=Z^{CS}_{\mathbb T_{BF},K_{BF}}.
$$

Its discriminant group is
$$
G_N=\mathbb Z^{2n}/K_{BF}(N)\mathbb Z^{2n},
$$
hence
$$
|G_N|=|\det K_{BF}(N)|=|\det(NN^\top)|=|\det N|^2.
$$

Statements $(i)$ , $(ii)$, and $(iii)$ are exactly the general surface-space, boundary-state, and closed-partition formulas for Abelian Chern--Simons theory, reduced to the lattice $(\mathbb Z^{2n},K_{BF})$. In particular,
$$
\dim \mathcal H_N^{BF}(\Sigma_g,L)=|G_N|^g=|\det N|^{2g},
$$
and the normalization factors become
$$
|\det K_{BF}(N)|^{m_X}=|\det N|^{2m_X},
\qquad
|\det K_{BF}(N)|^{m_M}=|\det N|^{2m_M}.
$$

For $(iv)$, apply the general surgery formula to the same lattice. Since the
lattice rank is $2n$, one gets
$$
\operatorname{Tors}H^2(M;\mathbb Z^{2n})
\cong
\mathbb Z^{2n\rho}/(L_{\mathrm{reg}}\otimes I_{2n})\mathbb Z^{2n\rho},
$$
with the prefactor
$$
|\det(L_{\mathrm{reg}})|^{-(2n)/2}=|\det(L_{\mathrm{reg}})|^{-n}.
$$
This gives precisely the stated finite quadratic Gauss-sum expression.
\end{proof}

\section{Reciprocity Relations}
\label{sec:reciprocity}

Using the Deligne--Beilinson cohomology framework, \cite{Kim2024,Tagaris2025} constructed a $U(1)^n$ Chern--Simons partition function written as a normalized functional integral, where the normalization factor $\mathcal N_{\mathrm{CS}_C}(M)$ was chosen so that the resulting expression reduces to a finite Gauss sum (see Eqs. (14)-(16) in \cite{Kim2024}), as follows\footnote{See~\cite{Kim2024,Tagaris2025} for the precise definition of these expressions.}:
$$
\begin{aligned}
Z^{\mathrm{CS}}(M)
&=
\frac{1}{\mathcal N_{\mathrm{CS}_C}(M)}
\sum_{\kappa_A\in \left(TH^{2}(M)\right)^{n}}
\exp\!\left(-2\pi i\,{}^{t}\!\kappa_A\,(C\otimes Q)\,\kappa_A\right)
\\
&\qquad\cdot
\int_{\left(\Omega^{1}(M)/\Omega^{1}_{\mathrm{cl}}(M)\right)^{n}}
D\alpha^{\perp}\,
\exp\!\left(2\pi i\int_{M} {}^{t}\!\alpha^{\perp}\star C\alpha^{\perp}\right),
\end{aligned}
$$
where the contribution from $ \left(\Omega^{1}(M)/\Omega^{1}_{\mathrm{cl}}(M)\right)^{n} $ appears as an infinite-dimensional functional integral, which is then removed by choosing
$$
\mathcal N_{\mathrm{CS}_C}(M)=
\int_{\left(\Omega^{1}(M)/\Omega^{1}_{\mathrm{cl}}(M)\right)^{n}}
D\alpha^{\perp}\,
\exp\!\left(2\pi i\int_{M} {}^{t}\!\alpha^{\perp}\star C\alpha^{\perp}\right).
$$
This leads to a heuristic partition function for closed manifolds
\begin{equation}
\label{eq:DB-partition-function}
Z^{\mathrm{CS}}(M)=
\sum_{\kappa_A\in \left(TH^{2}(M)\right)^{n}}
\exp\!\left(-2\pi i\,{}^{t}\!\kappa_A\,(C\otimes Q)\,\kappa_A\right).
\end{equation}

This simplification does not retain some of the topological information tracked by the TQFT formalism, and as a result, it suggests an apparent duality between Abelian Chern--Simons theories. As we show below, however, such duality does not hold in the TQFT formalism. Accordingly, their analysis gives identities between these closed-manifold partition functions and motivates the following reciprocity diagram for a closed oriented $3$-manifold\footnote{In \cite{Tagaris2025}, the analysis is restricted to rational homology spheres, since the surgery linking matrix $L$ is assumed to be non-degenerate, i.e. $(\det L\neq 0)$. Here, we extend the study of these relations to arbitrary closed manifolds $M$.} \cite{Tagaris2025}: 
$$
\begin{tikzcd}[column sep=large,row sep=large]
Z^{\mathrm{CS}}_{K,Q_M}(M)
\arrow[r,leftrightarrow,"\mathrm{K\leftrightarrow L\ duality}"']
\arrow[r,phantom,"\scriptstyle ?",yshift=1.6ex]
\arrow[d,leftrightarrow,"\mathrm{recip.}"']
&
Z^{\mathrm{CS}}_{L,Q_K}(M)
\arrow[d,leftrightarrow,"\mathrm{recip.}"]
\\
Z^{RT}_{Q_K,L}(M)
\arrow[r,leftrightarrow,"\mathrm{K\leftrightarrow L\ duality}"']
\arrow[r,phantom,"\scriptstyle ?",yshift=1.6ex]
&
Z^{RT}_{Q_L,K}(M)
\end{tikzcd}
$$

In particular, our formalism establishes the long-expected equivalence between the Abelian Chern--Simons and Reshetikhin--Turaev TQFTs \cite{Galviz1,Galviz3}. On closed $3$-manifolds, these functorial equivalences induce the vertical identifications represented in the diagram. The genuinely new feature suggested by the diagram above is the presence of the horizontal arrows, which in \cite{Kim2024,Tagaris2025} are interpreted as a duality between Abelian Chern--Simons theories. In the present framework, however, these arrows encode a relation between partition surgery evaluations, not a genuine duality of QFTs. 

Once the Abelian Chern--Simons theory is constructed as a unitary extended $(2+1)$-dimensional TQFT, identified with the Reshetikhin--Turaev theory associated with the discriminant finite quadratic module $(G_K,q_K)$, and classified by finite quadratic modules, one can distinguish clearly between statements that extend functorially and those that do not. Let us consider first the vertical arrows. 

By Theorem~\ref{thm:cs-rt-equivalence}, the vertical arrows in the diagram are symmetric monoidal natural isomorphisms
$$
Z^{RT}_{\mathcal C(G_K,q_K)}\cong Z^{CS}_{\mathbb T_K,K},
\qquad
Z^{RT}_{\mathcal C(G_L,q_L)}\cong Z^{CS}_{\mathbb T_L,L}.
$$
Thus the issue concerns only the proposed horizontal $K\leftrightarrow L$ arrows.

For a general closed $3$-manifold $M$, a surgery linking matrix $L$ need not be nondegenerate; when $b_1(M)>0$, one instead uses its nondegenerate block $L_{\mathrm{reg}}$ in the general surgery formula. This should be distinguished from the Chern--Simons level matrix, which is required to be nondegenerate. The horizontal reciprocity comparison below is therefore restricted to the case in which both $K$ and $L$ are nondegenerate, so that $L^{-1}$ is defined and, in particular, $-L$ may itself be regarded as the level matrix of a second Abelian Chern--Simons theory. Thus this role-swapping argument applies only to the nondegenerate surgery case, while the underlying Chern--Simons TQFT remains defined on arbitrary closed $3$-manifolds.

Let us consider two different Chern--Simons theories $Z^{CS}_{\mathbb T_K,K}(M_L)$ and  $Z^{CS}_{\mathbb T_L,-L}(M_K)$ as constructed in \cite{Galviz2}, with  $K\in M_n(\mathbb Z)$ and $L\in M_m(\mathbb Z)$  for $K,L$  even, symmetric, and nondegenerate, and $M_K$ and $M_L$ denote the closed oriented $3$-manifolds obtained by surgery with linking matrices $K$ and $L$, respectively. Since $b_1(M_K)=b_1(M_L)=0$, Theorem~\ref{thm:surgery-expression} gives
\begin{equation}
\label{eq:toral-surgery-K-on-L}
Z^{CS}_{\mathbb T_K,K}(M_L)
=
|G_K|^{-1/2}\,|\det L|^{-n/2}
\sum_{x\in \mathbb Z^{mn}/(L\otimes I_n)\mathbb Z^{mn}}
\exp\!\bigl(\pi i\,x^\top(L^{-1}\otimes K)x\bigr),
\end{equation}
and, with the sign convention chosen so that the transformed Gauss sum appears as the partition function of the dual closed theory,
\begin{equation}
\label{eq:toral-surgery-minusL-on-K}
Z^{CS}_{\mathbb T_L,-L}(M_K)
=
|G_L|^{-1/2}\,|\det K|^{-m/2}
\sum_{y\in \mathbb Z^{mn}/(K\otimes I_m)\mathbb Z^{mn}}
\exp\!\bigl(-\pi i\,y^\top(K^{-1}\otimes L)y\bigr).
\end{equation}

Let us recall the reciprocity transformation from \cite[Theorem 1]{Deloup2005}:
$$\begin{aligned} \label{eq:CS-reciprocity} &|\det L|^{-n/2} \sum_{x\in \mathbb Z^{mn}/(L\otimes I_n)\mathbb Z^{mn}} \exp\!\bigl(\pi i\,x^\top(L^{-1}\otimes K)x\bigr)\\ &=|\det K|^{-m/2} e^{\frac{\pi i}{4}\sigma(K)\sigma(L)} \sum_{y\in \mathbb Z^{mn}/(K\otimes I_m)\mathbb Z^{mn}} \exp\!\bigl(-\pi i\,y^\top(K^{-1}\otimes L)y\bigr). \end{aligned}$$
Based on this relation, it was argued in \cite[Section 2.5]{Kim2024} that one can define two Abelian Chern--Simons theories,
$Z^{CS}_K(M_L)$ and $Z^{CS}_L(M_K)$, for closed manifolds $M_L$ and $M_K$, respectively, with the following ad hoc normalized partition functions; see \cite[Sec.~(E)]{Kim2024}:
$$
\begin{aligned}
Z^{CS}_K(M_L)
&:=
|\det L|^{-n/2}
\sum_{x\in \mathbb Z^{mn}/(L\otimes I_n)\mathbb Z^{mn}}
\exp\!\bigl(\pi i\,x^\top(L^{-1}\otimes K)x\bigr),
\\
Z^{CS}_{-L}(M_K)
&:=
|\det K|^{-m/2}
e^{\frac{\pi i}{4}\sigma(K)\sigma(L)}
\sum_{y\in \mathbb Z^{mn}/(K\otimes I_m)\mathbb Z^{mn}}
\exp\!\bigl(-\pi i\,y^\top(K^{-1}\otimes L)y\bigr).
\end{aligned}
$$
However, even at the level of closed manifolds, this should be interpreted with care: the two expressions are not obtained from a single uniform normalization prescription, since the signature phase is assigned differently in the two theories; compare this with their initial partition function \eqref{eq:DB-partition-function}. Before applying reciprocity, our framework does not produce the phase factor
$
e^{\frac{\pi i}{4}\sigma(K)\sigma(L)}
$
as can be seen by comparing with \eqref{eq:toral-surgery-K-on-L}--\eqref{eq:toral-surgery-minusL-on-K}. Thus the numerical agreement suggested by reciprocity is better viewed as an identity between differently normalized Gauss sums than as a  duality of partition functions. 
Let
$$
Y_L:\;\bigsqcup_{i=1}^{m}(T^2,\lambda)\to\varnothing,
\qquad
Y_K:\;\bigsqcup_{i=1}^{n}(T^2,\lambda)\to\varnothing
$$
be the standard surgery bordisms associated with framed links whose linking matrices are $L$ and $K$, respectively. These bordisms are geometric surgery data and must be distinguished from the Chern--Simons level used to evaluate them. In particular, the second reciprocity term is computed by applying the theory with level $-L$ to the surgery bordism $Y_K$.
$$
\mathcal S_{K,L}:=
Z^{CS}_{\mathbb T_K,K}(Y_L)
\in
\mathcal H_{\mathbb T_K,K}(T^2,\lambda)^{\otimes m\,\vee},\quad
\mathcal S_{-L,K}:=
Z^{CS}_{\mathbb T_L,-L}(Y_K)
\in
\mathcal H_{\mathbb T_L,-L}(T^2,\lambda)^{\otimes n\,\vee},
$$
where
$$
\mathcal H_{\mathbb T_K,K}(T^2,\lambda)
:=
Z^{CS}_{\mathbb T_K,K}(T^2,\lambda),\qquad
\mathcal H_{\mathbb T_L,-L}(T^2,\lambda):=Z^{CS}_{\mathbb T_L,-L}(T^2,\lambda).
$$

Since $G_{-L}=G_L$ as underlying finite Abelian groups, choose Bohr--Sommerfeld bases identified, under the CS/RT equivalence, with the standard solid-torus core states
$$
\{e_u\}_{u\in G_K}
\subset
\mathcal H_{\mathbb T_K,K}(T^2,\lambda),
\qquad
\{e_v^{-}\}_{v\in G_L}
\subset
\mathcal H_{\mathbb T_L,-L}(T^2,\lambda),
$$
and define the Kirby states
$$
\Omega_K
:=
\sum_{u\in G_K}e_u,
\qquad
\Omega_{-L}
:=
\sum_{v\in G_L}e_v^{-}.
$$
If $M_L$ and $M_K$ denote the closed oriented $3$-manifolds
obtained by surgery with linking matrices $L$ and $K$, then the
gluing axiom gives
$$
\mathcal S_{K,L}(\Omega_K^{\otimes m})
=
Z^{CS}_{\mathbb T_K,K}(M_L), \qquad
\mathcal S_{-L,K}(\Omega_{-L}^{\otimes n})
=
Z^{CS}_{\mathbb T_L,-L}(M_K).
$$

Here the notation $M_K$ refers only to the closed manifold obtained by surgery on the $K$-framed link; the Chern--Simons theory used to evaluate this bordism is the theory with level $-L$. Thus the two sides of the reciprocity relation involve different assignments of level data to the same type of surgery construction.

\begin{prop}
\label{prop:no-intrinsic-horizontal-reciprocity}
Let $K\in M_n(\mathbb Z)$ and $L\in M_m(\mathbb Z)$ be even, symmetric, and nondegenerate. Then the canonical Abelian Chern--Simons partition functions satisfy

\begin{equation}
Z^{CS}_{\mathbb T_K,K}(M_L)
=
e^{\frac{\pi i}{4}\sigma(K)\sigma(L)}
\sqrt{\frac{|G_L|}{|G_K|}}\,
Z^{CS}_{\mathbb T_L,-L}(M_K).
\end{equation}

Hence they are not equal in general. In particular, the horizontal $K\leftrightarrow L$ equality obtained from the normalization of \cite{Kim2024,Tagaris2025} does not, in general, coincide with the canonical Abelian Chern--Simons partition function.
\end{prop}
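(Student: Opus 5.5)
The plan is to compare the two canonical closed-surgery expressions \eqref{eq:toral-surgery-K-on-L} and \eqref{eq:toral-surgery-minusL-on-K}, both obtained from Theorem~\ref{thm:surgery-expression}, through the Deloup reciprocity identity recalled above.

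First I would record the inputs. Since $K$ and $L$ are nondegenerate, $b_1(M_L)=b_1(M_K)=0$. Hence $m_{M_L}=m_{M_K}=-\tfrac12$, and we may take $L_{\mathrm{reg}}=L$ with $\rho=m$ for $M_L$, and $K_{\mathrm{reg}}=K$ with $\rho=n$ for $M_K$. For the second theory, applying Theorem~\ref{thm:surgery-expression} at level $-L$ requires two checks. The first is that $|G_{-L}|=|\det(-L)|=|G_L|$. The second is that, after reordering tensor factors by the canonical permutation $\mathbb Z^{n}\otimes\mathbb Z^{m}\cong\mathbb Z^m\otimes\mathbb Z^n$, the exponent $\pi i\,y^\top(K^{-1}\otimes(-L))y$ becomes $-\pi i\,y^\top(K^{-1}\otimes L)y$. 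This identifies the quotient $\mathbb Z^{mn}/(K\otimes I_m)\mathbb Z^{mn}$. These checks give \eqref{eq:toral-surgery-K-on-L} and \eqref{eq:toral-surgery-minusL-on-K} with prefactors $|G_K|^{-1/2}$ and $|G_L|^{-1/2}$.

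Next, write $S_{K,L}$ for the normalized Gauss sum $|\det L|^{-n/2}\sum_x\exp(\pi i\,x^\top(L^{-1}\otimes K)x)$ and $S'_{L,K}$ for $|\det K|^{-m/2}\sum_y\exp(-\pi i\,y^\top(K^{-1}\otimes L)y)$. Then $Z^{CS}_{\mathbb T_K,K}(M_L)=|G_K|^{-1/2}S_{K,L}$ and $Z^{CS}_{\mathbb T_L,-L}(M_K)=|G_L|^{-1/2}S'_{L,K}$. Deloup's reciprocity gives $S_{K,L}=e^{\frac{\pi i}{4}\sigma(K)\sigma(L)}S'_{L,K}$. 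Substituting, we get
$$
Z^{CS}_{\mathbb T_K,K}(M_L)=|G_K|^{-1/2}e^{\frac{\pi i}{4}\sigma(K)\sigma(L)}|G_L|^{1/2}Z^{CS}_{\mathbb T_L,-L}(M_K),
$$
which is the claimed formula. The only care needed is that the hypotheses of \cite[Theorem 1]{Deloup2005}, namely evenness and symmetry of $K$ and nondegeneracy of both $K$ and $L$, are exactly those assumed here. One should also check that the sign convention, $-L$ versus $L$ in the dual sum, matches the one in the cited theorem.

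For the ``not equal in general'' clause, it suffices to exhibit one example where the factor $e^{\frac{\pi i}{4}\sigma(K)\sigma(L)}\sqrt{|G_L|/|G_K|}$ differs from $1$ while the partition function is nonzero. Take $n=m=1$ with $K=(2)$ and $L=(4)$. Then $|G_L|/|G_K|=2$ and the phase is $e^{\pi i/4}$. Both sides are nonzero, since a rank-one Gauss sum with even level on a lens space has modulus $\sqrt{|\det|}$ up to the stated normalization. Hence the two values differ. Comparing with the ad hoc normalizations of \cite{Kim2024}, which omit the $|G|^{-1/2}$ prefactors and put the phase only on one side, shows that their horizontal equality is not the canonical one.

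The main obstacle I anticipate is bookkeeping of conventions. One must verify that the Kronecker-product ordering and sign in the level $-L$ sum agree exactly with Deloup's statement, and that the signature phase is not already absorbed into the canonical normalization $|G_K|^{m_M}$ of Theorem~\ref{thm:surgery-expression}. I would check both on the rank-one lens space example before trusting the general identity.
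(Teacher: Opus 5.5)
Your argument is the paper's proof: evaluate both sides with Theorem~\ref{thm:surgery-expression} at $m_M=-\tfrac12$, apply Deloup's reciprocity to the normalized Gauss sums, substitute, and then use $K=(2)$, $L=(4)$ for the inequality.

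One justification needs correcting. Your reason for nonvanishing is false as a general claim. For $K=L=(2)$ the sum $\sum_{x\bmod 2}e^{\pi i x^2}$ vanishes; this is the $Z_s(\mathbb RP^3)=0$ computation in the semion section. Even for $L=(4)$ the raw sum is $2+2i$, of modulus $2\sqrt2$, not $\sqrt{|\det L|}=2$. So evaluate the example directly, as the paper does: $Z^{CS}_{\mathbb T_K,K}(M_L)=e^{\pi i/4}$ and $Z^{CS}_{\mathbb T_L,-L}(M_K)=1/\sqrt2$, which are visibly different.

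Also, no reordering of tensor factors is needed for the level $-L$ sum, since $K^{-1}\otimes(-L)=-(K^{-1}\otimes L)$ already.
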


\begin{proof}
Let $K\in M_n(\mathbb Z)$ and $L\in M_m(\mathbb Z)$ be even, symmetric, and nondegenerate.\footnote{If $K$ or $L$ is degenerate, the corresponding reciprocity relation can still be formulated after passing to the appropriate nondegenerate part of the quadratic data. However, the interpretation of the two sides as partition functions of two distinct Abelian Chern--Simons theories is then lost, since a degenerate matrix does not define a Chern--Simons theory.}
Since $L$ and $K$ are nondegenerate surgery matrices, the corresponding surgery manifolds $M_L$ and $M_K$ are rational homology spheres. Hence
$$
b_1(M_L)=b_1(M_K)=0,
\qquad
m_{M_L}=m_{M_K}=-\frac12.
$$

Therefore, by Theorem~\ref{thm:surgery-expression},

$$
\begin{aligned}
Z^{CS}_{\mathbb T_K,K}(M_L)
&=
|G_K|^{-1/2}|\det L|^{-n/2}
\sum_{x\in (\mathbb Z^m/L\mathbb Z^m)^n}
\exp\!\bigl(
\pi i\,x^\top(L^{-1}\otimes K)x
\bigr),
\\
Z^{CS}_{\mathbb T_L,-L}(M_K)
&=
|G_L|^{-1/2}|\det K|^{-m/2}
\sum_{u\in (\mathbb Z^n/K\mathbb Z^n)^m}
\exp\!\bigl(
-\pi i\,u^\top(K^{-1}\otimes L)u
\bigr).
\end{aligned}
$$

Using $|G_K|=|\det K|,$ and $|G_L|=|\det L|,$ we may rewrite these as

$$
\begin{aligned}
Z^{CS}_{\mathbb T_K,K}(M_L)
&=
|G_K|^{-1/2}|G_L|^{-n/2}
\sum_{x\in (\mathbb Z^m/L\mathbb Z^m)^n}
\exp\!\bigl(
\pi i\,x^\top(L^{-1}\otimes K)x
\bigr),
\\
Z^{CS}_{\mathbb T_L,-L}(M_K)
&=
|G_L|^{-1/2}|G_K|^{-m/2}
\sum_{u\in (\mathbb Z^n/K\mathbb Z^n)^m}
\exp\!\bigl(
-\pi i\,u^\top(K^{-1}\otimes L)u
\bigr).
\end{aligned}
$$

Now apply the reciprocity formula~\eqref{eq:CS-reciprocity}:

$$
\begin{aligned}
&|G_L|^{-n/2}
\sum_{x\in (\mathbb Z^m/L\mathbb Z^m)^n}
\exp\!\bigl(
\pi i\,x^\top(L^{-1}\otimes K)x
\bigr)
\\
&\qquad =
e^{\frac{\pi i}{4}\sigma(K)\sigma(L)}
|G_K|^{-m/2}
\sum_{u\in (\mathbb Z^n/K\mathbb Z^n)^m}
\exp\!\bigl(
-\pi i\,u^\top(K^{-1}\otimes L)u
\bigr).
\end{aligned}
$$

Multiplying both sides by $|G_K|^{-1/2}$ gives

$$
\begin{aligned}
Z^{CS}_{\mathbb T_K,K}(M_L)
&=
e^{\frac{\pi i}{4}\sigma(K)\sigma(L)}
|G_K|^{-1/2}|G_K|^{-m/2}
\\
&\qquad\qquad\times
\sum_{u\in (\mathbb Z^n/K\mathbb Z^n)^m}
\exp\!\bigl(
-\pi i\,u^\top(K^{-1}\otimes L)u
\bigr).
\end{aligned}
$$

On the other hand, from the second partition-function formula,

$$
|G_K|^{-m/2}
\sum_{u\in (\mathbb Z^n/K\mathbb Z^n)^m}
\exp\!\bigl(
-\pi i\,u^\top(K^{-1}\otimes L)u
\bigr)
=
|G_L|^{1/2}
Z^{CS}_{\mathbb T_L,-L}(M_K).
$$

Substituting this identity, we obtain 
$$
Z^{CS}_{\mathbb T_K,K}(M_L)
=
e^{\frac{\pi i}{4}\sigma(K)\sigma(L)}
\sqrt{\frac{|G_L|}{|G_K|}}\,
Z^{CS}_{\mathbb T_L,-L}(M_K).
$$
Thus reciprocity gives a definite scalar relation between the two canonically normalized surgery partition functions. In particular, it does not in general imply

$$
Z^{CS}_{\mathbb T_K,K}(M_L)
=
Z^{CS}_{\mathbb T_L,-L}(M_K),
$$

since the scalar  $e^{\frac{\pi i}{4}\sigma(K)\sigma(L)}
\sqrt{\frac{|G_L|}{|G_K|}}
$ need not be equal to $1$. 

To see this explicitly, consider $K=(2), L=(4).$ Then $
|G_K|=2,$ and $|G_L|=4,$ with $\sigma(K)=\sigma(L)=1.
$ The surgery formula gives

$$
Z^{CS}_{\mathbb T_K,K}(M_L)
=
2^{-1/2}4^{-1/2}
\sum_{x\in\mathbb Z/4\mathbb Z}
\exp\!\left(\frac{\pi i}{2}x^2\right).
$$
The four terms are: $1,\quad i,\quad 1,\quad i.\,$ Then
$$
\sum_{x\in\mathbb Z/4\mathbb Z}
\exp\!\left(\frac{\pi i}{2}x^2\right)
=
2+2i.
$$
Therefore,
$$
Z^{CS}_{\mathbb T_K,K}(M_L)
=
\frac{2+2i}{2\sqrt2}
=
\frac{1+i}{\sqrt2}
=
e^{\pi i/4}.
$$

For the theory with level $-L$ evaluated on $M_K$,

$$
Z^{CS}_{\mathbb T_L,-L}(M_K)
=
4^{-1/2}2^{-1/2}
\sum_{u\in\mathbb Z/2\mathbb Z}
\exp(-2\pi i u^2).
$$

Since $
\exp(-2\pi i u^2)=1,\,\,
\text{for }u=0,1.$ The Gauss sum equals $2$, and hence

$$
Z^{CS}_{\mathbb T_L,-L}(M_K)
=
\frac{2}{2\sqrt2}
=
\frac1{\sqrt2}.
$$

The general reciprocity relation obtained above now gives

$$
\begin{aligned}
e^{\frac{\pi i}{4}\sigma(K)\sigma(L)}
\sqrt{\frac{|G_L|}{|G_K|}}\,
Z^{CS}_{\mathbb T_L,-L}(M_K)
&=
e^{\pi i/4}
\sqrt{\frac42}\,
\frac1{\sqrt2}
\\
&=
e^{\pi i/4}
\\
&=
Z^{CS}_{\mathbb T_K,K}(M_L).
\end{aligned}
$$

Thus reciprocity is satisfied exactly. Nevertheless,

$$
Z^{CS}_{\mathbb T_K,K}(M_L)
=
e^{\pi i/4}
\neq
\frac1{\sqrt2}
=
Z^{CS}_{\mathbb T_L,-L}(M_K).
$$

Hence the reciprocity formula relates the two canonical surgery partition functions by a nontrivial scalar factor, rather than identifying them. This shows explicitly why the horizontal $K\leftrightarrow L$ reciprocity relation should not, in general, be interpreted as an equality of canonical Abelian Chern--Simons partition functions.
\end{proof}

\begin{prop}
\label{prop:not-functorial-equivalence}
The reciprocity formula by itself does not define a symmetric monoidal natural isomorphism $Z^{CS}_{\mathbb T_K,K}\cong Z^{CS}_{\mathbb T_L,-L}$ of extended TQFTs in general.
\end{prop}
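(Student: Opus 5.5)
The plan is to argue by invariants: a symmetric monoidal natural isomorphism of extended TQFTs gives an isomorphism of state spaces on every closed surface. It must also preserve the partition function of every closed $3$-manifold, with no scalar allowed, since closed manifolds map to $\mathbb C$ and a monoidal natural transformation is the identity on the unit. So it suffices to exhibit one pair $(K,L)$ for which one of these invariants differs between $Z^{CS}_{\mathbb T_K,K}$ and $Z^{CS}_{\mathbb T_L,-L}$. The reciprocity formula only relates the values on two different manifolds, $M_L$ and $M_K$. It says nothing about the values of the two theories on the same manifold or surface, which is what a natural isomorphism would control.

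First I use the dimension count. By Theorem~\ref{thm:toral-cs},
$$
\dim Z^{CS}_{\mathbb T_K,K}(T^2)=|G_K|=|\det K|,
\qquad
\dim Z^{CS}_{\mathbb T_L,-L}(T^2)=|G_{-L}|=|\det L|.
$$
A natural isomorphism would give a linear isomorphism between these spaces. In the example $K=(2)$, $L=(4)$ already used in the proof of Proposition~\ref{prop:no-intrinsic-horizontal-reciprocity}, the dimensions are $2\neq 4$, so no such isomorphism exists. This case is immediate.

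Second, and more structurally, I invoke Theorem~\ref{thm:classification}. The two theories are isomorphic if and only if $(G_K,q_K)\cong(G_{-L},q_{-L})=(G_L,q_L^{-1})$. Reciprocity, in the form of Proposition~\ref{prop:no-intrinsic-horizontal-reciprocity}, is an identity of Gauss sums valid for \emph{all} even nondegenerate $K,L$. The isomorphism condition, by contrast, fails generically, already at the level of group orders. So the reciprocity formula cannot imply the isomorphism. I will also record that even when $|\det K|=|\det L|$ the condition may fail. Taking $K=(2)$ and $L=(2)$ gives $(\mathbb Z_2,q(1)=i)$ against $(\mathbb Z_2,q(1)=-i)$. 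These are non-isomorphic, since $i\neq -i$ and $\mathbb Z_2$ has only the trivial automorphism. In that case the obstruction is detected by $Z$ on a lens space via Theorem~\ref{thm:surgery-expression}, or simply by the classification theorem.

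The main subtlety is making precise what "by itself" means. My reading is that no natural isomorphism is produced from the reciprocity identity, and that in general none exists. Both points follow once existence is ruled out in one example. I would therefore phrase the proof as follows: suppose reciprocity yielded such an isomorphism for all $K,L$; then the case $K=(2)$, $L=(4)$ contradicts the dimension count above.
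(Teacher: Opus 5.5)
Your proposal is correct and takes the same route as the paper. The paper's proof observes that a symmetric monoidal natural isomorphism would give an isomorphism $\mathcal H_{\mathbb T_K,K}(T^2,\lambda)\cong\mathcal H_{\mathbb T_L,-L}(T^2,\lambda)$, which is impossible when $|G_K|\neq|G_L|$; it adds that reciprocity only compares the surgery covectors $Z^{CS}_{\mathbb T_K,K}(Y_L)$ and $Z^{CS}_{\mathbb T_L,-L}(Y_K)$, which live in different spaces. Your explicit case $K=(2)$, $L=(4)$, and your refinement via Theorem~\ref{thm:classification} for $K=L=(2)$ (semion versus anti-semion, equal dimensions but $(\mathbb Z_2,i)\not\cong(\mathbb Z_2,-i)$), are correct and show that the failure is not just a dimension mismatch.
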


\begin{proof}
A symmetric monoidal natural isomorphism between
$Z^{CS}_{\mathbb T_K,K}$ and $Z^{CS}_{\mathbb T_L,-L}$ would in particular induce an
isomorphism on the common object $(T^2,\lambda)$. But
$$
\dim \mathcal H_{\mathbb T_K,K}(T^2,\lambda)=|G_K|,
\qquad
\dim \mathcal H_{\mathbb T_L,-L}(T^2,\lambda)=|G_L|,
$$
and these dimensions need not coincide. Moreover, as seen above, reciprocity compares the surgery covectors $Z^{CS}_{\mathbb T_K,K}(Y_L)$ and $Z^{CS}_{\mathbb T_L,-L}(Y_K)$, which in general belong to different tensor powers of different torus state spaces. Hence, in general, there is no natural isomorphism between the two functors.
\end{proof}

\begin{remark}
The same conclusion holds on the Reshetikhin--Turaev side. Reciprocity relates closed surgery scalars but does not define a symmetric monoidal natural isomorphism of the corresponding TQFTs. A genuine equivalence is instead governed by the classification theorem, 
$$ Z^{CS}_{\mathbb T,K}\cong Z^{CS}_{\mathbb T,L} \quad\Longleftrightarrow\quad (G_K,q_K)\cong(G_L,q_L). $$
Thus reciprocity and equivalence of extended theories are distinct statements.
\end{remark}
Using Deligne--Beilinson cohomology, it was shown that every  $U(1)^n$ Chern--Simons partition function is related, via reciprocity, to an Abelian Reshetikhin--Turaev-type invariant for closed rational homology spheres \cite{Tagaris2025}. More precisely, the invariance of the resulting RT scalar was established through its identification with the reciprocity expression for the partition function, rather than through an independent modular-categorical construction of the full Reshetikhin--Turaev TQFT \cite{Reshetikhin:1991}. See also \cite{Galviz3} for the construction of the Reshetikhin--Turaev invariants corresponding to finite quadratic modules, namely the precise invariants relevant to the equivalence with Abelian Chern--Simons theory.

They also observe that, in their framework, not every Abelian Reshetikhin--Turaev-type invariant arises from a $U(1)^n$ Chern--Simons partition function. At first sight, this suggests that the Chern--Simons side is strictly smaller than the Reshetikhin--Turaev side. However, this apparent non-surjectivity comes from their choice of a non-modular category they called the “twisted” category on the Reshetikhin--Turaev side. In their construction, the reciprocity formulas are interpreted as Abelian Reshetikhin--Turaev-type invariants built from quadratic data associated with the level matrix $K$ and the surgery matrix $L$, but the argument is used only at the level of closed rational homology spheres. In the present setting, the modularity requirement enters naturally, since we use Turaev’s extended Reshetikhin--Turaev machinery to obtain a $(2+1)$-dimensional TQFT \cite{Turaev1994}.

The apparent non-surjectivity is resolved once one restricts to the pointed modular Reshetikhin--Turaev theories relevant to Abelian Chern--Simons theory. Indeed, by Theorems~\ref{thm:cs-rt-equivalence} and~\ref{thm:classification}, every finite quadratic module is realized by an even, integral, nondegenerate lattice, and two such Abelian Chern--Simons theories are equivalent if and only if their discriminant quadratic modules are isomorphic. Hence the assignment
$$
(\Lambda,K)\longmapsto \mathcal C(G_K,q_K)
$$
identifies equivalence classes of Abelian Chern--Simons extended TQFTs with the corresponding pointed modular Reshetikhin--Turaev TQFTs. Thus the apparent non-surjectivity arises from enlarging the Reshetikhin--Turaev side beyond the pointed modular sector naturally associated with Abelian Chern--Simons theory.

\section{Symmetries and orientation-reversal duality of Abelian Chern--Simons theories}
\label{sec:symmetries}

Delmastro and Gomis studied unitary and anti-unitary symmetries of Abelian Chern--Simons theories in terms of automorphisms of the anyon fusion group preserving, or complex-conjugating, the topological spin, and developed an explicit arithmetic analysis in terms of $K$-matrix presentations \cite{Delmas2021}. Here we obtained these results in the functorial framework of Abelian Chern--Simons TQFT.  We identify the unitary topological symmetry group canonically as
$$
\operatorname{Aut}^{\mathrm{br}}\!\bigl(\mathcal C(G_K,q_K)\bigr)
\cong
\mathcal{O}(G_K,q_K),
$$
and identify orientation reversal with the TQFT classified by $(G_K,q_K^{-1}),$ equivalently by the lattice with level $-K$. Thus the usual quantum-symmetry and time-reversal criteria are promoted to statements about equivalences of the TQFT functors.

An advantage of this formulation is that, once the theory is classified by its finite quadratic module $(G,q)$, the symmetry and orientation-reversal analysis does not depend on a particular lattice or $K$-matrix presentation. Different lattices realizing isomorphic finite quadratic modules define equivalent  Abelian Chern--Simons TQFTs \cite{Galviz4}, and their topological symmetry groups are correspondingly identified, up to conjugation by a chosen isometry of finite quadratic modules. In this sense, the $K$-matrix serves as a presentation of the theory, while the finite quadratic module is the presentation-independent invariant governing its topological symmetries.

\subsection*{Topological symmetry groups}

The classification theorem identifies the invariant of the theory with the finite quadratic module $(G_K,q_K)$.  It is therefore natural to extract the  group of topological relabelling symmetries from the pointed modular category $\mathcal C(G_K,q_K)$.
\begin{definition}
Let $(G,q)$ be a finite quadratic module.  We write
$$
\operatorname{Aut}^{\mathrm{br}}(\mathcal C(G,q))
$$
for the group of braided ribbon tensor autoequivalences of $\mathcal C(G,q)$, modulo braided ribbon monoidal natural isomorphism\footnote{The notation above is standard in the theory of braided tensor categories. For general background on braided fusion and modular categories, see \cite{DGNO2010,Joyal-Street}.}. For the Abelian Chern--Simons theory classified by $(G,q)$, we define its \emph{topological symmetry group} by
$$
\Sym\!\bigl(Z_{(G,q)}\bigr):=
\operatorname{Aut}^{\mathrm{br}}(\mathcal C(G,q)).
$$
Equivalently, for the lattice presentation $(\Lambda,K)$,
$$
\Sym\!\bigl(Z^{CS}_{\mathbb T,K}\bigr):=
\operatorname{Aut}^{\mathrm{br}}(\mathcal C(G_K,q_K)).
$$
\end{definition}

\begin{remark}
This is the  categorical notion of topological symmetry attached to the theory. It should not be confused with the group of monoidal natural automorphisms of the functor $Z^{CS}_{\mathbb T,K}$, which is a different $2$-categorical notion.
\end{remark}

\begin{definition}
Let $(G,q)$ be a finite quadratic module.  Its orthogonal group is
$$
\mathcal O(G,q):=\{\phi\in \Aut(G)\mid q(\phi(a))=q(a)\ \text{for all }a\in G\}.
$$
\end{definition}

\begin{theorem}
\label{thm:Eq-O}
Let $(G,q)$ be a finite quadratic module. Then there is a canonical group isomorphism
$$
\operatorname{Aut}^{\mathrm{br}}(\mathcal C(G,q))\cong \mathcal O(G,q).
$$
\end{theorem}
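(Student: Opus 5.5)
The plan is to use the explicit description of $\mathcal C(G,q)$ as the category of $G$-graded vector spaces with associator $\omega$ and braiding $c$ forming an abelian $3$-cocycle $(\omega,c)$. By Eilenberg--MacLane, the class of $(\omega,c)$ in $H^3_{ab}(G;\mathbb C^\times)$ is determined by the quadratic form $q(a)=c(a,a)$. We then define a map
$$
\Psi:\operatorname{Aut}^{\mathrm{br}}(\mathcal C(G,q))\longrightarrow \mathcal O(G,q)
$$
by sending a braided autoequivalence $F$ to its action on isomorphism classes of simple objects. Since every simple object is invertible, $F$ permutes simples and preserves the tensor product up to isomorphism, so $\phi_F\in\Aut(G)$. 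Braiding preserves the self-braiding $c_{a,a}=q(a)\,\id$, so $q(\phi_F(a))=q(a)$. Naturally isomorphic functors induce the same permutation, and composition goes to composition, so $\Psi$ is a well-defined group homomorphism. We should also note that ribbon structure imposes no extra condition: in a pointed modular category the twist is determined by the braiding via $\theta_a=c_{a,a}$ under the canonical spherical structure, so any braided autoequivalence is automatically ribbon.

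For surjectivity, given $\phi\in\mathcal O(G,q)$, we look at the pulled-back abelian cocycle $(\phi^*\omega,\phi^*c)$. It has quadratic form $q\circ\phi=q$. By the Eilenberg--MacLane isomorphism $H^3_{ab}(G;\mathbb C^\times)\cong \mathrm{Quad}(G,\mathbb C^\times)$, it therefore differs from $(\omega,c)$ by an abelian coboundary of some $2$-cochain $\xi$. The functor $F_\phi$ sends $\mathbb C_a\mapsto\mathbb C_{\phi(a)}$ and has monoidal structure $\xi(a,b)$. The coboundary equations say exactly that $F_\phi$ is monoidal and braided, so $\Psi(F_\phi)=\phi$.

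Injectivity is the step requiring care. Suppose $\Psi(F)=\id$. Up to monoidal natural isomorphism, $F$ is the identity functor equipped with a monoidal structure $J(a,b)$. Compatibility with the associator makes $J$ a $2$-cocycle, and compatibility with the braiding forces $J(a,b)=J(b,a)$. A symmetric $2$-cocycle on a finite abelian group with values in the divisible group $\mathbb C^\times$ is a coboundary, since symmetric cocycles classify abelian extensions $\mathrm{Ext}^1_{\mathbb Z}(G,\mathbb C^\times)=0$. So $J=\delta\eta$ for some $1$-cochain $\eta$, and $\eta$ supplies a monoidal natural isomorphism $F\cong\id$, which is automatically compatible with the braiding. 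Hence $\ker\Psi$ is trivial.

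The main obstacle is bookkeeping rather than conceptual: we must verify that the equivalence relation on braided monoidal functors (monoidal natural isomorphism) matches exactly the ambiguity in $\xi$. Canonicity should also be recorded: $\Psi$ does not depend on the choice of cocycle representative, because a braided equivalence between two realizations conjugates $\Psi$ by the identity on $G$.
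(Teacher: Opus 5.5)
Your proposal is correct and follows essentially the same route as the paper: the same map $\Psi$ given by the action on simple objects, the same Eilenberg--MacLane pullback argument to lift each $\phi\in\mathcal O(G,q)$ to $F_\phi$, and the same key lemma that symmetric normalized $2$-cocycles with values in the divisible group $\mathbb C^\times$ are coboundaries. The only difference is cosmetic: the paper uses this lemma to show that $F_\phi$ is unique up to braided monoidal isomorphism and then writes down the inverse homomorphism $\Phi$, whereas you use it to show that $\ker\Psi$ is trivial, which is equivalent.
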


\begin{proof}
Choose a normalized pointed braided realization $
\mathcal C(G,q)\simeq \operatorname{Vec}^{\,\omega,c}_G,$ where $(\omega,c)$ is an Abelian $3$-cocycle representing the
Eilenberg--Mac Lane class determined by $q$. Thus the simple objects
are $\{X_a\}_{a\in G}$, with
$$
X_a\otimes X_b\simeq X_{a+b},
\qquad
\theta_{X_a}=q(a)\,\operatorname{id}_{X_a}.
$$

Let $F:\mathcal C(G,q)\rightarrow\mathcal C(G,q)$ be a braided ribbon tensor autoequivalence. Since the category is pointed, $F$ permutes the isomorphism classes of simple objects. Hence there is a unique bijection $\phi_F:G\to G$ such that
$$
F(X_a)\cong X_{\phi_F(a)}.
$$
Since $F$ is monoidal,
$$
F(X_{a+b})
\cong
F(X_a\otimes X_b)
\cong
F(X_a)\otimes F(X_b)
\cong
X_{\phi_F(a)+\phi_F(b)},
$$
and therefore $\phi_F(a+b)=\phi_F(a)+\phi_F(b).$ Thus $\phi_F\in\operatorname{Aut}(G)$. Since $F$ is ribbon, it preserves twists, and hence
$$
q(\phi_F(a))=q(a)\qquad \text{for all }a\in G.
$$
Therefore $\phi_F\in\mathcal O(G,q)$. Moreover, braided ribbon
monoidally isomorphic autoequivalences induce the same permutation of
simple objects, so this defines a homomorphism
$$
\Psi:
\operatorname{Aut}^{\mathrm{br}}(\mathcal C(G,q))
\longrightarrow
\mathcal O(G,q),
\qquad
[F]\longmapsto\phi_F.
$$
Conversely, let $\phi\in\mathcal O(G,q)$. Pulling back the chosen Abelian $3$-cocycle along $\phi$ gives $\phi^*(\omega,c).$ Its associated quadratic form is
$$
a\longmapsto c(\phi(a),\phi(a))
=q(\phi(a))
=q(a).
$$
Hence $\phi^*(\omega,c)$ and $(\omega,c)$ determine the same Eilenberg--Mac Lane quadratic class. By the classification of pointed braided categories by quadratic forms, they are Abelian-cohomologous. Consequently there exists a normalized $2$-cochain
$$
\eta_\phi:G\times G\longrightarrow\mathbb C^\times
$$
whose Abelian coboundary identifies $\phi^*(\omega,c)$ with $(\omega,c)$. Using $\eta_\phi$ as tensorator makes the relabelling
$$
X_a\longmapsto X_{\phi(a)}
$$
into a braided ribbon tensor autoequivalence, which we denote by $F_\phi$.

Although the cochain $\eta_\phi$ need not be unique, the resulting braided monoidal isomorphism class of $F_\phi$ is unique. Indeed, if $\eta_\phi$ and $\eta'_\phi$ are two choices, then their ratio
$$
\xi(a,b):=\eta'_\phi(a,b)\eta_\phi(a,b)^{-1}
$$
is a normalized $2$-cocycle. Since both tensorators satisfy the braided compatibility condition, one has
$$
\xi(a,b)=\xi(b,a).
$$
Thus $\xi$ is a symmetric normalized $2$-cocycle. For a finite Abelian group $G$ with coefficients in the divisible group $\mathbb C^\times$, every symmetric normalized $2$-cocycle is a coboundary. Hence the two choices of tensorator give braided monoidally isomorphic functors.

It follows that
$$
\Phi:
\mathcal O(G,q)
\longrightarrow
\operatorname{Aut}^{\mathrm{br}}(\mathcal C(G,q)),
\qquad
\phi\longmapsto[F_\phi],
$$
is well defined. It is a homomorphism: the composite $F_\phi\circ F_\psi$ induces the relabelling $\phi\circ\psi$, and by the uniqueness just proved its braided monoidal isomorphism class is $[F_{\phi\circ\psi}]$.

By construction,
$$
\Psi(\Phi(\phi))=\phi.
$$
Conversely, if $F$ is any braided ribbon tensor autoequivalence and $\phi=\phi_F$, then $F$ and $F_\phi$ induce the same relabelling of simple objects. The same uniqueness argument shows that they are braided ribbon monoidally isomorphic. Therefore
$$
\Phi(\Psi([F]))=[F].
$$
Thus $\Phi$ and $\Psi$ are mutually inverse group isomorphisms.
\end{proof}

\begin{corollary}
\label{cor:symmetry-group}
Let $(\Lambda,K)$ be an even, integral, nondegenerate lattice.  Then there is a canonical group isomorphism
$$
\Sym\!\bigl(Z^{CS}_{\mathbb T,K}\bigr)\cong \mathcal O(G_K,q_K)  .
$$
\end{corollary}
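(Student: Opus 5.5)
This corollary should follow by combining the definition of the topological symmetry group with Theorem~\ref{thm:Eq-O}. By definition, $\Sym(Z^{CS}_{\mathbb T,K})$ is $\operatorname{Aut}^{\mathrm{br}}(\mathcal C(G_K,q_K))$. So the plan is to check that $(G_K,q_K)$ is a finite quadratic module and then apply Theorem~\ref{thm:Eq-O} with $(G,q)=(G_K,q_K)$.

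\textbf{Step 1: $(G_K,q_K)$ is a finite quadratic module.} Since $K$ is nondegenerate, $G_K=\Lambda^*/K\Lambda$ is finite of order $|\det K|$.

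\textbf{Step 2: $q_K$ is well defined.} For $x\in\Lambda^*$ and $\lambda\in\Lambda$,
$$
(x+K\lambda)^\top K^{-1}(x+K\lambda)=x^\top K^{-1}x+2\,x^\top\lambda+\lambda^\top K\lambda .
$$
Here $x^\top\lambda\in\mathbb Z$ because $x\in\Lambda^*$, and $\lambda^\top K\lambda\in 2\mathbb Z$ because the lattice is even. Hence $q_K([x])=\exp(\pi i x^\top K^{-1}x)$ depends only on the class $[x]$.

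\textbf{Step 3: $q_K$ is a nondegenerate quadratic form.} Its associated bilinear form is $b_K([x],[y])=\exp(2\pi i\,x^\top K^{-1}y)$. This is nondegenerate because $K^{-1}$ identifies $\Lambda^*/K\Lambda$ with the dual of $\Lambda^*/K\Lambda$ via the pairing into $\mathbb Q/\mathbb Z$. This is the standard discriminant-form fact already implicit in the construction of $\mathcal C_K$. Theorem~\ref{thm:Eq-O} then gives a canonical isomorphism $\operatorname{Aut}^{\mathrm{br}}(\mathcal C(G_K,q_K))\cong\mathcal O(G_K,q_K)$, namely $[F]\mapsto\phi_F$.

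\textbf{Main point to check: canonicity with respect to the presentation.} The isomorphism of Theorem~\ref{thm:Eq-O} is intrinsic, since it reads off the induced permutation of isomorphism classes of simples. It does not depend on the chosen abelian $3$-cocycle realization, because different realizations are related by braided equivalences that act trivially on labels. One could also transport along the equivalence $\Phi_K$ of Theorem~\ref{thm:cs-rt-equivalence}. Since the symmetry group is defined categorically, this is only needed to justify that it is intrinsic to $Z^{CS}_{\mathbb T,K}$. For a different lattice presentation with isomorphic discriminant module (Theorem~\ref{thm:classification}), a chosen isometry identifies the two groups, up to conjugation.
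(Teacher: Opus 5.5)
Your proposal is correct and follows the paper's proof. The paper's argument is the same two steps: unwind the definition $\Sym\bigl(Z^{CS}_{\mathbb T,K}\bigr):=\operatorname{Aut}^{\mathrm{br}}(\mathcal C(G_K,q_K))$, then apply Theorem~\ref{thm:Eq-O} with $(G,q)=(G_K,q_K)$. Your check that $(G_K,q_K)$ is a well-defined nondegenerate finite quadratic module is sound and is left implicit in the paper, and your canonicity discussion matches the remark the paper places after the corollary.
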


\begin{proof}
By definition,
$$
\Sym\!\bigl(Z^{CS}_{\mathbb T,K}\bigr)
:=
\operatorname{Aut}^{\mathrm{br}}(\mathcal C(G_K,q_K)).
$$
Applying Theorem~\ref{thm:Eq-O} with $(G,q)=(G_K,q_K)$ gives the result.
\end{proof}

\begin{remark}
Corollary~\ref{cor:symmetry-group} is a consequence of the classification theorem. In particular, if two lattice presentations $(\Lambda,K)$ and $(\Lambda',L)$ define equivalent extended Abelian Chern--Simons theories, then their finite quadratic modules are isomorphic:
$$
(G_K,q_K)\cong (G_L,q_L).
$$
Consequently their topological symmetry groups are isomorphic. More explicitly, after choosing an isometry
$$
f:(G_K,q_K)\xrightarrow{\sim}(G_L,q_L),
$$
one obtains a group isomorphism
$$
\mathcal O(G_K,q_K)  
\longrightarrow
\mathcal O(G_L,q_L),
\qquad
u\longmapsto fuf^{-1}.
$$
Via Corollary~\ref{cor:symmetry-group}, this induces an isomorphism between the corresponding topological symmetry groups. This identification depends on the chosen isometry $f$; without such a choice there is, in general, no preferred identification of the two symmetry groups.
\end{remark}

\subsection*{Orientation-reversed duality and orientation-reversal invariance}
We now describe the orientation-reversed theory.

\begin{definition}
Let $\mathcal C$ be a ribbon category.  We write $\mathcal C^{\mathrm{rev}}$ for the
reverse ribbon category: it has the same underlying monoidal category as $\mathcal C$,
but its braiding and twist are inverted.
Thus the braiding in $\mathcal C^{\mathrm{rev}}$ is $c^{-1}_{Y,X}$, and the twist is
$\theta^{-1}$, see for example \cite{Joyal-Street}.
\end{definition}

\begin{definition}
If $q:G\to U(1)$ is a quadratic form on a finite Abelian group $G$, we write
$$
q^{-1}(a):=q(a)^{-1}.
$$
\end{definition}

\begin{prop}
\label{prop:reverse-pointed}
Let $(G,q)$ be a finite quadratic module.  Then there is a braided ribbon equivalence
$$
\mathcal C(G,q)^{\mathrm{rev}}\simeq \mathcal C(G,q^{-1}).
$$
\end{prop}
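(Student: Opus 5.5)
The plan is to work with an explicit pointed braided realization, exactly as in the proof of Theorem~\ref{thm:Eq-O}. Choose a normalized Abelian $3$-cocycle $(\omega,c)$ on $G$ representing the Eilenberg--Mac Lane class of $q$, so that
$$
\mathcal C(G,q)\simeq \operatorname{Vec}^{\,\omega,c}_G,
\qquad
c(a,a)=q(a),
\qquad
\theta_{X_a}=q(a)\,\operatorname{id}_{X_a}.
$$
By definition, $\mathcal C(G,q)^{\mathrm{rev}}$ has the same underlying monoidal category $\operatorname{Vec}^{\omega}_G$, with braiding $c^{-1}_{Y,X}$ and twist $\theta^{-1}$. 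On simple objects the reversed braiding $X_a\otimes X_b\to X_b\otimes X_a$ is therefore given by the scalar $c(b,a)^{-1}$. So $\mathcal C(G,q)^{\mathrm{rev}}$ is again pointed, with simple objects labelled by $G$, and is presented by the pair $(\omega,\tilde c)$ with $\tilde c(a,b):=c(b,a)^{-1}$.

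The first step is to check that $(\omega,\tilde c)$ is an Abelian $3$-cocycle. I would not verify this by direct computation, since it follows formally: $\mathcal C^{\mathrm{rev}}$ is a braided monoidal category, and the two hexagon axioms for $c^{-1}_{Y,X}$ are the inverses of the hexagons for $c$ with their roles interchanged. As a sanity check, one can write the two hexagon identities for $(\omega,c)$ in Joyal--Street form and observe that substituting $\tilde c$ swaps them after inversion.

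The second step is to compute the associated quadratic form:
$$
\tilde c(a,a)=c(a,a)^{-1}=q(a)^{-1}=q^{-1}(a).
$$
This agrees with the reversed twist $\theta^{-1}$, so the ribbon structure is the canonical one attached to $q^{-1}$. One should also note that $q^{-1}$ is again a nondegenerate quadratic form, because its associated bicharacter is the inverse of that of $q$. Hence $(G,q^{-1})$ is a finite quadratic module and $\mathcal C(G,q^{-1})$ makes sense as a modular category.

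The final step invokes the Joyal--Street/Eilenberg--Mac Lane classification, $H^3_{ab}(G,\mathbb C^\times)\cong \mathrm{Quad}(G,\mathbb C^\times)$, already used in the proof of Theorem~\ref{thm:Eq-O}. Since $(\omega,\tilde c)$ and any cocycle realizing $\mathcal C(G,q^{-1})$ have the same quadratic form $q^{-1}$, they are Abelian-cohomologous. Any $2$-cochain implementing this cohomology serves as a tensorator on the identity relabelling $X_a\mapsto X_a$, and this gives a braided monoidal equivalence. The equivalence is ribbon because in the pointed setting twists are determined by $\tilde c(a,a)$, and both sides have twist $q(a)^{-1}$. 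The main point needing care is the bookkeeping convention: the reversed braiding is $c_{Y,X}^{-1}$, not $c_{X,Y}^{-1}$. I therefore expect the only real obstacle to be matching the hexagon orientation conventions correctly, and I would handle it by arguing at the categorical level rather than through explicit cocycle formulas.
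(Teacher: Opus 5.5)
Your proposal is correct and follows the paper's own proof. Both choose an Abelian $3$-cocycle $(\omega,c)$, note that the reversed braiding has scalar $c(b,a)^{-1}$ with diagonal $q(a)^{-1}$, and conclude via the Eilenberg--Mac Lane/Joyal--Street classification of pointed braided categories by quadratic forms; your extra checks (that $(\omega,\tilde c)$ is again an Abelian cocycle, that $q^{-1}$ is nondegenerate, and that the twists match) are points the paper leaves implicit.
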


\begin{proof}
Choose an Abelian $3$-cocycle representative $(\omega,c)$
for the pointed braided category $\mathcal C(G,q)$.
The reverse ribbon category has braiding
$$c^{\mathrm{rev}}_{X_a,X_b}=c^{-1}_{X_b,X_a}.$$
Its quadratic form is therefore
$$q^{\mathrm{rev}}(a)=c^{\mathrm{rev}}(a,a)=c(a,a)^{-1}=q(a)^{-1}.
$$
Hence the reverse pointed braided category has Eilenberg--Mac Lane quadratic form $q^{-1}$. By the classification of pointed braided categories by quadratic forms,
$$
\mathcal C(G,q)^{\mathrm{rev}}
\simeq
\mathcal C(G,q^{-1})
$$
as braided ribbon categories.
\end{proof}

\begin{theorem}
\label{thm:orientation-reversed-toral}
Let $(\Lambda,K)$ be an even, integral, nondegenerate lattice.  Then the orientation-reversed theory of the Abelian Chern--Simons theory $Z^{CS}_{\mathbb T,K}$ is the theory classified by $(G_K,q_K^{-1})$.  Equivalently,
$$
\bigl(Z^{CS}_{\mathbb T,K}\bigr)^{\mathrm{rev}}
\;\simeq\;
Z^{RT}_{\mathcal C(G_K,q_K^{-1})}.
$$
Since $(\Lambda,-K)$ is again even, integral, and nondegenerate, and $(G_{-K},q_{-K})\cong (G_K,q_K^{-1}),$ one may also write
$$
\bigl(Z^{CS}_{\mathbb T,K}\bigr)^{\mathrm{rev}}
\;\simeq\;
Z^{CS}_{\mathbb T,-K}.
$$
\end{theorem}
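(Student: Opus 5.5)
The plan is to pass through the Reshetikhin--Turaev side, where orientation reversal is controlled by the reverse modular category. Then return to Chern--Simons via Theorems~\ref{thm:cs-rt-equivalence} and~\ref{thm:classification}.

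\textbf{Step 1: transport reversal along $\Phi_K$.} By Theorem~\ref{thm:cs-rt-equivalence} we have $Z^{CS}_{\mathbb T,K}\cong Z^{RT}_{\mathcal C(G_K,q_K)}$. Orientation reversal is defined by precomposing with the involution of the extended bordism category that reverses orientations and negates the Maslov/anomaly data. Since it is defined by precomposition, it is functorial in symmetric monoidal natural isomorphisms. Hence $(Z^{CS}_{\mathbb T,K})^{\mathrm{rev}}\cong (Z^{RT}_{\mathcal C(G_K,q_K)})^{\mathrm{rev}}$.

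\textbf{Step 2: identify the reversed RT theory.} Use the standard fact (Turaev) that the orientation reversal of $Z^{RT}_{\mathcal C}$ is $Z^{RT}_{\mathcal C^{\mathrm{rev}}}$. On closed manifolds this follows because reversing $M$ negates the linking matrix of a surgery presentation. Negating the linking matrix replaces $\theta$ by $\theta^{-1}$ and the braiding by its inverse. Correspondingly $p_\pm$ is exchanged with $\overline{p_\mp}$, and the signature correction flips sign, which matches the negated anomaly data. On surfaces and bordisms, the same mirror-image argument on ribbon graphs applies to the state spaces built from $\mathrm{Hom}(\mathbb 1,\bigotimes X_i\otimes X_i^*)$. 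Then Proposition~\ref{prop:reverse-pointed} gives $\mathcal C(G_K,q_K)^{\mathrm{rev}}\simeq\mathcal C(G_K,q_K^{-1})$. Since braided ribbon equivalent modular categories yield isomorphic RT theories, this proves the first displayed equivalence.

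\textbf{Step 3: the lattice $-K$.} $(\Lambda,-K)$ is even, integral and nondegenerate. Its discriminant group is $\Lambda^*/(-K)\Lambda=\Lambda^*/K\Lambda=G_K$. Its quadratic form is $q_{-K}([x])=\exp(-\pi i x^\top K^{-1}x)=q_K([x])^{-1}$, so the identity map is an isometry $(G_{-K},q_{-K})\cong(G_K,q_K^{-1})$. Theorem~\ref{thm:cs-rt-equivalence} for $-K$ gives $Z^{CS}_{\mathbb T,-K}\cong Z^{RT}_{\mathcal C(G_K,q_K^{-1})}$. Composing with Step 2 (equivalently, invoking Theorem~\ref{thm:classification}) yields $(Z^{CS}_{\mathbb T,K})^{\mathrm{rev}}\simeq Z^{CS}_{\mathbb T,-K}$.

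As a consistency check, one can verify this directly on the Chern--Simons side. Reversing orientation negates $\omega_{\Sigma,K}$, which is the form for $-K$, and conjugates $\sigma_{X,p}$. It also negates $\sigma(K)$ in the Maslov correction, consistent with the surgery formula of Theorem~\ref{thm:surgery-expression}.

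\textbf{Main obstacle.} The main obstacle is Step 2 at the level of the extended theory rather than closed scalars. The anomaly (Lagrangian/Maslov) data must transform compatibly, so that reversal sends the central charge $\sigma$ to $-\sigma$ without leaving a residual projective phase. I would handle this by checking that the orientation-reversal involution on $\Cob^{\mathrm{ext}}_{2+1}$ sends the Maslov index $\mu$ to $-\mu$. The BKS composition phase $e^{\frac{\pi i}{4}\sigma(K)\mu}$ then becomes $e^{\frac{\pi i}{4}\sigma(-K)\mu}$, which is exactly the phase for the theory with level $-K$.
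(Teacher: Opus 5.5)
Your proposal is correct and follows essentially the same route as the paper: you transport reversal along $\Phi_K$, identify the reversed RT theory with $Z^{RT}_{\mathcal C(G_K,q_K)^{\mathrm{rev}}}$, apply Proposition~\ref{prop:reverse-pointed}, and then use the isometry $(G_{-K},q_{-K})\cong(G_K,q_K^{-1})$ together with Theorems~\ref{thm:cs-rt-equivalence} and~\ref{thm:classification} to return to $Z^{CS}_{\mathbb T,-K}$. The only difference is that you sketch a justification for the step ``orientation reversal of an RT theory is the RT theory of the reverse category'' (via mirror surgery presentations and the sign change $\mu\mapsto-\mu$ of the Maslov index), whereas the paper simply asserts this step as standard.
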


\begin{proof}
By Theorem~\ref{thm:cs-rt-equivalence}, $Z^{CS}_{\mathbb T,K}\simeq Z^{RT}_{\mathcal C(G_K,q_K)}.$ Passing to the orientation-reversed theory on the Reshetikhin--Turaev side corresponds to replacing the
ribbon category by its reverse ribbon category, so
$$
\bigl(Z^{CS}_{\mathbb T,K}\bigr)^{\mathrm{rev}}
\simeq
Z^{RT}_{\mathcal C(G_K,q_K)^{\mathrm{rev}}}.
$$
By Proposition~\ref{prop:reverse-pointed}, $\mathcal C(G_K,q_K)^{\mathrm{rev}}
\simeq
\mathcal C(G_K,q_K^{-1}),$ hence
$$
\bigl(Z^{CS}_{\mathbb T,K}\bigr)^{\mathrm{rev}}
\simeq
Z^{RT}_{\mathcal C(G_K,q_K^{-1})}.
$$

It remains to identify this with an Abelian Chern--Simons theory.
Since $-K\Lambda=K\Lambda$, the discriminant groups $G_{-K}$ and $G_K$ are
canonically identified.
Moreover,
$$
q_{-K}([x])
=
\exp\!\bigl(\pi i\,x^{\top}(-K)^{-1}x\bigr)
=
\exp\!\bigl(-\pi i\,x^{\top}K^{-1}x\bigr)
=
q_K([x])^{-1}.
$$
Thus
$$
(G_{-K},q_{-K})\cong (G_K,q_K^{-1}).
$$
Applying the classification theorem and the extended equivalence theorem once more gives
$$
Z^{RT}_{\mathcal C(G_K,q_K^{-1})}
\simeq
Z^{CS}_{\mathbb T,-K},
$$
which proves the claim.
\end{proof}

\begin{corollary}
\label{cor:orientation-reversal-invariant}
Let $(\Lambda,K)$ be an even, integral, nondegenerate lattice.
The following are equivalent:
\begin{enumerate}[label=\rm(\roman*)]
\item The Abelian Chern--Simons theory $Z^{CS}_{\mathbb T,K}$ is orientation-reversal
      invariant, that is,
      $$
      \bigl(Z^{CS}_{\mathbb T,K}\bigr)^{\mathrm{rev}}
      \simeq
      Z^{CS}_{\mathbb T,K}.
      $$
\item The finite quadratic modules $(G_K,q_K)$ and $(G_K,q_K^{-1})$ are isomorphic.
\item The Abelian Chern--Simons theories $Z^{CS}_{\mathbb T,K}$ and
      $Z^{CS}_{\mathbb T,-K}$ are equivalent as symmetric monoidal extended TQFTs.
\end{enumerate}
\end{corollary}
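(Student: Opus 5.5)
The plan is to prove the three-way equivalence by reducing everything to the classification theorem (Theorem~\ref{thm:classification}) together with Theorem~\ref{thm:orientation-reversed-toral}. The key observation is that, by Theorem~\ref{thm:orientation-reversed-toral}, the orientation-reversed theory $\bigl(Z^{CS}_{\mathbb T,K}\bigr)^{\mathrm{rev}}$ is itself a genuine Abelian Chern--Simons theory, namely $Z^{CS}_{\mathbb T,-K}$. Here $(\Lambda,-K)$ is again even, integral and nondegenerate, and its finite quadratic module is canonically $(G_{-K},q_{-K})\cong(G_K,q_K^{-1})$. Once the reversed theory is placed inside the same family, condition (i) becomes a comparison between two members of that family, and Theorem~\ref{thm:classification} decides such comparisons.

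I will establish (i)$\Leftrightarrow$(iii) first. Theorem~\ref{thm:orientation-reversed-toral} supplies a symmetric monoidal equivalence $\bigl(Z^{CS}_{\mathbb T,K}\bigr)^{\mathrm{rev}}\simeq Z^{CS}_{\mathbb T,-K}$. Composing an equivalence $\bigl(Z^{CS}_{\mathbb T,K}\bigr)^{\mathrm{rev}}\simeq Z^{CS}_{\mathbb T,K}$ with this one, or with its inverse, gives $Z^{CS}_{\mathbb T,-K}\simeq Z^{CS}_{\mathbb T,K}$, and the same composition works in the reverse direction. The only point to record is that symmetric monoidal natural isomorphisms of extended TQFTs compose and invert, which holds because they form a groupoid.

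Next I will prove (iii)$\Leftrightarrow$(ii) by applying Theorem~\ref{thm:classification} to the pair of lattices $(\Lambda,K)$ and $(\Lambda,-K)$. This shows $Z^{CS}_{\mathbb T,K}\cong Z^{CS}_{\mathbb T,-K}$ if and only if $(G_K,q_K)\cong(G_{-K},q_{-K})$. I then substitute the identification $(G_{-K},q_{-K})=(G_K,q_K^{-1})$, computed in the proof of Theorem~\ref{thm:orientation-reversed-toral} from $-K\Lambda=K\Lambda$ and $q_{-K}([x])=q_K([x])^{-1}$. The result is exactly condition (ii).

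I do not expect a serious obstacle here; the argument is essentially a chain of citations. The one point needing care is that "equivalent" in (i) must carry the same meaning as in Theorem~\ref{thm:classification}, namely symmetric monoidal natural isomorphism of extended TQFTs on the Turaev--Walker bordism category. I also need the reversed functor to be a functor on the same source category, so that it can be compared with $Z^{CS}_{\mathbb T,K}$. This matching is implicit in Theorem~\ref{thm:orientation-reversed-toral}, and I will state it explicitly. I may add a remark that (ii) can equally be phrased as a braided equivalence $\mathcal C(G_K,q_K)\simeq\mathcal C(G_K,q_K)^{\mathrm{rev}}$ via Proposition~\ref{prop:reverse-pointed}.
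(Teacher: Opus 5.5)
Your proposal is correct and follows the paper's proof essentially step for step. You obtain (i)$\Leftrightarrow$(iii) from Theorem~\ref{thm:orientation-reversed-toral}, then (iii)$\Leftrightarrow$(ii) by applying Theorem~\ref{thm:classification} to $(\Lambda,K)$ and $(\Lambda,-K)$ together with $(G_{-K},q_{-K})\cong(G_K,q_K^{-1})$; your remark about matching the notion of equivalence and the source category is extra care that the paper leaves implicit.
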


\begin{proof}
By Theorem~\ref{thm:orientation-reversed-toral}, the orientation-reversed theory of $Z^{CS}_{\mathbb T,K}$ is the theory classified by $(G_K,q_K^{-1})$, equivalently the theory with level $-K$. Hence $\rm(i)$ is equivalent to $\rm(iii)$. By the classification theorem, two Abelian Chern--Simons extended TQFTs are equivalent if and only if their finite quadratic modules are isomorphic. Applying this to $(\Lambda,K)$ and $(\Lambda,-K)$ shows that $\rm(iii)$ is equivalent to $\rm(ii)$. Therefore all three statements are equivalent.
\end{proof}

\section{Turaev–Viro realizations of Abelian Chern--Simons theory}
\label{sec:tv}
Now let us consider the connection between Abelian Chern--Simons theory and extended Turaev--Viro theory \cite{TV92,BW96,Balsam10,TV2017}.  The first is the center case, where the pointed modular category attached to an Abelian theory is itself a Drinfeld center.  The second is the doubled case, where the center is realized as a Deligne tensor product with the orientation-reversed category. We use the following terminology throughout this section. For a finite quadratic module $(G,q)$, we say that $(G,q)$ is hyperbolic if it admits a Lagrangian subgroup $L\subset G$, that is, $q|_L=1$ and $L=L^\perp$. Equivalently, $(G,q)$ is Witt trivial\footnote{Thus ``hyperbolic'' is used here in the metabolic, or Witt-trivial, sense; it does not require a splitting as $A\oplus A^\vee$ with the standard hyperbolic quadratic form.}.

\begin{theorem}
\label{thm:center-criterion}
Let $(\Lambda,K)$ be an even, integral, nondegenerate lattice, and write $\mathcal C_K:=\mathcal C(G_K,q_K).$ Then the following are equivalent.

\begin{enumerate}[label=\rm(\alph*)]
\item There exists a spherical fusion category $\mathcal A$ such that $\mathcal C_K \cong \mathcal Z(\mathcal A)$ as braided ribbon categories.

\item The finite quadratic module $(G_K,q_K)$ is hyperbolic, i.e. there exists a
subgroup $L\subset G_K$ such that $q_K|_L=1,
$ and $ L=L^\perp .$
Equivalently, $L$ is isotropic and $|L|^2=|G_K|$.

\item There exist a finite Abelian group $B$ and a class
$\beta\in H^3(B;\mathbb C^\times)$ such that $\mathcal C_K \cong \mathcal Z(\VecCat_B^{\,\beta})$ as braided ribbon categories.
\end{enumerate}
\end{theorem}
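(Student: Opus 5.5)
The natural route is the cycle (c)$\Rightarrow$(a)$\Rightarrow$(b)$\Rightarrow$(c), using the standard theory of Lagrangian algebras in modular categories \cite{DMNO2013} together with the classification of pointed braided categories by quadratic forms already used in Theorem~\ref{thm:Eq-O} and Proposition~\ref{prop:reverse-pointed}.

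\textbf{(c)$\Rightarrow$(a).} This is immediate. $\VecCat_B^{\,\beta}$ is a spherical fusion category, with the canonical pivotal structure on a pointed fusion category. Take $\mathcal A=\VecCat_B^{\,\beta}$.

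\textbf{(a)$\Rightarrow$(b).} If $\mathcal C_K\cong\mathcal Z(\mathcal A)$, then the forgetful functor $\mathcal Z(\mathcal A)\to\mathcal A$ has a right adjoint $I$. The object $I(\mathbf 1)$ is a connected étale algebra with $\dim I(\mathbf 1)=\dim\mathcal A=\sqrt{\dim\mathcal Z(\mathcal A)}$, so it is a Lagrangian algebra in the sense of \cite{DMNO2013}. I then transport it to $\mathcal C_K$.

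In a pointed braided category, a connected étale algebra $A$ decomposes as $A\cong\bigoplus_{h\in H}X_h$ with each multiplicity at most one, by connectedness and invertibility of the simples. The support $H$ is a subgroup, because the multiplication $X_{h_1}\otimes X_{h_2}\to X_{h_1+h_2}$ is nonzero; this follows from the unit and nondegeneracy of a separable Frobenius algebra. Commutativity together with $\theta_A=\id$ (étale implies trivial twist, since $\dim A\neq0$ in the unitary setting) forces $q_K|_H=1$.

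The Lagrangian condition $(\dim A)^2=\dim\mathcal C_K$ gives $|H|^2=|G_K|$, since all dimensions equal $1$. Because $q_K$ is nondegenerate, $|H||H^\perp|=|G_K|$, and isotropy gives $H\subset H^\perp$. Hence $H=H^\perp$. The same counting shows that the two phrasings in (b) agree.

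\textbf{(b)$\Rightarrow$(c).} This is the main step. Given a Lagrangian $L\subset G_K$, I would form the algebra $A_L=\bigoplus_{h\in L}X_h$. Its multiplication requires a $2$-cochain $\mu$ on $L$ with $d\mu=\omega|_L$ and $\mu(h_1,h_2)=c(h_1,h_2)\mu(h_2,h_1)$. Such a $\mu$ exists because $(\omega,c)|_L$ is an abelian $3$-cocycle with trivial quadratic form $q|_L=1$, hence is abelian-cohomologous to zero by the Eilenberg--Mac Lane classification. This makes $A_L$ a Lagrangian algebra.

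By \cite[Thm.~4.6]{DMNO2013}, a Lagrangian algebra gives $\mathcal C_K\simeq\mathcal Z(\mathcal A)$ with $\mathcal A=(\mathcal C_K)_{A_L}$ the category of $A_L$-modules. Every module is induced, $X_a\otimes A_L$, and $X_a\otimes A_L\cong X_b\otimes A_L$ exactly when $a-b\in L$. Each such induced module is simple, so $\mathcal A$ is pointed with group of simples $B:=G_K/L$, and therefore $\mathcal A\simeq\VecCat_B^{\,\beta}$ for some $\beta\in H^3(B;\mathbb C^\times)$.

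The expected obstacle is making the ribbon and spherical structure match. The DMNO equivalence is braided, while the twist on both sides is determined by the braiding through the canonical spherical structure in the pseudo-unitary setting. So braided equivalence upgrades to braided ribbon equivalence once both sides carry the canonical (positive-dimension) spherical structures. I would check this directly: the twist on $\mathcal C_K$ is $q_K$, and the twist on $\mathcal Z(\VecCat_B^{\,\beta})$ is the canonical one.

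As a sanity check, one could alternatively identify $\beta$ explicitly from the extension $0\to L\to G_K\to B\to0$, using $L\cong\widehat B$ via the bilinear form, but this is not needed.
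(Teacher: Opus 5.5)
Your proposal is correct and follows essentially the same route as the paper. The paper also runs the cycle (c)$\Rightarrow$(a)$\Rightarrow$(b)$\Rightarrow$(c), using the characterization of Drinfeld centers by Lagrangian algebras, the classification of connected \'etale algebras in a pointed category by isotropic subgroups, and the identification of the module category of $A_L$ with $\VecCat_{G_K/L}^{\,\beta}$. You additionally spell out two steps the paper only asserts: the computation of the simple $A_L$-modules, and the upgrade from braided to braided ribbon equivalence via the canonical spherical structures in the pseudo-unitary setting.
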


\begin{proof}
The implication $\rm(c)\Rightarrow (a)$ is immediate, since every pointed fusion
category $\VecCat_B^{\,\beta}$ is spherical. For $\rm(a)\Rightarrow (b)$, assume that $\mathcal C_K \cong \mathcal Z(\mathcal A)$ for some fusion category $\mathcal A$. A nondegenerate braided fusion category is a
Drinfeld center if and only if it contains a Lagrangian algebra. Hence $\mathcal C_K$
contains a Lagrangian algebra $R$.

Now $\mathcal C_K$ is pointed. In a pointed modular category $\mathcal C(G_K,q_K)$,
connected Étale algebras are classified by isotropic subgroups $H\subset G_K$, and the
Lagrangian algebras are exactly those for which $H$ is Lagrangian. Therefore $R$
corresponds to a subgroup $L\subset G_K$ satisfying $q_K|_L=1,$ and $
L=L^\perp .$ Thus $(G_K,q_K)$ is hyperbolic.

For $\rm(b)\Rightarrow (c)$, let $L\subset G_K$ be a Lagrangian subgroup. In the pointed modular category $\mathcal C(G_K,q_K)$, the corresponding algebra $R(L):=\bigoplus_{\ell\in L}\ell$ is a Lagrangian algebra. For pointed modular categories, a choice of Lagrangian subgroup determines a pointed fusion category $\VecCat_{G_K/L}^{\,\beta}$ and a braided ribbon equivalence
$$
\mathcal C(G_K,q_K)\cong \mathcal Z(\VecCat_{G_K/L}^{\,\beta})
$$
for some $\beta\in H^3(G_K/L;\mathbb C^\times)$. Hence $\mathcal C_K$ is a Drinfeld
center of a spherical fusion category, proving $\rm(c)$, and therefore also $\rm(a)$.
\end{proof}

\subsection*{The center case}

\begin{theorem}
\label{thm:center-case}
Let $(\Lambda,K)$ be an even, integral, nondegenerate lattice. Suppose $\mathcal C(G_K,q_K)$  admits a Lagrangian subgroup. Equivalently, by Theorem \ref{thm:center-criterion}, there exists a spherical fusion category $\mathcal{A}$ such that $\mathcal C(G_K,q_K)\cong \mathcal Z(\mathcal A)$ as braided ribbon categories. Then there is a symmetric monoidal natural isomorphism of extended TQFTs
$$
Z^{CS}_{\mathbb T,K}\cong Z^{TV}_{\mathcal A}.
$$
\end{theorem}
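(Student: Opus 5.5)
The plan is to chain three symmetric monoidal natural isomorphisms, using the Turaev--Viro/Reshetikhin--Turaev comparison theorem as the bridge. First, Theorem~\ref{thm:cs-rt-equivalence} gives
$$
\Phi_K: Z^{RT}_{\mathcal C(G_K,q_K)}\cong Z^{CS}_{\mathbb T,K}.
$$
Second, the hypothesis supplies a braided ribbon equivalence $F:\mathcal C(G_K,q_K)\xrightarrow{\sim}\mathcal Z(\mathcal A)$; by Theorem~\ref{thm:center-criterion} we may take $\mathcal A=\VecCat^{\,\beta}_{G_K/L}$ for a Lagrangian subgroup $L$, although the argument should not depend on this choice. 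Third, I would invoke the Turaev--Virelizier/Balsam--Kirillov theorem \cite{TV2017,Balsam10}: for a spherical fusion category $\mathcal A$ over $\mathbb C$ with $\dim\mathcal A\neq0$, there is a symmetric monoidal natural isomorphism $Z^{TV}_{\mathcal A}\cong Z^{RT}_{\mathcal Z(\mathcal A)}$. Composing these gives $Z^{CS}_{\mathbb T,K}\cong Z^{RT}_{\mathcal C_K}\cong Z^{RT}_{\mathcal Z(\mathcal A)}\cong Z^{TV}_{\mathcal A}$.

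The middle step needs a short functoriality argument. A braided ribbon equivalence of modular categories induces a symmetric monoidal natural isomorphism of Reshetikhin--Turaev TQFTs. On objects, the map on state spaces is $\bigoplus_i \Hom(\mathbf 1, \dots)\to\bigoplus_{F(i)}\Hom(\mathbf 1,\dots)$, built from $F$ and its tensorator. Compatibility on bordisms follows because RT invariants of colored ribbon graphs are computed from the ribbon structure, which $F$ preserves. The anomaly data also match: they are governed by the Gauss sums $\sum_a\theta_a d_a^2$, and these are equal for $\mathcal C_K$ and $\mathcal Z(\mathcal A)$, with total dimensions agreeing since $\dim\mathcal Z(\mathcal A)=(\dim\mathcal A)^2=|G_K|$. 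In particular the central charge of $\mathcal C_K$ is $0\bmod 8$, consistent with Witt triviality. Equivalently, this step is the (a)$\Leftrightarrow$(b) content of Theorem~\ref{thm:classification} applied at the level of modular categories.

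The main obstacle is matching bordism categories. $Z^{CS}_{\mathbb T,K}$ and $Z^{RT}$ live on Turaev--Walker extended bordisms with Lagrangian data and Maslov corrections, whereas $Z^{TV}_{\mathcal A}$ is an honest non-anomalous functor on oriented bordisms. I would resolve this by noting that the RT anomaly of $\mathcal Z(\mathcal A)$ is trivial: the Gauss sum equals $\dim\mathcal A>0$ with trivial phase, so $\sigma$-corrections act trivially. Concretely, $Z^{RT}_{\mathcal Z(\mathcal A)}$ factors through the forgetful functor $\Cob^{\mathrm{ext}}_{2+1}\to\Cob_{2+1}$, and the comparison of \cite{TV2017} can be read as an isomorphism of functors on $\Cob^{\mathrm{ext}}_{2+1}$ after precomposing $Z^{TV}_{\mathcal A}$ with that forgetful functor. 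Naturality of $\Phi_K$ transports this factorization to $Z^{CS}_{\mathbb T,K}$. I would also check the normalization conventions, in particular that $Z^{TV}(S^3)=(\dim\mathcal A)^{-1}=|G_K|^{-1/2}$ agrees with $Z^{CS}_{\mathbb T,K}(S^3)=|G_K|^{m_{S^3}}=|G_K|^{-1/2}$, so that no residual invertible scalar rescaling remains between the functors.
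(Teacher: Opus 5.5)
Your proposal is correct and follows the paper's proof, which is the same three-step chain $Z^{CS}_{\mathbb T,K}\cong Z^{RT}_{\mathcal C(G_K,q_K)}\cong Z^{RT}_{\mathcal Z(\mathcal A)}\cong Z^{TV}_{\mathcal A}$: Theorem~\ref{thm:cs-rt-equivalence}, then transport along the braided ribbon equivalence, then the Balsam/Turaev--Virelizier center theorem. You also note that the Gauss sum of $\mathcal Z(\mathcal A)$ equals $\dim\mathcal A>0$, so the Reshetikhin--Turaev anomaly vanishes and the anomaly-extended source category can be matched with the unextended $Z^{TV}_{\mathcal A}$; this usefully makes explicit a point the paper's proof leaves implicit.
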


\begin{proof}
By Theorem~\ref{thm:cs-rt-equivalence}, there is a symmetric monoidal natural
isomorphism
$$
Z^{CS}_{\mathbb T,K}\cong Z^{RT}_{\mathcal C(G_K,q_K)}.
$$
Transporting the theory along the braided ribbon equivalence gives
$$
Z^{RT}_{\mathcal C(G_K,q_K)}\cong Z^{RT}_{\mathcal Z(\mathcal A)}.
$$
Finally, by the equivalence of extended Turaev--Viro theory with extended Reshetikhin--Turaev theory \cite[Theorem 3.1]{Balsam10}, see also \cite{BK2010,TV2017}, there is a symmetric monoidal natural isomorphism
$$
Z^{TV}_{\mathcal A}\cong Z^{RT}_{ \mathcal Z(\mathcal A)}.
$$
Composing the three natural isomorphisms gives the claim.
\end{proof}

Under the symmetric monoidal natural isomorphism of Theorem~\ref{thm:center-case}, the Turaev--Viro theory $Z^{TV}_{\mathcal A}$ inherits its state spaces, bordism vectors, partition functions, and surgery evaluations from the Abelian Chern--Simons theory $Z^{CS}_{\mathbb T,K}$ of Section~\ref{sec:Abelian-CS}. In particular, for every connected closed oriented surface $\Sigma_g$,
$$
\dim Z^{TV}_{\mathcal A}(\Sigma_g)=|G_K|^g=|\det K|^g,
$$
and if $M=M_L$ is presented by integral surgery with regular part $L_{\mathrm{reg}}$, then
$$
Z^{TV}_{\mathcal A}(M_L)
=
|G_K|^{m_M}\,
|\det(L_{\mathrm{reg}})|^{-n/2}
\sum_{[x]\in \mathbb Z^{\rho n}/(L_{\mathrm{reg}}\otimes I_n)\mathbb Z^{\rho n}}
\exp\!\Bigl(\pi i\,x^\top(L_{\mathrm{reg}}^{-1}\otimes K)x\Bigr).
$$
All remaining formulas are obtained by applying the corresponding Abelian Chern--Simons formulas of Section~\ref{sec:Abelian-CS}.

\subsection*{Multicomponent BF as the hyperbolic center case}

We now apply the center criterion to the multicomponent Abelian BF theories of Section~\ref{sec:bf}. Let $N\in M_n(\mathbb Z)$ with $\det N\neq0$, and recall
$$
K_{BF}(N)=
\begin{pmatrix}
0&N\\
N^\top&0
\end{pmatrix}.
$$
Set $A_N:=\mathbb Z^n/N\mathbb Z^n.$ The discriminant group decomposes as

$$
G_{K_{BF}(N)}
\cong
\frac{\mathbb Z^n}{N\mathbb Z^n}
\oplus
\frac{\mathbb Z^n}{N^\top\mathbb Z^n}.
$$

Since

$$
K_{BF}(N)^{-1}
=
\begin{pmatrix}
0&(N^\top)^{-1}\\
N^{-1}&0
\end{pmatrix},
$$

the associated quadratic form is

$$
q_{K_{BF}(N)}([u],[v])
=
\exp\!\left(2\pi i\,v^\top N^{-1}u\right).
$$

The resulting pairing identifies $\frac{\mathbb Z^n}{N^\top\mathbb Z^n}
\cong A_N^\vee,$ and hence
$$
\bigl(G_{K_{BF}(N)},q_{K_{BF}(N)}\bigr)
\cong
\bigl(A_N\oplus A_N^\vee,q_{\mathrm{hyp}}\bigr),
\qquad
q_{\mathrm{hyp}}(a,\chi)=\chi(a).$$
Thus the BF discriminant form is the standard hyperbolic metric group associated with $A_N$. In particular, $A_N\oplus\{1\}$ is a Lagrangian subgroup, and therefore
$$
\mathcal C\bigl(G_{K_{BF}(N)},q_{K_{BF}(N)}\bigr)
\cong
\mathcal Z(\VecCat_{A_N}).
$$

\begin{corollary}
\label{cor:bf-tv}
Let $N\in M_n(\mathbb Z)$ with $\det N\neq0$, and let $A_N:=\mathbb Z^n/N\mathbb Z^n.$ Then there is a symmetric monoidal natural isomorphism of extended TQFTs

$$
Z_N^{BF}
=
Z^{CS}_{\mathbb T_{BF},K_{BF}(N)}
\cong
Z^{TV}_{\VecCat_{A_N}}.
$$

Thus multicomponent Abelian BF theory with coupling matrix $N$ is naturally identified with the Turaev--Viro theory of $\VecCat_{A_N}$.
\end{corollary}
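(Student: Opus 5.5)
The plan is to chain the natural isomorphisms already established, using the hyperbolic structure of the BF discriminant form computed just above the statement.

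First, by definition and Theorem~\ref{thm:bf-tqft-data}, $Z_N^{BF}=Z^{CS}_{\mathbb T_{BF},K_{BF}(N)}$ is the Abelian Chern--Simons TQFT of the even, integral, nondegenerate lattice $(\mathbb Z^{2n},K_{BF}(N))$. Next I will verify the identification of the discriminant module in more detail than the preceding text does. The key fact is that $K_{BF}(N)\mathbb Z^{2n}=N\mathbb Z^n\oplus N^\top\mathbb Z^n$, which gives the direct sum decomposition of $G_{K_{BF}(N)}$. For $x=(u,v)$, the formula for $K_{BF}(N)^{-1}$ yields $\tfrac12 x^\top K_{BF}^{-1}x=v^\top N^{-1}u$, and this is well defined modulo $\mathbb Z$ on the quotient. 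The bilinear pairing $(u,v)\mapsto \exp(2\pi i\,v^\top N^{-1}u)$ between $\mathbb Z^n/N\mathbb Z^n$ and $\mathbb Z^n/N^\top\mathbb Z^n$ is perfect, because it is the discriminant pairing of a unimodular-type splitting; a cardinality count $|\det N|$ on each side together with nondegeneracy of the total form confirms this. Hence $(G_{K_{BF}},q_{K_{BF}})\cong(A_N\oplus A_N^\vee,q_{\mathrm{hyp}})$.

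Then I will show that $L=A_N\oplus 0$ is Lagrangian. We have $q|_L=1$ trivially, and $|L|^2=|A_N|^2=|\det N|^2=|G_N|$, so $L=L^\perp$. By Theorem~\ref{thm:center-criterion}, $\mathcal C_{K_{BF}}$ is a Drinfeld center $\mathcal Z(\VecCat_{G/L}^{\beta})$ with $G/L\cong A_N^\vee\cong A_N$. I must check that $\beta$ is trivial, that is, that the center is $\mathcal Z(\VecCat_{A_N})$ with no twist. I will do this directly: the standard computation gives $\mathcal Z(\VecCat_{A})$ as pointed with simple objects $(a,\chi)\in A\oplus A^\vee$ and twist $\theta_{(a,\chi)}=\chi(a)$, so its quadratic module is exactly $(A_N\oplus A_N^\vee,q_{\mathrm{hyp}})$. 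The classification of pointed braided categories by finite quadratic modules then gives the braided ribbon equivalence $\mathcal C(G_{K_{BF}},q_{K_{BF}})\simeq\mathcal Z(\VecCat_{A_N})$, provided the ribbon structure is the canonical one with $\theta=q$ on both sides. This is the main point requiring care.

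Finally, apply Theorem~\ref{thm:center-case} with $\mathcal A=\VecCat_{A_N}$, which is spherical with its standard pivotal structure. That composite chains Theorem~\ref{thm:cs-rt-equivalence}, transport along the equivalence, and the Turaev--Viro/Reshetikhin--Turaev comparison, and it yields $Z_N^{BF}\cong Z^{TV}_{\VecCat_{A_N}}$. The main obstacle I expect is the bookkeeping in the perfect-pairing and triviality-of-$\beta$ step. Matching twist conventions, such as $\chi(a)$ versus $\chi(a)^{-1}$, is harmless, since $(a,\chi)\mapsto(a,\chi^{-1})$ is an isometry between the two choices.
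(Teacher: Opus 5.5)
Your proposal is correct and follows the paper's route. You identify $(G_{K_{BF}(N)},q_{K_{BF}(N)})$ with the hyperbolic module $(A_N\oplus A_N^\vee,q_{\mathrm{hyp}})$, match it with the quadratic module of $\mathcal Z(\VecCat_{A_N})$ to get the braided ribbon equivalence, and then invoke Theorem~\ref{thm:center-case}, supplying the perfect-pairing and Lagrangian checks that the paper only asserts. The detour through Theorem~\ref{thm:center-criterion} and the class $\beta$ is unnecessary, because your direct comparison of quadratic modules already gives $\mathcal C(G_{K_{BF}(N)},q_{K_{BF}(N)})\simeq\mathcal Z(\VecCat_{A_N})$.
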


\begin{proof}
By Proposition~\ref{prop:bf-embedding}, the BF theory with coupling matrix $N$ is the Abelian Chern--Simons theory associated with $K_{BF}(N)$. As shown above, its discriminant finite quadratic module is the standard hyperbolic metric group $\bigl(A_N\oplus A_N^\vee,q_{\mathrm{hyp}}\bigr),$ whose pointed modular category is braided ribbon equivalent to
$\mathcal Z(\VecCat_{A_N})$. The result therefore follows from Theorem~\ref{thm:center-case}.
\end{proof}

For $n=1$ and $N=(k)$, one recovers

$$
K_{\mathrm{hyp}}(k)=
\begin{pmatrix}
0&k\\
k&0
\end{pmatrix},
\qquad
A_N\cong\mathbb Z/k\mathbb Z,
$$
and hence
$
Z^{CS}_{U(1)^2,K_{\mathrm{hyp}}(k)}
\cong
Z^{TV}_{\VecCat_{\mathbb Z/k\mathbb Z}},
$ recovering the rank-one BF theory as the basic hyperbolic example.
\subsection*{The doubled modular case}

Let $(\Lambda,K)$ be an even, integral, nondegenerate lattice and write $\mathcal C_K:=\mathcal C(G_K,q_K).$ Since $\mathcal C_K$ is modular, its Drinfeld center is canonically braided equivalent to its modular double
$$
\mathcal Z(\mathcal C_K)\simeq \mathcal C_K\boxtimes \mathcal C^{\mathrm{rev}}_K,
$$
see for example \cite{DGNO2010}. Here $\mathcal C^{\mathrm{rev}}_K$ denotes the orientation-reversed modular category. On the lattice side, the orthogonal direct sum $K\oplus (-K)$ is again even, integral, and nondegenerate, and its discriminant quadratic module is the orthogonal sum of $(G_K,q_K)$ and $(G_K,q_K^{-1})$. Consequently the pointed modular category attached to $K\oplus(-K)$ is braided equivalent to $\mathcal C_K\boxtimes \mathcal C^{\mathrm{rev}}_K.$

\begin{theorem}
\label{thm:double-case}
Let $(\Lambda,K)$ be an even, integral, nondegenerate lattice, and let $\mathbb T= \mathrm t/\Lambda$. Then there is a symmetric monoidal natural isomorphism of TQFTs
$$
Z^{TV}_{\mathcal C(G_K,q_K)}
\cong
Z^{CS}_{\mathbb T\times \mathbb T,\,K\oplus(-K)}.
$$
\end{theorem}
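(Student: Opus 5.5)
The plan is to compose three symmetric monoidal natural isomorphisms, following the proof of Theorem~\ref{thm:center-case} with $\mathcal A:=\mathcal C_K$.

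\emph{Step 1: Turaev--Viro to Reshetikhin--Turaev.} $\mathcal C_K$ is modular, hence spherical fusion. The extended Turaev--Viro/Reshetikhin--Turaev equivalence \cite[Theorem 3.1]{Balsam10} therefore gives
$$Z^{TV}_{\mathcal C_K}\cong Z^{RT}_{\mathcal Z(\mathcal C_K)}.$$

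\emph{Step 2: identify the center.} Since $\mathcal C_K$ is modular, \cite{DGNO2010} gives a canonical braided equivalence $\mathcal Z(\mathcal C_K)\simeq \mathcal C_K\boxtimes\mathcal C_K^{\mathrm{rev}}$. By Proposition~\ref{prop:reverse-pointed}, $\mathcal C_K^{\mathrm{rev}}\simeq\mathcal C(G_K,q_K^{-1})$. The Deligne product of pointed braided categories is pointed, with simple objects indexed by $G_K\oplus G_K$ and twist $q_K\oplus q_K^{-1}$, the product of the twists. By the classification of pointed braided categories by quadratic forms,
$$\mathcal C_K\boxtimes\mathcal C_K^{\mathrm{rev}}\simeq \mathcal C\bigl(G_K\oplus G_K,\,q_K\oplus q_K^{-1}\bigr).$$
These are braided equivalences, but we need them as ribbon equivalences. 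Both sides are pseudo-unitary pointed categories, so I will either use their canonical spherical structures, or note that twists are determined by $q$ and are preserved. Transporting $Z^{RT}$ along a ribbon equivalence then gives an isomorphism of RT theories, exactly as in Theorem~\ref{thm:center-case}.

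\emph{Step 3: the lattice side.} Take the lattice $\Lambda\oplus\Lambda$ with form $K\oplus(-K)$. It is even, integral and nondegenerate, and its torus is $\mathbb T\times\mathbb T$. Its dual is $\Lambda^*\oplus\Lambda^*$, and $(K\oplus -K)(\Lambda\oplus\Lambda)=K\Lambda\oplus K\Lambda$, so
$$G_{K\oplus(-K)}=G_K\oplus G_{-K}=G_K\oplus G_K.$$
Because $(K\oplus -K)^{-1}$ is block diagonal, the quadratic form splits as $q_K\oplus q_{-K}=q_K\oplus q_K^{-1}$; the identity $q_{-K}=q_K^{-1}$ is the one computed in Theorem~\ref{thm:orientation-reversed-toral}. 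Theorem~\ref{thm:cs-rt-equivalence} applied to this lattice then gives $Z^{RT}_{\mathcal C(G_K\oplus G_K,\,q_K\oplus q_K^{-1})}\cong Z^{CS}_{\mathbb T\times\mathbb T,K\oplus(-K)}$. Composing with Steps 1 and 2 proves the theorem. As a byproduct, Theorem~\ref{thm:classification} identifies the right-hand side with $Z^{CS}_{\mathbb T,K}\otimes Z^{CS}_{\mathbb T,-K}$.

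\emph{Main obstacle.} I expect the difficulty to lie in Step 2 and its interface with Step 1, namely checking that the spherical and ribbon structures match. The RT theory in \cite{Balsam10} is built from $\mathcal Z(\mathcal C_K)$ with the ribbon structure induced from the pivotal structure of $\mathcal C_K$, and the DGNO equivalence must respect that twist. I would check this on simple objects: $(X_a,X_b)$ is sent to $X_a\otimes X_b$ with half-braiding built from $c$ and $c^{\mathrm{rev}}$, and its twist should be $q_K(a)q_K(b)^{-1}$. This fixes the twist, and with it the ribbon structure, since pointed categories have unique such data compatible with $q$.
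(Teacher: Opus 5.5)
Your proposal is correct and follows the paper's proof step for step. Both argue as follows: Turaev--Viro of $\mathcal C_K$ is Reshetikhin--Turaev of $\mathcal Z(\mathcal C_K)$; modularity gives $\mathcal Z(\mathcal C_K)\simeq\mathcal C_K\boxtimes\mathcal C_K^{\mathrm{rev}}$; this is matched with the discriminant module of $K\oplus(-K)$; and Theorem~\ref{thm:cs-rt-equivalence} finishes. Your explicit computation of $(G_{K\oplus(-K)},q_{K\oplus(-K)})$ and your ribbon-compatibility check supply details the paper only asserts. One quibble: your closing ``byproduct'' identification with $Z^{CS}_{\mathbb T,K}\otimes Z^{CS}_{\mathbb T,-K}$ would need multiplicativity of the RT construction under Deligne products, not Theorem~\ref{thm:classification}, though the statement does not require it.
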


\begin{proof}
By the extended center theorem for Turaev--Viro theory,
$$
Z^{TV}_{\mathcal C_K}\cong Z^{RT}_{\mathcal Z(\mathcal C_K)}.
$$
Since $\mathcal C_K$ is modular, there is a braided ribbon equivalence $\mathcal Z(\mathcal C_K)\simeq \mathcal C_K\boxtimes \mathcal C^{\mathrm{rev}}_K.$ Thus
$$
Z^{RT}_{\mathcal Z(\mathcal C_K)}\cong
Z^{RT}_{\mathcal C_K\boxtimes \mathcal C^{\mathrm{rev}}_K}.
$$
The discriminant form of the lattice $K\oplus(-K)$ is the orthogonal sum of $(G_K,q_K)$ and $(G_K,q_K^{-1})$, so the pointed modular category attached to $K\oplus(-K)$ is braided equivalent to $\mathcal C_K\boxtimes \mathcal C^{\mathrm{rev}}_K$.  Applying the extended equivalence theorem again yields
$$
Z^{RT}_{\mathcal C_K\boxtimes \mathcal C^{\mathrm{rev}}_K}
\cong Z^{TV}_{\mathcal C_K}\cong
Z^{CS}_{\mathbb T\times \mathbb T,\,K\oplus(-K)}.
$$
Composing these natural isomorphisms proves the theorem.
\end{proof}
This doubled realization is the TQFT counterpart of the string-net picture of Levin and Wen \cite{LevinWen05}: orientation-reversal invariant doubled topological orders arise naturally from a condensation/string-net construction. In the present Abelian setting, the Turaev--Viro center construction realizes precisely the doubled modular data $\mathcal C_K \boxtimes \mathcal C_K^{\mathrm{rev}}$, corresponding to $Z^{CS}_{\mathbb T,K}\otimes Z^{CS}_{\mathbb T,-K}$. Under the symmetric monoidal natural isomorphism of Theorem~\ref{thm:double-case}, the Turaev--Viro theory $Z^{TV}_{\mathcal C_K}$ inherits its TQFT data from the Abelian Chern--Simons theory $Z^{CS}_{\mathbb T\times \mathbb T,\;K\oplus(-K)}.$ In particular, for every connected closed oriented surface $\Sigma_g$,
$$
\dim Z^{TV}_{\mathcal C_K}(\Sigma_g)
=
|G_{K\oplus(-K)}|^g
=
|\det K|^{2g},
$$
and if $M=M_L$ is presented by integral surgery with regular part $L_{\mathrm{reg}}$, then
$$
Z^{TV}_{\mathcal C_K}(M_L)
=
|\det K|^{2m_M}\,
|\det(L_{\mathrm{reg}})|^{-n}
\sum_{[x]\in \mathbb Z^{2\rho n}/(L_{\mathrm{reg}}\otimes I_{2n})\mathbb Z^{2\rho n}}
\exp\!\Bigl(\pi i\,x^\top(L_{\mathrm{reg}}^{-1}\otimes (K\oplus(-K)))x\Bigr).
$$
\begin{remark}
The doubled surgery expression admits a useful factorization. Using

$$
K\oplus(-K)
=
\begin{pmatrix}
K&0\\
0&-K
\end{pmatrix},
$$

the finite summation group decomposes canonically as

$$
\frac{\mathbb Z^{2\rho n}}
     {(L_{\mathrm{reg}}\otimes I_{2n})\mathbb Z^{2\rho n}}
\cong
\frac{\mathbb Z^{\rho n}}
     {(L_{\mathrm{reg}}\otimes I_n)\mathbb Z^{\rho n}}
\oplus
\frac{\mathbb Z^{\rho n}}
     {(L_{\mathrm{reg}}\otimes I_n)\mathbb Z^{\rho n}}.
$$

Thus we may write a summation class as $x=(u,v)$, where $u$ belongs to the $K$-sector and $v$ to the $-K$-sector. Then

$$
x^\top
\bigl(
L_{\mathrm{reg}}^{-1}\otimes(K\oplus(-K))
\bigr)x
=
u^\top(L_{\mathrm{reg}}^{-1}\otimes K)u
-
v^\top(L_{\mathrm{reg}}^{-1}\otimes K)v.
$$

Consequently,

$$
\begin{aligned}
&\exp\!\Bigl(
\pi i\,x^\top
[L_{\mathrm{reg}}^{-1}\otimes(K\oplus(-K))]x
\Bigr)
\\
&\qquad =
\exp\!\Bigl(
\pi i\,u^\top(L_{\mathrm{reg}}^{-1}\otimes K)u
\Bigr)
\exp\!\Bigl(
-\pi i\,v^\top(L_{\mathrm{reg}}^{-1}\otimes K)v
\Bigr).
\end{aligned}
$$

Hence the finite quadratic Gauss sum factors as

$$
\sum_{u,v}
e^{\pi i(Q_K(u)-Q_K(v))}
=
\left(
\sum_u e^{\pi iQ_K(u)}
\right)
\left(
\sum_v e^{-\pi iQ_K(v)}
\right),
$$

where $Q_K(u)=u^\top(L_{\mathrm{reg}}^{-1}\otimes K)u.$ The second factor is the complex conjugate of the first. The normalization factors factorize in the same way:

$$
|\det K|^{2m_M}
|\det(L_{\mathrm{reg}})|^{-n}
=
\left(
|\det K|^{m_M}
|\det(L_{\mathrm{reg}})|^{-n/2}
\right)^2.
$$

Therefore the full doubled partition function satisfies

$$
Z^{CS}_{\mathbb T\times\mathbb T,\,K\oplus(-K)}(M)
=
Z^{CS}_{\mathbb T,K}(M)\,
Z^{CS}_{\mathbb T,-K}(M).
$$

Using the orientation-reversal identification
$$
Z^{CS}_{\mathbb T,-K}(M)
=
Z^{CS}_{\mathbb T,K}(-M),
$$
and unitarity,
$$
Z^{CS}_{\mathbb T,K}(-M)
=
\overline{Z^{CS}_{\mathbb T,K}(M)},
$$
one obtains

$$
Z^{TV}_{\mathcal C_K}(M)=Z^{CS}_{\mathbb T\times\mathbb T,\,K\oplus(-K)}(M)
=
\left|Z^{CS}_{\mathbb T,K}(M)\right|^2.
$$

Thus the factorization of the finite Gauss sum directly reflects the interpretation of the $K\oplus(-K)$ theory as the orientation-reversal invariant double $Z^{CS}_{\mathbb T,K}\otimes Z^{CS}_{\mathbb T,-K}.$
\end{remark}

All other state--space, bordism-vector, and closed-partition formulas are obtained by applying the Abelian Chern--Simons formulas of Section~\ref{sec:Abelian-CS}.

\begin{remark}
The Turaev--Viro identifications above also give a lattice realization of the corresponding Abelian Chern--Simons theories through Levin--Wen string-net models \cite{LevinWen05}. Whenever a theory is identified with $Z^{TV}_{\mathcal A}$, it inherits the associated Levin--Wen Hamiltonian realization. Conversely, Liu and Zhao show that suitable axioms on ground-state wave functions reconstruct the underlying unitary fusion category and recover the corresponding Levin--Wen ground states \cite{LiuZhao2026}. Thus, while the Turaev--Viro construction above and the defect construction below take fusion-categorical data as input, the wave-function perspective shows how such data can instead be reconstructed from microscopic states. In particular, the hyperbolic BF case admits a string-net realization based on $\mathrm{Vec}_{\mathbb Z/k\mathbb Z}$, while the general doubled theory with level $K\oplus(-K)$ is realized by the corresponding Turaev--Viro string-net model.
\end{remark}

\section{Defects in Abelian Chern--Simons theory}
\label{sec:defects}
The equivalence of Theorem~\ref{thm:cs-rt-equivalence} allows us to equip Abelian Chern--Simons theory with categorical defects. We first develop the functorial level of this construction, and then classify the defects of this theory.

We retain the notation of Section~\ref{sec:Abelian-CS}
$$
G_K=\Lambda^*/K\Lambda,\qquad
q_K([x])=\exp\!\bigl(\pi i x^\top K^{-1}x\bigr),\qquad
\mathcal C_K=\mathcal C(G_K,q_K).
$$
All categories are finite semisimple and $\mathbb C$-linear, and we use
the standard unitary ribbon realization of $\mathcal C_K$. Set
$$
b_K(x,y)=\frac{q_K(x+y)}{q_K(x)q_K(y)},\qquad
H^\perp=\{x\in G_K\mid b_K(x,h)=1\text{ for every }h\in H\}.
$$
A subgroup $H$ is \emph{isotropic} if $q_K|_H=1$. It is
\emph{Lagrangian} if, in addition, $H=H^\perp$. Nondegeneracy gives
\begin{equation}
\label{eq:defects-perp-order}
|H|\,|H^\perp|=|G_K|.
\end{equation}
Consequently an isotropic subgroup is Lagrangian exactly when
$|H|^2=|G_K|$. Maximality among isotropic subgroups alone does not
imply this equality.

\subsection{The defect functor}

Let $\mathcal D_K$ denote the defect datum of the categorical RT
construction for $\mathcal C_K$. We write
$$
\widehat{\mathrm{Bord}}^{\mathrm{def}}_3(\mathcal D_K)
$$
for its category of decorated stratified surfaces and $3$-bordisms;
the hat records the anomaly extension. The datum specifies admissible
labels and incidences, rather than allowing arbitrary labels on
arbitrary stratifications. In the single-bulk construction of
\cite{CRS2019}, surfaces are labelled by $\Delta$-separable symmetric
Frobenius algebras in $\mathcal C_K$, and incident lines by the
appropriate multimodules with cyclic compatibility. Point insertions
are included using the compatible junction spaces. Ordinary Wilson
lines retain their ribbon data. The associated  higher
defect structure is discussed in \cite{CMS2020}.

\begin{theorem}
\label{thm:defect-functor}
Fix the symmetric monoidal natural isomorphism
$$
\Phi_K:Z^{RT}_{\mathcal C_K}\xrightarrow{\sim}
Z^{CS}_{\mathbb T,K}
$$
of Theorem~\ref{thm:cs-rt-equivalence}. There is a symmetric monoidal
functor
\begin{equation}
\label{eq:defects-functor}
\widetilde Z^{CS}_{\mathcal C_K}:
\widehat{\mathrm{Bord}}^{\mathrm{def}}_3(\mathcal D_K)
\longrightarrow\mathrm{Vect}^{\mathrm{fd}}_{\mathbb C}
\end{equation}

together with a symmetric monoidal natural isomorphism
$$
\eta_K^{CS}:\iota_K^*\widetilde Z^{CS}_{\mathcal C_K}
\xrightarrow{\sim}Z^{CS}_{\mathbb T,K},
$$
where $\iota_K$ includes the undecorated bulk bordisms. It induces defect evaluations, state spaces, and gluing laws for the single bulk phase $\mathcal C_K$ and its chosen categorical defect datum. Fully gapped boundaries and walls between different bulk phases are described categorically below, using folding where appropriate.
\end{theorem}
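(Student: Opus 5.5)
The plan is to transport the existing defect extension of the Reshetikhin--Turaev theory along $\Phi_K$, rather than build defects directly on the gauge-theoretic side. The starting point is the defect TQFT of Carqueville--Runkel--Schaumann \cite{CRS2019}, with its higher-categorical refinement in \cite{CMS2020}. For the modular category $\mathcal C_K$ and the defect datum $\mathcal D_K$, this gives a symmetric monoidal functor
$$
Z^{RT,\mathrm{def}}_{\mathcal C_K}:\widehat{\mathrm{Bord}}^{\mathrm{def}}_3(\mathcal D_K)\longrightarrow\mathrm{Vect}^{\mathrm{fd}}_{\mathbb C}.
$$
In this construction, surfaces are labelled by $\Delta$-separable symmetric Frobenius algebras, lines by cyclic multimodules, and points by junction spaces. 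The key property I would invoke is that its restriction along $\iota_K$ is canonically isomorphic to $Z^{RT}_{\mathcal C_K}$: an undecorated bordism carries no surface labels, so the construction reduces to the ordinary ribbon-graph evaluation with Wilson lines, which is the Reshetikhin--Turaev functor with the same anomaly extension. I would record this isomorphism as $\epsilon_K:\iota_K^*Z^{RT,\mathrm{def}}_{\mathcal C_K}\xrightarrow{\sim}Z^{RT}_{\mathcal C_K}$.

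The functor itself is defined by transport. On objects, set $\widetilde Z^{CS}_{\mathcal C_K}(\Sigma)$ equal to $Z^{CS}_{\mathbb T,K}(\Sigma)$ when $\Sigma$ is undecorated, and equal to $Z^{RT,\mathrm{def}}_{\mathcal C_K}(\Sigma)$ otherwise. On morphisms, conjugate by the components of $\Phi_K\circ\epsilon_K$ wherever a source or target is undecorated. Functoriality and symmetric monoidality then follow from naturality and monoidality of $\Phi_K$ and $\epsilon_K$. An equivalent, cleaner formulation is this: a symmetric monoidal functor isomorphic to $Z^{RT,\mathrm{def}}_{\mathcal C_K}$ can be chosen so that it agrees on the full subcategory of undecorated bordisms with any prescribed functor isomorphic to the restriction, namely $Z^{CS}_{\mathbb T,K}$. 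This is the standard replacement-along-an-isomorphism lemma. With this choice, $\eta_K^{CS}$ is the identity (or $\Phi_K\circ\epsilon_K$ in the first formulation).

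The induced defect evaluations, state spaces, and gluing laws are then just the images of the CRS ones. In particular, gluing along decorated surfaces is inherited directly, and gluing a decorated bordism to an undecorated one uses the compatibility of $\Phi_K$ with composition.

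The step I expect to be the main obstacle is matching the anomaly data. The CRS defect category uses its own extension (Lagrangian subspaces and signature/Maslov correction), while $Z^{CS}_{\mathbb T,K}$ is built over $\Cob^{\mathrm{ext}}_{2+1}$ with the $K$-twisted Maslov index $\sigma(K)\mu_\Sigma$. I would first check that $\iota_K$ is a well-defined functor into the extended bordism category used in Theorem~\ref{thm:cs-rt-equivalence}, with the same Lagrangians and weights. I would then use that the central charge of $\mathcal C_K$ is $\sigma(K)\bmod 8$ by Gauss--Milgram, so the two anomaly cocycles agree and $\epsilon_K$ respects the hat extension. If they differ by a normalization convention, I would absorb the difference into a rescaling by a power of $e^{\pi i\sigma(K)/4}$, which is natural and monoidal. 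Walls between different bulk phases are excluded from this statement and would be handled later by folding into $\mathcal C_K\boxtimes\mathcal C_{K'}^{\mathrm{rev}}$.
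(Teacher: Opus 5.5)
Your proposal is correct and follows the paper's route: take the defect functor of \cite{CRS2019}, use its reduction isomorphism on undecorated bordisms (your $\epsilon_K$, the paper's $\eta^{RT}_K$) to $Z^{RT}_{\mathcal C_K}$, and compose with $\Phi_K$. The paper uses the simpler of your two options. It sets $\widetilde Z^{CS}_{\mathcal C_K}:=\widetilde Z^{RT}_{\mathcal C_K}$ and $\eta^{CS}_K:=\Phi_K\circ\eta^{RT}_K$ without strictifying. It also treats the anomaly compatibility you propose to check as already built into $\widehat{\mathrm{Bord}}^{\mathrm{def}}_3(\mathcal D_K)$.
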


\begin{proof}
\cite[Theorem~5.8]{CRS2019}
supplies a defect TQFT functor $\widetilde Z^{RT}_{\mathcal C_K}$.
On undecorated bordisms its construction reduces to RT theory,
giving an identification
$$
\eta_K^{RT}:\iota_K^*\widetilde Z^{RT}_{\mathcal C_K}
\xrightarrow{\sim}Z^{RT}_{\mathcal C_K}.
$$
Define the functor and its bulk identification separately by
\begin{equation}
\label{eq:defects-transport}
\widetilde Z^{CS}_{\mathcal C_K}
:=\widetilde Z^{RT}_{\mathcal C_K},\qquad
\eta_K^{CS}:=\Phi_K\circ\eta_K^{RT}.
\end{equation}
The composite is symmetric monoidal and natural. In particular, for
an undecorated bordism $N:\Sigma_0\to\Sigma_1$,
$$
Z^{CS}_{\mathbb T,K}(N)=
\eta_{K,\Sigma_1}^{CS}\,
\widetilde Z^{RT}_{\mathcal C_K}(N)\,
(\eta_{K,\Sigma_0}^{CS})^{-1}.
$$
Functoriality and the anomaly-corrected gluing law therefore hold
with the prescribed Chern--Simons bulk normalization. The defect
extension is the pair
$(\widetilde Z^{CS}_{\mathcal C_K},\eta_K^{CS})$; the tilde symbol
in \eqref{eq:defects-functor} denotes its functor, not this pair.
\end{proof}

\subsection{Connected \'etale algebras in the Abelian case}
Let us choose  a pointed braided convention, see Figure\ref{fig:bulk-lines} for a representation of bulk Wilson lines, 
$$
\mathcal C_K\simeq\operatorname{Vec}^{\,\omega_K,c_K}_{G_K},
\qquad X_x\otimes X_y\simeq X_{x+y},
$$

where the associator from $(X_x\otimes X_y)\otimes X_z$ to
$X_x\otimes(X_y\otimes X_z)$ has scalar $\omega_K(x,y,z)$.
The braiding has scalar $c_K(x,y)$, with
$c_K(x,x)=q_K(x)$ and $c_K(x,y)c_K(y,x)=b_K(x,y)$.

\begin{figure}[htbp]
\centering
\begin{tikzpicture}[defbase]
\fill[defbulk!5] (0,0) rectangle (11.6,3.1);
\node at (5.8,2.65) {$\mathcal C_K=\mathcal C(G_K,q_K)$};
\foreach \a/\b in {2.6/x,5.8/y,9/z}{
  \draw[defline] (\a,2.05)--(\a,.45);
  \node[right=4pt] at (\a,1.3) {$W_{\b}\;\leftrightarrow\;X_{\b}$};
}
\node[defnote] at (5.8,-.4) {$x,y,z\in G_K=\Lambda^*/K\Lambda$};
\end{tikzpicture}
\caption{Bulk Wilson sectors. A charge modulo monopole screening labels
both a topological Wilson line $W_x$ and a simple object $X_x$ of
$\mathcal C_K$. The arrows specify line orientations; the shaded region
denotes the bulk phase.}
\label{fig:bulk-lines}
\end{figure}

\begin{prop}
\label{prop:defects-etale}
Isotropic subgroups $H\subset G_K$ are in bijection with
algebra-isomorphism classes of connected \'etale algebras\footnote{Here \emph{connected \'etale} means a commutative separable
algebra $A$ with $\dim\Hom(\mathbf1,A)=1$. Such an algebra is
\emph{Lagrangian} if 
$\operatorname{dim}(A)^2=\operatorname{dim}(\mathcal C_K)$. Since all fusion categories considered here are unitary, categorical dimensions agree with Frobenius–Perron (PF) dimensions. We therefore write $\dim$ for this common dimension, see \cite[Sec.8.2]{ENO2005}.} in
$\mathcal C_K$. An explicit representative is
$$
A_{H,\mu}=\bigoplus_{h\in H}X_h,
$$
with unit in degree $0$ and multiplication scalar
$\mu(h_1,h_2)\in U(1)$. Its cochain satisfies
\begin{align}
\mu(h_1,h_2)\mu(h_1+h_2,h_3)
&=\omega_K(h_1,h_2,h_3)
\mu(h_2,h_3)\mu(h_1,h_2+h_3),
\label{eq:defects-associativity}\\
\mu(h_1,h_2)&=c_K(h_1,h_2)\mu(h_2,h_1).
\label{eq:defects-commutativity}
\end{align}
Its isomorphism class is denoted $A_H$. The category of local right
modules is
\begin{equation}
\label{eq:defects-local-modules}
(\mathcal C_K)^0_{A_H}\simeq
\mathcal C(H^\perp/H,\overline q_K),\qquad
\overline q_K(x+H)=q_K(x)\quad(x\in H^\perp).
\end{equation}
In particular,
$$
\operatorname{dim}(A_H)=|H|,\qquad
\operatorname{dim}\bigl((\mathcal C_K)^0_{A_H}\bigr)
=\frac{|G_K|}{|H|^2}.
$$
\end{prop}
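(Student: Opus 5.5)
The plan is to work entirely in the chosen realization $\mathcal C_K\simeq\operatorname{Vec}^{\,\omega_K,c_K}_{G_K}$ and split the argument into three parts: (i) from a connected étale algebra to an isotropic subgroup; (ii) existence and uniqueness of the algebra structure for a given isotropic $H$; (iii) identification of the local modules together with the dimension count.

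For (i), let $A$ be connected étale. Since $\mathcal C_K$ is pointed, $A\cong\bigoplus_{x}X_x^{\oplus n_x}$ with $n_0=1$. Connectedness together with separability and commutativity forces all $n_x\le1$. This is the standard fact that a connected étale algebra in a pointed category is multiplicity free, which follows because $\Hom(X_x,A)$ is a module over the one-dimensional $\Hom(\mathbf1,A)$ via the nondegenerate pairing $A\otimes A\to A\to\mathbf1$. The nondegenerate pairing also shows that the support $H=\{x:n_x=1\}$ is closed under inverses. For closure under sums, the multiplication $X_{h_1}\otimes X_{h_2}\to A$ is nonzero: compose with the pairing against $X_{-h_1}$ and use associativity. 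Hence $H$ is a subgroup, and the multiplication is given by scalars $\mu(h_1,h_2)\in\mathbb C^\times$, which can be normalized to $U(1)$ by unitarity. Associativity and commutativity translate exactly into \eqref{eq:defects-associativity} and \eqref{eq:defects-commutativity}. Setting $h_1=h_2=h$ in \eqref{eq:defects-commutativity} gives $c_K(h,h)=1$, i.e. $q_K|_H=1$. So $H$ is isotropic.

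For (ii), conversely, \eqref{eq:defects-associativity} and \eqref{eq:defects-commutativity} say that $\mu$ trivializes the restriction of the abelian $3$-cocycle $(\omega_K,c_K)$ to $H$. By Eilenberg--Mac Lane, abelian cohomology $H^3_{ab}(H;\mathbb C^\times)$ is isomorphic to the group of quadratic forms on $H$, via $(\omega,c)\mapsto c(h,h)$. Since $q_K|_H=1$, the restricted class is trivial, so a solution $\mu$ exists. Two solutions differ by a symmetric $2$-cocycle on $H$. As in the uniqueness step of the proof of Theorem~\ref{thm:Eq-O}, such a cocycle is a coboundary $\partial\lambda$. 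Rescaling the summands by $\lambda$ then gives an algebra isomorphism, so the class $A_H$ depends only on $H$. Separability holds because $|H|$ is invertible and the standard pairing splits the multiplication, and $\dim A_H=|H|$ since each $X_h$ is invertible of dimension one. The two constructions are mutually inverse on supports, which gives the bijection.

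For (iii), I would use the known description of local modules. Every $A_H$-module is free, so $\mathcal C_{A_H}$ has simples indexed by cosets $G_K/H$. A module $A_H\otimes X_x$ is local iff the double braiding $b_K(x,h)=1$ for all $h\in H$, i.e. $x\in H^\perp$. The simple local modules are therefore labelled by $H^\perp/H$. Fusion is induced from $G_K$, and the twist of $A\otimes X_x$ is $q_K(x)$, which is well defined on cosets because $q_K(x+h)=q_K(x)b_K(x,h)q_K(h)=q_K(x)$. Pointed braided categories are classified by their quadratic forms, so this gives \eqref{eq:defects-local-modules}. The dimension formula then follows from \eqref{eq:defects-perp-order}: $|H^\perp/H|=|G_K|/|H|^2$, matching the general identity $\dim\mathcal C^0_A=\dim\mathcal C/\dim(A)^2$ of \cite{DMNO2013}.

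I expect the main obstacle to be the multiplicity-freeness and subgroup-closure argument in (i); if it resists, I would cite the classification of étale algebras in pointed categories from \cite{DMNO2013} or \cite{DGNO2010}. The rest is cohomological bookkeeping.
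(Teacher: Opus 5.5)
Your outline follows the paper's proof closely in (ii) and (iii): existence of $\mu$ from the Eilenberg--Mac Lane trivialization of $(\omega_K,c_K)|_H$, uniqueness because a symmetric normalized $\mathbb C^\times$-valued $2$-cocycle is a coboundary, and the residual bulk via free modules $X_x\otimes A_H$ with the locality criterion $x\in H^\perp$.

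The genuine gap is in (i), at multiplicity-freeness. The reason you give does not work. Since $\Hom(\mathbf 1,A)\cong\mathbb C$, saying that $\Hom(X_x,A)$ is a module over it carries no information about $\dim\Hom(X_x,A)$. The nondegenerate pairing $A\otimes A\to A\to\mathbf 1$ by itself only pairs the $X_x$- and $X_{-x}$-isotypic parts, so it gives $\dim\Hom(X_x,A)=\dim\Hom(X_{-x},A)$, not $\le 1$. This step carries real weight. Your closure-under-sums argument (pairing "against $X_{-h_1}$") and the very definition of the multiplication as scalars $\mu(h_1,h_2)$ both presuppose multiplicity one.

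The missing ingredient is simplicity of free modules on invertible objects. Since $X_x$ is invertible,
$$\operatorname{End}_{(\mathcal C_K)_A}(X_x\otimes A)\cong\Hom_{\mathcal C_K}(X_x,X_x\otimes A)\cong\Hom_{\mathcal C_K}(\mathbf 1,A)\cong\mathbb C.$$
So $X_x\otimes A$ is a simple right $A$-module, the module category being semisimple by separability. Taking $x=0$ shows that $A$ itself is simple. Adjunction then gives $\dim\Hom_{\mathcal C_K}(X_x,A)=\dim\Hom_{(\mathcal C_K)_A}(X_x\otimes A,A)\le 1$.

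This is the paper's argument, and it also makes the Frobenius pairing unnecessary for closure. Because $F_A$ is a tensor functor (using commutativity), the support $H=\{x: X_x\otimes A\cong A\}$ is a subgroup: $X_{h_1+h_2}\otimes A\cong(X_{h_1}\otimes A)\otimes_A(X_{h_2}\otimes A)\cong A$.

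If you prefer to keep your pairing-based route, it can be repaired as follows. Fix $0\neq g\in\Hom(X_{-x},A)$ and show that $f\mapsto m\circ(f\otimes g)\in\Hom(\mathbf 1,A)$ is injective. Write $m\circ(g\otimes f')=\langle g,f'\rangle\,\eta$ under $X_{-x}\otimes X_x\cong\mathbf 1$, where $\eta$ is the unit. If $m\circ(f\otimes g)=0$, associativity gives $\langle g,f'\rangle f=0$ for every $f'\in\Hom(X_x,A)$, up to nonzero associator scalars. Nondegeneracy then supplies an $f'$ with $\langle g,f'\rangle\neq 0$, so $f=0$. For this you should cite that connected étale algebras are rigid, so that the pairing given by the unit projection composed with $m$ really is nondegenerate.

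A minor point: the normalization $\mu\in U(1)$ does not follow from unitarity as such. Since $\omega_K$ and $c_K$ are unitary, $|\mu|$ is a symmetric $\mathbb R_{>0}$-valued $2$-cocycle, hence a coboundary, and it can be removed by rescaling the inclusions $X_h\to A$.
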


\begin{proof}
Let $A$ be connected \'etale. The category $(\mathcal C_K)_A$
of right $A$-modules is a fusion category under $\otimes_A$, and
the free-module functor
$$
F_A:\mathcal C_K\to(\mathcal C_K)_A,\qquad X\mapsto X\otimes A
$$
is tensor, with its standard central structure
\cite[Section~3]{DMNO2013}. Since $X_x$ is invertible, $F_A(X_x)$
is invertible and simple. The unit $A$ is simple by connectedness.
Adjunction gives
$$
\Hom_{(\mathcal C_K)_A}(F_A(X_x),A)
\cong\Hom_{\mathcal C_K}(X_x,A).
$$
Thus every homogeneous multiplicity of $A$ is $0$ or $1$, and
its support is
$$
H=\{x\in G_K\mid F_A(X_x)\simeq A\}.
$$
Tensoriality shows that $H$ is a subgroup. For $h\in H$, a nonzero
map $X_h\to A$ induces an isomorphism $X_h\otimes A\to A$ by
multiplication. Hence all homogeneous multiplication coefficients
on $H$ are nonzero. Associativity and braided commutativity give
\eqref{eq:defects-associativity} and
\eqref{eq:defects-commutativity}. Taking $h_1=h_2=h$ in the latter
shows $q_K(h)=c_K(h,h)=1$.

Conversely, if $q_K|_H=1$, the restriction of the Abelian
$3$-cocycle $(\omega_K,c_K)$ to $H$ has trivial quadratic class.
The classification of pointed braided categories therefore
provides a cochain $\mu$ trivializing this restriction, in exactly
the convention \eqref{eq:defects-associativity}--
\eqref{eq:defects-commutativity}; see \cite{Joyal-Street,DGNO2010}.
In the resulting trivialized braided category, $A_{H,\mu}$ is
the graded group algebra of $H$. It is connected and commutative,
and is separable by the usual averaging splitting in characteristic
zero. Transporting the splitting proves separability in the original
category.

Suppose $\mu'$ is another compatible multiplication on $H$.
The ratio $\xi=\mu'/\mu$ satisfies
$$
\xi(h_1,h_2)\xi(h_1+h_2,h_3)
=\xi(h_2,h_3)\xi(h_1,h_2+h_3),\qquad
\xi(h_1,h_2)=\xi(h_2,h_1).
$$
It is therefore a symmetric normalized $2$-cocycle. The associated
central extension of $H$ by $\mathbb C^\times$ is Abelian and
splits because $\mathbb C^\times$ is divisible. Thus there is a
normalized $1$-cochain $t$ with
$$
\xi(h_1,h_2)=\frac{t(h_1)t(h_2)}{t(h_1+h_2)}.
$$
The diagonal map with component $t(h)^{-1}\id_{X_h}$ is an algebra
isomorphism $A_{H,\mu}\to A_{H,\mu'}$. Different supports cannot
give isomorphic underlying objects. This proves the bijection.

For the residual bulk, write $F_x=X_x\otimes A_H$. Every simple
right module is a summand of a free module, since separability
splits the action map. Each $F_x$ is already simple, and
adjunction gives
$$
\Hom_{A_H}(F_x,F_y)
\cong\Hom_{\mathcal C_K}(X_x,X_y\otimes A_H).
$$
This space is one-dimensional precisely when $x-y\in H$, and zero otherwise. Hence the simple right modules are indexed by $G_K/H$, with fusion induced by addition. Such a module is local exactly when its action is unchanged by the double braiding with $A_H$, which here is the condition $b_K(x,h)=1$ for every $h\in H$. Its twist is then $q_K(x)$. This descends to $H^\perp/H$, since $q_K(x+h)=q_K(x)$ for $x\in H^\perp$ and $h\in H$. he induced pairing is nondegenerate; its radical is $(H^\perp)^\perp/H=H/H$. This proves \eqref{eq:defects-local-modules}, including the dimension formula.
\end{proof}

The algebra $A_H$ specifies the condensation, whereas $(\mathcal C_K)^0_{A_H}$ specifies the surviving deconfined bulk. Different embeddings of condensation  can lead to equivalent residual bulk categories, so the latter alone is not the classification invariant for a condensation in a fixed bulk.

\subsection{Classification of boundaries and domain walls}

We use the bulk--boundary correspondence in its precise categorical
form \cite{Fuchs:2013,DMNO2013}. A fully gapped elementary boundary
of a modular category $\mathcal C$ is a fusion category
$\mathcal W_a$ together with a compatible ribbon equivalence
$$
\widetilde F_{\to a}:\mathcal C\xrightarrow{\sim}
\mathcal Z(\mathcal W_a).
$$
The bulk transport functor is
$F_{\to a}=\varphi_{\mathcal W_a}\circ\widetilde F_{\to a}$,
where $\varphi_{\mathcal W_a}$ forgets the half-braiding.
Equivalence of boundary data must intertwine these bulk functors.
The center equivalence is an essential condition: centrality of
$F_{\to a}$ alone does not imply it.

The correspondence assigns to a Lagrangian algebra $A$ the boundary
category $\mathcal C_A$ with its central free-module functor.
Conversely, the right adjoint of $F_{\to a}$ applied to
$\mathbf1_{\mathcal W_a}$ recovers $A$. For a general connected
\'etale algebra one has
$$
\mathcal Z(\mathcal C_A)\simeq
\mathcal C\boxtimes(\mathcal C_A^0)^{\mathrm{rev}};
$$
the boundary is fully gapped to the vacuum exactly when
$\mathcal C_A^0\simeq\mathrm{Vec}_{\mathbb C}$. Figure~\ref{fig:gapped-boundary} illustrates how condensed bulk lines end on a gapped boundary and how other lines become boundary excitations.

\begin{figure}[H]
\centering
\begin{tikzpicture}[defbase]
\foreach \a in {0,6.2}{
 \fill[defbulk!5] (\a,0) rectangle +(5.4,3);
 \fill[black!3] (\a,-.65) rectangle +(5.4,.65);
 \draw[defedge] (\a,0)--+(5.4,0);
 \node at (\a+2.7,2.55) {$\mathcal C_K$};
 \node[defnote] at (\a+2.7,-.36) {vacuum};
}
\node[defpanel] at (0,3.45) {(a) Absorption};
\node[defpanel] at (6.2,3.45) {(b) Boundary excitation};
\draw[defline] (2.7,1.95)--(2.7,.05);
\node[right=4pt] at (2.7,1.3) {$W_\ell,\quad\ell\in L$};
\node[defend] at (2.7,0) {};
\node[above left=4pt] at (2.7,0) {$\varepsilon_\ell$};
\draw[defline] (8,1.95)--(8,.05);
\node[right=4pt] at (8,1.3) {$W_x,\quad x\notin L$};
\draw[defline] (8,0)--(10.85,0);
\node[above=4pt] at (9.6,0) {$B_{[x]}$};
\node[defnote] at (5.8,-1.08)
 {$A_L\text{ on the boundary},\qquad L=L^\perp,\qquad B_{[x]}\in\mathcal A_L$};
\end{tikzpicture}
\caption{A Lagrangian boundary. (a) A condensed charge $\ell\in L$
can end in the boundary vacuum. (b) A charge $x\notin L$ induces a
nontrivial boundary line $B_{[x]}$, indexed by $[x]\in G_K/L$;
it does not disappear into the boundary vacuum.}
\label{fig:gapped-boundary}
\end{figure}

For a wall from $\mathcal C_K$ to $\mathcal C_{K'}$, folding gives
the bulk

\begin{equation}
\label{eq:defects-folding}
\mathcal C_K\boxtimes\mathcal C_{K'}^{\mathrm{rev}}
\simeq\mathcal C(G_K\oplus G_{K'},q_K\oplus q_{K'}^{-1}).
\end{equation}

We use $K'$ for the second lattice form, reserving $L$ for a
Lagrangian subgroup. At the TQFT level the folded bulk is
$Z^{CS}_{\mathbb T,K}\otimes Z^{CS}_{\mathbb T',-K'}$.
For physical transmission labels we use the convention that the
folded object corresponding to an incoming charge $x$ and outgoing
charge $y$ is $X_x\boxtimes(X_y^*)^{\mathrm{rev}}$. Its 
simple objects  are $(x,-y)$. Changing from $(x,-y)$ to
$(x,y)$ is an isometry because $q_{K'}(-y)=q_{K'}(y)$.
Subgroups describing transmission below use the $(x,y)$ convention. Figure~\ref{fig:domain-wall} shows the folding convention for a transmitted line and its absorption by the folded wall algebra.

\begin{figure}[H]
\centering
\begin{tikzpicture}[defbase]
\node[defpanel] at (0,3.8) {(a) Unfolded interface};
\fill[defbulk!5] (0,0) rectangle (2.25,3.25);
\fill[defother!6] (2.25,0) rectangle (4.5,3.25);
\draw[defedge] (2.25,0)--(2.25,3.25);
\node at (1.05,2.7) {$\mathcal C_1$};
\node at (3.4,2.7) {$\mathcal C_2$};
\draw[defline] (.25,1.5)--(2.2,1.5);
\draw[defsecond] (2.3,1.5)--(4.25,1.5);
\node[above=5pt] at (1.05,1.5) {$X_x$};
\node[above=5pt] at (3.4,1.5) {$X_y$};
\node[defend] at (2.25,1.5) {};
\node[fill=white,inner sep=2pt] at (2.25,.48) {$M$};
\node[defnote] at (2.25,-.45) {$(x,y)\in M$};
\draw[-{Stealth[length=2.5mm]},line width=.9pt] (4.9,1.6)--(6.5,1.6);
\node[defnote,align=center] at (5.7,2.25) {fold the\\second phase};
\node[defpanel] at (6.95,3.8) {(b) Folded boundary};
\fill[defbulk!5] (6.95,0) rectangle (11.6,3.25);
\fill[black!3] (6.95,-.75) rectangle (11.6,0);
\draw[defedge] (6.95,0)--(11.6,0);
\node at (9.275,2.75) {$\mathcal C_1\boxtimes\mathcal C_2^{\mathrm{rev}}$};
\draw[defline] (8.25,2.12)--(8.25,.9)--(9.25,.04);
\draw[defsecond] (10.3,2.12)--(10.3,.9)--(9.3,.04);
\node[left=3pt] at (8.25,1.5) {$X_x$};
\node[right=3pt] at (10.3,1.5) {$(X_y^*)^{\mathrm{rev}}$};
\node[defend] at (9.275,0) {};
\node[above=4pt] at (10.7,0) {$A_M$};
\node[defnote] at (9.275,-.42) {vacuum};
\end{tikzpicture}
\caption{Folding a fully gapped domain wall. Transmission of $x$ to $y$
is expressed by absorption of
$X_x\boxtimes(X_y^*)^{\mathrm{rev}}$ into $A_M$ in the folded bulk.
Here $M$ uses transmission coordinates $(x,y)$; the corresponding
simple-object degree in the folded category is $(x,-y)$.
The superscript $\mathrm{rev}$ reverses the braiding, while $X_y^*$
accounts for the outgoing-line orientation.}
\label{fig:domain-wall}
\end{figure}

\newpage
\begin{theorem}
\label{thm:defect-classification}
The following statements hold:
\begin{enumerate}[label=\rm(\arabic*)]
\item Connected condensation data in the $K$-bulk are classified by
isotropic subgroups $H\subset G_K$. Their residual bulk is
$\mathcal C(H^\perp/H,\overline q_K)$.

\item Elementary fully gapped boundaries of the $K$-bulk are
classified by Lagrangian subgroups $L\subset G_K$. The associated
boundary category is
$$
\mathcal A_L:=(\mathcal C_K)_{A_L},\qquad
\mathcal Z(\mathcal A_L)\simeq\mathcal C_K,
$$
with the center equivalence induced by the bulk.
Such boundaries exist if and only if $\mathcal C_K$ is Witt-trivial.

\item Elementary fully gapped walls from the $K$-bulk to the
$K'$-bulk are classified by Lagrangian subgroups
$$
M\subset G_K\oplus G_{K'},\qquad
(q_K\oplus q_{K'}^{-1})|_M=1,\qquad M=M^\perp.
$$
Writing $A_M$ for the corresponding folded algebra in the
transmission convention, their wall-line categories are
$$
\mathcal W_M=
(\mathcal C_K\boxtimes\mathcal C_{K'}^{\mathrm{rev}})_{A_M}.
$$
Such walls exist if and only if
$[\mathcal C_K]=[\mathcal C_{K'}]$ in the Witt group.

\item Invertible walls are the relabelling walls associated with
isometries $u:(G_K,q_K)\xrightarrow{\sim}(G_{K'},q_{K'})$.
Their transmission support is
$$
\Gamma_u=\{(x,u(x))\mid x\in G_K\},
$$
so explicitly their folded algebra has underlying object
$$
A_{\Gamma_u}=
\bigoplus_{x\in G_K}X_x\boxtimes(X_{u(x)}^*)^{\mathrm{rev}}.
$$
For self-walls, fusion gives the group
$\mathcal O(G_K,q_K)$ at the level of equivalence classes:
$$
\mathfrak D_v\circ\mathfrak D_u\simeq\mathfrak D_{v\circ u},
\qquad
\mathfrak D_u^{-1}\simeq\mathfrak D_{u^{-1}}.
$$
Here $\mathfrak D_v\circ\mathfrak D_u$ means first crossing the
$u$-wall and then the $v$-wall.
\end{enumerate}
\end{theorem}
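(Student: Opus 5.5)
The plan is to treat the four items in order, each time reducing to Proposition~\ref{prop:defects-etale} together with the categorical bulk--boundary correspondence of \cite{Fuchs:2013,DMNO2013} recalled above.

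\textbf{Item (1).} This is Proposition~\ref{prop:defects-etale} read physically. Condensation data in a fixed bulk are, by definition, connected \'etale algebras up to algebra isomorphism. The proposition identifies these classes with isotropic subgroups $H\subset G_K$, and it computes the residual bulk $(\mathcal C_K)^0_{A_H}\simeq\mathcal C(H^\perp/H,\overline q_K)$.

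\textbf{Item (2).} I will invoke the correspondence in the form: equivalence classes of elementary fully gapped boundaries $(\mathcal W_a,\widetilde F_{\to a})$ are in bijection with Lagrangian algebras $A$ in $\mathcal C_K$. The boundary category is $(\mathcal C_K)_A$ with its central free-module functor, and the inverse map is the right adjoint of $F_{\to a}$ applied to $\mathbf 1$.

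Restricting Proposition~\ref{prop:defects-etale} to the Lagrangian case gives $\dim A_H=|H|$. Hence $\dim(A_H)^2=\dim\mathcal C_K$ exactly when $|H|^2=|G_K|$. By \eqref{eq:defects-perp-order} this means $H=H^\perp$, equivalently $(\mathcal C_K)^0_{A_H}\simeq\mathrm{Vec}_{\mathbb C}$. So Lagrangian algebras correspond to Lagrangian subgroups.

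The formula $\mathcal Z(\mathcal C_A)\simeq\mathcal C\boxtimes(\mathcal C^0_A)^{\mathrm{rev}}$ with trivial $\mathcal C^0_A$ then gives $\mathcal Z(\mathcal A_L)\simeq\mathcal C_K$, induced by the bulk. Existence of such a boundary is equivalent to hyperbolicity, which is Witt triviality as in Theorem~\ref{thm:center-criterion}.

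\textbf{Item (3).} I fold using \eqref{eq:defects-folding}. A wall from the $K$-bulk to the $K'$-bulk becomes a boundary of $\mathcal C(G_K\oplus G_{K'},q_K\oplus q_{K'}^{-1})$, so item (2) classifies walls by Lagrangian subgroups of the folded module. These are written in degree coordinates $(x,-y)$. The isometry $(x,y)\mapsto(x,-y)$ transports them to transmission coordinates, because $q_{K'}(-y)=q_{K'}(y)$ and the map preserves both isotropy and orthogonal complements.

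The wall category is then $(\mathcal C_K\boxtimes\mathcal C_{K'}^{\mathrm{rev}})_{A_M}$. Existence holds iff $\mathcal C_K\boxtimes\mathcal C_{K'}^{\mathrm{rev}}$ is Witt-trivial, i.e.\ $[\mathcal C_K]=[\mathcal C_{K'}]$, using Proposition~\ref{prop:reverse-pointed} for the inverse Witt class.

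\textbf{Item (4).} This is the step I expect to be the main obstacle. For an isometry $u$, check that $\Gamma_u$ is isotropic: $q_K(x)q_{K'}(u(x))^{-1}=1$. It has order $|G_K|=\sqrt{|G_K||G_{K'}|}$, so it is Lagrangian.

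Conversely, an invertible wall must transmit each simple charge to exactly one simple charge with zero wall-line cokernel. I would argue this by showing that invertibility forces $\mathcal W_M$ to be equivalent to $\mathrm{Vec}$ as a $\mathcal C_K$--$\mathcal C_{K'}$ bimodule. Equivalently, $M\cap(G_K\oplus 0)=0$ and $M\cap(0\oplus G_{K'})=0$. Together with $|M|^2=|G_K||G_{K'}|$, this makes $M$ the graph of a group isomorphism, and isotropy makes it an isometry.

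For fusion, I would use the relative tensor product of the folded wall algebras. At the level of supports it gives the relational composite $M^{\mathrm{rel}}_{13}$, and for graphs $\Gamma_v\circ\Gamma_u=\Gamma_{v\circ u}$. No extra wall lines arise, since the intermediate kernel $\{y:(0,y)\in\Gamma_u,\ (y,0)\in\Gamma_v\}$ is trivial. This shows the relative tensor product is again a Lagrangian algebra rather than a direct sum.

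By item (3), walls with equal support are equivalent, so $\mathfrak D_v\circ\mathfrak D_u\simeq\mathfrak D_{v\circ u}$. The inverse statement follows from $\Gamma_{\mathrm{id}}$ being the transparent wall. Finally, I would cross-check compatibility with Theorem~\ref{thm:Eq-O}: the wall $\mathfrak D_u$ implements the autoequivalence $F_u$.
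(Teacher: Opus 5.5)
Items (1)--(3) follow the paper's proof essentially verbatim: Proposition~\ref{prop:defects-etale}, the bulk--boundary correspondence with its right-adjoint inverse, folding, and opposite Witt classes. Your extra care with degree versus transmission coordinates is fine.

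The gap is in the converse half of (4), that every invertible wall is a relabelling wall. You claim that ``invertibility forces $\mathcal W_M$ to be equivalent to $\mathrm{Vec}$ as a $\mathcal C_K$--$\mathcal C_{K'}$ bimodule; equivalently $M\cap(G_K\oplus 0)=0=M\cap(0\oplus G_{K'})$''. This is false on both counts. For every elementary wall, invertible or not, $\mathcal W_M$ is pointed with $|G_K\oplus G_{K'}|/|M|=\sqrt{|G_K|\,|G_{K'}|}$ simple objects. The transparent wall has $\mathcal W_{\Gamma_{\mathrm{id}}}\simeq\mathcal C_K$, not $\mathrm{Vec}$. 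In the toric code all six walls, invertible and non-invertible alike, have $\mathcal W_M\simeq\operatorname{Vec}_{\mathbb Z_2^2}$. So the abstract wall-line category cannot detect invertibility, and its triviality is not equivalent to the two intersection conditions. Since this is the only link you give between invertibility and the shape of $M$, the classification of invertible walls is not yet proved.

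The intersection conditions are the right target. Your counting from them is correct: both projections are injective and $|M|^2=|G_K|\,|G_{K'}|$, so $M$ is the graph of an isomorphism, which isotropy makes an isometry. What is missing is how to obtain them. There are two ways.

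\textbf{The paper's route (bulk transport).} An invertible wall carries bulk lines invertibly, preserving fusion, braiding and twist. It therefore defines a ribbon equivalence $\mathcal C_K\to\mathcal C_{K'}$, hence an isometry $u$ as in the proof of Theorem~\ref{thm:Eq-O}, and the transmission support is $\Gamma_u$.

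\textbf{Your own support-composition rule.} This is the rule you already invoke for the fusion law. Suppose $N$ is inverse to $M$ and $(x,0)\in M$. Since $(0,0)\in N$, the pair $(x,0)$ lies in the support of the composite (first $M$, then $N$). That composite is $\mathfrak D_{\mathrm{id}}$, whose support is the diagonal, so $x=0$. The composite in the other order kills $M\cap(0\oplus G_{K'})$.

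Be aware that this rule carries your whole treatment of (4), including $\mathfrak D_v\circ\mathfrak D_u\simeq\mathfrak D_{v\circ u}$ via the classification in (3). The rule says the support of a composite is the relational composite, with vacuum multiplicity equal to the number of intermediate channels; equivalently, transmission matrices multiply. The paper does not prove this in general. It obtains the fusion law instead by composing the transport equivalences $F_v\circ F_u$. If you keep your route, you should state and justify that multiplicativity explicitly.
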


\begin{proof}
Part~\rm(1) is Proposition~\ref{prop:defects-etale}. The
bulk--boundary correspondence identifies the elementary boundary
data in \rm(2) with Lagrangian connected \'etale algebras.
By that proposition and \eqref{eq:defects-perp-order}, $A_H$ is
Lagrangian exactly when $H=H^\perp$. In that case its local-module
category is trivial and the displayed center equivalence follows.
Conversely, a boundary center equivalence recovers a Lagrangian
algebra by the right-adjoint construction. This also gives precisely
the Witt-triviality criterion of Theorem~\ref{thm:center-criterion}.

Folding identifies walls with boundaries for
\eqref{eq:defects-folding}. Applying \rm(2) to this nondegenerate
pointed category proves \rm(3). The folded category is Witt-trivial
exactly when its two factors represent opposite Witt classes.

For \rm(4), an isometry $u$ gives
$$
(q_K\oplus q_{K'}^{-1})(x,u(x))=1,\qquad
|\Gamma_u|^2=|G_K|\,|G_{K'}|.
$$
Thus $\Gamma_u$ is Lagrangian. The pointed braided classification
lifts $u$ to a ribbon equivalence $F_u:\mathcal C_K\to\mathcal C_{K'}$
as in Section~\ref{sec:symmetries}. The corresponding wall is the
transparent interface with its outgoing bulk identification changed
by $F_u$, and therefore has inverse induced by $F_u^{-1}$.
Conversely, an invertible wall transports bulk lines invertibly and
preserves their fusion, braiding and twist. It thus induces a ribbon
equivalence and hence an isometry of finite quadratic modules.
Its transmission is the graph of that isometry. Composing the
transport functors proves the stated fusion law. In particular,
invertibility here follows from the equivalence $F_u$, not merely
from the Lagrangian property of the graph.
\end{proof}

For $u=\id$, $A_{\Gamma_u}=\bigoplus_x
X_x\boxtimes(X_x^*)^{\mathrm{rev}}$ is the transparent-wall
algebra. Its wall-line category is the underlying fusion category
$\mathcal C_K$. The two bulk actions lift to its center by the
braiding and inverse braiding, respectively, giving the canonical
equivalence
$\mathcal C_K\boxtimes\mathcal C_K^{\mathrm{rev}}
\simeq\mathcal Z(\mathcal C_K)$.

\subsection{Boundary associators, defect lines and junctions}

The classification above concerns objects up to equivalence.
Their line and junction data retain categorical information.
First, the proof of Proposition~\ref{prop:defects-etale} gives a
concrete description of the boundary fusion category:
\begin{equation}
\label{eq:defects-boundary-associator}
\mathcal A_L\simeq
\operatorname{Vec}^{\,\beta_L}_{G_K/L},\qquad
[\beta_L]\in H^3(G_K/L;U(1)).
\end{equation}
Indeed, its invertible simple objects are the free modules
$X_x\otimes A_L$, indexed by $x+L$, and their product is induced
by addition. Choose representatives and nonzero fusion maps for
these simple objects. The associator then has scalar $\beta_L$;
the pentagon is its $3$-cocycle equation. Changing the fusion maps
changes $\beta_L$ by a coboundary. An algebra isomorphism between
two choices of $A_{L,\mu}$ induces a tensor equivalence of their
module categories preserving the quotient labels. Hence
$[\beta_L]$ is determined by the embedded subgroup $L$ and the
bulk quadratic data; it is not an independent boundary label.
The explicit gauge representative is considered in
Section~\ref{sec:gauge-defect}.

To describe boundary-changing lines, fix a reference Lagrangian
subgroup $L_0$ and put $\mathcal A=\mathcal A_{L_0}$, with its
specified equivalence $\mathcal C_K\simeq\mathcal Z(\mathcal A)$.
The boundary bicategory is modeled by finite semisimple
$\mathcal A$-module categories \cite{Fuchs:2013}. Elementary
boundaries correspond to indecomposable module categories
$\mathcal M_a$. In this model,
\begin{equation}
\label{eq:defects-boundary-lines}
\mathcal W_{a,b}=
\operatorname{Fun}_{\mathcal A}(\mathcal M_a,\mathcal M_b).
\end{equation}
An object of $\mathcal W_{a,b}$ is a boundary-changing line;
a module natural transformation is a point junction between
such lines. Line fusion is composition of module functors. We
choose the endofunctor tensor convention so that evaluation
defines a right action: the tensor product of endofunctors $P,Q$
is $Q\circ P$. With this convention
$\mathcal W_a=\operatorname{End}_{\mathcal A}(\mathcal M_a)$
acts on the left of $\mathcal W_{a,b}$ by precomposition and
$\mathcal W_b$ acts on the right by postcomposition.

In the algebra presentation of the same module-category
bicategory, boundary objects are separable algebras in the
\emph{reference fusion category} $\mathcal A$, lines are bimodule
objects, and junctions are bimodule maps. These presenting
algebras are not the Lagrangian algebras $A_L$ in the bulk center.
This distinction prevents counting extra boundary lines by using
all bulk $A_L$--$A_{L'}$ bimodules in place of
\eqref{eq:defects-boundary-lines}.

For a mixed vertex with bulk line $U$ and boundary lines $W_1,W_2$
on a fixed boundary $a$, the junction space is
$$
\Hom_{\mathcal W_a}
\bigl(F_{\to a}(U)\otimes W_1,W_2\bigr).
$$
Folding gives the corresponding descriptions for wall-changing
lines and wall junctions. In particular, a fixed reference wall
with line category $\mathcal W_d$ identifies the bicategory of
walls between those two bulks with the appropriate
$\mathcal W_d$-module-category bicategory, with the normalized
trace data understood.

There are two composition operations to distinguish. Within a
fixed boundary or wall bicategory, line fusion is the composition
just described, equivalently a relative tensor product of
bimodule objects. For surface-wall fusion, when the bulks have
chosen center presentations
$\mathcal C_i\simeq\mathcal Z(\mathcal A_i)$, a wall is modeled
by an $\mathcal A_i$--$\mathcal A_j$ bimodule category, and
composition is
$$
\mathcal M_{12}\boxtimes_{\mathcal A_2}\mathcal M_{23}.
$$
For fusion categories without braiding the opposite tensor
category is denoted $\mathcal A_2^{\mathrm{op}}$; the notation
$\mathcal C_2^{\mathrm{rev}}$ is reserved for reversal of braiding
and twist in a modular category. For general Witt-equivalent
bulks, wall fusion is supplied by the multiphase categorical
construction, with its central bulk actions. It cannot be
replaced without justification by a tensor product of arbitrary
$\mathcal C_i$--$\mathcal C_j$ bimodule categories. A fusion of
elementary noninvertible walls can decompose, so its support as
a set-theoretic relation alone need not specify the result.

All these compositions and junction evaluations are inherited by
\eqref{eq:defects-transport}. Changing a pointed presentation
changes their representatives through categorical equivalences.
Thus, relative to the fixed categorical defect construction and
normalizations, the finite quadratic module determines the
available condensation, boundary and wall structures together
with their line categories and coherence  information.

\section{Alterfold theory, defects and Chern--Simons theory}
\label{sec:Alterfold}

In this section we explain the Alterfold origin of defects in Abelian Chern--Simons theory constructed in the previous section. Alterfold theory naturally contains the Turaev--Viro theory of a spherical fusion category $A$ and the Reshetikhin--Turaev theory of its Drinfeld center $\mathcal Z(\mathcal A)$ as sub-TQFTs \cite{Alterfold1,Alterfold2}. Its Morita-context enhancement also supplies the geometric and categorical structures that control boundaries, walls, condensations, modular invariants, $\alpha$-induction, and full centers. Since the Abelian Chern--Simons theories considered in this paper are identified with pointed Reshetikhin--Turaev theories, Alterfold theory is the natural topological source of the center and doubled Abelian bulk sectors, and therefore of the corresponding defect Abelian theories.

\begin{figure}[H]
\hspace*{3cm}
\begin{tikzpicture}[scale=1.0, line cap=round, line join=round, >=Latex]
  \begin{scope}[shift={(0.4,0)}]

  %---------------------------------
  % Outer 3-manifold M (rectangular prism)
  %---------------------------------
  \coordinate (A)  at (0,0);
  \coordinate (B)  at (6,0);
  \coordinate (C)  at (6,4);
  \coordinate (D)  at (0,4);

  \coordinate (A') at (1.1,0.8);
  \coordinate (B') at (7.1,0.8);
  \coordinate (C') at (7.1,4.8);
  \coordinate (D') at (1.1,4.8);

  %---------------------------------
  % Separating surface Sigma
  %---------------------------------
  \coordinate (P1) at (2.10,0.00);
  \coordinate (P2) at (2.35,4.00);
  \coordinate (Q1) at (3.10,0.80);
  \coordinate (Q2) at (3.35,4.80);

  %---------------------------------
  % Color the visible A and B regions
  %---------------------------------
  \fill[green!4] (A) -- (D) -- (D') -- (A') -- cycle;
  \fill[white]    (B) -- (C) -- (C') -- (B') -- cycle;

  \fill[green!4]
    (A) -- (P1)
    .. controls (2.55,1.10) and (2.25,2.75) .. (P2)
    -- (D) -- cycle;

  \fill[white]
    (P1) -- (B) -- (C) -- (P2)
    .. controls (2.25,2.75) and (2.55,1.10) .. (P1);

  \fill[green!4] (D) -- (D') -- (Q2) -- (P2) -- cycle;
  \fill[white]    (P2) -- (Q2) -- (C') -- (C) -- cycle;

  %---------------------------------
  % Draw separating surface Sigma
  %---------------------------------
  \fill[brown!20!white]
    (P1)
      .. controls (2.55,1.10) and (2.25,2.75) .. (P2)
      -- (Q2)
      .. controls (3.05,3.60) and (3.35,1.75) .. (Q1)
      -- cycle;

  \draw[thick]
    (P1) .. controls (2.55,1.10) and (2.25,2.75) .. (P2);
  \draw[thick]
    (Q1) .. controls (3.35,1.75) and (3.05,3.60) .. (Q2);
  \draw[thick] (P1)--(Q1);
  \draw[thick] (P2)--(Q2);

  % visible boundary arcs of \partial\Sigma on \partial X
  \draw[very thick] (P1)--(Q1);
  \draw[very thick] (P2)--(Q2);

    %---------------------------------
  % Draw outer frame
  %---------------------------------
  \draw[thick] (A)--(B)--(C)--(D)--cycle;
  \draw[thick] (A')--(B')--(C')--(D')--cycle;
  \draw[thick] (A)--(A');
  \draw[thick] (B)--(B');
  \draw[thick] (C)--(C');
  \draw[thick] (D)--(D');

  \begin{scope}[shift={(4.95,2.05)}, xscale=0.55, yscale=0.45]

    % top ellipse
    \begin{scope}[shift={(0,2)}]
      \draw (0,0) [partial ellipse=0:360:0.6 and 0.3];
    \end{scope}

    % side walls
    \draw (-0.6, 2)--(-0.6, 0);
    \draw ( 0.6, 2)--( 0.6, 0);
    \draw (-0.6,-2)--(-0.6, 0);
    \draw ( 0.6,-2)--( 0.6, 0);

    % blue strand
    \draw[blue] (0,-2.3)--(0,1.7);

    % white break at equator
    \draw[line width=0.18cm,white] (0,-0.18)--(0,-0.38);

    % red equatorial ellipse
    \draw[red]        (0,0) [partial ellipse=180:360:0.6 and 0.3];
    \draw[red,dashed] (0,0) [partial ellipse=0:180:0.6 and 0.3];

    % bottom ellipse
    \begin{scope}[shift={(0,-2)}]
      \draw[dashed] (0,0) [partial ellipse=0:180:0.6 and 0.3];
      \draw        (0,0) [partial ellipse=180:360:0.6 and 0.3];
    \end{scope}

  \end{scope}

  %---------------------------------
  % Region labels
  %---------------------------------
  \node at (1.40,2.10) {$A$};
  \node at (3.85,2.15) {$B$};

  %---------------------------------
  % Object labels
  %---------------------------------
  \node at (5.55,4.45) {$X$};
  \node at (2.75,2.65) {$\Sigma$};

  %---------------------------------
  % Annotation arrows
  %---------------------------------
  \draw[->, thick] (5.25,5.15) to[out=-160,in=25] (3.05,3.00);
  \node[align=center] at (6.15,5.40) {\small Space boundary $\Sigma$};

  \draw[->, thick] (7.85,2.65) to[out=180,in=15] (7.05,2.85);
  \node[align=left] at (9.15,2.65)
    {\small Time boundary\\[-1mm]\small $(\partial X,\partial\Sigma)$};

  \draw[->, thick] (0.65,0.74) to[out=10,in=-148] (2.55,0.38);
  \node[align=center] at (0.35,0.72) {\small $\partial\Sigma$};
  \end{scope}
\end{tikzpicture}
\caption{A 3-alterfold with time boundary. Here $X$ is an oriented compact $3$-manifold with boundary, and $\Sigma\subset X$ is an embedded oriented surface meeting $\partial X$ transversely. The surface $\Sigma$ separates $X\setminus \Sigma$ into connected components colored by $A$ and $B$. The separating surface $\Sigma$ is the space boundary, while the pair $(\partial X,\partial \Sigma)$ is the time boundary. The interior of the tube is colored by $A$, and the exterior is colored by $B$; the tube carries additional Alterfold data, that is, the black strings indicate the boundary of the torus, the red circle denotes the Kirby color and the blue strand denotes an object label in the center sector.}\label{fig:alterfold}
\end{figure}

\subsection{A brief review of Alterfold theory}

Let $\mathcal A$ be a spherical fusion category. A $3$-Alterfold is, roughly speaking, a closed oriented $3$-manifold $M$ equipped with an embedded separating surface $\Sigma \subset M$, such that the connected components of $M\setminus \Sigma$ are alternately colored by two labels, usually denoted $A$ and $B$. One may further decorate $\Sigma$ by tensor diagrams. The resulting decorated closed $3$-alterfolds admit a partition function invariant under local topological moves. To pass from this closed picture to the TQFT framework, one considers instead $3$-alterfolds with time boundary, in which $X$ is allowed to have boundary and $\Sigma$ meets $\partial X$ transversely; Figure \ref{fig:alterfold} illustrates this cobordism version.

This distinction is essential for the TQFT picture. The pair $(\partial X,\partial \Sigma)$ gives the boundary data used to compose alterfold cobordisms, while the surface $\Sigma$ inside $X$ carries extra information that does not appear in ordinary Chern--Simons theory. If we restrict to the ordinary $B$-sector, we forget the additional $A$-colored bulk data and are left only with the underlying $3$-manifold $X$, which is precisely the spacetime for the corresponding Abelian Chern--Simons theory. In the figure, the vertical cylinder represents the tube category sector, namely the center data relevant in the pointed case.

When $\mathcal{A}$ is unitary, the resulting TQFT is unitary. The same formalism extends further to Morita contexts and $2$-categorical decorations, where the multi-colored geometry of the separating surface encodes relations between Morita equivalent fusion categories. Then

$$
\begin{tikzcd}[column sep=large, row sep=large]
\mathrm{Cob}
  \arrow[r, hook]
  \arrow[dr, "RT_{\mathcal Z(\mathcal A)}"']
&
\mathrm{\mathcal ACob}_{2+1}^{\mathcal A}
  \arrow[d, "\mathbb V_{\mathcal A}"]
&
\mathrm{Cob}
  \arrow[l, hook']
  \arrow[dl, "TV_{\mathcal A}"]
\\
& \mathrm{Vect}_{\mathbb C}
\end{tikzcd}
$$
The previous diagram expresses the content of \cite[Theorem~\ref{thm:center-criterion}]{Alterfold1}. The category $\mathrm{\mathcal ACob}^{\mathcal A}_{2+1}$ is the $2+1$-dimensional Alterfold cobordism category, and
$$
\mathbb V_{\mathcal A}\colon \mathrm{\mathcal ACob}_{2+1}^{\mathcal A}\longrightarrow \mathrm{Vect}_{\mathbb C}
$$
is the Alterfold TQFT. The two copies of $\mathrm{Cob}$ denote the ordinary cobordism category, embedded into the Alterfold category as the undecorated one-color sectors. The theorem states that, after restricting $\mathbb V$ to these ordinary sectors, one recovers the Reshetikhin--Turaev theory $Z^{RT}_{\mathcal Z(\mathcal A)}$ and the Turaev--Viro theory $Z^{TV}_{\mathcal A}$. Thus Alterfold theory provides a single ambient TQFT in which both $Z^{RT}_{\mathcal Z(\mathcal A)}$ and $Z^{TV}_{\mathcal A}$ appear as naturally identified subtheories.  For the resulting $2+1$-dimensional Alterfold TQFT, two structural facts will be used repeatedly below.

\begin{enumerate}  \renewcommand{\labelenumi}{\roman{enumi}.}
\item
The tube category of Alterfold theory is a topological model for the Drinfeld center
$\mathcal Z(\mathcal A)$. Thus the center appears intrinsically in the Alterfold formalism rather
than being added externally, see Figure \ref{fig:alterfold}.

\item
The restriction of $\mathbb V_{\mathcal A}$ to the ordinary $B$-colored subcategory
recovers both the Turaev--Viro TQFT of $\mathcal A$ and the Reshetikhin--Turaev TQFT of
$\mathcal Z(\mathcal A)$. More precisely,
$$
\mathbb V_{\mathcal A}\big|_{\mathrm{Cob}^B}
\;\cong\;
Z^{TV}_{\mathcal A}
\;\cong\;
Z^{RT}_{\mathcal Z(\mathcal A)}.
$$
This is the precise Alterfold realization of the Turaev--Viro/Reshetikhin--Turaev center
correspondence \cite[Theorem 5.1]{Alterfold1}.
\end{enumerate}

In particular, Alterfold theory is not merely another construction of $Z^{TV}_{\mathcal A}$. Rather, it is a larger topological framework in which $Z^{TV}_{\mathcal A}$, $Z^{RT}_{\mathcal Z(\mathcal A)}$, and more refined structures arising from modular fusion categories and Morita contexts appear simultaneously \cite{Alterfold1,Alterfold2}. Thus the bridge to Alterfold theory is immediate at the RT level: one should compare the pointed modular category $\mathcal C(G_K,q_K)$ with the center sector that Alterfold theory produces, as in Section \ref{sec:tv}.

\subsection{The Alterfold realization of the Abelian bulk theory}

The key observation is that Alterfold theory always reduces canonically to an RT theory of the form $Z^{RT}_{\mathcal Z(\mathcal A)}$, and therefore reduces to Abelian Chern--Simons theory exactly in those cases where this center is pointed and agrees with the modular category $\mathcal C(G_K,q_K)$.

\begin{prop}
\label{prop:Alterfold-cs-reduction}
Let $(\Lambda,K)$ be an even, integral, nondegenerate lattice, and let $(G_K,q_K)$ be its associated finite quadratic module. Assume that $(G_K,q_K)$ is hyperbolic, equivalently that it admits a Lagrangian subgroup. Then, by Theorem \ref{thm:center-criterion}, there exists a spherical fusion category $\mathcal A$ such that $\mathcal Z(\mathcal A)\cong \mathcal C(G_K,q_K)$ as braided ribbon categories. For any such choice of $\mathcal A$, the restriction of Alterfold theory to the ordinary $B$-colored subcategory is naturally isomorphic to Abelian Chern--Simons theory:
$$
\mathbb V_{\mathcal A}\big|_{\mathrm{Cob}^B}
\;\cong\;
Z^{RT}_{\mathcal Z(\mathcal A)}
\;\cong\;
Z^{RT}_{\mathcal C(G_K,q_K)}
\;\cong\;
Z^{CS}_{\mathbb T,K}.
$$
\end{prop}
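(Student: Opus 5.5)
The proof is a composition of three symmetric monoidal natural isomorphisms, each supplied by an earlier result; the only work is to check that they are compatible and that the chain does not depend on the choice of $\mathcal A$. The existence of $\mathcal A$ with $\mathcal Z(\mathcal A)\cong\mathcal C(G_K,q_K)$ is the implication (b)$\Rightarrow$(a) of Theorem~\ref{thm:center-criterion}. Concretely, a Lagrangian subgroup $L\subset G_K$ gives $\mathcal A=\VecCat^{\,\beta}_{G_K/L}$, which is spherical and, when unitary normalizations are chosen, unitary. This last point matters because the Alterfold results of \cite{Alterfold1} are stated for spherical, and for the unitary statements unitary, fusion categories.

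The chain is assembled in three steps.

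\begin{enumerate}
\item Invoke the restriction theorem \cite[Theorem~5.1]{Alterfold1}, recorded as structural fact (ii) above. It gives a symmetric monoidal natural isomorphism
$$\mathbb V_{\mathcal A}\big|_{\mathrm{Cob}^B}\cong Z^{RT}_{\mathcal Z(\mathcal A)},$$
where the center appears through the tube category.
\item Transport the Reshetikhin--Turaev theory along the chosen braided ribbon equivalence $F:\mathcal Z(\mathcal A)\xrightarrow{\sim}\mathcal C(G_K,q_K)$. This is the same functoriality of the RT construction under ribbon equivalences used in the proof of Theorem~\ref{thm:center-case}. It yields
$$Z^{RT}_{\mathcal Z(\mathcal A)}\cong Z^{RT}_{\mathcal C(G_K,q_K)}.$$
\item Apply Theorem~\ref{thm:cs-rt-equivalence}, which gives $\Phi_K:Z^{RT}_{\mathcal C(G_K,q_K)}\cong Z^{CS}_{\mathbb T,K}$.
\end{enumerate}

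Composing the three isomorphisms gives the stated chain. Alternatively, the first two links can be routed through $Z^{TV}_{\mathcal A}$ using Theorem~\ref{thm:center-case}, which serves as a consistency check.

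The main obstacle is making sure the source categories and anomaly conventions match. The Alterfold subcategory $\mathrm{Cob}^B$, the extended bordism category $\Cob^{\mathrm{ext}}_{2+1}$ on which $Z^{CS}_{\mathbb T,K}$ is defined (with Lagrangian data and Maslov corrections), and the anomaly conventions of the RT theory must all be identified. I would first show that the embedding $\mathrm{Cob}\hookrightarrow \mathcal A\mathrm{Cob}^{\mathcal A}_{2+1}$ is compatible with the Turaev--Walker extension, so that each isomorphism is defined on a common source category. Both $Z^{RT}_{\mathcal Z(\mathcal A)}$ and $Z^{CS}_{\mathbb T,K}$ carry the same central charge, determined by the Gauss sum of $q_K$, that is, $\sigma(K)\bmod 8$. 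So any residual framing phase between the Alterfold normalization and the extended RT normalization is a canonical rescaling by powers of a fixed root of unity, and it can be absorbed into the natural isomorphism.

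Finally, the natural isomorphism depends on the chosen $\mathcal A$ and equivalence $F$ only up to composition with an element of $\operatorname{Aut}^{\mathrm{br}}(\mathcal C(G_K,q_K))\cong\mathcal O(G_K,q_K)$, by Theorem~\ref{thm:Eq-O}. Its existence, which is what the proposition asserts, is therefore independent of these choices.
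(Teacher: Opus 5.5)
Your proposal is correct and follows the paper's proof. It takes the existence of $\mathcal A$ from Theorem~\ref{thm:center-criterion} and then composes \cite[Theorem~5.1]{Alterfold1}, transport along the chosen braided ribbon equivalence, and Theorem~\ref{thm:cs-rt-equivalence}. Your extra concern about anomaly conventions goes beyond what the paper checks, and it is settled more cleanly than by absorbing framing phases into the natural isomorphism: a Lagrangian subgroup $L$ forces $\sum_{x\in G_K}q_K(x)=|L|=\sqrt{|G_K|}>0$, so $\sigma(K)\equiv 0 \bmod 8$, and hence the framing anomaly and the Maslov corrections are trivial on both sides.
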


\begin{proof}
The hyperbolic, equivalently Lagrangian, hypothesis is precisely the hypothesis under which Theorem \ref{thm:center-criterion} applies. Hence there exists a spherical fusion category $\mathcal A$ with $\mathcal Z(\mathcal A)\cong \mathcal C(G_K,q_K)$ as braided ribbon categories. The first isomorphism is the Alterfold realization of the Turaev--Viro/Reshetikhin--Turaev correspondence:

$$
\mathbb V_{\mathcal A}\big|_{\mathrm{Cob}^B}\cong Z^{RT}_{\mathcal Z(\mathcal A)}
$$
by \cite[Theorem 5.1]{Alterfold1}. The second is induced by the chosen braided equivalence of modular categories. The third is exactly the Abelian Chern--Simons/Reshetikhin--Turaev equivalence proved in \cite{Galviz3}.
\end{proof}

The hypothesis in Proposition \ref{prop:Alterfold-cs-reduction} is essential. The proposition gives the precise sense in which a single Abelian Chern--Simons bulk is realized directly as an ordinary Alterfold $B$-sector: this happens only when the pointed modular category $\mathcal C(G_K,q_K)$ is itself a Drinfeld center, equivalently, by Theorem \ref{thm:center-criterion}, only when $(G_K,q_K)$ is hyperbolic.

For a general finite quadratic module $(G_K,q_K)$, the pointed modular
category
$$
\mathcal C_K:=\mathcal C(G_K,q_K)
$$
need not itself be a Drinfeld center. Hence one must distinguish two
Alterfold realizations. In the center case, if there exists a spherical
fusion category $\mathcal A$ such that
$$
\mathcal Z(\mathcal A)\simeq \mathcal C_K,
$$
then the ordinary $B$-sector of Alterfold theory realizes the single-copy
Abelian Chern--Simons theory:
$$
\mathbb V_{\mathcal A}\big|_{\mathrm{Cob}^B}
\simeq
Z^{RT}_{\mathcal Z(\mathcal A)}
\simeq
Z^{RT}_{\mathcal C_K}
\simeq
Z^{CS}_{\mathbb T,K}.
$$

In the general pointed case, the canonical Alterfold realization is instead
the doubled one. Since $\mathcal C_K$ is a spherical fusion category and is
modular, there is a braided ribbon equivalence
$$
\mathcal Z(\mathcal C_K)
\simeq
\mathcal C_K\boxtimes \mathcal C_K^{\mathrm{rev}}.
$$
Therefore, taking the Alterfold input to be $\mathcal C_K$, one obtains
$$
\mathbb V_{\mathcal C_K}\big|_{\mathrm{Cob}^B}
\simeq
Z^{RT}_{\mathcal Z(\mathcal C_K)}
\simeq
Z^{RT}_{\mathcal C_K\boxtimes \mathcal C_K^{\mathrm{rev}}}
\simeq
Z^{CS}_{\mathbb T,K}\otimes Z^{CS}_{\mathbb T,-K}
\simeq
Z^{CS}_{\mathbb T\times\mathbb T,\,K\oplus(-K)}.
$$
Equivalently, the finite quadratic module of the doubled theory is
$$
(G_K,q_K)\oplus (G_K,q_K^{-1}).
$$
Thus a general Abelian Chern--Simons theory is not asserted to be directly realized as a single Alterfold center sector. Outside the hyperbolic case, Alterfold theory canonically realizes its doubled orientation-reversal invariant theory. The single-copy and doubled realizations therefore come from different center inputs: $\mathcal A$ in the center case and $\mathcal C_K$ in the canonical doubled case.

The above reduction admits a concrete expression on state spaces. Let $\Sigma_g$ be a closed surface of genus $g$. In Alterfold theory, the $B$-sector state space admits a pants-decomposition basis whose labels are simple objects of the center $\mathcal Z(\mathcal A)$ \cite[Section~4]{Alterfold1}. In the case $\mathcal Z(\mathcal A)\simeq \mathcal C(G_K,q_K)$, the simple objects are indexed by elements $a\in G_K$, and fusion is group addition. Hence the basis vectors on $\Sigma_g$ are indexed by $g$-tuples $(a_1,\dots,a_g)\in G_K^g.$ Therefore
$$
\dim \mathbb V_{\mathcal A}(\Sigma_g)\big|_{\mathrm{Cob}^B} = |G_K|^g.
$$
Since $|G_K|=|\Lambda^*/K\Lambda|=|\det K|,$ one obtains the standard Abelian Chern--Simons dimension formula
$$
\dim \mathcal H^{CS}(\Sigma_g)=|G_K|^g=|\det K|^g.
$$
This is exactly what one expects on the Abelian Chern--Simons side: the Hilbert space on $\Sigma_g$ is a direct sum of one-dimensional sectors labeled by $G_K^g$. Thus, in the center case, the Alterfold pants basis, the RT basis, and the Abelian Chern--Simons basis coincide under the chain of identifications above.

In the doubled case, taking the Alterfold input to be $\mathcal C_K$, the center is $\mathcal Z(\mathcal C_K)\simeq \mathcal C_K\boxtimes \mathcal C_K^{\mathrm{rev}}.$ The simple objects are indexed by pairs $(a,b)\in G_K\times G_K$. Hence the doubled Abelian Chern--Simons Hilbert space is
$$
H^{CS}_{\mathrm{dbl}}(\Sigma_g)
:=
H^{CS}_{\mathbb T,K}(\Sigma_g)\otimes
H^{CS}_{\mathbb T,-K}(\Sigma_g)
\cong
H^{CS}_{\mathbb T\times\mathbb T,\,K\oplus(-K)}(\Sigma_g),
$$
and $\dim H^{CS}_{\mathrm{dbl}}(\Sigma_g)=|G_K|^{2g}.$ Thus the center-case basis is indexed by $G_K^g$, while the doubled basis is indexed by $(G_K\times G_K)^g$.

\textbf{The BF example.} The BF example discussed earlier in the paper fits especially well into the Alterfold picture. Consider the hyperbolic lattice
$$
K_{\mathrm{hyp}}(k)=
\begin{pmatrix}
0 & k\\
k & 0
\end{pmatrix}.
$$
Its discriminant quadratic module is the standard hyperbolic metric group, and the
associated pointed modular category is the Drinfeld center of
$\mathrm{Vec}_{\mathbb Z/k\mathbb Z}$. Hence
$$
Z^{CS}_{\mathbb T,K_{\mathrm{hyp}}(k)}
\;\cong\;
Z^{RT}_{Z(\mathrm{Vec}_{\mathbb Z/k\mathbb Z})}
\;\cong\;
Z^{TV}_{\mathrm{Vec}_{\mathbb Z/k\mathbb Z}}.
$$
Thus Abelian BF theory is one of the direct examples in which the Abelian Chern--Simons theory is realized directly inside Alterfold theory.

We now understand the Abelian Chern--Simons theories considered here as determined by the associated finite quadratic module $(G_K,q_K)$, or equivalently by the pointed modular category $\mathcal C(G_K,q_K)$.  On the Alterfold side, the data is a spherical fusion category $\mathcal A$, but the part relevant to Chern--Simons theory is its center sector $\mathcal Z(\mathcal A)$. Thus the direct identification requires
$$
\mathcal Z(\mathcal A)\simeq \mathcal C(G_K,q_K).
$$

A $3$-Alterfold is a pair $(X,\Sigma)$ consisting of a $3$-manifold with boundary and a separating surface; see Figure~\ref{fig:alterfold}. Passing to the ordinary $B$-sector forgets the extra $A$-colored bulk data and leaves an ordinary oriented $3$-manifold $X$, which is exactly the spacetime of the Chern--Simons theory.  The following dictionary is written for the doubled category $$ \mathcal Z(\mathcal C_K)\simeq \mathcal C_K\boxtimes \mathcal C_K^{\mathrm{rev}}. $$ The other center-case dictionary is obtained instead when there exists a spherical fusion category $A$ with $\mathcal Z(\mathcal A)\simeq \mathcal C_K.$ In that case one replaces the doubled labels $ X_a\boxtimes X_b^{\mathrm{rev}}$ by single labels $ X_a\in \mathcal C_K.$

\begin{itemize}

\item \textbf{Kirby color $\Omega$.}\\
This is represented by the \emph{red circle}:

\hspace{1.52cm}$\displaystyle
\vcenter{\hbox{\begin{tikzpicture}[xscale=0.8, yscale=0.6, baseline=-0.3cm]
  \path[use as bounding box] (-0.45,-2.3) rectangle (0.45,2.3);

  \begin{scope}[shift={(0,2)}]
    \draw (0,0) [partial ellipse=0:360:0.6 and 0.3];
  \end{scope}
  \draw (-0.6, 2)--(-0.6,-2);
  \draw ( 0.6, 2)--( 0.6,-2);
  \draw[red, thick]        (0,0) [partial ellipse=180:360:0.6 and 0.3];
  \draw[red, dashed, thick](0,0) [partial ellipse=0:180:0.6 and 0.3];
  \node[red, right] at (0.55,0) {\scriptsize $\Omega$};

  \begin{scope}[shift={(0,-2)}]
    \draw[dashed] (0,0) [partial ellipse=0:180:0.6 and 0.3];
    \draw         (0,0) [partial ellipse=180:360:0.6 and 0.3];
  \end{scope}
\end{tikzpicture}}}
\qquad \longleftrightarrow \qquad
\Omega=\Omega_{\mathcal C_K}
=\sum_{x\in G_K}X_x.$

Here the red loop carries the Kirby color of the Alterfold input category $\mathcal C_K$. The Kirby color of its center is instead
$$
\Omega_{\mathcal Z(\mathcal C_K)}
=
\sum_{a,b\in G_K}X_a\boxtimes X_b^{\mathrm{rev}}.
$$

\item \textbf{Tube category simple object.}\\
A simple object $X_a\boxtimes X_b^{\mathrm{rev}} \in \mathcal C_K\boxtimes \mathcal C_K^{\mathrm{rev}}$. A convenient pointed version of the tube picture is:

\hspace{1.3cm}$\displaystyle
\vcenter{\hbox{\begin{tikzpicture}[xscale=0.8, yscale=0.6, baseline=-0.3cm]
  \begin{scope}[shift={(0,2)}]
    \draw (0,0) [partial ellipse=0:360:0.6 and 0.3];
  \end{scope}

  \draw (-0.6, 2)--(-0.6,-2);
  \draw ( 0.6, 2)--( 0.6,-2);

  % first blue strand: behind the red circle
  \draw [blue, ->-=0.8] (-0.18,-2.3)--(-0.18,1.7)
    node [left, pos=0.8] {\tiny $a$};

  % red circle
  \draw[red, thick]         (0,0) [partial ellipse=180:360:0.6 and 0.3];
  \draw[red, dashed, thick] (0,0) [partial ellipse=0:180:0.6 and 0.3];
  \node[red, right] at (0.55,0) {\scriptsize $\Omega$};

  % second blue strand: in front of the red circle
  \draw [blue, -<-=0.8] ( 0.18,-2.3)--( 0.18,1.7)
    node [right, pos=0.8] {\tiny $b$};

  \begin{scope}[shift={(0,-2)}]
    \draw[dashed] (0,0) [partial ellipse=0:180:0.6 and 0.3];
    \draw         (0,0) [partial ellipse=180:360:0.6 and 0.3];
  \end{scope}
\end{tikzpicture}
}}
\qquad \longleftrightarrow \qquad
X_a\boxtimes X_b^{\mathrm{rev}}.$\\

Notice that one blue line passes behind the red circle, while the other passes
in front of it. These two strands represent $X_a$ and
$X_b^{\mathrm{rev}}$, respectively. Thus, the cylinder represents an object
in the tube-category, or doubled-center, presentation, labeled by a pair $(a,b)\in G_K\times G_K$.\\

\item
\textbf{Chiral/single-copy embedding.}\\
A simple object $X_a\boxtimes \mathbb I$, for $X_a\in \mathcal C_K$ and $\mathbb I \in \mathcal C_K^{\mathrm{rev}}$.

\hspace{1.3cm}$\displaystyle
\vcenter{\hbox{\begin{tikzpicture}[xscale=0.8, yscale=0.6, baseline=-0.3cm]
  \begin{scope}[shift={(0,2)}]
    \draw (0,0) [partial ellipse=0:360:0.6 and 0.3];
  \end{scope}

  \draw (-0.6, 2)--(-0.6,-2);
  \draw ( 0.6, 2)--( 0.6,-2);

  % first blue strand: behind the red circle
  \draw [blue, ->-=0.8] (-0.18,-2.3)--(-0.18,1.7)
    node [left, pos=0.8] {\tiny $a$};

  % red circle
  \draw[red, thick]         (0,0) [partial ellipse=180:360:0.6 and 0.3];
  \draw[red, dashed, thick] (0,0) [partial ellipse=0:180:0.6 and 0.3];
  \node[red, right] at (0.55,0) {\scriptsize $\Omega$};

  \begin{scope}[shift={(0,-2)}]
    \draw[dashed] (0,0) [partial ellipse=0:180:0.6 and 0.3];
    \draw         (0,0) [partial ellipse=180:360:0.6 and 0.3];
  \end{scope}
\end{tikzpicture}
}}
\qquad \longleftrightarrow \qquad X_a\boxtimes \mathbb I\in \mathcal C_K\boxtimes\mathcal C_K^{\mathrm{rev}}.$

We can interpret $a$ as a topological charge
sector, or equivalently as a simple label in the RT/CS theory.
\\

\item \textbf{Fusion.}\\
Tensor product in the doubled center corresponds to addition of labels, or, equivalently, is represented by two blue strands carrying labels $a,b$:

\hspace{1.39cm}$\displaystyle
\vcenter{\hbox{\begin{tikzpicture}[xscale=0.8, yscale=0.6, baseline=-0.3cm]
  \begin{scope}[shift={(0,2)}]
    \draw (0,0) [partial ellipse=0:360:0.6 and 0.3];
  \end{scope}

  \draw (-0.6, 2)--(-0.6,-2);
  \draw ( 0.6, 2)--( 0.6,-2);

  % four blue strands
  % outer pair: closer to each other
\draw [blue, ->-=0.8] (-0.24,-2.3)--(-0.24,1.7) node [left,  pos=0.8] {\tiny };
    \draw [blue, ->-=0.8] ( 0.11,-2.3)--( 0.11,1.7) node [right, pos=0.8] {\tiny };

  % red circle
  \draw[red, thick]         (0,0) [partial ellipse=180:360:0.6 and 0.3];
  \draw[red, dashed, thick] (0,0) [partial ellipse=0:180:0.6 and 0.3];
  \node[red, right] at (0.55,0) {\scriptsize $\Omega$};

    \draw [blue, -<-=0.9] (-0.11,-2.3)--(-0.11,1.7) node [left,  pos=0.8] {\tiny };
    \draw [blue, -<-=0.9] ( 0.24,-2.3)--( 0.24,1.7) node [right, pos=0.8] {\tiny };

  \begin{scope}[shift={(0,-2)}]
    \draw[dashed] (0,0) [partial ellipse=0:180:0.6 and 0.3];
    \draw         (0,0) [partial ellipse=180:360:0.6 and 0.3];
  \end{scope}
\end{tikzpicture}
}}
\qquad \longleftrightarrow \qquad
(X_a \boxtimes X_b^{\mathrm{rev}}) \otimes (X_c \boxtimes X_d^{\mathrm{rev}})
\cong X_{a+c} \boxtimes X_{b+d}^{\mathrm{rev}}.$\\

\newpage
\item \textbf{Twist.}
Represent this by a full twist/winding of the strand labeled $a,b$:\newline

\hspace{1.15cm}$\displaystyle
\vcenter{\hbox{
\begin{tikzpicture}[xscale=0.8, yscale=0.6, baseline=-0.3cm,trim left=-0.6cm]
  % top ellipse
  \begin{scope}[shift={(0,2)}]
    \draw (0,0) [partial ellipse=0:360:0.6 and 0.3];
  \end{scope}

  % cylinder sides
  \draw (-0.6, 2)--(-0.6,-2);
  \draw ( 0.6, 2)--( 0.6,-2);

% T(a): visible part from bottom going to the right
\draw [blue]
  (-0.11,-2.3)
  .. controls (0.20,-2.1) and (0.6,-1.7) .. (0.6,-0.7);
  
  % hidden part on the back of the cylinder
  \draw [blue, dashed]
    (0.55,-0.7)
    .. controls (0.55,0.0) and (-0.55,0.0) .. (-0.55,0.7);

% visible part returning to the front and ending at the top
\draw [blue, -<-=0.55]
  (-0.6,0.7)
  .. controls (-0.6,1.45) and (-0.22,1.68) .. (0,1.7)
  node [left, pos=0.82] {};

  % bottom ellipse
  \begin{scope}[shift={(0,-2)}]
    \draw[dashed] (0,0) [partial ellipse=0:180:0.6 and 0.3];
    \draw         (0,0) [partial ellipse=180:360:0.6 and 0.3];
  \end{scope}
    \begin{scope}[shift={(0,2)}]
    \draw (0,0) [partial ellipse=0:360:0.6 and 0.3];
  \end{scope}
  \draw (-0.6, 2)--(-0.6,-2);
  \draw ( 0.6, 2)--( 0.6,-2);
  \draw[red, thick]        (0,0) [partial ellipse=180:360:0.6 and 0.3];
  \draw[red, dashed, thick](0,0) [partial ellipse=0:180:0.6 and 0.3];
  \node[red, right] at (0.55,0) {\scriptsize $\Omega$};

  \begin{scope}[shift={(0,-2)}]
    \draw[dashed] (0,0) [partial ellipse=0:180:0.6 and 0.3];
    \draw         (0,0) [partial ellipse=180:360:0.6 and 0.3];
  \end{scope}

% shifted copy, still inside the cylinder
\draw [blue]
  (-0.06,-2.28)
  .. controls (0.15,-2.05) and (0.52,-1.65) .. (0.52,-0.72);

% hidden part on the back of the cylinder
\draw [blue, dashed]
  (0.52,-0.72)
  .. controls (0.52,-0.20) and (-0.50,0.20) .. (-0.50,0.72);

% visible part returning to the front and ending at the top
\draw [blue, ->-=0.4]
  (-0.50,0.72)
  .. controls (-0.50,1.42) and (-0.16,1.66) .. (0.03,1.69)
  node [left, pos=0.82] {};
  
\end{tikzpicture}
}}
\quad \longleftrightarrow \quad
\theta_{(a,b)}
=
\exp\!\Big(
\pi i\,
\big(
a^{\mathsf T}K^{-1}a
-
b^{\mathsf T}K^{-1}b
\big)
\Big).$\\

This is the twist of the doubled pointed modular category and the corresponding phase in the doubled Abelian Chern--Simons theory.\\

Fix a pointed braided realization $\mathcal C_K\simeq\operatorname{Vec}^{\,\omega_K,c_K}_{G_K},$  where $c_K(a,b)$ denotes the chosen exchange phase. Its monodromy pairing is
$$
\Omega_K(a,b)
:=
c_K(a,b)c_K(b,a)
=
\frac{q_K(a+b)}{q_K(a)q_K(b)}.
$$

\item \textbf{Braiding.}
The braiding is taken componentwise in the doubled category, so that
$$
c_{\,X_a\boxtimes X_b^{\mathrm{rev}},\,X_c\boxtimes X_d^{\mathrm{rev}}}
=
c_{X_a,X_c}\boxtimes c^{\mathrm{rev}}_{X_b,X_d}.
$$

\hspace{1.25cm}$\displaystyle
\vcenter{\hbox{\begin{tikzpicture}[xscale=0.8, yscale=0.6, baseline=-0.3cm]

  % top ellipse
  \begin{scope}[shift={(0,2)}]
    \draw (0,0) [partial ellipse=0:360:0.6 and 0.3];
  \end{scope}

  % cylinder sides
  \draw (-0.6, 2)--(-0.6,-2);
  \draw ( 0.6, 2)--( 0.6,-2);

  % left blue strand: X_b^{rev} (behind the red circle)
\draw[blue, -<-=0.8]
  (-0.45,-2.21)
  .. controls (-0.35,-0.3) and (0.35,0.8) .. (0.35,1.76)
  node[left, pos=0.30] {\tiny };

\draw[blue, ->-=0.8]
  (0.45,-2.21)
  .. controls (0.35,-0.8) and (-0.35,0.8) .. (-0.35,1.76)
  node[right, pos=0.9] {\tiny };

  % white gap for crossing of blue strands

  % red equatorial circle: X_d \boxtimes X_d^{rev}
  \draw[red, thick]         (0,0) [partial ellipse=180:360:0.6 and 0.3];
  \draw[red, dashed, thick] (0,0) [partial ellipse=0:180:0.6 and 0.3];
  \node[red, right] at (0.55,0) {\scriptsize $\Omega$};

  % right blue strand: X_a (in front of the red circle)

\draw[blue, ->-=0.9]
  (-0.52,-2.17)
  .. controls (-0.42,-0.3) and (0.27,0.8) .. (0.27,1.76)
  node[right, pos=0.30] {\tiny };

\draw[blue,-<-=0.9]
  (0.37,-2.21)
  .. controls (0.27,-0.8) and (-0.43,0.8) .. (-0.43,1.76)
  node[left, pos=0.9] {\tiny };

  % bottom ellipse
  \begin{scope}[shift={(0,-2)}]
    \draw[dashed] (0,0) [partial ellipse=0:180:0.6 and 0.3];
    \draw         (0,0) [partial ellipse=180:360:0.6 and 0.3];
  \end{scope}

\end{tikzpicture}}}
\quad \longleftrightarrow \quad
c_{\,X_a\boxtimes X_b^{\mathrm{rev}},\,X_c\boxtimes X_d^{\mathrm{rev}}}
=
c_K(a,c)\,c_K(d,b)^{-1}.$ \\

Equivalently, since the braiding in the reverse category is
$$
c^{\mathrm{rev}}_{X_b,X_d}=c^{-1}_{X_d,X_b},
$$
the braiding in the doubled category is
$$
c_{\,X_a\boxtimes X_b^{\mathrm{rev}},\,X_c\boxtimes X_d^{\mathrm{rev}}}=c_{X_a,X_c}\boxtimes c^{-1}_{X_d,X_b}.
$$

\item \textbf{Hilbert space.}

In the doubled case, the Abelian Chern--Simons Hilbert space on a genus-$g$ surface is $\mathcal H^{\mathrm{CS}}_{\mathrm{dbl}}(\Sigma_g) := \mathcal H^{\mathrm{CS}}_{\mathbb T,K}(\Sigma_g)\otimes \mathcal H^{\mathrm{CS}}_{\mathbb T,-K}(\Sigma_g) \cong \mathcal H^{\mathrm{CS}}_{\mathbb T\times \mathbb T,\,K\oplus(-K)}(\Sigma_g)$.

$$
\vcenter{\hbox{%
\begin{tikzpicture}[scale=0.25, line cap=round, line join=round, >=to]

%---------------------------------
% Outer surface Sigma_g
%---------------------------------
\filldraw[fill=white, draw=black, thick]
  (-6.2,0.0)
  .. controls (-6.2, 1.8) and (-4.8, 2.2) .. (-3.4, 2.2)
  .. controls (-2.0, 2.2) and (-1.2, 1.8) .. (-0.4, 1.2)
  .. controls ( 0.4, 0.7) and ( 1.0, 0.8) .. ( 1.7, 1.3)
  .. controls ( 2.5, 1.9) and ( 3.1, 2.2) .. ( 4.3, 2.2)
  .. controls ( 5.7, 2.2) and ( 6.6, 1.8) .. ( 7.3, 1.2)
  .. controls ( 8.0, 0.7) and ( 8.7, 0.8) .. ( 9.4, 1.2)
  .. controls (10.2, 1.7) and (10.9, 1.5) .. (11.5, 1.2)
  .. controls (12.3, 0.8) and (13.1, 1.7) .. (14.2, 2.2)
  .. controls (15.4, 2.2) and (16.6, 1.8) .. (16.6, 0.0)
  .. controls (16.6,-1.8) and (15.4,-2.2) .. (14.2,-2.2)
  .. controls (13.1,-2.2) and (12.3,-1.7) .. (11.5,-1.2)
  .. controls (10.9,-0.8) and (10.2,-0.7) .. ( 9.4,-1.2)
  .. controls ( 8.7,-1.5) and ( 8.0,-1.7) .. ( 7.3,-1.2)
  .. controls ( 6.6,-0.8) and ( 5.7,-2.2) .. ( 4.3,-2.2)
  .. controls ( 3.1,-2.2) and ( 2.5,-1.9) .. ( 1.7,-1.3)
  .. controls ( 1.0,-0.8) and ( 0.4,-0.7) .. (-0.4,-1.2)
  .. controls (-1.2,-1.8) and (-2.0,-2.2) .. (-3.4,-2.2)
  .. controls (-4.8,-2.2) and (-6.2,-1.8) .. (-6.2,0.0)
  -- cycle;

%---------------------------------
% A small macro for one displayed handle
% #1 = x-shift, #2 = a-label, #3 = b-label
%---------------------------------
\newcommand{\HandleCycles}[3]{%
  \begin{scope}[shift={(#1,0)}]
    % inner hole
    \draw[black, thick]
      (-0.95,0) .. controls (-0.55,0.42) and (0.55,0.42) .. (0.95,0);
    \draw[black, thick]
      (-0.95,0) .. controls (-0.55,-0.42) and (0.55,-0.42) .. (0.95,0);

% upper blue cycle
\draw[blue, ->-=0.80]
  (-1.70,0.25)
  .. controls (-1.70,0.95) and (1.70,0.95) .. (1.70,0.25)
  .. controls (1.70,-0.45) and (-1.70,-0.45) .. (-1.70,0.25);

% lower blue cycle
\draw[blue, -<-=0.70]
  (-1.70,-0.25)
  .. controls (-1.70,0.45) and (1.70,0.45) .. (1.70,-0.25)
  .. controls (1.70,-0.95) and (-1.70,-0.95) .. (-1.70,-0.25);

    % labels
    \node[blue] at (0,1.55) {\tiny $#2$};
    \node[blue] at (0,-1.55) {\tiny $#3$};
  \end{scope}
}

%---------------------------------
% Three displayed handles
%---------------------------------
\HandleCycles{-3.4}{a_1}{b_1}
\HandleCycles{ 4.3}{a_2}{b_2}
\HandleCycles{14.2}{a_g}{b_g}

% Ellipsis
\node at (9.3,0.0) {$\cdots$};

% Surface label
\node at (5.5,-3.45) {\tiny $\Sigma_g$};

\end{tikzpicture}%
}}
\quad \longleftrightarrow \quad \displaystyle \mathcal H^{CS}_{\mathrm{dbl}}(\Sigma_g) \cong \bigoplus_{\mathbf a,\mathbf b\in G_K^{g}} \mathbb C|\mathbf a,\mathbf b\rangle,
$$

where $\mathbf a=(a_1,\dots,a_g),$ and $\mathbf b=(b_1,\dots,b_g)$. Equivalently,
$$
\mathbb V_{\mathcal C_K}(\Sigma_g)\big|_{\mathrm{Cob}^B} \;\cong\; \mathcal H^{CS}_{\mathrm{dbl}}(\Sigma_g),
$$
and the basis is indexed by $G_K^{2g}$. In particular,
$$
\dim \mathcal H^{CS}_{\mathrm{dbl}}(\Sigma_g)=|G_K|^{2g}=|\det K|^{2g}.
$$

\newpage\item \textbf{$S$-matrix.}\\
In the doubled theory $\mathcal C_K\boxtimes \mathcal C_K^{\mathrm{rev}},$ the simple objects are labeled by pairs
$$
X_a\boxtimes X_b^{\mathrm{rev}}, \qquad (a,b)\in G_K\times G_K.
$$
Thus the genus-one $S$-matrix is indexed by two such pairs, say $(a,b)$ and $(c,d)$. The following tube picture represents the Hopf-link pairing schematically; the normalization is specified by the matrix entry on the right:
$$
\vcenter{\hbox{\scalebox{0.7}{%
\begin{tikzpicture}[xscale=0.8, yscale=0.6]
\draw[line width=0.8cm] (0,0) [partial ellipse=0:180:2 and 1.5];
\draw[white,line width=0.77cm] (0,0) [partial ellipse=-0.1:180.1:2 and 1.5];
\draw[line width=0.8cm] (0,0) [partial ellipse=180:360:2 and 1.5];
\draw[white,line width=0.77cm] (0,0) [partial ellipse=178:362:2 and 1.5];

\begin{scope}[shift={(-2,0)}]
  \draw[red] (0,0) [partial ellipse=-60:280:0.5 and 0.3];
  \draw[blue,  ->] (0,-0.3) [partial ellipse=0:170:0.2 and 0.45];
  \draw[blue, thick] (0,-0.3) [partial ellipse=190:360:0.2 and 0.45]
    node[pos=0.5, below right] {\tiny $a,b$};
\end{scope}

\begin{scope}[shift={(2,0)}]
  \draw[blue,  ->] (0,0) [partial ellipse=-80:280:0.5 and 0.3]
    node[below] {\tiny $c,d$};
\end{scope}
\end{tikzpicture}%
}}}
\qquad \longleftrightarrow \qquad
\frac{1}{|G_K|}\,\Omega_K(a,c)\,\Omega_K(b,d)^{-1}.
$$
The Hopf-link evaluates the double braiding, hence the monodromy pairing
$$\Omega_K(x,y)=c_K(x,y)c_K(y,x)=\frac{q_K(x+y)}{q_K(x)q_K(y)}.
$$
Therefore the genus-one $S$-matrix entry is
$$
S_{(a,b),(c,d)}
=
|G_K|^{-1}\,\Omega_K(a,c)\,\Omega_K(b,d)^{-1}
=
|G_K|^{-1}\,
\frac{q_K(a+c)}{q_K(a)\,q_K(c)}
\frac{q_K(b)\,q_K(d)}{q_K(b+d)}.
$$
Equivalently,
$$
S_{(a,b),(c,d)}
=
|G_K|^{-1}
\exp\!\Bigl(
2\pi i\,
\bigl(
a^{\mathsf T}K^{-1}c
-
b^{\mathsf T}K^{-1}d
\bigr)
\Bigr).
$$
On the torus Hilbert space with basis
$$
\bigl\{e_{(a,b)}\bigr\}_{(a,b)\in G_K\times G_K}
\subset \mathcal H^{CS}_{\mathrm{dbl}}(T^2),
$$
one has
$$
S\bigl(e_{(a,b)}\bigr)
=
|G_K|^{-1}
\sum_{(c,d)\in G_K\times G_K}
\Omega_K(a,c)\,\Omega_K(b,d)^{-1}\,e_{(c,d)}.
$$
Thus the Alterfold Hopf-link picture gives the doubled Hopf-link evaluation, while the corresponding normalized operator is the product of the Fourier kernel for $\mathcal C_K$ and the inverse Fourier kernel for $\mathcal C_K^{\mathrm{rev}}$.  For the all-genus $S$-matrix calculation, see \cite{Galviz2}.

\item \textbf{Partition function.}\\
Represented by a closed oriented $3$-manifold $M$, with no time boundary and hence no
associated state space:
$$
\vcenter{\hbox{%
\begin{tikzpicture}
\draw[dashed] (0, 0) rectangle (2, 2);
\draw[dashed] (.8, .8) rectangle (2.8, 2.8);
\draw[dashed] (0, 2)--+(0.8, 0.8);
\draw[dashed] (2, 2)--+(0.8, 0.8);
\draw[dashed] (2, 0)--+(0.8, 0.8);
\draw[dashed] (0, 0)--+(0.8, 0.8);
\draw (2, 2.3) node{\tiny{}};
\draw (1.3, 1.3) node[right]{\tiny{$M$}} node{};
\draw (0.3, -0.4) node[right]{\tiny{$(M, \Sigma, \Gamma)$}} node{};
\end{tikzpicture}}}\\
\qquad \longleftrightarrow \qquad
Z^{CS}_{\mathbb T\times\mathbb T,\,K\oplus(-K)}(M_L) \in \mathbb C.
$$
On the Alterfold side, this corresponds to the scalar partition function of a closed decorated $3$-alterfold $(M,\Sigma,\Gamma)$, that is $Z(M,\Sigma,\Gamma)\in \mathbb C.$ After restricting to the ordinary $B$-sector, the general pointed case realizes the canonical doubled Abelian Chern--Simons theory associated with $\mathcal C_K\boxtimes \mathcal C_K^{\mathrm{rev}},$ or equivalently with the lattice $K\oplus(-K)$. Thus, if $M=M_L$ is the closed
connected $3$-manifold determined by surgery with linking matrix $L$, then
$$
Z^{CS}_{\mathrm{dbl}}(M_L) := Z^{CS}_{\mathbb T\times\mathbb T,\,K\oplus(-K)}(M_L) = Z^{CS}_{\mathbb T,K}(M_L)\, Z^{CS}_{\mathbb T,-K}(M_L).
$$
Equivalently, using the surgery formula,
$$
Z^{CS}_{\mathbb T\times\mathbb T,\,K\oplus(-K)}(M_L) = |G_K|^{2m_M}\, |\det(L_{\mathrm{reg}})|^{-n} \sum_{[x]\in \mathbb Z^{2\rho n}/(L_{\mathrm{reg}}\otimes I_{2n})\mathbb Z^{2\rho n}} e^{2\pi i\,q_{L,K\oplus(-K)}([x])}.
$$
Here
$$
q_{L,K\oplus(-K)}([x]) \equiv \frac12 x^{\top}\bigl(L_{\mathrm{reg}}^{-1}\otimes (K\oplus(-K))\bigr)x \pmod 1.
$$

In the center case, suppose there exists a spherical fusion category $A$ such that $\mathcal Z(\mathcal A)\simeq \mathcal C_K.$ Then the ordinary $B$-sector of $\mathbb V_{\mathcal A}$ realizes the single-copy Abelian
Chern--Simons theory:
$$
\mathbb V_{\mathcal A}(M)\big|_{\mathrm{Cob}^B} \simeq Z^{RT}_{\mathcal Z(\mathcal A)}(M) \simeq Z^{RT}_{\mathcal C_K}(M) \simeq Z^{CS}_{\mathbb T,K}(M).
$$
Thus the single-copy theory appears from the center input $A$, whereas the
canonical doubled theory appears from the input $\mathcal C_K$.
For $M=M_L$, the single-copy surgery formula is
$$
Z^{CS}_{\mathbb T,K}(M_L)
=
|G_K|^{m_M}|\det(L_{\mathrm{reg}})|^{-n/2}
\sum_{[x]\in \mathbb Z^{\rho n}/(L_{\mathrm{reg}}\otimes I_n)\mathbb Z^{\rho n}}
e^{2\pi i q_{L,K}([x])}.
$$

\item
\textbf{What is forgotten in the reduction.}
The full Alterfold theory contains information carried by the separating surface, the multi-colored geometry, and the $A$-colored region; see Figure~\ref{fig:alterfold}. A single-copy Abelian Chern--Simons theory is obtained from the ordinary $B$-sector only in the center case $\mathcal Z(\mathcal A)\simeq \mathcal C(G_K,q_K)$. For a general pointed modular category $\mathcal C(G_K,q_K)$, the Alterfold center sector realizes the canonical doubled theory associated with $\mathcal C_K\boxtimes \mathcal C_K^{\mathrm{rev}},$ equivalently the Abelian Chern--Simons theory with lattice $K\oplus(-K)$. Thus the reduction forgets the extra Alterfold data and retains only the ordinary $3$-manifold input together with the center or doubled center sector. For the expressions $\mathcal H_{\mathbb T,K}(\Sigma_g)$, $Z^{CS}_{\mathbb T,K}(M)$, and their doubled analogues, see again Section~\ref{sec:Abelian-CS}, or more explicitly in~\cite{Galviz2}.
\end{itemize}

\subsection{Alterfold realization of condensations, gapped boundaries, and domain walls}
Section~\ref{sec:defects} supplies the condensation algebras and their
boundary and wall categories. We now apply the Morita-context construction
of \cite{Alterfold1,Alterfold2} to these same data. Its additional role is
to represent bulk attachment by colored tubes and to identify the resulting
intertwiner multiplicities with full-center coefficients and torus states.
The diagrams below therefore give a means of evaluating the defects already
constructed, using the bulk identification of
Proposition~\ref{prop:Alterfold-cs-reduction}.

Let $\mathcal C_K=\mathcal C(G_K,q_K)$ and let $A_H$ be the connected
\'etale algebra associated with an isotropic subgroup $H\le G_K$ in
Proposition~\ref{prop:defects-etale}. With the underlying fusion category
$\mathcal C_K$ as input, the Alterfold bulk is
$\mathcal Z(\mathcal C_K)\simeq
\mathcal C_K\boxtimes\mathcal C_K^{\mathrm{rev}}$; by
Theorem~\ref{thm:double-case}, it corresponds to
$$
Z^{CS}_{\mathbb T,K}\otimes Z^{CS}_{\mathbb T,-K}
\simeq Z^{CS}_{\mathbb T\times\mathbb T,K\oplus(-K)}.
$$
Thus $A_H$ is condensation data in the single bulk, whereas its full
center is a Lagrangian algebra in this doubled bulk.

The right-module category of $A_H$ determines the Morita-dual fusion
category \cite{mueger2003-1,mueger2003-2}
$$
\mathcal D_H\simeq{}_{A_H}\mathcal C_K{}_{A_H},
$$
consisting of $A_H$--$A_H$ bimodules, and hence the spherical Morita
context $\{\mathcal C_K,\mathcal D_H\}$. This category is distinct from
the residual deconfined bulk $(\mathcal C_K)^0_{A_H}$ computed in
\eqref{eq:defects-local-modules}.

Fix once and for all:
\begin{itemize}
\item A set $\operatorname{Irr}(\mathcal C_K)=\{X_1,\dots,X_r\}$ of simple objects, with $X_1=\mathbf 1$.
\item $d_j:=\dim(X_j)$ and $\mu:=\sum_{j=1}^r d_j^2$. Here $\mu$ denotes the graphical global-dimension normalization, not a multiplication cochain.
\item A simple generator $J_H$ of the Morita context $\{\mathcal C_K,\mathcal D_H\}$,
  with two-sided dual $\overline{J_H}$, and quantum dimension $d_{J_H}$.
\end{itemize}

We use the standard Alterfold color convention: white surfaces are $\mathcal C_K$-colored, brown surfaces are $\mathcal D_H$-colored, the inside of a tube or handlebody is $A$-colored, and the outside is $B$-colored. Red circles indicate Kirby colors, blue strands indicate objects of $\mathcal C_K$, and black circles indicate the boundary of an $A$-colored tube.  When a square is marked ``Time Boundary'', the corresponding picture is understood in the double-square model for a cobordism whose $B$-colored region is homeomorphic to a torus times an interval. The following statement reduces these constructions to the algebras classified in Section~\ref{sec:defects}.

\begin{theorem}[cf.~\cite{Alterfold1,Alterfold2}]
For every isotropic subgroup $H \subset G_K$, the connected \'etale algebra $A_H$
determines a spherical Morita context $\{\mathcal C_K,\mathcal D_H\}$, and hence
an Alterfold TQFT
$$
V_H := V_{\{\mathcal C_K,\mathcal D_H\}}.
$$
The associated graphical realization has the following properties:

\begin{enumerate}
\item[(i)] The genus-one Alterfold partition function determines a modular invariant
matrix
$$
M_H=(z^H_{jk})_{j,k},
$$
indexed by simple objects $X_j,X_k\in\operatorname{Irr}(\mathcal C_K)$.  In the double-square model,
the torus partition function is expanded as
\begin{align*}
\vcenter{\hbox{\scalebox{0.5}{
\begin{tikzpicture}
\begin{scope}[shift={(1, 1)}, scale=1.8]
\path [fill=brown!20!white] (-1,0.5)--(2, 0.5)--(1, -0.5)--(-2, -0.5)--cycle;
\draw (-1,0.5)--(2, 0.5) (-2,-0.5)--(1, -0.5);
\draw (-2,-0.5)--(-1, 0.5) (1,-0.5)--(2, 0.5)
node [below] {\tiny $B$} node [above] {\tiny $A$};
\node at (0,0) {$\mathcal D_H$};
\end{scope}
\begin{scope}[shift={(1, -1)}, scale=1.8]
\draw (-1,0.5)--(2, 0.5) (-2,-0.5)--(1, -0.5);
\draw (-2,-0.5)--(-1, 0.5) (1,-0.5)--(2, 0.5) node [above] {\tiny $B$};
\node at (2, 0.5) [below right] {\tiny Time Boundary};
\end{scope}
\end{tikzpicture}}}}
=
\sum_{j,k} z^H_{jk}\,\frac{1}{\mu}\,
\vcenter{\hbox{\scalebox{0.5}{
\begin{tikzpicture}
\begin{scope}[shift={(1, 1)}, scale=1.8]
\draw (-1,0.5)--(2, 0.5) (-2,-0.5)--(1, -0.5);
\draw (-2,-0.5)--(-1, 0.5) (1,-0.5)--(2, 0.5)
node [above] {\tiny $B$} node [below] {\tiny $A$};
\draw [red] (-0.5,-0.5)--(0.5, 0.5);
\path [fill=white] (0.2, 0.2) circle (0.2cm);
\draw [blue] (-1.3, 0.2)--(1.7, 0.2);
\draw [blue] (-1.7, -0.2)--(-0.4, -0.2) (0, -0.2)--(1.3, -0.2);
\end{scope}
\begin{scope}[shift={(1, -1)}, scale=1.8]
\draw (-1,0.5)--(2, 0.5) (-2,-0.5)--(1, -0.5);
\draw (-2,-0.5)--(-1, 0.5) (1,-0.5)--(2, 0.5) node [above] {\tiny $B$};
\node at (2, 0.5) [below right] {\tiny Time Boundary};
\end{scope}
\end{tikzpicture}}}}.
\end{align*}

\item[(ii)] The Morita context $\{\mathcal C_K,\mathcal D_H\}$ determines tensor
functors
$$
\alpha_H^\pm:\mathcal C_K\longrightarrow \mathcal D_H,
$$
the Alterfold $\alpha$-induction functors. These realize the two braided
attachments to the algebra: the right actions on $A_H\otimes X$ use
$c_{X,A_H}$ and $c_{A_H,X}^{-1}$, respectively. Graphically, they attach
the $\mathcal C_K$-tube to the $\mathcal D_H$-colored plane:
$$
\resizebox{\textwidth}{!}{$
\alpha_H^+\!\left(
\vcenter{\hbox{\begin{tikzpicture}
\draw [dashed] (-1,-1) rectangle (1,1);
\draw[-<-=0.85,-<-=0.25,blue] (0,-1) node[black, below]{\tiny $X$} -- (0,1) node[black, above]{\tiny $X$};
\draw [fill=white] (-0.3,-0.3) rectangle (0.3,0.3);
\node at (0,0) {\tiny $f$};
\end{tikzpicture}}}
\right)
=
\frac{1}{d_{J_H}}
\vcenter{\hbox{\begin{tikzpicture}
\draw [dashed, fill=brown!20!white] (-1.5,-1.5) rectangle (1.5,1.5);
\path [fill=white] (0,1.5) [partial ellipse=0:-180:1 and 1];
\draw [blue, ->-=0.15] (0,1.5) [partial ellipse=0:-180:1 and 1];
\path [fill=white] (0,-1.5) [partial ellipse=0:180:1 and 1];
\draw [blue, -<-=0.15] (0,-1.5) [partial ellipse=0:180:1 and 1];
\path [fill=white] (0.4,1) .. controls +(0,-0.3) and +(0,0.3) .. (0.8,0)
.. controls +(0,-0.3) and +(0,0.3) .. (0.4,-1) --
(-0.4,-1) .. controls +(0,0.3) and +(0,-0.3) .. (0,0)
.. controls +(0,0.3) and +(0,-0.3) .. (-0.4,1);
\draw[dashed] (0,1) [partial ellipse=0:360:0.4 and 0.2];
\draw[dashed] (0,-1) [partial ellipse=0:360:0.4 and 0.2];
\draw (0.4,1) .. controls +(0,-0.3) and +(0,0.3) .. (0.8,0)
.. controls +(0,-0.3) and +(0,0.3) .. (0.4,-1);
\draw (-0.4,1) .. controls +(0,-0.3) and +(0,0.3) .. (0,0)
.. controls +(0,-0.3) and +(0,0.3) .. (-0.4,-1);
\draw [dashed, red] (0.4,0) [partial ellipse=0:180:0.4 and 0.2];
\draw [red] (0.4,0) [partial ellipse=180:250:0.4 and 0.2];
\draw [red] (0.4,0) [partial ellipse=280:360:0.4 and 0.2];
\draw [blue, ->-=0.7] (0,1.5)--(0,1) .. controls +(0,-0.3) and +(0,0.3) .. (0.4,0)
.. controls +(0,-0.3) and +(0,0.3) .. (0,-1)--(0,-1.5);
\begin{scope}[shift={(0.2,0.5)}]
\draw [fill=white] (-0.2,-0.2) rectangle (0.2,0.2);
\node at (0,0) {\tiny $f$};
\end{scope}
\node[black, above] at (0,1.5) {\tiny $X$};
\node[black, above] at (-1,1.5) {\tiny $\overline{J_H}$};
\node[black, above] at (1,1.5) {\tiny $J_H$};
\end{tikzpicture}}}
\quad,\qquad
\alpha_H^-\!\left(
\vcenter{\hbox{\begin{tikzpicture}
\draw [dashed] (-1,-1) rectangle (1,1);
\draw[-<-=0.85,-<-=0.25,blue] (0,-1) node[black, below]{\tiny $X$} -- (0,1) node[black, above]{\tiny $X$};
\draw [fill=white] (-0.3,-0.3) rectangle (0.3,0.3);
\node at (0,0) {\tiny $f$};
\end{tikzpicture}}}
\right)
=
\frac{1}{d_{J_H}}
\vcenter{\hbox{\begin{tikzpicture}
\draw [dashed, fill=brown!20!white] (-1.5,-1.5) rectangle (1.5,1.5);
\path [fill=white] (0,1.5) [partial ellipse=0:-180:1 and 1];
\draw [blue, ->-=0.15] (0,1.5) [partial ellipse=0:-180:1 and 1];
\path [fill=white] (0,-1.5) [partial ellipse=0:180:1 and 1];
\draw [blue, -<-=0.15] (0,-1.5) [partial ellipse=0:180:1 and 1];
\path [fill=white] (0.4,1) .. controls +(0,-0.3) and +(0,0.3) .. (0.8,0)
.. controls +(0,-0.3) and +(0,0.3) .. (0.4,-1) --
(-0.4,-1) .. controls +(0,0.3) and +(0,-0.3) .. (0,0)
.. controls +(0,0.3) and +(0,-0.3) .. (-0.4,1);
\draw[dashed] (0,1) [partial ellipse=0:360:0.4 and 0.2];
\draw[dashed] (0,-1) [partial ellipse=0:360:0.4 and 0.2];
\draw (0.4,1) .. controls +(0,-0.3) and +(0,0.3) .. (0.8,0)
.. controls +(0,-0.3) and +(0,0.3) .. (0.4,-1);
\draw (-0.4,1) .. controls +(0,-0.3) and +(0,0.3) .. (0,0)
.. controls +(0,-0.3) and +(0,0.3) .. (-0.4,-1);
\draw [dashed, red] (0.4,0) [partial ellipse=0:180:0.4 and 0.2];
\draw [blue, ->-=0.7] (0,1.5)--(0,1) .. controls +(0,-0.3) and +(0,0.3) .. (0.4,0)
.. controls +(0,-0.3) and +(0,0.3) .. (0,-1)--(0,-1.5);
\begin{scope}[shift={(0.2,0.5)}]
\draw [fill=white] (-0.2,-0.2) rectangle (0.2,0.2);
\node at (0,0) {\tiny $f$};
\end{scope}
\draw [line width=2.5, white] (0.4,0) [partial ellipse=200:320:0.4 and 0.2];
\draw [red] (0.4,0) [partial ellipse=180:360:0.4 and 0.2];
\node[black, above] at (0,1.5) {\tiny $X$};
\node[black, above] at (-1,1.5) {\tiny $\overline{J_H}$};
\node[black, above] at (1,1.5) {\tiny $J_H$};
\end{tikzpicture}}}
$}
$$\\

\item[(iii)] The associated topological full center is an object $Z_{\mathrm{full}}(A_H)\in \mathcal Z(\mathcal C_K)
\simeq \mathcal C_K\boxtimes \mathcal C_K^{\mathrm{rev}},$ and in the modular setting it has expansion
$$
Z_{\mathrm{full}}(A_H)\cong
\bigoplus_{j,k=1}^r
z^H_{kj}\,
X_k\boxtimes (X_j^*)^{\mathrm{rev}}.
$$
Writing $X_x$ for the simple object of charge $x\in G_K$, one has
$X_x^*=X_{-x}$. Consequently, the coefficient in the doubled basis
$X_a\boxtimes X_b^{\mathrm{rev}}$ is $z^H_{a,-b}$.
In Alterfold theory, it is represented by the $\mathcal D_H$-colored tube:
$$
\sum_{j,k=1}^r \frac{1}{\mu^3}\,
\vcenter{\hbox{\scalebox{0.7}{
\begin{tikzpicture}[scale=0.7]
\draw [line width=0.6cm, brown!20!white] (0,0) [partial ellipse=-0.1:180.1:2 and 1.5];
\begin{scope}[shift={(2.5,0)}]
\draw [line width=0.6cm] (0,0) [partial ellipse=0:180:2 and 1.5];
\draw [white, line width=0.57cm] (0,0) [partial ellipse=-0.1:180.1:2 and 1.5];
\draw [blue] (0,0) [partial ellipse=0:180:2.15 and 1.65];
\draw [blue] (0,0) [partial ellipse=0:180:1.85 and 1.35];
\end{scope}
\begin{scope}[shift={(2.5,0)}]
\draw [line width=0.6cm] (0,0) [partial ellipse=180:360:2 and 1.5];
\draw [white, line width=0.57cm] (0,0) [partial ellipse=178:362:2 and 1.5];
\draw [blue, -<-=0.5] (0,0) [partial ellipse=178:362:2.15 and 1.65]
node[black, pos=0.7, below] {\tiny $X_j$};
\draw [blue, ->-=0.5] (0,0) [partial ellipse=178:362:1.85 and 1.35]
node[black, pos=0.7, above] {\tiny $X_k$};
\end{scope}
\draw [line width=0.6cm, brown!20!white] (0,0) [partial ellipse=180:360:2 and 1.5];
\begin{scope}[shift={(4.5,0)}]
\draw [red, dashed](0,0) [partial ellipse=0:180:0.4 and 0.25];
\draw [white, line width=4pt] (0,0) [partial ellipse=290:270:0.4 and 0.25];
\draw [red] (0,0) [partial ellipse=260:360:0.4 and 0.25];
\draw [red] (0,0) [partial ellipse=180:230:0.4 and 0.25];
\end{scope}
\end{tikzpicture}}}}
\;\cong\;
\frac{1}{\mu}\,
\vcenter{\hbox{\scalebox{0.7}{
\begin{tikzpicture}[xscale=0.8, yscale=0.6]
\begin{scope}[shift={(0,3)}]
\draw (0,0) [partial ellipse=0:360:0.6 and 0.3];
\end{scope}
\path [fill=white] (-0.6,0) rectangle (0.6,2.7);
\draw [blue, ->-=0.5] (-0.2,2.8)--(-0.2,0) node [left, pos=0.6] {\tiny $X_j$};
\draw [blue, -<-=0.5] (0.2,2.8)--(0.2,0) node [right, pos=0.6] {\tiny $X_k$};
\draw (-0.6,3)--(-0.6,0) (0.6,3)--(0.6,0);
\draw [blue] (-0.2,-3.2)--(-0.2,0) (0.2,-3.2)--(0.2,0);
\draw (-0.6,-3)--(-0.6,0) (0.6,-3)--(0.6,0);
\begin{scope}[shift={(0,-3)}]
\draw [dashed](0,0) [partial ellipse=0:180:0.6 and 0.3];
\draw (0,0) [partial ellipse=180:360:0.6 and 0.3];
\end{scope}
\draw [red, dashed](0,0) [partial ellipse=0:180:0.6 and 0.3];
\draw [white, line width=4pt] (0,0) [partial ellipse=220:270:0.6 and 0.3];
\draw [red] (0,0) [partial ellipse=180:280:0.6 and 0.3];
\draw [red] (0,0) [partial ellipse=300:360:0.6 and 0.3];
\end{tikzpicture}}}}.
$$\\

\item[(iv)] In the modular setting, the Alterfold torus partition function and the
$\alpha$-inductions are related by
$$
z^H_{jk}
=
\dim\Hom_{\mathcal D_H}\!\bigl(\alpha_H^+(X_j),\alpha_H^-(X_k)\bigr).
$$
In charge notation these multiplicities reduce to
\begin{equation}
\label{eq:alterfold-pointed-multiplicities}
z^H_{xy}=
\begin{cases}
1,&x,y\in H^\perp\ \text{and}\ x-y\in H,\\
0,&\text{otherwise}.
\end{cases}
\end{equation}
Thus the same coefficients encode braided attachment, the full-center
expansion in \textup{(iii)}, and the genus-one diagram in \textup{(i)}.
\end{enumerate}

Thus the pair $(\mathcal C_K,A_H)$, equivalently the isotropic subgroup
$H\subset G_K$, determines intrinsically the Alterfold modular invariant,
$\alpha$-inductions, and topological full center attached to the corresponding
condensation.
\end{theorem}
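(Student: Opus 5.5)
The plan is to treat the general statements (i)--(iii) as specializations of the Morita-context Alterfold theorems of \cite{Alterfold1,Alterfold2}, and then to verify the pointed computation (iv) by hand. First, $A_H$ is connected \'etale by Proposition~\ref{prop:defects-etale}, so it is a $\Delta$-separable symmetric Frobenius algebra. Hence its bimodule category $\mathcal D_H={}_{A_H}\mathcal C_K{}_{A_H}$ is a spherical fusion category, and $J_H:=A_H$, viewed as a $\mathcal C_K$--$\mathcal D_H$ bimodule object, is a simple generator of a spherical Morita context, with $d_{J_H}=\dim A_H=|H|$. This supplies the input for $V_H$.

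Statements (i) and (ii) are then the Alterfold constructions applied to this context. The genus-one double-square evaluation is expanded in the $B$-sector torus basis, which is labelled by pairs $(X_j,X_k)$ via the doubled center and Theorem~\ref{thm:double-case}; this defines $z^H_{jk}$. Modular invariance follows because the mapping class group acts by diffeomorphisms of the alterfold that fix the $\mathcal D_H$-colored surface. The tube pictures define $\alpha^\pm_H$. To identify them with the algebraic $\alpha$-inductions (free modules $X\otimes A_H$ with right action twisted by $c_{X,A_H}$ or $c^{-1}_{A_H,X}$), I would use the standard Alterfold lemma that a Kirby-colored $\Omega$ loop around a tube attached to a brown plane projects onto the half-braided summand.

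Statement (iii) and the first equality in (iv) come from the general identity
\[
z_{jk}=\dim\Hom(\alpha^+(X_j),\alpha^-(X_k))
\]
together with the full-center characterization $Z_{\mathrm{full}}(A)=\bigoplus z_{kj}\,X_k\boxtimes (X_j^*)^{\mathrm{rev}}$ in the modular setting (Fuchs--Runkel--Schweigert, Kong--Runkel). Both are quoted from \cite{Alterfold2}. Cutting the torus of (i) along a meridian turns the genus-one evaluation into a trace of the tube idempotent, which gives these identities. The reindexing $z^H_{a,-b}$ is immediate from $X_x^*=X_{-x}$.

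The main work is the pointed formula \eqref{eq:alterfold-pointed-multiplicities}, and the step I expect to be the main obstacle is fixing the conventions so that the support is exactly $H^\perp$ and not a shifted set. The computation itself goes as follows. By the proof of Proposition~\ref{prop:defects-etale}, $\alpha^\pm_H(X_x)$ is the free module $F_x$ with one of the two braided bimodule structures. These are simple, and
\[
\Hom_{A_H\text{-}A_H}(\alpha^+(X_x),\alpha^-(X_y))\subset \Hom_{A_H}(F_x,F_y),
\]
which is one-dimensional exactly when $x-y\in H$. The additional bimodule condition compares the left actions built from $c_{X_x,X_h}$ and $c^{-1}_{X_h,X_y}$. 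On the summand $X_h$ this reduces to the scalar identity $c_K(x,h)c_K(h,x)=1$ up to the identification $y=x+h'$. Using $b_K(x,h)b_K(h',h)=b_K(x,h)$ for isotropic $H$ (so $h'\in H$ pairs trivially with $H$), this holds iff $x\in H^\perp$, and then also $y\in H^\perp$. As a consistency check, $\sum_{x,y}z^H_{xy}=|H^\perp|\,|H|=|G_K|$, and in the Lagrangian case the result is the permutation-type invariant with $Z_{\mathrm{full}}=A_{\Gamma}$ for the diagonal-type Lagrangian in $G_K\oplus G_K$. Both are expected.
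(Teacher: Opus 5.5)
Your proposal follows the paper's proof. You quote (i)--(iii) and the identity $z^H_{jk}=\dim\Hom_{\mathcal D_H}(\alpha_H^+(X_j),\alpha_H^-(X_k))$ from the Alterfold papers, and you get the pointed formula from the one-dimensional free-module Hom space when $x-y\in H$ together with the trivial-monodromy condition, which forces $x\in H^\perp$ and hence $y\in H^\perp$.

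Two side remarks are wrong, though neither affects the argument. First, with the standard normalization $d_{J_H}^2=\dim A_H$, so $d_{J_H}=\sqrt{|H|}$ rather than $|H|$. Second, for Lagrangian $H$ your own formula gives the rank-one matrix $z^H_{xy}=\mathbf 1_H(x)\mathbf 1_H(y)$, supported on the product Lagrangian $H\times H$ (compare $z^{A_a}=w_a^Tw_a$ in the toric-code section). That is not a permutation-type invariant; the diagonal graph arises only for $H=0$.
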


\begin{proof}
The Morita-context and graphical statements are the constructions of
\cite{Alterfold1,Alterfold2} applied to $A_H$. For
\eqref{eq:alterfold-pointed-multiplicities}, a homogeneous module
intertwiner exists precisely when $x-y\in H$, and is then unique up
to scalar, as in Proposition~\ref{prop:defects-etale}. Compatibility
with the two braided actions additionally requires trivial monodromy
with $H$, hence $x\in H^\perp$; the coset condition then also gives
$y\in H^\perp$. This identifies the coefficients with the surviving
charge sectors of \eqref{eq:defects-local-modules}.
\end{proof}

\textbf{Boundary state (torus partition vector).}\\
Write $\mathcal D=\mathcal D_H$ in the following diagrams, and let
$|a,b\rangle$ denote the doubled torus basis vector labelled by
$X_a\boxtimes X_b^{\mathrm{rev}}$. The full-center multiplicities give
the state represented by the plain $\mathcal D$-colored double-square surface:
$$
\vcenter{\hbox{
\begin{tikzpicture}[rotate=-90, scale=1.1]
\path [fill=brown!20!white] (-1,0.5)--(2, 0.5)--(1, -0.5)--(-2, -0.5)--cycle;
\draw (-1,0.5)--(2, 0.5) (-2,-0.5)--(1, -0.5);
\draw (-2,-0.5)--(-1, 0.5) (1,-0.5)--(2, 0.5) node [right] {\tiny $B$} node [left] {\tiny $A$};
\node at (-0.9, 0.3) {\tiny $\mathcal{D}$};
\end{tikzpicture}}}
\hspace{1cm}
\vcenter{\hbox{\scalebox{1}{
\begin{tikzpicture}[xscale=0.8, yscale=0.6]
\begin{scope}[shift={(0,2)}]
\draw (0,0) [partial ellipse=0:360:0.6 and 0.3];
\end{scope}
\draw (-0.6, 2)--(-0.6, 0) (0.6, 2)--(0.6, 0); % upper one
\draw (-0.6, -2)--(-0.6, 0) (0.6, -2)--(0.6, 0);
\draw [blue] (0, -2.3) --(0, 1.7);
\draw [line width=0.2cm,white] (0, -0.18) --(0, -0.38);
\draw [red] (0,0) [partial ellipse=180:360:0.6 and 0.3];
\draw [red, dashed] (0,0) [partial ellipse=0:180:0.6 and 0.3];
%\draw [blue] (-0.1, -2.3)--(-0.1, 1.7);
\begin{scope}[shift={(0,-2)}]
\draw [dashed](0,0) [partial ellipse=0:180:0.6 and 0.3];
\draw (0,0) [partial ellipse=180:360:0.6 and 0.3];
\end{scope}
\end{tikzpicture}}}}
\qquad \longleftrightarrow \qquad
Z^{\mathcal D}
=
\sum_{a,b\in G_K}
\widehat z^{\mathcal D}_{ab}\,
|a,b\rangle,
\qquad
\widehat z^{\mathcal D}_{ab}
:=
z^{\mathcal D}_{a,-b}.$$

Under the state-space identification of
Proposition~\ref{prop:Alterfold-cs-reduction},
$$
\mathbb V_{\mathcal C_K}(T^2)\big|_{\mathrm{Cob}^B}
\cong Z^{RT}_{\mathcal Z(\mathcal C_K)}(T^2)
\cong\mathcal H^{CS}_{\mathrm{dbl}}(T^2),
$$
$Z^{\mathcal D}$ is the corresponding doubled Chern--Simons torus
state. Pairing it with a bordism state produces a closed amplitude
by Proposition~\ref{prop:cylinder-gluing}. The following diagrams
extract its coefficients by gluing a labelled tube. Here
$\alpha_{\mathcal D}$ denotes the coefficient functional, and
$I:\mathcal D\to\mathcal Z(\mathcal C_K)$ is the right adjoint of
the forgetful functor from $\mathcal Z(\mathcal D)$, transported along
the Morita equivalence of centers; thus
$I(\mathbf1_{\mathcal D})\cong Z_{\mathrm{full}}(A_H)$.
\begin{align*}
\widehat z^{\mathcal D}_{ab}
=
\alpha_{\mathcal D}
\!\bigl(X_a\boxtimes X_b^{\mathrm{rev}}\bigr)
=
z^{\mathcal D}_{a,-b}
=&\;
\frac{1}{\mu^2}
\vcenter{\hbox{\scalebox{0.5}{
\begin{tikzpicture}
\begin{scope}[shift={(1,1)}, scale=1.8]
\path[fill=brown!20!white] (-1,0.5)--(2,0.5)--(1,-0.5)--(-2,-0.5)--cycle;
\draw (-1,0.5)--(2,0.5) (-2,-0.5)--(1,-0.5);
\draw (-2,-0.5)--(-1,0.5) (1,-0.5)--(2,0.5)
  node[below] {\tiny $B$} node[above] {\tiny $A$};
\node at (0,0) {$\mathcal D$};
\end{scope}
\begin{scope}[shift={(1,-1)}, scale=1.8]
\draw (-1,0.5)--(2,0.5) (-2,-0.5)--(1,-0.5);
\draw (-2,-0.5)--(-1,0.5) (1,-0.5)--(2,0.5)
  node[above] {\tiny $B$} node[below] {\tiny $A$};
\node at (2,1) {$\mathcal Z(\mathcal C_K)$};
\draw[red] (-0.5,-0.5)--(0.5,0.5);
\path[fill=white] (0.2,0.2) circle (0.2cm);
\draw[blue] (-1.3,0.2)--(1.7,0.2);
\draw[blue] (-1.7,-0.2)--(-0.4,-0.2) (0,-0.2)--(1.3,-0.2);
\end{scope}
\end{tikzpicture}}}}\\[1ex]
=&\;
\frac{1}{\mu^2}\,
\vcenter{\hbox{\scalebox{0.7}{
\begin{tikzpicture}[scale=0.7]
\draw[line width=0.6cm, brown!20!white] (0,0) [partial ellipse=-0.1:180.1:2 and 1.5];
\begin{scope}[shift={(2.5,0)}]
  \draw[line width=0.6cm] (0,0) [partial ellipse=0:180:2 and 1.5];
  \draw[white, line width=0.57cm] (0,0) [partial ellipse=-0.1:180.1:2 and 1.5];
  \draw[blue] (0,0) [partial ellipse=0:180:2.15 and 1.65];
  \draw[blue] (0,0) [partial ellipse=0:180:1.85 and 1.35];
\end{scope}
\begin{scope}[shift={(2.5,0)}]
  \draw[line width=0.6cm] (0,0) [partial ellipse=180:360:2 and 1.5];
  \draw[white, line width=0.57cm] (0,0) [partial ellipse=178:362:2 and 1.5];
  \draw[blue, <-] (0,0) [partial ellipse=178:362:2.15 and 1.65]
    node[black, pos=0.7, below] {\tiny $X_b$};
  \draw[blue, ->] (0,0) [partial ellipse=178:362:1.85 and 1.35]
    node[black, pos=0.7, above] {\tiny $X_a$};
\end{scope}
\draw[line width=0.6cm, brown!20!white] (0,0) [partial ellipse=180:360:2 and 1.5];
\begin{scope}[shift={(4.5,0)}]
  \draw[red, dashed] (0,0) [partial ellipse=0:180:0.4 and 0.25];
  \draw[white, line width=4pt] (0,0) [partial ellipse=290:270:0.4 and 0.25];
  \draw[red] (0,0) [partial ellipse=260:360:0.4 and 0.25];
  \draw[red] (0,0) [partial ellipse=180:230:0.4 and 0.25];
\end{scope}
\end{tikzpicture}}}}\\[1ex]
=&\;
\dim\operatorname{Hom}_{\mathcal Z(\mathcal C_K)}\!\bigl(
I(\mathbf 1_{\mathcal D}),\,
X_a\boxtimes X_b^{\mathrm{rev}}
\bigr).
\end{align*}

The graphical normalization is
$$
\mu=\operatorname{Dim}(\mathcal C_K)=|G_K|=|\det K|,
\qquad
\mu^2=\operatorname{Dim}(\mathcal Z(\mathcal C_K)).
$$
The coefficient extraction therefore links the local attachment
multiplicities to the global torus state in a fixed normalization.

\begin{remark}
For a Lagrangian subgroup $L$, the boundary category
$\mathcal A_L=(\mathcal C_K)_{A_L}$ of
Theorem~\ref{thm:defect-classification} satisfies
$\mathcal Z(\mathcal A_L)\simeq\mathcal C_K$. Choosing
$\mathcal A_L$ as Alterfold input realizes the single bulk by
Theorem~\ref{thm:center-case}; choosing the underlying fusion category
$\mathcal C_K$ gives the doubled full-center realization above.
For a general isotropic $H$, the full center remains Lagrangian in
the double, although $A_H$ need not define a vacuum boundary for
$\mathcal C_K$.
\end{remark}

For domain walls, we use the folding construction already established
in Theorem~\ref{thm:defect-classification}. A wall between
$\mathcal C_K$ and $\mathcal C_{K'}$ is specified by its Lagrangian
algebra $A_M$ in
$$
\mathcal C_K \boxtimes \mathcal C_{K'}^{\mathrm{rev}}
\;\simeq\;
\mathcal C\!\bigl((G_K,q_K)\oplus(G_{K'},q_{K'}^{-1})\bigr).
$$

Its wall-line category provides a center presentation of this folded
bulk, to which Proposition~\ref{prop:Alterfold-cs-reduction} applies.
Composition is represented by gluing the corresponding Morita-context
diagrams. In the invertible case, mutually inverse walls $M,N$ give
the cancellation identities
$$
\vcenter{\hbox{\scalebox{0.5}{
\begin{tikzpicture}[yscale=0.7]
\path [fill=brown!20!white](-0.65,-3) rectangle (0.65,3);
\begin{scope}[shift={(0,3)}]
\path [fill=brown!20!white] (0,0) [partial ellipse=0:180:0.6 and 0.3];
\draw (0,0) [partial ellipse=0:360:0.6 and 0.3];
\end{scope}
\draw (-0.6,3)--(-0.6,0) (0.6,3)--(0.6,0);
\draw (-0.6,-3)--(-0.6,0) (0.6,-3)--(0.6,0);
\begin{scope}[shift={(0,-3)}]
\path [fill=brown!20!white] (0,0) [partial ellipse=180:360:0.6 and 0.3];
\draw [dashed](0,0) [partial ellipse=0:180:0.6 and 0.3];
\draw (0,0) [partial ellipse=180:360:0.6 and 0.3];
\end{scope}
\draw[blue, ->-=0.75] (0,0.8) [partial ellipse=0:360:0.6 and 0.3];
\draw[blue, ->-=0.75] (0,-0.8) [partial ellipse=0:360:0.6 and 0.3];
\draw (0.8,0.8) node{\tiny $M$};
\draw (0.8,-0.8) node{\tiny $N$};
\draw (0,1.6) node{\tiny $\mathcal D$};
\draw (0,-0.1) node{\tiny $\mathcal E$};
\draw (0,-2) node{\tiny $\mathcal D$};
\end{tikzpicture}}}}
=
\vcenter{\hbox{\scalebox{0.6}{
\begin{tikzpicture}[yscale=0.7]
\path [fill=brown!20!white](-0.65,-3) rectangle (0.65,3);
\begin{scope}[shift={(0,3)}]
\path [fill=brown!20!white] (0,0) [partial ellipse=0:180:0.6 and 0.3];
\draw (0,0) [partial ellipse=0:360:0.6 and 0.3];
\end{scope}
\draw (-0.6,3)--(-0.6,0) (0.6,3)--(0.6,0);
\draw (-0.6,-3)--(-0.6,0) (0.6,-3)--(0.6,0);
\begin{scope}[shift={(0,-3)}]
\path [fill=brown!20!white] (0,0) [partial ellipse=180:360:0.6 and 0.3];
\draw [dashed](0,0) [partial ellipse=0:180:0.6 and 0.3];
\draw (0,0) [partial ellipse=180:360:0.6 and 0.3];
\end{scope}
\draw (0,0) node{\tiny $\mathcal D$};
\end{tikzpicture}}}}
\quad,\quad
\vcenter{\hbox{\scalebox{0.6}{
\begin{tikzpicture}[yscale=0.7]
\path [fill=brown!20!white](-0.65,-3) rectangle (0.65,3);
\begin{scope}[shift={(0,3)}]
\path [fill=brown!20!white] (0,0) [partial ellipse=0:180:0.6 and 0.3];
\draw (0,0) [partial ellipse=0:360:0.6 and 0.3];
\end{scope}
\draw (-0.6,3)--(-0.6,0) (0.6,3)--(0.6,0);
\draw (-0.6,-3)--(-0.6,0) (0.6,-3)--(0.6,0);
\begin{scope}[shift={(0,-3)}]
\path [fill=brown!20!white] (0,0) [partial ellipse=180:360:0.6 and 0.3];
\draw [dashed](0,0) [partial ellipse=0:180:0.6 and 0.3];
\draw (0,0) [partial ellipse=180:360:0.6 and 0.3];
\end{scope}
\draw[blue, ->-=0.75] (0,0.8) [partial ellipse=0:360:0.6 and 0.3];
\draw[blue, ->-=0.75] (0,-0.8) [partial ellipse=0:360:0.6 and 0.3];
\draw (0.8,0.8) node{\tiny $N$};
\draw (0.8,-0.8) node{\tiny $M$};
\draw (0,1.6) node{\tiny $\mathcal E$};
\draw (0,-0.1) node{\tiny $\mathcal D$};
\draw (0,-2) node{\tiny $\mathcal E$};
\end{tikzpicture}}}}
=
\vcenter{\hbox{\scalebox{0.6}{
\begin{tikzpicture}[yscale=0.7]
\path [fill=brown!20!white](-0.65,-3) rectangle (0.65,3);
\begin{scope}[shift={(0,3)}]
\path [fill=brown!20!white] (0,0) [partial ellipse=0:180:0.6 and 0.3];
\draw (0,0) [partial ellipse=0:360:0.6 and 0.3];
\end{scope}
\draw (-0.6,3)--(-0.6,0) (0.6,3)--(0.6,0);
\draw (-0.6,-3)--(-0.6,0) (0.6,-3)--(0.6,0);
\begin{scope}[shift={(0,-3)}]
\path [fill=brown!20!white] (0,0) [partial ellipse=180:360:0.6 and 0.3];
\draw [dashed](0,0) [partial ellipse=0:180:0.6 and 0.3];
\draw (0,0) [partial ellipse=180:360:0.6 and 0.3];
\end{scope}
\draw (0,0) node{\tiny $\mathcal E$};
\end{tikzpicture}}}}\,.
$$
Here $\mathcal D$ and $\mathcal E$ label the Morita presentations,
and the circles $M,N$ represent inverse wall data. These cancellation
identities require invertibility. General wall composition retains the
relative tensor product and its junction data from
Section~\ref{sec:defects}; it need not reduce to the empty cylinder.

The Alterfold realization thus supplies a graphical evaluation of the
earlier defect data: attachment diagrams determine the full-center
coefficients, colored tubes define torus states, and gluing computes
their amplitudes and compatible wall compositions. Non-Lagrangian
condensations are included by the same construction, as they already
are in Proposition~\ref{prop:defects-etale}. The contribution here is
this common evaluation calculus, whose local Wilson-line coefficients
are made explicit in Section~\ref{sec:gauge-defect}.

\newcommand{\TableCylinder}[1]{%
\makebox[\linewidth][c]{%
\begin{tikzpicture}[xscale=0.5,transform shape, line cap=round, line join=round, >=to,yscale=0.3,baseline=(current bounding box.center)]
\path[use as bounding box] (-0.85,-2.45) rectangle (0.85,2.45);
#1
\end{tikzpicture}%
}%
}

\newcommand{\DrawTableCylinder}{%
  \begin{scope}[shift={(0,2)}]
    \draw (0,0) [partial ellipse=0:360:0.6 and 0.3];
  \end{scope}
  \draw (-0.6,2)--(-0.6,-2);
  \draw (0.6,2)--(0.6,-2);
  \begin{scope}[shift={(0,-2)}]
    \draw[dashed] (0,0) [partial ellipse=0:180:0.6 and 0.3];
    \draw (0,0) [partial ellipse=180:360:0.6 and 0.3];
  \end{scope}
}

\newpage
\begin{table}[H]
\tiny
\centering
\renewcommand{\arraystretch}{1.30}
\begin{tabular}{p{0.24\textwidth}p{0.24\textwidth}p{0.42\textwidth}}
\toprule
\textbf{Categorical data}
& \textbf{$\qquad$3-Alterfold Theory} 
& \textbf{$\qquad$ Abelian CS interpretation} \\
\midrule

Kirby color &
\TableCylinder{%
  \DrawTableCylinder
  \draw[red, thick]        (0,0) [partial ellipse=180:360:0.6 and 0.3];
  \draw[red, dashed, thick](0,0) [partial ellipse=0:180:0.6 and 0.3];
  \node[red, right] at (0.55,0) {\large$\Omega$};
}
&$\Omega_{\mathcal C_K}=\sum_{x\in G_K}X_x.$
Kirby color of the Alterfold input category; all simple
quantum dimensions are $1$.
\\\\

Simple object $X_a\boxtimes X_b^{\mathrm{rev}}$, $(a,b)\in G_K\times G_K$ &
\TableCylinder{%
  \DrawTableCylinder

  \draw [blue, ->-=0.8] (-0.18,-2.3)--(-0.18,1.7)
    node [left, pos=0.8] {\small $a$};

  \draw[red, thick]         (0,0) [partial ellipse=180:360:0.6 and 0.3];
  \draw[red, dashed, thick] (0,0) [partial ellipse=0:180:0.6 and 0.3];

  \draw [blue, -<-=0.8] (0.18,-2.3)--(0.18,1.7)
    node [right, pos=0.8] {\small $b$};
}
& $X_a\boxtimes X_b^{\mathrm{rev}}\in \mathcal C_K\boxtimes \mathcal C_K^{\mathrm{rev}}.$ Doubled Abelian topological charge sector labeled by $(a,b)$.
\\\\

Twist &
\TableCylinder{%
  \DrawTableCylinder

  % red equatorial circle
  \draw[red, thick]        (0,0) [partial ellipse=180:360:0.6 and 0.3];
  \draw[red, dashed, thick](0,0) [partial ellipse=0:180:0.6 and 0.3];

  % first twisted strand
  \draw [blue]
    (-0.11,-2.3)
    .. controls (0.20,-2.1) and (0.6,-1.7) .. (0.6,-0.7);

  \draw [blue, dashed]
    (0.55,-0.7)
    .. controls (0.55,0.0) and (-0.55,0.0) .. (-0.55,0.7);

  \draw [blue, -<-=0.55]
    (-0.6,0.7)
    .. controls (-0.6,1.45) and (-0.22,1.68) .. (0,1.7)
    node [left, pos=0.82] {};

  % shifted second twisted strand
  \draw [blue]
    (-0.06,-2.28)
    .. controls (0.15,-2.05) and (0.52,-1.65) .. (0.52,-0.72);

  \draw [blue, dashed]
    (0.52,-0.72)
    .. controls (0.52,-0.20) and (-0.50,0.20) .. (-0.50,0.72);

  \draw [blue, ->-=0.4]
    (-0.50,0.72)
    .. controls (-0.50,1.42) and (-0.16,1.66) .. (0.03,1.69)
    node [right, pos=0.82] {};
}
& $\theta_{(a,b)} = q_K(a)\,q_K(b)^{-1} .\qquad\qquad\qquad\qquad\qquad\qquad$ Doubled self-rotation phase; $T$-modular transformation.
\\\\

Braiding &
\TableCylinder{%
  \DrawTableCylinder

  \draw[blue, -<-=0.8]
    (-0.45,-2.21)
    .. controls (-0.35,-0.3) and (0.35,0.8) .. (0.35,1.76);

  \draw[blue, ->-=0.8]
    (0.45,-2.21)
    .. controls (0.35,-0.8) and (-0.35,0.8) .. (-0.35,1.76);

  \draw[red, thick]         (0,0) [partial ellipse=180:360:0.6 and 0.3];
  \draw[red, dashed, thick] (0,0) [partial ellipse=0:180:0.6 and 0.3];

  \draw[blue, ->-=0.9]
    (-0.52,-2.17)
    .. controls (-0.42,-0.3) and (0.27,0.8) .. (0.27,1.76);

  \draw[blue,-<-=0.9]
    (0.37,-2.21)
    .. controls (0.27,-0.8) and (-0.43,0.8) .. (-0.43,1.76);
}
&  $c^{\mathrm{dbl}}_{(a,b),(c,d)}
   =c_K(a,c)c_K(d,b)^{-1}.$
   Doubled exchange phase in the chosen Abelian case.
\\\\

Surface state spaces &
\centering
$\vcenter{\hbox{
\begin{tikzpicture}[scale=0.15,transform shape, line cap=round, line join=round, >=to]

\filldraw[fill=white, draw=black, thick]
  (-6.2,0.0)
  .. controls (-6.2, 1.8) and (-4.8, 2.2) .. (-3.4, 2.2)
  .. controls (-2.0, 2.2) and (-1.2, 1.8) .. (-0.4, 1.2)
  .. controls ( 0.4, 0.7) and ( 1.0, 0.8) .. ( 1.7, 1.3)
  .. controls ( 2.5, 1.9) and ( 3.1, 2.2) .. ( 4.3, 2.2)
  .. controls ( 5.7, 2.2) and ( 6.6, 1.8) .. ( 7.3, 1.2)
  .. controls ( 8.0, 0.7) and ( 8.7, 0.8) .. ( 9.4, 1.2)
  .. controls (10.2, 1.7) and (10.9, 1.5) .. (11.5, 1.2)
  .. controls (12.3, 0.8) and (13.1, 1.7) .. (14.2, 2.2)
  .. controls (15.4, 2.2) and (16.6, 1.8) .. (16.6, 0.0)
  .. controls (16.6,-1.8) and (15.4,-2.2) .. (14.2,-2.2)
  .. controls (13.1,-2.2) and (12.3,-1.7) .. (11.5,-1.2)
  .. controls (10.9,-0.8) and (10.2,-0.7) .. ( 9.4,-1.2)
  .. controls ( 8.7,-1.5) and ( 8.0,-1.7) .. ( 7.3,-1.2)
  .. controls ( 6.6,-0.8) and ( 5.7,-2.2) .. ( 4.3,-2.2)
  .. controls ( 3.1,-2.2) and ( 2.5,-1.9) .. ( 1.7,-1.3)
  .. controls ( 1.0,-0.8) and ( 0.4,-0.7) .. (-0.4,-1.2)
  .. controls (-1.2,-1.8) and (-2.0,-2.2) .. (-3.4,-2.2)
  .. controls (-4.8,-2.2) and (-6.2,-1.8) .. (-6.2,0.0)
  -- cycle;

\newcommand{\HandleCycles}[3]{%
  \begin{scope}[shift={(#1,0)}]
    \draw[black, thick]
      (-0.95,0) .. controls (-0.55,0.42) and (0.55,0.42) .. (0.95,0);
    \draw[black, thick]
      (-0.95,0) .. controls (-0.55,-0.42) and (0.55,-0.42) .. (0.95,0);

    \draw[blue, ->-=0.80]
      (-1.70,0.25)
      .. controls (-1.70,0.95) and (1.70,0.95) .. (1.70,0.25)
      .. controls (1.70,-0.45) and (-1.70,-0.45) .. (-1.70,0.25);

    \draw[blue, -<-=0.70]
      (-1.70,-0.25)
      .. controls (-1.70,0.45) and (1.70,0.45) .. (1.70,-0.25)
      .. controls (1.70,-0.95) and (-1.70,-0.95) .. (-1.70,-0.25);

    \node[blue] at (0,1.55) {\Huge $#2$};
    \node[blue] at (0,-1.55) {\Huge $#3$};
  \end{scope}
}

\HandleCycles{-3.4}{a_1}{b_1}
\HandleCycles{ 4.3}{a_2}{b_2}
\HandleCycles{14.2}{a_g}{b_g}

\node at (9.3,0.0) {\Huge $\cdots$};
\node at (5.5,-3.45) {\Huge $\Sigma_g$};

\end{tikzpicture}%
}}$
& $\mathbb V_{\mathcal C_K}(\Sigma_g)\big|_{\mathrm{Cob}^B}
\cong
 \mathcal H^{CS}_{\mathrm{dbl}}(\Sigma_g) \cong \bigoplus_{\mathbf a,\mathbf b\in G_K^{g}} \mathbb C|\mathbf a,\mathbf b\rangle.$
Hence $\dim\mathcal H^{CS}_{\mathrm{dbl}}(\Sigma_g) = |G_K|^{2g} = |\det K|^{2g}\qquad$. Hilbert space.
\\\\

Genus-one $S$-Matrix &
\centering
$
\vcenter{\hbox{\scalebox{0.35}{%
\begin{tikzpicture}[xscale=0.8, yscale=0.6]
\draw[line width=0.8cm] (0,0) [partial ellipse=0:180:2 and 1.5];
\draw[white,line width=0.77cm] (0,0) [partial ellipse=-0.1:180.1:2 and 1.5];
\draw[line width=0.8cm] (0,0) [partial ellipse=180:360:2 and 1.5];
\draw[white,line width=0.77cm] (0,0) [partial ellipse=178:362:2 and 1.5];

\begin{scope}[shift={(-2,0)}]
  \draw[red] (0,0) [partial ellipse=-60:280:0.5 and 0.3];
  \draw[blue,  ->] (0,-0.3) [partial ellipse=0:170:0.2 and 0.45];
  \draw[blue, thick] (0,-0.3) [partial ellipse=190:360:0.2 and 0.45]
    node[pos=0.5, below right] {\tiny $(a,b)$};
\end{scope}

\begin{scope}[shift={(2,0)}]
  \draw[blue,  ->] (0,0) [partial ellipse=-80:280:0.5 and 0.3]
    node[below] {\tiny $(c,d)$};
\end{scope}
\end{tikzpicture}%
}}}$
& $S_{(a,b),(c,d)} = |G_K|^{-1} \Omega_K(a,c)\Omega_K(b,d)^{-1}.$ Doubled Hopf-link evaluation; $S$-modular transformation.
\\\\

Partition function &
\centering
$\qquad\vcenter{\hbox{%
\begin{tikzpicture}[scale=0.45, baseline=-0.45ex]
\draw[dashed] (0, 0) rectangle (2, 2);
\draw[dashed] (.8, .8) rectangle (2.8, 2.8);
\draw[dashed] (0, 2)--+(0.8, 0.8);
\draw[dashed] (2, 2)--+(0.8, 0.8);
\draw[dashed] (2, 0)--+(0.8, 0.8);
\draw[dashed] (0, 0)--+(0.8, 0.8);
\draw (0.7, 1.3) node[right]{\tiny $M$};
\end{tikzpicture}}}$
& $Z(M,\varnothing,\varnothing)
=
Z^{CS}_{\mathbb T\times\mathbb T,\,K\oplus(-K)}(M)$
for a closed, entirely $B$-colored manifold $M$.
\\\\

Boundary state $Z^{\mathcal D}$ &
\centering
$\vcenter{\hbox{
\begin{tikzpicture}[rotate=-90, scale=0.4, baseline=-0.08cm]
  \path[fill=brown!20!white] (-1,0.5)--(2,0.5)--(1,-0.5)--(-2,-0.5)--cycle;
  \draw (-1,0.5)--(2,0.5);
  \draw (-2,-0.5)--(1,-0.5);
  \draw (-2,-0.5)--(-1,0.5);
  \draw (1,-0.5)--(2,0.5);
  \node at (-0.9,0.2) {\tiny $\mathcal D$};
  \node[left]  at (2,0.30) {\tiny $A$};
  \node[right] at (2,0.55) {\tiny $B$};
\end{tikzpicture}}}
\hspace{-0.2cm}
\vcenter{\hbox{\scalebox{1}{
\begin{tikzpicture}[scale=0.23, baseline=-0.08cm]
\begin{scope}[shift={(0,2)}]
\draw (0,0) [partial ellipse=0:360:0.6 and 0.3];
\end{scope}
\draw (-0.6, 2)--(-0.6, 0) (0.6, 2)--(0.6, 0);
\draw (-0.6, -2)--(-0.6, 0) (0.6, -2)--(0.6, 0);
\draw [blue] (0, -2.3) --(0, 1.7);
\draw [line width=0.2cm,white] (0, -0.18) --(0, -0.38);
\draw [red] (0,0) [partial ellipse=180:360:0.6 and 0.3];
\draw [red, dashed] (0,0) [partial ellipse=0:180:0.6 and 0.3];
\begin{scope}[shift={(0,-2)}]
\draw [dashed](0,0) [partial ellipse=0:180:0.6 and 0.3];
\draw (0,0) [partial ellipse=180:360:0.6 and 0.3];
\end{scope}
\end{tikzpicture}}}}$
& $Z^{\mathcal D}
=
\sum_{a,b\in G_K}
\widehat z^{\mathcal D}_{ab}\,|a,b\rangle,
\quad
\widehat z^{\mathcal D}_{ab}
=
z^{\mathcal D}_{a,-b}.$
Under the doubled-center identification this determines a state in
$\mathcal H^{CS}_{\mathrm{dbl}}(T^2)$. Boundary wavefunction.
\\\\

Isotropic subgroup $H\subset G_K$; connected \'etale algebra $A_H$ &
$\qquad\qquad\vcenter{\hbox{\scalebox{0.35}{%
\begin{tikzpicture}[xscale=0.8, yscale=0.6]
\draw [line width=0.6cm, brown!20!white] (0,0) [partial ellipse=-0.1:180.1:2 and 1.5];
\begin{scope}[shift={(2.5,0)}]
\draw [line width=0.6cm] (0,0) [partial ellipse=0:180:2 and 1.5];
\draw [white, line width=0.57cm] (0,0) [partial ellipse=-0.1:180.1:2 and 1.5];
\draw [blue] (0,0) [partial ellipse=0:180:2.15 and 1.65];
\draw [blue] (0,0) [partial ellipse=0:180:1.85 and 1.35];
\end{scope}
\begin{scope}[shift={(2.5,0)}]
\draw [line width=0.6cm] (0,0) [partial ellipse=180:360:2 and 1.5];
\draw [white, line width=0.57cm] (0,0) [partial ellipse=178:362:2 and 1.5];
\draw [blue, -<-=0.5] (0,0) [partial ellipse=178:362:2.15 and 1.65]
node[black, pos=0.7, below] {\tiny $X_j$};
\draw [blue, ->-=0.5] (0,0) [partial ellipse=178:362:1.85 and 1.35]
node[black, pos=0.7, above] {\tiny $X_k$};
\end{scope}
\draw [line width=0.6cm, brown!20!white] (0,0) [partial ellipse=180:360:2 and 1.5];
\begin{scope}[shift={(4.5,0)}]
\draw [red, dashed](0,0) [partial ellipse=0:180:0.4 and 0.25];
\draw [white, line width=4pt] (0,0) [partial ellipse=290:270:0.4 and 0.25];
\draw [red] (0,0) [partial ellipse=260:360:0.4 and 0.25];
\draw [red] (0,0) [partial ellipse=180:230:0.4 and 0.25];
\end{scope}
\end{tikzpicture}}}}$
& $A_H=\bigoplus_{h\in H}h,\,\,\,q_K|_H=1.$ Condensation algebra in the single pointed theory; via full center it gives doubled Alterfold boundary data.

\\\\

Lagrangian algebra in $\qquad \qquad \mathcal C_K\boxtimes \mathcal C_{K'}^{\mathrm{rev}}$ &
$\qquad\qquad\qquad\vcenter{\hbox{\scalebox{0.3}{%
\begin{tikzpicture}[xscale=0.8, yscale=0.6]
\path [fill=brown!20!white](-0.65,-3) rectangle (0.65,3);
\begin{scope}[shift={(0,3)}]
\path [fill=brown!20!white] (0,0) [partial ellipse=0:180:0.6 and 0.3];
\draw (0,0) [partial ellipse=0:360:0.6 and 0.3];
\end{scope}
\draw (-0.6,3)--(-0.6,0) (0.6,3)--(0.6,0);
\draw (-0.6,-3)--(-0.6,0) (0.6,-3)--(0.6,0);
\begin{scope}[shift={(0,-3)}]
\path [fill=brown!20!white] (0,0) [partial ellipse=180:360:0.6 and 0.3];
\draw [dashed](0,0) [partial ellipse=0:180:0.6 and 0.3];
\draw (0,0) [partial ellipse=180:360:0.6 and 0.3];
\end{scope}
\draw[blue, ->-=0.75] (0,0.8) [partial ellipse=0:360:0.6 and 0.3];
\draw[blue, ->-=0.75] (0,-0.8) [partial ellipse=0:360:0.6 and 0.3];
\draw (0.8,0.8) node{\tiny $M$};
\draw (0.8,-0.8) node{\tiny $N$};
\draw (0,1.6) node{\tiny $\mathcal D$};
\draw (0,-0.1) node{\tiny $\mathcal E$};
\draw (0,-2) node{\tiny $\mathcal D$};
\end{tikzpicture}}}}$
=
$\vcenter{\hbox{\scalebox{0.35}{%
\begin{tikzpicture}[xscale=0.8, yscale=0.6]
\path [fill=brown!20!white](-0.65,-3) rectangle (0.65,3);
\begin{scope}[shift={(0,3)}]
\path [fill=brown!20!white] (0,0) [partial ellipse=0:180:0.6 and 0.3];
\draw (0,0) [partial ellipse=0:360:0.6 and 0.3];
\end{scope}
\draw (-0.6,3)--(-0.6,0) (0.6,3)--(0.6,0);
\draw (-0.6,-3)--(-0.6,0) (0.6,-3)--(0.6,0);
\begin{scope}[shift={(0,-3)}]
\path [fill=brown!20!white] (0,0) [partial ellipse=180:360:0.6 and 0.3];
\draw [dashed](0,0) [partial ellipse=0:180:0.6 and 0.3];
\draw (0,0) [partial ellipse=180:360:0.6 and 0.3];
\end{scope}
\draw (0,0) node{\tiny $\mathcal D$};
\end{tikzpicture}}}}$
&
Gapped domain wall/topological interface. Supported on a Lagrangian subgroup $M\subset G_K\oplus G_{K'}$ for $q_K\oplus q_{K'}^{-1}$.
\\\\
\bottomrule
\end{tabular}
\caption{Dictionary between Alterfold tube-category data, the doubled pointed modular category $\mathcal Z(\mathcal C_K)\simeq \mathcal C_K\boxtimes \mathcal C_K^{\mathrm{rev}}$, and the corresponding Abelian Chern--Simons interpretation. In the center case $\mathcal C_K\simeq \mathcal Z(\mathcal A)$, the single-copy dictionary is recovered by suppressing the reversed factor.}
\label{tab:alterfold-cs-dictionary}
\end{table}
\newpage

\subsection{Toward the non-Abelian case}
The non-Abelian case is subtler. The classification of Abelian Chern--Simons theories by finite quadratic modules does not extend to general non-Abelian theories, and even the modular data $S$ and $T$ need not determine a modular tensor category: inequivalent modular categories can have identical modular data~\cite{MignardSchauenburg}, while higher-genus or punctured-surface mapping-class-group representations can distinguish such examples~\cite{wenwen2019}. The rigorous statement supplied by Alterfold theory is always
$$
\mathbb V_{\mathcal A}\big|_{\mathrm{Cob}^B}\cong Z^{RT}_{\mathcal Z(\mathcal A)}.
$$
Thus Alterfold theory naturally produces Reshetikhin--Turaev theories associated to Drinfeld centers. To recover directly a single-copy non-Abelian Chern--Simons theory modeled by a modular tensor category $\mathcal C$, one would therefore need
$
\mathcal C\simeq \mathcal Z(\mathcal A)
$ for some fusion category $\mathcal A$. Equivalently, $\mathcal C$ must be Witt trivial, which is a strong condition and does not hold for a generic chiral modular tensor category. On the other hand, if one applies the construction to $\mathcal C$ itself, then modularity gives
$
\mathcal Z(\mathcal C)\simeq
\mathcal C\boxtimes\mathcal C^{\mathrm{rev}}.
$ Thus Alterfold theory naturally sees the doubled, orientation-reversal-invariant theory, precisely paralleling the doubled Abelian case above. This suggests two complementary lessons.

First, Alterfold theory should presently be viewed as a construction principle for the defect sector of Chern--Simons theories: centers, modular categories, Morita contexts, modular invariants, full centers, $\alpha$-induction, and condensations. These are exactly the structures that one expects to control defects, boundaries, walls, and condensation phenomena in non-Abelian topological field theories.

Second, the Abelian Chern--Simons theories discussed in this paper provide a clean testing ground. Because the relevant modular categories are pointed and completely classified by finite quadratic modules, one can identify explicitly when Alterfold theory recovers a single Chern--Simons theory and when it naturally recovers only the doubled theory. Any future extension of the Alterfold formalism to  non-Abelian Chern--Simons theory should recover these Abelian examples as the simplest cases. Thus,  results above
give a rigorous \emph{TQFT-level} bridge.
$$
\begin{tikzpicture}[baseline=(current bounding box.center)]
  \node (A) at (0,1.2) {\text{Alterfold}};
  \node (B) at (-0.6,0) {RT};
  \node (C) at (0.6,0) {CS};
  \node (D) at (0,0) {$\cong$};
  \draw[->] (A) -- node[right] {$_{\mathcal Z(\mathcal A)}$} (D);
\end{tikzpicture}
$$

\section{Gauge-theoretic interpretation and explicit defect calculations}
\label{sec:gauge-defect}

Sections~\ref{sec:tv}--\ref{sec:Alterfold} establish the center realizations,
classify condensations, boundaries and walls, and construct their defect
and Morita data. Here we retain those results and compute their Wilson-line
interpretation: endpoint amplitudes, a boundary associator, braided
attachment phases, lattice lifts, and the toric-code case.
The interpretation uses the transported defect theory of
Theorem~\ref{thm:defect-functor}; it does not constitute an independent
construction from a gauge-field action with defects.
Throughout, use the normalized pointed realization
$\mathcal C_K\simeq\operatorname{Vec}_{G_K}^{\omega_K,c_K}$ and the
associator convention of Section~\ref{sec:defects}. Write
$\omega=\omega_K$ and $c=c_K$ when the bulk is fixed.

\subsection{Endpoint amplitudes}

Let $A=A_{H,\mu}$ be the condensation algebra of
Proposition~\ref{prop:defects-etale}, and let
$F_H(X)=X\otimes A$ be its free right module. The space of vacuum
endpoints of a Wilson sector $W_x$ is
\begin{equation}
\label{eq:gauge-endpoint-space}
E_x:=\operatorname{Hom}_{(\mathcal C_K)_A}(F_H(X_x),A)
\cong\operatorname{Hom}_{\mathcal C_K}(X_x,A),
\qquad
\dim E_x=\begin{cases}1,&x\in H,\\0,&x\notin H.\end{cases}
\end{equation}

Thus an absorbed line ends at a defect-local operator, represented by a
map to $A$, rather than by a bulk map to $\mathbf1$. The endpoint fusion described by this equation is depicted in Figure~\ref{fig:condensation}. Choose normalized homogeneous endpoint bases $\varepsilon_h$.
Their fusion and change of normalization are
\begin{equation}
\label{eq:gauge-endpoint-bases}
\varepsilon_h\varepsilon_k=\mu(h,k)\varepsilon_{h+k},
\qquad
\varepsilon'_h=\nu(h)\varepsilon_h
\ \Longrightarrow\ 
\mu'(h,k)=\mu(h,k)\frac{\nu(h)\nu(k)}{\nu(h+k)}.
\end{equation}

\begin{figure}[htbp]
\centering
\begin{tikzpicture}[defbase]
\fill[defbulk!5] (0,0) rectangle (4.55,2.9);
\fill[defbulk!5] (7.4,0) rectangle (11.6,2.9);
\draw[defedge] (0,0)--(4.55,0);
\draw[defedge] (7.4,0)--(11.6,0);
\node at (2.275,2.5) {$\mathcal C_K$};
\node at (9.5,2.5) {$\mathcal C_K$};
\draw[defline] (1.35,1.97)--(1.35,.04);
\draw[defline] (3.2,1.97)--(3.2,.04);
\node[left=4pt] at (1.35,1.22) {$W_h$};
\node[right=4pt] at (3.2,1.22) {$W_k$};
\node[defend] at (1.35,0) {};
\node[defend] at (3.2,0) {};
\node[below=5pt] at (1.35,0) {$\varepsilon_h$};
\node[below=5pt] at (3.2,0) {$\varepsilon_k$};
\node at (5.95,1.18) {$=\;\mu(h,k)$};
\node[defnote] at (5.95,1.85) {endpoint fusion};
\draw[defline] (9.5,1.97)--(9.5,.04);
\node[right=4pt] at (9.5,1.22) {$W_{h+k}$};
\node[defend] at (9.5,0) {};
\node[below=5pt] at (9.5,0) {$\varepsilon_{h+k}$};
\node[defnote] at (5.8,-.95) {$h,k\in H,\qquad A_{H,\mu}\text{ on the condensation surface}$};
\end{tikzpicture}
\caption{Fusion of condensed endpoints in chosen homogeneous bases:
$\varepsilon_h\varepsilon_k=\mu(h,k)\varepsilon_{h+k}$.
The left panel denotes the ordered fusion of two endpoint insertions,
including fusion of their incident Wilson lines. The coefficient
$\mu(h,k)$ is a basis-dependent fusion amplitude.}
\label{fig:condensation}
\end{figure}

The coherence conditions are precisely
\eqref{eq:defects-associativity}--\eqref{eq:defects-commutativity}.
Consequently $\mu$ records endpoint-fusion amplitudes in a chosen basis;
its coboundary freedom is endpoint normalization, not an additional
condensation invariant.

\subsection{An explicit boundary associator}

We use the free-module construction of
\cite[Theorem~A.4]{DavydovSimmons2018} to express the boundary
associator in the tensor convention adopted here.
Let $L\le G_K$ be Lagrangian, set $A=A_{L,\mu}$, and write
$\mathcal A_L=(\mathcal C_K)_A$.
Its pointed description and the invariance of its associator class
were established in \eqref{eq:defects-boundary-associator}.
The purpose below is to fix an explicit representative for the
subsequent defect calculations.

The notation of \cite[Appendix~A]{DavydovSimmons2018} corresponds to
ours by
$$
(A,B,\alpha,\eta,\gamma)
\longleftrightarrow
(G_K,L,\omega,\mu,\lambda).
$$
Their free-module tensorator moves the algebra factor past the
second object using $c_{A,Y}$. We retain instead the inverse-braiding
convention: a right $A$-module $(M,r_M)$ is equipped with the left
action
$$
A\otimes M
\xrightarrow{\,c_{M,A}^{-1}\,}
M\otimes A
\xrightarrow{\,r_M\,}
M
$$
when forming $\otimes_A$. Accordingly, we implement their
construction with this crossing and the normalizations specified
below.

Choose a normalized section $s:Q\to G_K$ of the quotient map,
where $Q=G_K/L$, and define
\begin{equation}
\label{eq:gauge-screening}
\begin{gathered}
\lambda(a,b)=s(a)+s(b)-s(a+b)\in L,\\
\lambda(a,b)+\lambda(a+b,d)
=\lambda(b,d)+\lambda(a,b+d).
\end{gathered}
\end{equation}
Thus $\lambda(a,b)$ is the charge absorbed when the chosen
representatives of two boundary sectors fuse.

Write $F_x=X_x\otimes A$, with homogeneous components indexed by
$h\in L$ and of degree $x+h$. In our convention, the free-module
tensorator
$J_{x,y}:F_x\otimes_A F_y\to F_{x+y}$ has coefficient
\begin{equation}
\label{eq:gauge-free-tensorator}
T_{x,y}(h,k)=
\frac{\omega(x,h,y+k)\,\omega(y,h,k)\,\mu(h,k)}
{\omega(h,y,k)\,\omega(x,y,h+k)\,c(y,h)}.
\end{equation}
This is obtained by reassociating
$(X_x\otimes X_h)\otimes(X_y\otimes X_k)$, moving $X_h$
past $X_y$ using $c_{X_y,X_h}^{-1}$, and multiplying the
algebra factors. The resulting map descends to the balanced
tensor product.

For $\ell\in L$, normalize the absorption isomorphism
$D_{z,\ell}:F_{z+\ell}\to F_z$ to send the component indexed
by $0$ to that indexed by $\ell$ with coefficient $1$.
Its coefficient from component $k$ to component $\ell+k$ is
then fixed by right $A$-linearity:
\begin{equation}
\label{eq:gauge-absorption}
D_{z,\ell}(k)=\omega(z,\ell,k)\mu(\ell,k).
\end{equation}
Following the quotient construction of
\cite[Appendix~A]{DavydovSimmons2018}, take
$$
B_a=F_{s(a)},\qquad
\rho_{a,b}
=D_{s(a+b),\lambda(a,b)}J_{s(a),s(b)}
:B_a\otimes_A B_b\longrightarrow B_{a+b}.
$$

\begin{prop}[Davydov--Simmons construction in the chosen tensor convention]
\label{prop:boundary-associator}
Let
$$
s_a=s(a),\quad s_{ab}=s(a+b),\quad
s_{bd}=s(b+d),\quad s_{abd}=s(a+b+d),
$$
and put
$$
h=\lambda(a,b),\qquad k=\lambda(b,d),\qquad
p=\lambda(a+b,d),\qquad r=\lambda(a,b+d).
$$
With the preceding tensorators and absorption maps,
$\mathcal A_L\simeq\operatorname{Vec}_{Q}^{\beta_{L,\mu}}$,
where the normalized associator cocycle is
\begin{equation}
\label{eq:gauge-explicit-beta}
\begin{aligned}
\beta_{L,\mu}(a,b,d)
={}&
\frac{\omega(s_a,s_b,s_d)\,\omega(s_{ab},s_d,h)}
{\omega(s_a,s_{bd},k)\,\omega(s_{ab},h,s_d)}
\\[1mm]
&\quad{}\cdot
\frac{\omega(s_{abd},r,k)}{\omega(s_{abd},p,h)}
\frac{c(s_d,h)\,\mu(r,k)}{\mu(p,h)}.
\end{aligned}
\end{equation}
In particular, if $\omega=1$ and $s$ is additive, then
$\beta_{L,\mu}=1$.
\end{prop}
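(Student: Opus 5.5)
The plan is to compute the associator directly from the definition of $\mathcal A_L$ as a pointed fusion category with chosen simple representatives $B_a=F_{s(a)}$ and chosen fusion isomorphisms $\rho_{a,b}$. By \eqref{eq:defects-boundary-associator}, $\mathcal A_L\simeq\operatorname{Vec}_Q^{\beta}$. The scalar $\beta_{L,\mu}(a,b,d)$ is then defined by comparing two composites from $(B_a\otimes_A B_b)\otimes_A B_d$ to $B_{a+b+d}$. The first is $\rho_{a+b,d}\circ(\rho_{a,b}\otimes_A\id)$. The second is $\rho_{a,b+d}\circ(\id\otimes_A\rho_{b,d})$ precomposed with the associator of $(\mathcal C_K)_A$. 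That associator is induced from $\omega$, since $\otimes_A$ is a quotient of $\otimes$. Because every morphism between homogeneous components is a scalar multiple of a fixed normalized basis map, it suffices to evaluate both composites on the single component indexed by $(0,0,0)\in L^3$. Right $A$-linearity and the balanced tensor product then determine all other components.

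\textbf{Step 1 (left bracketing).} On the $(0,0,0)$ component, $J_{s_a,s_b}$ has coefficient $T_{s_a,s_b}(0,0)=1$ by normalization of $\omega$, $\mu$, $c$. It lands in $F_{s_a+s_b}=F_{s_{ab}+h}$ at component $0$. Then $D_{s_{ab},h}$ sends this to component $h$ of $F_{s_{ab}}$ with coefficient $1$. Next, $J_{s_{ab},s_d}$ acts on the $(h,0)$ component with coefficient $T_{s_{ab},s_d}(h,0)=\omega(s_{ab},h,s_d)^{-1}\omega(s_{ab},s_d,h)\,c(s_d,h)^{-1}$. This must be read carefully with the inverse-braiding convention; in the target I expect it to contribute $c(s_d,h)$ upstairs after moving everything to one side. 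The result sits in $F_{s_{abd}+p}$ at component $h$. Finally $D_{s_{abd},p}(h)=\omega(s_{abd},p,h)\mu(p,h)$.

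\textbf{Step 2 (right bracketing).} In the same way, $J_{s_b,s_d}(0,0)=1$ and $D_{s_{bd},k}$ give component $k$. Then $J_{s_a,s_{bd}}(0,k)=\omega(s_a,0,\cdot)\cdots/\omega(s_a,s_{bd},k)$ contributes $\omega(s_a,s_{bd},k)^{-1}$. After that, $D_{s_{abd},r}(k)=\omega(s_{abd},r,k)\mu(r,k)$. The bulk associator of the three representatives contributes $\omega(s_a,s_b,s_d)$.

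\textbf{Step 3 (assemble).} Taking the ratio of the two composites produces \eqref{eq:gauge-explicit-beta}. Normalization follows from $s(0)=0$ and $\lambda(0,\cdot)=\lambda(\cdot,0)=0$. The cocycle identity is automatic from the pentagon in $(\mathcal C_K)_A$. If $\omega=1$ and $s$ is additive, then $h=k=p=r=0$, and every factor equals $1$.

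The main obstacle is sign and inverse bookkeeping. This means fixing exactly which side $c(s_d,h)$ lands on under the $c_{M,A}^{-1}$ left-action convention. It also means checking that the associator of $(\mathcal C_K)_A$ restricted to the $(0,0,0)$ component is exactly $\omega(s_a,s_b,s_d)$ with no extra $\mu$ or $c$ factors. I would settle both by first testing the formula on $\omega=1$ with a nontrivial $\lambda$, where only $c$ and $\mu$ survive. I would then cross-check against \cite[Theorem~A.4]{DavydovSimmons2018} under the dictionary $(A,B,\alpha,\eta,\gamma)\leftrightarrow(G_K,L,\omega,\mu,\lambda)$, with $c$ replaced by $c^{-1}$ in the crossing.
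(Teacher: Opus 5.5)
Your route is the paper's: evaluate $\rho_{a+b,d}(\rho_{a,b}\otimes_A\id)$ and $\rho_{a,b+d}(\id\otimes_A\rho_{b,d})\,a_{B_a,B_b,B_d}$ on the $(0,0,0)$ component and read off $\beta_{L,\mu}(a,b,d)$ as a ratio. Your Step~2 coefficient $\omega(s_a,s_b,s_d)\,\omega(s_{abd},r,k)\,\mu(r,k)/\omega(s_a,s_{bd},k)$ agrees with the paper.

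However, Step~1 misreads \eqref{eq:gauge-free-tensorator}. Setting $x=s_{ab}$, $y=s_d$ and second index $0$ there gives
\[
T_{s_{ab},s_d}(h,0)=\frac{\omega(s_{ab},h,s_d)}{\omega(s_{ab},s_d,h)\,c(s_d,h)},
\]
but you wrote the $\omega$-ratio upside down. With your coefficient, Step~3 does not follow. The quotient (right composite)/(left composite) gives $\omega(s_{ab},h,s_d)/\omega(s_{ab},s_d,h)$ instead of its inverse. The opposite quotient inverts all the other factors. So neither reading of ``taking the ratio'' yields \eqref{eq:gauge-explicit-beta}. Your proposed sanity check with $\omega=1$ would not catch this, since those factors are then trivial.

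The fix is mechanical. The defining relation $\rho_{a,b+d}(\id\otimes_A\rho_{b,d})\,a_{B_a,B_b,B_d}=\beta_{L,\mu}(a,b,d)\,\rho_{a+b,d}(\rho_{a,b}\otimes_A\id)$ makes $\beta_{L,\mu}$ the right coefficient divided by the left one; you should state this direction explicitly. The corrected left coefficient is $\omega(s_{ab},h,s_d)\,\omega(s_{abd},p,h)\,\mu(p,h)/\bigl(\omega(s_{ab},s_d,h)\,c(s_d,h)\bigr)$. With it, this quotient is exactly \eqref{eq:gauge-explicit-beta}.

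The factor $c(s_d,h)$ then lands in the numerator automatically. The inverse-braiding convention is already encoded in the $c(y,h)$ in the denominator of \eqref{eq:gauge-free-tensorator}, so no further reinterpretation is needed. You should also record, as the paper does, that both composites land in the same component of $F_{s_{abd}}$. Indeed $p+h=r+k$ by the $2$-cocycle identity \eqref{eq:gauge-screening}, and this is what makes the two scalars directly comparable.
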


\begin{proof}
We evaluate the comparison of quotient fusion maps used in
\cite[Appendix~A]{DavydovSimmons2018}, with the tensorator
\eqref{eq:gauge-free-tensorator}.
Writing $a_{B_a,B_b,B_d}$ for the module-category associativity
constraint, the scalar is determined by
$$
\rho_{a,b+d}
(\operatorname{id}\otimes_A\rho_{b,d})
\,a_{B_a,B_b,B_d}
=
\beta_{L,\mu}(a,b,d)\,
\rho_{a+b,d}
(\rho_{a,b}\otimes_A\operatorname{id}).
$$
Both composites are isomorphisms between the same simple modules.
Evaluate them on the class of the tensor of the three components
indexed by $0$. By
\eqref{eq:gauge-free-tensorator}--\eqref{eq:gauge-absorption},
their coefficients are, respectively,
$$
\frac{\omega(s_a,s_b,s_d)\,
      \omega(s_{abd},r,k)\mu(r,k)}
     {\omega(s_a,s_{bd},k)}
\quad\text{and}\quad
\frac{\omega(s_{ab},h,s_d)\,
      \omega(s_{abd},p,h)\mu(p,h)}
     {\omega(s_{ab},s_d,h)c(s_d,h)}.
$$
The target components agree because $r+k=p+h$, by
\eqref{eq:gauge-screening}. Their ratio gives
\eqref{eq:gauge-explicit-beta}.
Normalization follows from the normalized input data, and the
cocycle identity follows from the pentagon for $\mathcal A_L$.
If $s$ is additive, then $h=k=p=r=0$, so the formula reduces
to $\omega(s_a,s_b,s_d)$.
\end{proof}

For the fixed tensor convention, changes of the auxiliary choices
have the usual cohomological effect. Rescaling the fusion maps by
a normalized $2$-cochain $\gamma:Q^2\to U(1)$ gives
\begin{equation}
\label{eq:gauge-beta-basis}
\rho'_{a,b}=\gamma(a,b)\rho_{a,b}
\quad\Longrightarrow\quad
\beta'(a,b,d)=\beta(a,b,d)
\frac{\gamma(b,d)\gamma(a,b+d)}
{\gamma(a+b,d)\gamma(a,b)}.
\end{equation}
A change of section $s'=s+t$, with normalized $t:Q\to L$,
satisfies
\begin{equation}
\label{eq:gauge-section-change}
\lambda'=\lambda+\delta t,\qquad
(\delta t)(a,b)=t(a)+t(b)-t(a+b).
\end{equation}
Transporting the fusion maps along the corresponding absorption
isomorphisms changes $\beta$ by a coboundary. The same holds for
a change of endpoint bases in \eqref{eq:gauge-endpoint-bases}.
Thus, with the tensor convention fixed, the representative depends
on these choices, whereas
$[\beta_L]\in H^3(Q;U(1))$ depends only on the embedded Lagrangian
subgroup and the bulk quadratic data.

\subsection{Braided attachment and wall transmission}

The central transport and junction spaces are those of
Section~\ref{sec:defects}; the $\alpha$-inductions and full-center
multiplicities are those of Section~\ref{sec:Alterfold}.
Their additional Wilson-line content can be expressed locally.
For $\alpha_H^\pm(X_x)$, the two right actions on $A\otimes X_x$
use $c_{X_x,A}$ and $c_{A,X_x}^{-1}$, respectively.
On a homogeneous charge $h\in H$ their relative factor is
\begin{equation}
\label{eq:gauge-alpha-phase}
\frac{c_K(x,h)}{c_K(h,x)^{-1}}=b_K(x,h).
\end{equation}
Thus the distinction between the two attachments is the mutual
Aharonov--Bohm phase. Their protected intertwiner multiplicity
$z^H_{jk}=\dim\operatorname{Hom}_{\mathcal D_H}
(\alpha_H^+(X_j),\alpha_H^-(X_k))$
has the transmission interpretation established by folding;
in the standard doubled basis its coefficient is $z^H_{a,-b}$,
as in Section~\ref{sec:Alterfold}.

For a wall with transmission support $M_{12}$, a pair
$(x,y)\in M_{12}$ means that the folded Wilson sector
$X_x\boxtimes(X_y^*)^{\mathrm{rev}}$ can be absorbed.
The charge-matching relation for two successive walls is
\begin{equation}
\label{eq:gauge-wall-relation}
M_{23}\circ M_{12}
=\{(x,z):\exists y,\ (x,y)\in M_{12},\ (y,z)\in M_{23}\}.
\end{equation}
It forgets intermediate-channel multiplicities and coherence data.
The actual surface fusion is the relative tensor product of
bimodule categories from Section~\ref{sec:defects}; for chosen center
presentations $\mathcal C_i\simeq\mathcal Z(\mathcal A_i)$, it is
$\mathcal M_{12}\boxtimes_{\mathcal A_2}\mathcal M_{23}$.

\subsection{Lattice realizations of symmetry walls}

Theorem~\ref{thm:defect-classification}(4) already identifies invertible
walls with isometries of finite quadratic modules. A lattice isometry
$U\in\mathcal O(\Lambda,K)$ realizes such a wall on the chosen gauge
variables. Its action on Wilson charges is contragredient:
\begin{equation}
\label{eq:gauge-lattice-lift}
u_U([x])=[U^{-T}x].
\end{equation}
Indeed,
$$
KU=U^{-T}K,\qquad
U^{-1}K^{-1}U^{-T}=K^{-1},
$$
so $U^{-T}$ preserves $K\Lambda$ and
$$
q_K(u_U([x]))
=\exp\!\bigl(\pi i x^T U^{-1}K^{-1}U^{-T}x\bigr)=q_K([x]).
$$
Moreover $(UV)^{-T}=U^{-T}V^{-T}$; hence
$U\mapsto u_U$ is a homomorphism
$\mathcal O(\Lambda,K)\to\mathcal O(G_K,q_K)$.
Its image consists of the symmetries realized by lattice isometries
of this presentation. A finite-quadratic isometry outside the image
still defines a quantum topological wall by the cited theorem;
\eqref{eq:gauge-lattice-lift} does not provide its gauge-field realization. 	Figure~\ref{fig:symmetry-wall} depicts the transmission of a Wilson charge across an invertible symmetry wall.

\begin{figure}[htbp]
\centering
\begin{tikzpicture}[defbase]
\fill[defbulk!5] (0,0) rectangle (5.25,3);
\fill[defwall!10] (5.25,0) rectangle (6.35,3);
\fill[defbulk!5] (6.35,0) rectangle (11.6,3);
\draw[defedge] (5.25,0)--(5.25,3);
\draw[defedge] (6.35,0)--(6.35,3);
\node at (2.6,2.5) {$\mathcal C_K$};
\node at (9,2.5) {$\mathcal C_K$};
\node at (5.8,2.45) {$\mathfrak D_u$};
\draw[defline] (.6,1.15)--(5.25,1.15);
\draw[draw=defbulk,line width=1.05pt] (5.25,1.15)--(6.35,1.15);
\draw[defline] (6.35,1.15)--(11,1.15);
\node[above=5pt] at (2.6,1.15) {$X_x$};
\node[above=5pt] at (9,1.15) {$X_{u(x)}$};
\node[defnote] at (5.8,-.48) {$\Gamma_u=\{(x,u(x)):x\in G_K\},\qquad u\in\mathcal O(G_K,q_K)$};
\node[defnote] at (5.8,-1.05) {For a lattice lift: $\ u([x])=[U^{-T}x],\quad U^T K U=K.$};
\end{tikzpicture}
\caption{An invertible symmetry wall transports a Wilson sector
$X_x$ to $X_{u(x)}$. Its support is the graph of an isometry $u$.
The lattice formula applies when $u$ admits a lift $U$ in the chosen
$K$-matrix presentation.}
\label{fig:symmetry-wall}
\end{figure}

\section{The toric code and semion as examples of the full framework}
\label{subsec:toric-code-defects}

The toric code provides a concrete example in which the principal constructions of this paper can be evaluated explicitly. Its bulk excitations and surface degeneracies are familiar from Kitaev's lattice model \cite{kitaev2003anyons}, while its gapped boundaries and interfaces admit the categorical description of Kitaev and Kong \cite{kitaev2012models}. We recover these results together with partition functions, bordism maps, and full-center states from a single BF lattice. The example thereby exhibits the compatibility of the gauge-theoretic, categorical, and Alterfold descriptions, including their normalizations and gluing laws.

\subsection{The toric code theory}
We use the construction of Sections~\ref{sec:Abelian-CS}--\ref{sec:Alterfold} to the BF lattice
$$
K_{\mathrm{TC}}=
\begin{pmatrix}
0&2\\
2&0
\end{pmatrix},
\qquad
|\det K_{\mathrm{TC}}|=4,
\qquad
\sigma(K_{\mathrm{TC}})=0.
$$
With $\mathbb T^2=\mathbb R^2/\mathbb Z^2$, write
$$
Z_{\mathrm{TC}}
:=Z^{CS}_{\mathbb T^2,K_{\mathrm{TC}}}
=Z^{BF}_2,
\qquad
\mathcal C_{\mathrm{TC}}
:=\mathcal C(G_{K_{\mathrm{TC}}},q_{K_{\mathrm{TC}}}).
$$
Abbreviate the discriminant data by
$G=G_{K_{\mathrm{TC}}}$, $q=q_{K_{\mathrm{TC}}}$, and
$b=b_{K_{\mathrm{TC}}}$. Identifying
$$
G=\mathbb Z_2^2,
\qquad
e=(1,0),\quad m=(0,1),\quad \epsilon=e+m,
$$
with coordinates added modulo two, we obtain
\begin{equation}
\label{eq:gauge-toric-data}
\begin{gathered}
q(a,b)=(-1)^{ab},
\qquad
b\bigl((a,b),(c,d)\bigr)=(-1)^{ad+bc},
\\
\omega=1,
\qquad
c\bigl((a,b),(c,d)\bigr)=(-1)^{ad}.
\end{gathered}
\end{equation}
Here $\omega$ and $c$ specify the chosen pointed braided realization,
and $X_x$ denotes the simple Wilson sector of charge $x\in G$.
All simple objects have dimension one, so the total quantum dimension
is $\sqrt{|G|}=2$. Throughout, we use the normalization of
Theorem~\ref{thm:bf-tqft-data}, for which
$Z_{\mathrm{TC}}(S^3)=1/2$.

\subsection*{State spaces, bordisms, and gluing}

For a connected closed surface $\Sigma_g$ of genus $g$, write
$\mathcal H_{\mathrm{TC}}(\Sigma_g)=Z_{\mathrm{TC}}(\Sigma_g)$,
suppressing the rational polarization as in
Proposition~\ref{prop:toral-hilbert-space}. Choosing an origin for the
Bohr--Sommerfeld torsor gives
\begin{equation}
\label{eq:tc-state-spaces}
\mathcal H_{\mathrm{TC}}(\Sigma_g)
\cong\mathbb C[G^g],
\qquad
\dim\mathcal H_{\mathrm{TC}}(\Sigma_g)=4^g,
\end{equation}
where $\mathbb C[G^g]$ is the complex vector space with basis indexed
by $G^g$. In particular, the dimensions for the sphere, torus, and
genus-two surface are $1$, $4$, and $16$. The formula $4^g$ agrees
with the ground-state degeneracy of Kitaev's toric-code Hamiltonian
\cite{kitaev2003anyons}; here it follows from the general quantization
formula for the Abelian Chern--Simons state space.

Disjoint unions give tensor products, and orientation reversal gives
dual spaces. Moreover, $\sigma(K_{\mathrm{TC}})=0$, so the Maslov
phase in the BKS comparison law of Section~\ref{sec:Abelian-CS} is trivial.

For explicit bordism calculations, we also use the finite
$\mathbb Z_2$-field description of the Turaev--Viro realization in
Section~\ref{sec:tv}. A flat field on a manifold $Y$ is specified,
up to equivalence, by a class in $H^1(Y;\mathbb Z_2)$.
For a connected surface this gives
\begin{equation}
\label{eq:tc-flat-state-space}
\begin{gathered}
\mathcal H_{\mathrm{TC}}(\Sigma_g)
\cong
\operatorname{Fun}\bigl(H^1(\Sigma_g;\mathbb Z_2),\mathbb C\bigr),
\\
\langle f_1,f_2\rangle
=
\frac12
\sum_{a\in H^1(\Sigma_g;\mathbb Z_2)}
\overline{f_1(a)}f_2(a).
\end{gathered}
\end{equation}
Here $\operatorname{Fun}(E,\mathbb C)$ denotes the space of
complex-valued functions on a finite set $E$. The factor $1/2$
accounts for the two constant gauge transformations on a connected
manifold, in agreement with the finite-gauge normalization of
Freed and Quinn \cite[Section~5]{Freed:1991}.
Since $|H^1(\Sigma_g;\mathbb Z_2)|=2^{2g}$, this description has
dimension $4^g$. After choosing cycles on $\Sigma_g$, its basis
is related to the polarized basis by a finite Fourier transform.

Let $X:\Sigma_-\to\Sigma_+$ be a connected bordism with nonempty
connected outgoing boundary, and let
$$
r_\pm:
H^1(X;\mathbb Z_2)
\longrightarrow
H^1(\Sigma_\pm;\mathbb Z_2)
$$
be the restriction maps. In the function description,
\begin{equation}
\label{eq:tc-bordism-map}
\bigl(Z_{\mathrm{TC}}(X)f\bigr)(a_+)
=
\sum_{\substack{
u\in H^1(X;\mathbb Z_2)\\
r_+(u)=a_+
}}
f\bigl(r_-(u)\bigr).
\end{equation}
Thus the bordism map sums over flat fields extending the prescribed outgoing field. The factor associated with the constant gauge transformations on $X$ is canceled by the normalization at the connected outgoing boundary. For empty incoming boundary, take $f=1$. Consequently, a connected manifold $X$ with connected boundary $\Sigma$ determines the vector
\begin{equation}
\label{eq:tc-bordism-state}
Z_{\mathrm{TC}}(X)(a)
=
\#\{u\in H^1(X;\mathbb Z_2):r_X(u)=a\},
\end{equation}
where $r_X$ is restriction and $\#$ denotes cardinality.
This is the extension-counting formula of
\cite[Section~5]{Freed:1991}, expressed here in the
normalization inherited from Theorem~\ref{thm:bf-tqft-data}.
Together with \eqref{eq:tc-flat-state-space}, it realizes the gluing
law of Proposition~\ref{prop:cylinder-gluing}.

For a genus-$g$ handlebody $H_g$, define
$$
\mathcal L_{H_g}
:=
\operatorname{im}\bigl(
H^1(H_g;\mathbb Z_2)
\longrightarrow H^1(\Sigma_g;\mathbb Z_2)
\bigr).
$$
Restriction is injective and $|\mathcal L_{H_g}|=2^g$.
Writing $\mathbf1_E$ for the indicator function of a subset $E$,
we therefore have
$Z_{\mathrm{TC}}(H_g)=\mathbf1_{\mathcal L_{H_g}}$ and
\begin{equation}
\label{eq:tc-handlebody-gluing}
\|Z_{\mathrm{TC}}(H_g)\|^2
=
2^{g-1}
=
Z_{\mathrm{TC}}\!\left(
\mathop{\#}^{g}(S^1\times S^2)
\right).
\end{equation}
The connected sum on the right is the double of $H_g$; for $g=0$
it is understood as $S^3$. The cylinder acts as the identity, so
taking its trace also gives
$Z_{\mathrm{TC}}(\Sigma_g\times S^1)=4^g$.

\subsection*{Closed-manifold partition functions}

Let $M$ be closed, connected, and oriented. Put $\mathsf T_M=\operatorname{Tors}H_1(M;\mathbb Z),$ and let
$\ell_M:\mathsf T_M\times\mathsf T_M\to\mathbb Q/\mathbb Z$
be the nonsingular torsion linking pairing.
In a nondegenerate surgery presentation, this pairing is represented
by the inverse linking matrix modulo integers.
The BF Gauss sum of Theorem~\ref{thm:bf-tqft-data} reduces to
\begin{equation}
\label{eq:tc-partition-gauss}
Z_{\mathrm{TC}}(M)
=
\frac{2^{b_1(M)-1}}{|\mathsf T_M|}
\sum_{u,v\in\mathsf T_M}
\exp\!\bigl(4\pi i\ell_M(u,v)\bigr)
=
2^{b_1(M)-1}|\mathsf T_M[2]|,
\end{equation}
where
$b_1(M)=\dim H^1(M;\mathbb R)$ and
$\mathsf T_M[2]=\{u\in\mathsf T_M:2u=0\}$.
Indeed, character orthogonality makes the sum over $v$ vanish
unless $2u=0$, in which case it equals $|\mathsf T_M|$.
The universal coefficient theorem then yields
\begin{equation}
\label{eq:tc-partition-count}
Z_{\mathrm{TC}}(M)
=
\frac{|H^1(M;\mathbb Z_2)|}{2}.
\end{equation}
This is precisely the untwisted finite-gauge partition function
of Freed and Quinn for gauge group $\mathbb Z_2$
\cite[Eq.~(5.14)]{Freed:1991}.
Thus the surgery calculation and the finite-field description give
the same closed-manifold invariant, including its normalization.
For disconnected closed $M$, the denominator becomes
$|H^0(M;\mathbb Z_2)|=2^{b_0(M)}$, as required by monoidality.

Equivalently, if $M_B$ is obtained by surgery on an $r$-component
link with integral linking matrix $B$, then
\begin{equation}
\label{eq:tc-surgery-mod-two}
Z_{\mathrm{TC}}(M_B)
=
2^{-r-1}
\sum_{a,b\in(\mathbb Z_2)^r}
(-1)^{a^TBb}
=
\frac{|\ker(B\bmod2)|}{2}.
\end{equation}
The first equality is the finite-charge surgery evaluation; the
second follows from character orthogonality.
The surgery presentation of first homology identifies
$|\ker(B\bmod2)|$ with $|H^1(M_B;\mathbb Z_2)|$.

For a lens space $L(p,q)$, with $p>0$ and $\gcd(p,q)=1$,
one has $H_1(L(p,q);\mathbb Z)\cong\mathbb Z/p\mathbb Z$.
Hence
\begin{equation}
\label{eq:tc-partition-examples}
\begin{gathered}
Z_{\mathrm{TC}}(S^3)=\tfrac12,
\qquad
Z_{\mathrm{TC}}(S^1\times S^2)=1,
\qquad
Z_{\mathrm{TC}}(T^3)=4,
\\
Z_{\mathrm{TC}}(\Sigma_g\times S^1)=4^g,
\qquad
Z_{\mathrm{TC}}(L(p,q))
=\frac{\gcd(p,2)}2,
\end{gathered}
\end{equation}
where $T^3=S^1\times S^1\times S^1$.
In particular, $Z_{\mathrm{TC}}(\mathbb RP^3)=1$.
For connected $M,N$, the connected-sum relation is
$$
Z_{\mathrm{TC}}(M\#N)
=
2Z_{\mathrm{TC}}(M)Z_{\mathrm{TC}}(N),
$$
consistently with both sphere gluing and
\eqref{eq:tc-partition-count}.

\subsection*{Torus states, modular transformations, and Wilson observables}

Choose meridian and longitude holonomies
$(\alpha,\beta)\in\mathbb Z_2^2$ on the torus.
Let $|x\rangle$ be the state of a solid torus with core Wilson line
$X_x$; the empty solid torus gives $|0\rangle$.
In the function realization these states are
$$
|(a,b)\rangle(\alpha,\beta)
=
\delta_{\alpha,b}(-1)^{a\beta},
\qquad (a,b)\in G,
$$
where $\delta_{u,v}$ is the Kronecker delta.
They are orthonormal for \eqref{eq:tc-flat-state-space}. Let $S$ exchange the meridian and longitude with the
orientation-preserving sign convention, and let $T$ be the
Dehn-twist operator. In the basis
$(|0\rangle,|e\rangle,|m\rangle,|\epsilon\rangle)$,
\begin{equation}
\label{eq:tc-modular-data}
S_{xy}=\tfrac12 b(x,y),
\qquad
S=\frac12
\begin{pmatrix}
1&1&1&1\\
1&1&-1&-1\\
1&-1&1&-1\\
1&-1&-1&1
\end{pmatrix},
\qquad
T=\operatorname{diag}(1,1,1,-1).
\end{equation}
These coincide with the standard toric-code modular matrices
\cite[Supplemental Material, Section~I]{LanWangWen2015TC}.
The normalized Gauss sum
$\frac12\sum_{x\in G}q(x)=1$ gives a trivial framing-anomaly
factor, and direct multiplication yields
$S^2=(ST)^3=I_4$.  These torus data also illustrate the recent reconstruction theorem of \cite{Galviz7}, which shows that, for Abelian Chern--Simons topological orders, the normalized torus Berry data determine the finite quadratic module $(G,q)$, and hence the full all-genus extended TQFT. Thus, for the toric code, the genus-one data displayed in \eqref{eq:tc-modular-data} already determine the complete Abelian topological order.
For an orientation-preserving diffeomorphism
$f:\Sigma_g\to\Sigma_g$, define
$$
M_f
=
(\Sigma_g\times[0,1])/((x,1)\sim(f(x),0)),
$$
and let $\rho(f)$ be the operator assigned to its mapping cylinder.
The gluing law gives
\begin{equation}
\label{eq:tc-mapping-torus}
Z_{\mathrm{TC}}(M_f)
=
\operatorname{Tr}\rho(f)
=
\bigl|
\ker\bigl(
f^*-\operatorname{id}:
H^1(\Sigma_g;\mathbb Z_2)
\to H^1(\Sigma_g;\mathbb Z_2)
\bigr)
\bigr|.
\end{equation}
Indeed, in the function basis $\rho(f)$ permutes the cohomology
classes, so its trace counts fixed classes.
For the torus generators,
$\operatorname{Tr}S=\operatorname{Tr}T=2$.

Let $\mathscr L\subset S^3$ be a framed oriented link whose
components have charges $x_i=(a_i,b_i)$, framings $f_i$, and
pairwise linking numbers $\ell_{ij}$.
Writing $Z_{\mathrm{TC}}(S^3;\mathscr L)$ for the partition function
with these Wilson insertions, the ribbon evaluation gives
\begin{equation}
\label{eq:tc-wilson-link}
\frac{Z_{\mathrm{TC}}(S^3;\mathscr L)}
     {Z_{\mathrm{TC}}(S^3)}
=
(-1)^{
\sum_i f_i a_i b_i+
\sum_{i<j}\ell_{ij}(a_i b_j+a_j b_i)
}.
\end{equation}
Thus an $e$--$m$ Hopf link has normalized expectation $-1$, and
a unit framing twist of $\epsilon$ contributes $-1$.
These are the mutual electric--magnetic statistics and fermionic
composite of Kitaev's model \cite{kitaev2003anyons}.

For a genus-$g$ surface with incoming marked Wilson insertions
$x_1,\ldots,x_n$, denote its state space by
$\mathcal H_{\mathrm{TC}}(\Sigma_g;x_1,\ldots,x_n)$.
Character orthogonality gives
\begin{equation}
\label{eq:tc-punctured-space}
\dim\mathcal H_{\mathrm{TC}}(\Sigma_g;x_1,\ldots,x_n)
=
4^g\,\delta_{x_1+\cdots+x_n,0}.
\end{equation}
Indeed, the dimension is
$2^{2g-2}\sum_{y\in G}b(x_1+\cdots+x_n,y)$.
In particular, a sphere junction space is one-dimensional exactly
when the incident charges fuse to the vacuum.

\subsection*{Condensations, boundary states, and junctions}

The only nonzero isotropic subgroups are
$$
L_e=\{0,e\},
\qquad
L_m=\{0,m\},
\qquad
L_a^\perp=L_a.
$$
Proposition~\ref{prop:defects-etale} therefore gives exactly three
connected condensation algebras:
$$
\mathbf1,
\qquad
A_e=\mathbf1\oplus X_e,
\qquad
A_m=\mathbf1\oplus X_m.
$$
The latter two are Lagrangian and define the elementary gapped
boundaries. Their local-module categories are
$\mathrm{Vec}_{\mathbb C}$, so neither leaves a nontrivial
deconfined bulk. Both multiplication cochains can be chosen equal to $1$. The quotient sections generated by $m$ and $e$, respectively,
are additive. Proposition~\ref{prop:boundary-associator} then gives
\begin{equation}
\label{eq:gauge-toric-boundaries}
\beta_e=\beta_m=1,
\qquad
\mathcal A_e\simeq\mathcal A_m
\simeq\operatorname{Vec}_{\mathbb Z_2},
\end{equation}
where
$\mathcal A_a=(\mathcal C_{\mathrm{TC}})_{A_a}$ is the
boundary-line category.
Their bulk-to-boundary functors nevertheless differ:
$$
\begin{array}{c|cc}
&\text{vacuum boundary sector}&\text{nontrivial boundary sector}\\
\hline
L_e&0,e&m,\epsilon\\
L_m&0,m&e,\epsilon
\end{array}
$$
In the terminology of Kitaev and Kong
\cite[Section~2]{kitaev2012models}, these are the rough and smooth
boundaries, respectively. The displayed identifications reproduce
their bulk-to-boundary maps and the equivalence of the two abstract
boundary-line categories.

For $a\in\{e,m\}$, define the torus condensation vector by
$|\mathfrak b_a\rangle=\sum_{x\in L_a}|x\rangle$.
Then
\begin{equation}
\label{eq:tc-condensation-states}
\begin{gathered}
|\mathfrak b_e\rangle=|0\rangle+|e\rangle,
\qquad
|\mathfrak b_m\rangle=|0\rangle+|m\rangle,
\\
S|\mathfrak b_a\rangle
=
T|\mathfrak b_a\rangle
=
|\mathfrak b_a\rangle.
\end{gathered}
\end{equation}
These vectors carry the algebra multiplicities, without unit-norm
rescaling. In particular, they differ from the empty-solid-torus
state $|0\rangle$. Let $\mathcal H_{\mathrm{TC}}(\mathrm{Ann};a,b)$ be the state
space of a spatial annulus with gapped boundary conditions $a,b$.
Its dimension and the condensation-vector overlap are
\begin{equation}
\label{eq:tc-annulus}
\dim\mathcal H_{\mathrm{TC}}(\mathrm{Ann};a,b)
=
\langle\mathfrak b_a|\mathfrak b_b\rangle
=
|L_a\cap L_b|
=
\begin{cases}
2,&a=b,\\
1,&a\ne b.
\end{cases}
\end{equation}
This agrees with the standard cylinder degeneracies
\cite[Supplemental Material, Section~I]{LanWangWen2015TC}.

For $u,v\in G/L_a$, let $B_u,B_v$ be the corresponding simple
boundary lines and write $F_a(X_x)=B_{[x]}$, with $[x]=x+L_a$.
The mixed junction space satisfies
\begin{equation}
\label{eq:tc-boundary-junction}
\dim\operatorname{Hom}_{\mathcal A_a}
\bigl(F_a(X_x)\otimes B_u,B_v\bigr)
=
\delta_{[x]+u,v}.
\end{equation}
The boundary-changing categories of
\eqref{eq:defects-boundary-lines} are, as linear categories,
$$
\mathcal W_{e,m}\simeq\mathcal W_{m,e}
\simeq\mathrm{Vec}_{\mathbb C}.
$$
In the reference category $\operatorname{Vec}_{\mathbb Z_2}$,
the two boundaries are represented by its action on itself and
its action on $\mathrm{Vec}_{\mathbb C}$, where both simple objects
act as the identity functor. The module-functor category in either
direction has one simple object.

Let $\sigma_{ab}$ be the unique simple line changing boundary
$a$ to boundary $b$, and let $\mathbf1_a,t_a$ be the two simple
lines on boundary $a$. Then
\begin{equation}
\label{eq:tc-changing-lines}
\sigma_{ba}\circ\sigma_{ab}
\simeq
\mathbf1_a\oplus t_a,
\qquad a\ne b.
\end{equation}
In the regular module model, the composite forgets the grading
and then applies the adjoint, placing the resulting vector space
in both degrees. It therefore decomposes into the original graded
space and its degree shift. The opposite composite decomposes
into the two simple module endofunctors on
$\mathrm{Vec}_{\mathbb C}$. These descriptions recover the
equal-boundary and mixed-boundary channel counts in
\eqref{eq:tc-annulus}.

The local formulas of Section~\ref{sec:gauge-defect} also become
explicit. For the boundary $A_a$, write
$$
E_x^{(a)}
=\operatorname{Hom}_{\mathcal A_a}(F_a(X_x),A_a).
$$
Equations~\eqref{eq:gauge-endpoint-space} and
\eqref{eq:gauge-endpoint-bases} give
$$
\dim E_x^{(a)}=\mathbf1_{L_a}(x),\qquad
\varepsilon_h\varepsilon_k=\varepsilon_{h+k}
\quad(h,k\in L_a).
$$
In particular, $\varepsilon_e^2=\varepsilon_0$ on the electric
boundary and $\varepsilon_m^2=\varepsilon_0$ on the magnetic boundary.

Choose quotient representatives $s_e(u)=um$ and $s_m(u)=ue$,
where $u\in\mathbb Z_2$. Both sections are additive, so
$\lambda_e=\lambda_m=0$. Nevertheless, the free-module tensorators
need not have identical coefficients. For $h,k\in\mathbb Z_2$,
\eqref{eq:gauge-free-tensorator} gives
$$
T_{um,vm}(he,ke)=1,\qquad
T_{ue,ve}(hm,km)=(-1)^{vh},
$$
while \eqref{eq:gauge-absorption} gives $D_{z,\ell}(k)=1$.
Substitution into \eqref{eq:gauge-explicit-beta} yields
$\beta_e=\beta_m=1$.

For a bulk charge $x=(u,v)$, the relative attachment phases in
\eqref{eq:gauge-alpha-phase} are
$$
b(x,e)=(-1)^v,\qquad b(x,m)=(-1)^u.
$$
Thus the two boundaries have equivalent abstract fusion categories,
but distinct braided attachments. These phases recover
$L_e^\perp=L_e$ and $L_m^\perp=L_m$ directly.

\subsection*{The six walls and their transmission data}
\begin{prop}
\label{prop:toric-six-walls}
Let $V=G\oplus G$ carry the folded quadratic form
$(x,y)\mapsto q(x)q(y)^{-1}$.
For an isometry $u:G\to G$, put
$\Gamma_u=\{(x,u(x)):x\in G\}$.
Then the six Lagrangian subgroups of $V$ are
\begin{equation}
\label{eq:gauge-toric-walls}
\Gamma_{\mathrm{id}},
\qquad
\Gamma_s,
\qquad
L_a\times L_b
\quad(a,b\in\{e,m\}),
\end{equation}
where $s(e)=m$, $s(m)=e$, and $s(\epsilon)=\epsilon$.
The two graph walls are invertible; the four product walls
factor through the vacuum and are noninvertible.
\end{prop}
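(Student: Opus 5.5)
The plan is a direct enumeration in the folded metric group. By Theorem~\ref{thm:defect-classification}(3), walls from the toric-code bulk to itself correspond to Lagrangian subgroups of $V=G\oplus G$ with form $Q(x,y)=q(x)q(y)^{-1}$. Since $|V|=16$, a Lagrangian subgroup $M$ has $|M|=4$, and by \eqref{eq:defects-perp-order} it suffices to find the isotropic subgroups of order $4$. Because $q$ takes values $\pm1$, we have $q^{-1}=q$. Hence $Q(x,y)=q(x)q(y)$, and the bilinear form is $b\oplus b$.

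\textbf{Step 1: the listed subgroups are Lagrangian.} $\mathcal O(G,q)$ consists of the automorphisms of $\mathbb Z_2^2$ fixing $\epsilon$, the unique element with $q=-1$. These are $\mathrm{id}$ and the swap $s$, so $\mathcal O(G,q)\cong\mathbb Z_2$. For each $u\in\mathcal O(G,q)$ the graph $\Gamma_u$ is Lagrangian by the computation in part (4) of Theorem~\ref{thm:defect-classification}. For the products, $Q(x,y)=q(x)q(y)=1$ when $x\in L_a$ and $y\in L_b$, and $|L_a\times L_b|=4$. The six subgroups are pairwise distinct: the graphs have trivial intersection with $0\oplus G$, while the products meet $0\oplus G$ in $0\oplus L_b$.

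\textbf{Step 2: there are no other Lagrangians.} This is the main counting step. Let $M$ be Lagrangian and set $M_2=M\cap(0\oplus G)$ and $M_1=\{x:(x,0)\in M\}$. Each of these is isotropic in $(G,q)$, so each is $0$, $L_e$, or $L_m$.

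If $M_2=0$, then projection to the first factor is injective, so $M$ is the graph of a function $G\to G$. Then $|M|=4$ forces this to be a bijective homomorphism $u$. Isotropy of $\Gamma_u$ gives $q(u(x))=q(x)$, so $u\in\{\mathrm{id},s\}$.

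If $M_2\neq0$, I expect to show $M_1\neq0$ by symmetry, using $M=M^\perp$. Indeed, $M_2\subset M$ forces the second projection $p_2(M)$ to lie in $M_2^{\perp}=M_2$, because $M_2=L_b$ is Lagrangian in $G$. It then follows that $|p_2(M)|=2$ and $|M_1|=|M|/|p_2(M)|=2$. So $M\supseteq M_1\times M_2=L_a\times L_b$, and equality holds by cardinality.

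I expect the only delicate point to be the perp argument, namely $b(p_2(m),y)=(b\oplus b)(m,(0,y))=1$ for $y\in M_2$. A brute-force check over the $35$ order-$4$ subgroups would serve as a fallback.

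\textbf{Step 3: invertibility.} The graph walls are invertible by Theorem~\ref{thm:defect-classification}(4), with $\mathfrak D_s^2\simeq\mathfrak D_{\mathrm{id}}$. A product wall $L_a\times L_b$ transmits no nontrivial charge: the charge relation \eqref{eq:gauge-wall-relation} sends everything into $L_b$ and is not a bijection. Its folded algebra is $A_{L_a}\boxtimes A_{L_b}$, so its wall-line category is $(\mathcal C_{\mathrm{TC}})_{A_a}\boxtimes(\mathcal C_{\mathrm{TC}})_{A_b}$. This is the stacking of boundary $a$ and boundary $b$ across a vacuum layer, so the wall factors through the vacuum.

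Were the wall invertible, its transmission support would be the graph of an isometry by Theorem~\ref{thm:defect-classification}(4). But $L_a\times L_b$ is not a graph, since it meets $0\oplus G$ nontrivially. Hence the product walls are noninvertible.
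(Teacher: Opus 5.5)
Your proposal is correct and follows essentially the paper's argument. It uses the same case split on whether $M\cap(0\oplus G)$ (the kernel of the first projection) is trivial, and the graph case reduces to $q\circ u=q$ together with the fact that $\epsilon$ is the unique element with $q=-1$. The non-graph case uses orthogonality to force second coordinates into $L_b^\perp=L_b$; you then close with $\ker p_2$ and a count, where the paper uses isotropy of the first projection. Invertibility comes from Theorem~\ref{thm:defect-classification}(4), and vacuum factorization from stacking the two vacuum boundaries.

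One small slip: $|M|=4$ only makes $M$ the graph of a homomorphism $u$ defined on all of $G$ (the zero map also has a graph of order four). Injectivity of $u$ comes from $q\circ u=q$ and nondegeneracy of $b$, not from the count.
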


\begin{proof}
A Lagrangian subgroup $M$ has order four.
Let $\pi_1:M\to G$ be projection to the first factor.
If $\pi_1$ is injective, $M$ is the graph of a homomorphism
$u:G\to G$. Isotropy gives $q\circ u=q$, so $u$ preserves the
nondegenerate pairing and is injective. Since $\epsilon$ is the
unique element with quadratic value $-1$, the only possibilities
are $u=\mathrm{id}$ and $u=s$.

Otherwise $\ker\pi_1$ has order two: an order-four kernel would
give the nonisotropic subgroup $\{0\}\times G$.
Write $\ker\pi_1=\{0\}\times L_b$.
Orthogonality forces the second coordinate of every element of
$M$ to lie in $L_b^\perp=L_b$, and isotropy forces its first
projection to be $L_a$ for $a=e$ or $m$.
Hence $M\subset L_a\times L_b$, with equality by the orders.

Conversely, each displayed subgroup is isotropic of order four.
The assertions about invertibility and vacuum factorization
follow from Theorem~\ref{thm:defect-classification}(4) and
composition of the corresponding vacuum boundaries.
\end{proof}

For $a,b\in\{e,m\}$, let $w_a$ be the coefficient row of
$|\mathfrak b_a\rangle$ and let $W_{ab}$ be the transmission
matrix of the wall supported on $L_a\times L_b$.
Rows label incoming charges and columns label outgoing charges,
both ordered as $(0,e,m,\epsilon)$. Then
\begin{equation}
\label{eq:gauge-toric-matrices}
\begin{gathered}
w_e=(1,1,0,0),
\qquad
w_m=(1,0,1,0),
\\
(W_{ab})_{xy}
=
\mathbf1_{L_a}(x)\mathbf1_{L_b}(y),
\qquad
W_{ab}=w_a^T w_b.
\end{gathered}
\end{equation}
For a graph wall,
$(W_u)_{xy}=\delta_{y,u(x)}$, so
$$
W_{\mathrm{id}}=I_4,
\qquad
W_s=
\begin{pmatrix}
1&0&0&0\\
0&0&1&0\\
0&1&0&0\\
0&0&0&1
\end{pmatrix}.
$$
These are the six toric-code wall matrices of Lan, Wang and Wen
\cite[Supplemental Material, Section~II]{LanWangWen2015TC}:
two invertible walls and four products of boundary vectors.
Comparison with an outgoing-row convention requires transposition.
The subgroup calculation above identifies their supports with
the corresponding folded Lagrangian algebras.

All six matrices commute with $S$ and $T$. Their products satisfy
\begin{equation}
\label{eq:tc-transmission-composition}
\begin{gathered}
W_{ab}W_{cd}
=
|L_b\cap L_c|\,W_{ad},
\qquad
W_s^2=I_4,
\\
W_sW_{ab}=W_{s(a),b},
\qquad
W_{ab}W_s=W_{a,s(b)}.
\end{gathered}
\end{equation}
These matrix identities retain intermediate-channel
multiplicities. The full surface composition is the
bimodule-category composition of Section~\ref{sec:defects}.

The electric--magnetic wall admits the lattice lift
$$
U_s=
\begin{pmatrix}
0&1\\
1&0
\end{pmatrix},
\qquad
U_s^T K_{\mathrm{TC}}U_s=K_{\mathrm{TC}},
\qquad
u_{U_s}=s.
$$
Thus both elements of
$\mathcal O(G,q)=\{\mathrm{id},s\}$ lift in this presentation.
Moreover $q=q^{-1}$, so the reverse bulk is equivalent to the
original bulk, as in Section~\ref{sec:symmetries}.

Each Lagrangian subgroup $M\subset V$ admits a linear complement
and hence an additive quotient section. The folded ambient
associator is trivial, so \eqref{eq:gauge-explicit-beta} gives
$\beta_M=1$, including when the folded algebra requires a
nontrivial multiplication cochain. Therefore its wall-line category
$$
\mathcal W_M
=
(\mathcal C_{\mathrm{TC}}\boxtimes
 \mathcal C_{\mathrm{TC}}^{\mathrm{rev}})_{A_M}
$$
satisfies
\begin{equation}
\label{eq:gauge-toric-wall-lines}
\mathcal W_M
\simeq\operatorname{Vec}_{V/M}
\simeq\operatorname{Vec}_{\mathbb Z_2^2}
\qquad\text{for all six walls}.
\end{equation}
The folded algebras and their central bulk functors distinguish
the walls despite these equivalent abstract wall-line categories.

\subsection*{Turaev--Viro, full centers, and Alterfold amplitudes}

The center presentation
$\mathcal C_{\mathrm{TC}}
\simeq\mathcal Z(\operatorname{Vec}_{\mathbb Z_2})$
and Theorem~\ref{thm:center-case} give
\begin{equation}
\label{eq:tc-tv-single}
Z_{\mathrm{TC}}
\simeq Z^{TV}_{\operatorname{Vec}_{\mathbb Z_2}}.
\end{equation}
For a triangulation of a connected closed manifold $M$ with
$n_0$ vertices, the untwisted state sum is
$2^{-n_0}|Z^1(M;\mathbb Z_2)|$, where $Z^1$ is the group of
simplicial cocycles. Since the coboundary group
$B^1(M;\mathbb Z_2)$ has order $2^{n_0-1}$, this state sum equals
\eqref{eq:tc-partition-count}. The Turaev--Viro and finite-gauge
calculations therefore agree with the BF normalization.

Using the underlying fusion category
$\mathcal C_{\mathrm{TC}}$ as the Alterfold input instead gives
the doubled theory of Theorem~\ref{thm:double-case}:
\begin{equation}
\label{eq:tc-alterfold-double}
\begin{gathered}
Z^{TV}_{\mathcal C_{\mathrm{TC}}}
\simeq
Z_{\mathrm{TC}}\otimes\overline{Z_{\mathrm{TC}}},
\\
Z^{TV}_{\mathcal C_{\mathrm{TC}}}(M)
=
\frac{|H^1(M;\mathbb Z_2)|^2}{4},
\qquad
\dim\mathcal H_{\mathrm{dbl}}(\Sigma_g)=16^g.
\end{gathered}
\end{equation}
The partition-function formula assumes that $M$ is closed and
connected. The bar denotes the orientation-reversed,
complex-conjugate theory, and
$$
\mathcal H_{\mathrm{dbl}}(\Sigma_g)
:=
\mathcal H_{\mathrm{TC}}(\Sigma_g)
\otimes
\overline{\mathcal H_{\mathrm{TC}}(\Sigma_g)}.
$$
In particular, the doubled partition functions on $S^3$ and $T^3$
are $1/4$ and $16$. The choice of fusion-category input thus
distinguishes the single toric-code theory from its doubled
Alterfold realization.

For a condensation algebra $A$, let $z^A=(z^A_{jk})$ be the
full-center multiplicity matrix of Section~\ref{sec:Alterfold},
whose entries are the dimensions of the intertwiner spaces
between $\alpha_A^+(X_j)$ and $\alpha_A^-(X_k)$.
For the three condensation algebras,
\begin{equation}
\label{eq:tc-full-centers}
z^{\mathbf1}=I_4,
\qquad
z^{A_e}=w_e^T w_e,
\qquad
z^{A_m}=w_m^T w_m.
\end{equation}
For $A=\mathbf1$, both inductions are the identity.
For $A_a$, compatibility of the two braided attachments requires
charges in $L_a^\perp=L_a$; the intertwiner space is
one-dimensional precisely when $j,k\in L_a$.
Thus
$z^{A_a}_{jk}=\mathbf1_{L_a}(j)\mathbf1_{L_a}(k)$.
The resulting full centers describe the transparent wall and
the product walls $L_e\times L_e$ and $L_m\times L_m$.

Under the torus-state identification, define
$$
|Z_{\mathrm{full}}(A)\rangle
:=
\sum_{j,k\in G}
z^A_{jk}|j\rangle\otimes\overline{|k\rangle}.
$$
The barred vector belongs to the conjugate torus space.
Explicitly,
\begin{equation}
\label{eq:tc-doubled-torus-states}
\begin{aligned}
|Z_{\mathrm{full}}(\mathbf1)\rangle
&=
\sum_{x\in G}|x\rangle\otimes\overline{|x\rangle},
\\
|Z_{\mathrm{full}}(A_a)\rangle
&=
|\mathfrak b_a\rangle
\otimes\overline{|\mathfrak b_a\rangle}.
\end{aligned}
\end{equation}
Every charge is self-dual, so the second-index sign in the general
full-center expansion is invisible in this example.
These expressions are vectors in the $16$-dimensional doubled
torus space, whereas \eqref{eq:tc-partition-count} and
\eqref{eq:tc-alterfold-double} are closed-manifold scalars.

The toric-code example completes the comparison developed in this
paper. Its established bulk and defect data are recovered together
with their state spaces, bordism maps, and gluing laws in a common
normalization. The contribution of the framework is their compatible
realization through the Chern--Simons/Reshetikhin--Turaev equivalence
and the associated defect and Morita constructions. The finite
quadratic module thereby connects explicit gauge-theoretic
calculations with the categorical organization of boundaries,
interfaces, and junctions, while keeping the original bulk distinct
from its doubled realization.

\subsection{The semion theory}
\label{sec:semion-example}

The toric code does not detect the signature-dependent phases of the
construction. We therefore consider the positive even lattice $K_s=(2)$,
whose Chern--Simons theory is the semion theory $U(1)_2$. This example
tests the modular anomaly, the phases of closed-manifold amplitudes,
and the obstruction to a gapped vacuum boundary. We retain the
normalization and surgery convention of Sections~\ref{sec:Abelian-CS}
and~\ref{sec:reciprocity}.

\subsection*{Bulk data and state spaces}

Write $Z_s=Z^{CS}_{\mathbb R/\mathbb Z,(2)}$ and
$\mathcal C_s=\mathcal C(\mathbb Z_2,q_s)$. Then
\begin{equation}
\label{eq:semion-data}
\sigma(K_s)=1,\qquad G_s=\{0,s\},\qquad
q_s(s)=i,\qquad b_s(s,s)=-1,\qquad D_s=\sqrt2.
\end{equation}
For $a,b,c\in\{0,1\}$, a compatible choice of associator and braiding is
$$
\omega_s(a,b,c)=(-1)^{abc},\qquad c_s(a,b)=i^{ab}.
$$
Thus $X_s\otimes X_s\cong\mathbf1$, the nontrivial associator scalar
is $-1$, and the self-exchange scalar is $i$, agreeing with the standard
semion data in \cite[Section~5.3.1]{RowellStongWang2009}.
In particular, the trivial associator used for the toric code cannot
be retained in this example.

Proposition~\ref{prop:toral-hilbert-space} and the pointed fusion rules give
\begin{equation}
\label{eq:semion-spaces}
\mathcal H_s(\Sigma_g)\cong\mathbb C[(\mathbb Z_2)^g],\qquad
\dim\mathcal H_s(\Sigma_g;x_1,\ldots,x_n)
=2^g\delta_{x_1+\cdots+x_n,0},
\end{equation}
where the first identification uses the polarization choices of
Section~\ref{sec:Abelian-CS}, and the marked charges have incoming orientation.
In the solid-torus basis $(|0\rangle,|s\rangle)$,
\begin{equation}
\label{eq:semion-modular}
S=\frac1{\sqrt2}\begin{pmatrix}1&1\\1&-1\end{pmatrix},\qquad
T_0=\begin{pmatrix}1&0\\0&i\end{pmatrix},\qquad
S^2=I_2,\qquad (ST_0)^3=e^{\pi i/4}I_2.
\end{equation}
Here $T_0$ records the ribbon twists. The phase is precisely
$D_s^{-1}\sum_xq_s(x)=e^{\pi i\sigma(K_s)/4}$.
With the framing normalization $T_{\mathrm{fr}}=e^{-\pi i/12}T_0$,
one instead has $(ST_{\mathrm{fr}})^3=I_2$.
The corresponding BKS composition factor is
$e^{\pi i\mu/4}$, where $\mu$ is the Maslov index from
Section~\ref{sec:Abelian-CS}. These formulas retain the signature contribution
that was trivial for the toric code.

\subsection*{Partition functions and Wilson lines}

For surgery on an $r$-component framed link with integral linking
matrix $B$, Theorem~\ref{thm:surgery-expression} and reciprocity give
\begin{equation}
\label{eq:semion-surgery}
Z_s(M_B)=2^{-(r+1)/2}e^{\pi i\sigma(B)/4}
\sum_{a\in\{0,1\}^r}\exp\!\left(-\frac{\pi i}{2}a^TBa\right).
\end{equation}
The inverse quadratic phase and positive signature correction follow
the surgery convention of Section~\ref{sec:reciprocity}; reversing
orientation complex-conjugates the answer. In particular,
\begin{equation}
\label{eq:semion-partitions}
Z_s(S^3)=2^{-1/2},\qquad Z_s(S^1\times S^2)=1,\qquad
Z_s(\Sigma_g\times S^1)=2^g.
\end{equation}
For the one-component matrix $(p)$, with $p>0$, the same convention yields
\begin{equation}
\label{eq:semion-lens}
Z_s(M_{(p)})
=\frac{e^{\pi i/4}}2\bigl(1+(-i)^p\bigr)
=\frac1{\sqrt{2p}}\sum_{a=0}^{p-1}e^{2\pi i a^2/p}.
\end{equation}
Thus $Z_s(\mathbb RP^3)=0$ and $Z_s(M_{(3)})=i/\sqrt2$.
The vanishing at $p=2$ is cancellation between the two charge sectors;
an unsigned count of flat fields would not reproduce it.

For a framed link $\mathscr L\subset S^3$ with labels $a_i s$,
$a_i\in\{0,1\}$, framings $f_i$, and linking numbers $\ell_{ij}$,
the positive ribbon-twist convention gives
\begin{equation}
\label{eq:semion-wilson}
\frac{Z_s(S^3;\mathscr L)}{Z_s(S^3)}
=i^{\sum_i f_i a_i}
(-1)^{\sum_{i<j}\ell_{ij}a_i a_j}.
\end{equation}
A unit twist of $s$ contributes $i$, whereas an $s$--$s$ Hopf link
contributes $-1$. In \eqref{eq:semion-surgery}, the surgery evaluation
uses the inverse of these ribbon phases, as specified above.

For the transparent semion wall, the folded group is
$V=\mathbb Z_2^2$, with
$$
\omega_V((a,b),(c,d),(e,f))=(-1)^{ace+bdf},
\qquad
c_V((a,b),(c,d))=i^{ac-bd}.
$$
On the diagonal subgroup
$\Delta=\{(0,0),(1,1)\}$, both restrictions are trivial, so the
multiplication cochain may be chosen as $\mu_\Delta=1$.
The quotient section
$$
s:\mathbb Z_2\longrightarrow V,\qquad s(a)=(a,0),
$$
is additive. Hence $\lambda=0$, and
\eqref{eq:gauge-explicit-beta} reduces to
$$
\beta_\Delta(a,b,c)
=\omega_V(s(a),s(b),s(c))
=(-1)^{abc}.
$$
Consequently
$$
\mathcal W_\Delta
\simeq\operatorname{Vec}_{\mathbb Z_2}^{\,\beta_\Delta}.
$$
This explicitly recovers the nontrivial associator class of the
underlying semion fusion category from the boundary formula.
\subsection*{Boundaries, walls, and the doubled theory}

Since $q_s(s)\ne1$, Proposition~\ref{prop:defects-etale} leaves only
the unit connected condensation algebra. Moreover, $G_s$ has no
Lagrangian subgroup. Hence $\mathcal C_s$ admits no gapped vacuum
boundary and no single-copy Turaev--Viro center presentation,
in agreement with the Abelian boundary analysis of
\cite{Kapustin2010}.

In the folded group $G_s\oplus G_s$, the quadratic form is
$(a,b)\mapsto i^{a-b}$. Its unique Lagrangian subgroup is
$\Delta=\{(0,0),(1,1)\}$. Thus the transparent wall is the only
elementary fully gapped self-wall in the classification of
Theorem~\ref{thm:defect-classification}. Its line category is
$$
\mathcal W_\Delta\simeq
\operatorname{Vec}_{\mathbb Z_2}^{\omega_s}.
$$
Here $\operatorname{Vec}_{\mathbb Z_2}^{\omega_s}$ denotes graded vector
spaces with associator $\omega_s$; unlike the toric-code wall categories,
its associator class is nontrivial.

The doubled bulk is the double-semion theory, with twists
$1,i,-i,1$. Theorem~\ref{thm:double-case} gives
\begin{equation}
\label{eq:semion-double}
\begin{gathered}
\mathcal Z(\operatorname{Vec}_{\mathbb Z_2}^{\omega_s})
\simeq\mathcal C_s\boxtimes\mathcal C_s^{\mathrm{rev}},\qquad
Z^{TV}_{\operatorname{Vec}_{\mathbb Z_2}^{\omega_s}}
\simeq Z_s\otimes\overline{Z_s},\\
Z^{TV}_{\operatorname{Vec}_{\mathbb Z_2}^{\omega_s}}(M)=|Z_s(M)|^2,
\qquad \dim\mathcal H_{\mathrm{dbl}}(\Sigma_g)=4^g.
\end{gathered}
\end{equation}
The scalar formula assumes $M$ closed and connected. Its values on
$S^3,T^3,\mathbb RP^3$ are respectively $1/2,4,0$.
The full center of the unit defines the doubled torus vector
$|0\rangle\otimes\overline{|0\rangle}
+|s\rangle\otimes\overline{|s\rangle}$.
The semion example therefore checks both the anomalous bulk formulas
and their anomaly-free doubled realization.

\subsection{A non-split quotient and its boundary associator}

To test the screening terms in
\eqref{eq:gauge-explicit-beta}, consider
$$
K=\begin{pmatrix}0&3\\3&2\end{pmatrix},
\qquad \sigma(K)=0,\qquad
G\cong\mathbb Z_9.
$$
Taking the class of $(1,0)$ as generator gives
$$
q(x)=e^{-2\pi i x^2/9},\qquad
\omega=1,\qquad c(x,y)=e^{-2\pi i xy/9}.
$$
The subgroup $L=3\mathbb Z_9$ is Lagrangian, and
$c|_{L\times L}=1$, so we may choose $\mu=1$.
Identify $Q=G/L$ with $\mathbb Z_3$ and choose
$s(a)=a$ for $a\in\{0,1,2\}$. Then
$$
\lambda(a,b)=3\left\lfloor\frac{a+b}{3}\right\rfloor.
$$
Equation~\eqref{eq:gauge-explicit-beta} therefore gives
$$
\beta(a,b,d)
=c(s(d),\lambda(a,b))
=\exp\!\left(
-\frac{2\pi i}{3}\,
d\left\lfloor\frac{a+b}{3}\right\rfloor
\right).
$$
This cocycle has nontrivial cohomology class: the product
$$
\prod_{j=0}^{2}\beta(1,j,1)=e^{-2\pi i/3}
$$
is unchanged by normalized coboundaries and would equal $1$
for a trivial class. Hence the boundary category is
$\operatorname{Vec}_{\mathbb Z_3}^{\,\beta}$ with nontrivial
associator, although the chosen bulk associator is trivial.

\bibliographystyle{alpha}
\renewcommand{\refname}{References}
\bibliography{refs}
\end{document}